\newtheorem{theorem}{Theorem}[section]
\newtheorem{proposition}[theorem]{Proposition}
\newtheorem{lemma}[theorem]{Lemma}
\newtheorem{corollary}[theorem]{Corollary}
\theoremstyle{definition}
\newtheorem{definition}{Definition}
\newtheorem{defn}[definition]{Definition}
\newtheorem{example}{Example}
\newtheorem{ex}[example]{Example}
\newtheorem{assumption}{Assumption}
\newtheorem{question}{Question}
\theoremstyle{remark}
\newtheorem{remark}{Remark}[section]
\numberwithin{equation}{section}
\newcommand{\C}{\mathbb{C}}
\renewcommand{\epsilon}{\varepsilon}
\newcommand{\N}{\mathbb{N}}
\newcommand{\V}{\mathbb{V}}
\renewcommand{\phi}{\varphi}
\newcommand{\R}{\mathbb{R}}
\newcommand{\Z}{\mathbb{Z}}
\newcommand{\E}{\mathbb{E}}
\newcommand{\one}{\mathbbm{1}}
\newcommand{\wt}{\widetilde}
\DeclareFontFamily{U}{mathx}{}
\DeclareFontShape{U}{mathx}{m}{n}{<-> mathx10}{}
\DeclareSymbolFont{mathx}{U}{mathx}{m}{n}
\DeclareMathAccent{\widehat}{0}{mathx}{"70}
\DeclareMathAccent{\widecheck}{0}{mathx}{"71}
\renewcommand{\check}{\widecheck}
\newcommand{\intbr}[2]{\llbracket  #1 ,   #2 \rrbracket }
\newcommand{\ipc}[2]{ \langle #1 , #2  \rangle }
\newcommand{\braket}[3]{\left \langle #1 \, | #2 |\, #3 \right \rangle }
\renewcommand{\P}{\mathbb{P}}
\renewcommand{\emptyset}{\o}
\newcommand{\eps}{\varepsilon}
\newcommand{\vertiii}[1]{{\left\vert\kern-0.25ex\left\vert\kern-0.25ex\left\vert #1
    \right\vert\kern-0.25ex\right\vert\kern-0.25ex\right\vert}}
\newcommand\numberthis{\addtocounter{equation}{1}\tag{\theequation}}
\renewcommand{\emptyset}{\varnothing}
\def\l@subsection{\@tocline{2}{0pt}{2.6pc}{5pc}{}}
\begin{document}

 \title[Landscape estimates of the IDS for Jacobi operators on graphs]{Landscape estimates of the integrated density of states for Jacobi operators on graphs}

\author[L. Shou, W. Wang, S. Zhang]{Laura Shou, Wei Wang, Shiwen Zhang}
  \date{}

\begin{abstract}
    We show the integrated density of states for a variety of Jacobi operators on graphs, such as the Anderson model and random hopping models on graphs with Gaussian heat kernel bounds, can be estimated from above and below in terms of the localization landscape counting function. Specific examples of these graphs include stacked and decorated lattices, graphs corresponding to band matrices, and aperiodic tiling graphs. 
The upper bound part of the landscape law  also applies to the fractal Sierpinski gasket graph.
As a consequence of the landscape law, we obtain 
landscape-based proofs of the Lifshitz tails in several models including random band matrix models, certain bond percolation Hamiltonians on $\Z^d$, and Jacobi operators on certain stacks of graphs.
We also present intriguing numerical simulations exploring the behavior of the landscape counting function across various models.
\end{abstract}

  \maketitle

\tableofcontents

\section{Introduction}

The structure and dynamics of quantum systems for electrons are generally described using a Hamiltonian
operator.  The simplified non-interacting Hamiltonian $H$ of a crystalline solid can often be modeled by a graph operator 
$H=-\Delta+V$, which is a weighted graph Laplacian perturbed by an on-site potential. The integrated density of states (IDS) of $H$ is one of the main quantities encoding the energy levels of the system. It describes the number of energy levels per unit volume below a certain energy level. 
In the present work,  we establish bounds on the IDS of $H$ via the
counting function of the so-called localization landscape \cite{filoche2012universal}. Such a \emph{Landscape Law} approach was first established in \cite{DFM} for the standard Schr\"odinger operator on $\R^d$, and later extended to 
$\Z^d$ in \cite{arnold2022landscape,desforges2021sharp}.

The goal of the present paper is to establish the landscape law for more general graph operators with both on-site potential and bond interactions. 
This extension is in two directions: First, by moving to graphs, we lose the regular structure of $\R^d$ and $\Z^d$ used in \cite{DFM,arnold2022landscape} to perform partition and translation arguments.
Second, by allowing for disordered bond interactions, we extend the class of operators to include random hopping and Jacobi operators, which may have degenerate off-diagonal hopping amplitudes.
The resulting landscape law will apply to a range of models, including Jacobi operators on graphs \emph{roughly isometric} to $\Z^d$ (or more generally, graphs with Gaussian heat kernel bounds), which includes stacked and decorated lattices, graphs corresponding to banded matrices, and aperiodic tiling graphs.
We also obtain a landscape law upper bound for the Sierpinski gasket graph, which is not roughly isometric to any $\Z^d$ and has non-Gaussian (specifically, sub-Gaussian) heat kernel bounds.

After establishing the landscape law, the general method described in \cite{DFM,arnold2022landscape} will allow us to recover Lifshitz tail estimates for several of the random hopping models on $\Z^d$, including random band matrices, bond percolation Hamiltonians with certain boundary conditions, and discrete acoustic operators $\operatorname{div}A\cdot\nabla$. 
This provides alternate proofs, using the landscape function, for Lifshitz tails in these models (cf. 
\cite{KirschMueller,MuellerStollman,MuellerStollman2,KloppNakamura,rband}). 
The landscape law method here also suggests an avenue to obtain Lifshitz tails for non-regular graphs, if one proves certain geometric properties of the graphs.

We now introduce some definitions in order to state our main
results.
We consider an (unweighted) graph $\Gamma=(\mathbb{V},{\mathcal E})$, where the vertex set $\mathbb{V}$ is countably infinite. We write $x\sim y$ to mean $\{x,y\}\in \mathcal E$, and in this case say that $y$ is a neighbor of $x$.   
 The natural graph metric, denoted $d_0(\cdot,\cdot)$, gives the length $n\in\N_0$ of the shortest path between two points. Define balls in $\Gamma$ with respect to $d_0$ as \[B^{d_0}(x,r)=\{y\in \V: d_0(x,y)\le r, \  x\in \V, r\ge 0 \}. \]
 We may omit the superscript dependence $d_0$ and write $B(x,r)=B^{d_0}(x,r)$, unless another metric is also being considered.  Since the metric $d_0$ is integer valued, we allow $r$ to be real and $B(x,r)=B(x,\lfloor r \rfloor)$, where $\lfloor \cdot  \rfloor$ is the floor function. 
 For any   subset $S\subset \V$, we denote by $|S|=\#\{x: x\in S\}$ the cardinality of $S$. 

As described below in Assumption~\ref{ass:0}, in this article we will always work with graphs with the following polynomial volume growth. This property is also called ``Ahlfors $\alpha$-regular'' in \cite{barlow-alphabeta}, where the exponent $\alpha$ is the graph analog of the Hausdorff dimension.
\begin{assumption}\label{ass:0}
We will always assume $\Gamma$ is connected and satisfies the following volume control property with parameter $\alpha$
(``Ahlfors $\alpha$-regular''): 
There are $\alpha\ge 1$, $c_1,c_2>0$ such that for any $x\in \V$ and $r\ge1$, 
        \begin{align}\label{eqn:vol-control}
          c_1r^{\alpha} \le  |B(x,r)| \le c_2 r^{\alpha}.
        \end{align}
Note this also implies bounded geometry: Letting $\deg (x)=\#\{y: x  \sim y\}$
be the degree of a vertex $x\in \V$, then
    \begin{align}\label{eqn:bded-geo}
        \sup_{x\in \V}{\deg(x)}:=M_\Gamma<\infty.
    \end{align}
\end{assumption}

We now consider the Jacobi operator 
$H$ on ${\mathcal H}:=\ell^2(\V)$ defined as
\begin{align}\label{eqn:H-gen}
    Hf(x)=\sum_{y:y\sim x}\big(f(x)-\mu_{xy}f(y)\big) +V_xf(x),\  \ x \in \V,
\end{align}
which has an on-site non-negative potential $\{V_x\ge 0\}_{x\in \V}$, and bond strengths  $\{0\le \mu_{xy}=\mu_{yx}\le 1\}_{x\sim y\in \V}$.

\begin{remark}
The placement of the bond strengths $\mu_{xy}$ in \eqref{eqn:H-gen} on only $f(y)$, and not as $\mu_{xy}(f(x)-f(y))$, may at first appear unusual.
However, the bond strength placement in \eqref{eqn:H-gen} corresponds to tight-binding hopping models commonly encountered in the physics literature, for which \eqref{eqn:H-gen} may be written in the form,
\[
H=-\sum_{\langle x,y\rangle}t_{xy}(|x\rangle\langle y|+|y\rangle\langle x|)+\sum_x V_x|x\rangle\langle x|+\sum_x\deg (x)|x\rangle\langle x|.
\]
The last diagonal term involving $\deg(x)$ is not usually present, but it is just a constant shift in energy for regular graphs like $\Z^d$. 

As another reason for considering the operator \eqref{eqn:H-gen}, we note that when there is no diagonal disorder ($V_x\equiv 0$), then \eqref{eqn:H-gen} becomes a random hopping model like in \cite{Dyson,ITA,TheodorouCohen}. For this random hopping model \eqref{eqn:H-gen} on $\Z^d$ with $V_x\equiv 0$, there are still the Lifshitz tails for the integrated density of states (cf. Section~\ref{subsec:bondperc}). However, for the different model $\wt Hf(x)=\sum_{y:y\sim x}\mu_{xy}(f(x)-f(y))$, the tail behavior of the  integrated density of states and the spectrum at low energies can be completely different, for example instead exhibiting van Hove asymptotics similar to that of the non-disordered Laplacian \cite{MuellerStollman2}. 

If one strongly desires the specific term $\mu_{xy}(f(x)-f(y))$ to appear, so as to match the combinatorial Laplacian of a weighted graph with edge weights $\mu_{xy}$, then the difference can be adjusted by changing the $\deg(x)$ part of the diagonal term to ${\deg(x)}-\mu_x$, where $\mu_x=\sum_{y:y\sim x}\mu_{xy}$.
\end{remark}

In order to define the integrated density of states (IDS) and landscape counting function, we first restrict to finite volume sets.
Let $A\subset \V$ be a finite subset of vertices and $\Gamma_A\subset \Gamma$ the subgraph induced by $A$. We denote by $H^A:\ell^2(A) \to \ell^2(A)$ the restriction of $H$ to $A$ with Dirichlet boundary conditions:
\begin{align}\label{eqn:HA}
H^Af(x)&={\deg(x)}f(x)-\sum_{y\in A:y\sim x}\mu_{xy}f(y)+V_xf(x).
\end{align}
The integrated density of states (IDS) of $H^A$ is then
\begin{align}\label{eqn:ids}
    N^A(E):=\frac{1}{|A|}\#\{{\rm eigenvalues}\ E'\ {\rm of}\ H^A \ {\rm such\ that\ }\ E'\le E\}.  
\end{align}

Next we define the {localization landscape function}.
Since $A$ is finite, $H^A$ is a strictly positive, bounded self-adjoint operator on $\ell^2(A)$, with a nonegative Green's function, $G_A(x,y)=(H^A)^{-1}(x,y)\ge0,\, \forall x,y\in A$. A direct consequence is that there is a unique vector $u_A\in C_+(A)=\R_+^{A}=\{f: A\to {\R_{>0}}\}$, called the \emph{landscape function} of $H^A$, solving $H^Au_A(x)=1,\, \forall x\in A$.

The concept of the landscape function was introduced by Filoche and Mayboroda in \cite{filoche2012universal} for studying localization in elliptic operators on $\R^d$.  
This approach has led to a number of results and applications in quantum and semiconductor physics; see for example \cite{arnold2019computing} for an overview of the landscape method and applications. Additionally, other mathematical results and applications of localization landscape theory have recently been developed in e.g. \cite{Bachmann2023,bachmann2024two,Hoskins2024,Lu2022,Poggi2024,Sd,Steinerberger2023}.
The use of the landscape function as an ``effective potential'' in place of the original potential $V$ allows one to work in a non-asymptotic regime, in some sense to bring the ideas behind the classical Weyl law, as well as the volume-counting of the Uncertainty Principle of Fefferman and Phong \cite{fefferman1983uncertainty}, to various models without
restrictions on the potential or the pertinent eigenvalues.

In this article, we will use the \emph{landscape counting function} to estimate the true IDS \eqref{eqn:ids} for operators on graphs. The landscape counting function was defined and used to estimate the IDS for operators on $\R^d$ in \cite{DFM}, and extended to operators on $\Z^d$ in \cite{arnold2022landscape}. In both of these previous cases, the landscape counting function involved counting cleanly partitioned cubes $Q$ where $\min_Q\frac{1}{u}\le E$. For graphs, we lack the periodicity and will have to make do without a partition, and instead use rougher coverings by (graph distance) balls.
Given $R\ge 0$, we call $\mathcal{P}=\mathcal P(R)=\{B(z_i,R): z_i \in \mathcal \V\}_{i\ge 1}$  a countable covering of $\Gamma$ with centers   $\{z_i\}_{i\ge 1}$ if $ \bigcup_{i\ge 1} B(z_i,R)=\V$ and $\{z_i\}_{i\ge 1}$ is countable. 
  We denote by $\mathcal P|_A=\{B\in \mathcal P: B\cap A\neq \varnothing\}$ (or $\mathcal P^A$) the restriction of $\mathcal P$ on $A\subset \Gamma$. 
We can then define the landscape counting function with respect to this cover $\mathcal{P}$ as follows. 
\begin{definition}
The \emph{landscape counting function for the covering $\mathcal{P}=\mathcal{P}(R)$ and region $A$} is defined for $E>0$ as
\begin{align}\label{eqn:Nu}
N_u^{\mathcal{P},A}(E) &= \frac{1}{|A|}\#\!\left\{B\in\mathcal{P}|_A:\min_{x\in B\cap A}\frac{1}{u(x)}\le E\right\}.
\end{align}
 
For the landscape law, we will take the radius of the balls in the partition $\mathcal{P}$ to be proportional to $E^{-1/2}$.
 As we show in  Lemma~\ref{lem:partition-comp}, landscape counting functions defined by different covers $\mathcal{P}(R)$ and $\mathcal{P}'(R')$ are comparable to each other as long as the radii $R,R'$ are comparable in the sense $c\le R'/R\le C$, and the covers satisfy a finite covering property \eqref{eqn:finite-cover-lambda}.
We will thus frequently omit the superscripts $\mathcal P, A$ and simply denote the \emph{landscape counting function} with balls of radius $R=E^{-1/2}$ for the region $A$ as 
\begin{align}\label{eqn:Nu-1/2}
N_u(E)=  N_u^{\mathcal P(E^{-1/2}),A}(E),
\end{align}  
with the understanding that the notation $N_u(E)$ is only meant to be defined up to a constant pre-factor, and is always to be taken with a covering of balls of radius $R=E^{-1/2}$ satisfying the finite covering property (for a fixed constant) in \eqref{eqn:finite-cover}. 
\end{definition}

\subsection{Main results}

Our first result is the landscape law for a general class of graphs, essentially those with the so-called Gaussian heat kernel bounds. 
The landscape law will demonstrate that the landscape counting function defined in \eqref{eqn:Nu} can be used to bound the actual integrated density of states for $H^A$ from above and below.
Afterwards, we give conditions with which one can use the landscape law to obtain Lifshitz tails for the integrated density of states.
The assumptions listed in the following theorems are defined precisely in Section~\ref{subsec:graph-prop}. For the landscape law, one specific example to keep in mind is graphs $\Gamma$ that are \emph{roughly isometric} to $\Z^d$, which will satisfy the required conditions. 
We provide the precise definition in Section~\ref{subsec:rough-isom}, but for now we note that rough isometries (or quasi-isometries \cite{Gromov}) are maps that capture the large-scale, global structure of the graphs.
For obtaining Lifshitz tails, we will additionally require technical conditions on the graph $\Gamma$ involving the harmonic weight of balls, which will be described in Assumption~\ref{ass:harmonic-weight} in Section~\ref{subsec:graph-prop}.

\begin{theorem}[Landscape Law for graphs] 
\label{thm:LLaw}
Let $H^A$ be as in \eqref{eqn:HA} with any $\mu_{xy}\in [0,1]$ and $V_x\ge 0$. Suppose $\Gamma$ 
satisfies  Assumption~\ref{ass:0}  
and a weak Poincar\'e inequality (WPI) as described in Section~\ref{subsec:graph-prop}. 
In particular, graphs roughly isometric to $\Z^d$, or more generally those with Gaussian heat kernel estimates \eqref{eqn:gaussian-heatkernel}, meet these requirements. Then we have the following landscape law bounds. 
\begin{enumerate}[(i)]

\item Upper bound: 
    \begin{align}\label{eqn:LLaw-upper}
       N(E)\le N_u(CE) ,\ \ {\rm for \ all }\ E>0,
    \end{align} where $C$ depends on $\Gamma$, 
    and in particular, is independent of $A$,  $\mu_{xy}$ and $V_x$. 

\item Lower bound:   
There are constants $c_i,c_i',c^*$, depending only on $\Gamma$, such that for any  $0<\kappa<1/4$,
  \begin{align}\label{eqn:LLaw-lower1}
N(E)&\ge c_1\kappa^\alpha N_u(c_3\kappa^{\alpha+2}E)-c_2N_u(c_4\kappa^{\alpha+4}E),\quad\text{for }{E\le c^*\kappa^{-4}},
\end{align}
 and 
 \begin{align}\label{eqn:LLaw-lower2}
N(E)&\ge c_1'  N_u(c_3'\kappa^{2}E)-c_2'N_u(c_4'\kappa^{4}E),\quad\text{for }{E>c^*\kappa^{-4}}.
\end{align}

\end{enumerate}
\end{theorem}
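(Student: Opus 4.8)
The plan is to run everything through the discrete \emph{ground-state (landscape) transform}. Writing $\psi = u_A\phi$ for the landscape function $u=u_A$, a direct summation by parts using $H^Au=1$ yields the identity
\begin{align*}
\langle H^A\psi,\psi\rangle \;=\; \tfrac12\!\!\sum_{x\sim y,\,x,y\in A}\!\!\mu_{xy}\,u(x)u(y)\big(\phi(x)-\phi(y)\big)^2 \;+\; \sum_{x\in A}\frac{\psi(x)^2}{u(x)} ,
\end{align*}
so that $H^A\ge M_{1/u}$ as quadratic forms, and $H^A$ is unitarily equivalent, via multiplication by $u$, to a weighted graph Laplacian plus the \emph{effective potential} $1/u$ acting on $\ell^2(A,u^2)$. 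The whole problem is thereby reduced to (a) locating where $1/u$ is small versus large at the scale $R=E^{-1/2}$, and (b) controlling the weighted Dirichlet energy on balls of that radius, for which the weak Poincar\'e inequality (WPI) and the Ahlfors regularity of Assumption~\ref{ass:0} are exactly the right tools.

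For the \textbf{upper bound}, fix a covering $\mathcal P(E^{-1/2})$ with the finite covering property and let $\Omega_E\subset A$ be the union over the ``low'' balls $B\in\mathcal P|_A$ with $\min_{B\cap A}1/u\le CE$; by construction $1/u>CE$ on $A\setminus\Omega_E$. If $H^A\psi=E'\psi$ with $E'\le E$, the identity gives $\sum_x u(x)^{-1}\psi(x)^2\le E\|\psi\|^2$, hence the weighted Dirichlet energy of $\phi$ is $\le 2E\|\psi\|^2$ and $\|\psi\|_{\ell^2(A\setminus\Omega_E)}^2\le C^{-1}\|\psi\|^2$. I would then show that the map sending $\psi$ in the span $\mathcal L_E$ of eigenfunctions with eigenvalue $\le E$ to the tuple of its $u^2$-weighted averages over the low balls is injective: if all those averages vanish, a $u$-weighted Poincar\'e inequality on each low ball (obtained from the WPI together with the mild variation of the landscape function across a ball, itself a consequence of the heat-kernel bounds) bounds $\|\psi\|_B^2$ by $c\,R^2$ times the local weighted Dirichlet energy; summing over low balls with bounded overlap and combining with the small mass on $\Omega_E^c$ gives $\|\psi\|^2\le (c'C^{-1}+c''R^2\cdot 2E)\|\psi\|^2$, which forces $\psi=0$ once $C$ is large (recall $R^2E=1$). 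Injectivity yields $N(E)|A|=\dim\mathcal L_E\le\#\{\text{low balls}\}$, and Lemma~\ref{lem:partition-comp} lets one replace the radius $E^{-1/2}$ by $(CE)^{-1/2}$, giving \eqref{eqn:LLaw-upper}.

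For the \textbf{lower bound} I would build, for each ``low'' ball $B$ (now at the reduced energy appearing inside $N_u$, with the powers of $\kappa$ chosen to absorb the constants produced below), a localized trial function $v_B = u\cdot\chi_B$, where $\chi_B$ is a Lipschitz cutoff equal to $1$ on a core ball of radius $\sim\kappa R$ and supported in $B$. The landscape identity controls $\langle H^Av_B,v_B\rangle$ by $\sum\mu_{xy}u(x)u(y)(\chi_B(x)-\chi_B(y))^2\lesssim (\kappa R)^{-2}\sum_{x\in B}u(x)^2$ plus $\sum_{x\in B}u(x)\chi_B(x)^2\le |B|\max_B u$, while $\|v_B\|^2\gtrsim |B_{\mathrm{core}}|(\min_{B_{\mathrm{core}}}u)^2$; using the near-constancy of $u$ on $B$ and $\min_B 1/u\le(\text{reduced energy})$ one checks the Rayleigh quotient of $v_B$ is $\le E$, \emph{except} on those low balls where $u$ is anomalously large somewhere inside, i.e.\ where $1/u$ dips below an even smaller threshold --- and the number of such exceptional balls is exactly what the subtracted term $-c_2N_u(c_4\kappa^{\alpha+4}E)$ (resp.\ $-c_2'N_u(c_4'\kappa^4E)$) removes. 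Since the $v_B$ have uniformly finitely overlapping supports, a standard almost-orthogonality argument turns the surviving count of trial functions into a bound $N(E)|A|\ge (\text{const})\cdot\#\{\text{surviving low balls}\}$; the prefactor $\kappa^\alpha$ and the two regimes of \eqref{eqn:LLaw-lower1}--\eqref{eqn:LLaw-lower2} come from the volume cost of shrinking the supports to radius $\sim\kappa R$ and from whether $R=E^{-1/2}\ge1$ or the scale must be taken coarser.

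The step I expect to be the main obstacle is exactly the one that the regular-lattice arguments of \cite{DFM,arnold2022landscape} avoid: without a clean partition, all the local Poincar\'e/Dirichlet-energy estimates have to be carried out on \emph{overlapping} balls and, worse, with the non-uniform weights $u(x)u(y)$ rather than constant weights. Making the $u$-weighted Poincar\'e inequality on balls of radius $E^{-1/2}$ work --- quantifying how much the landscape function can vary across such a ball --- is where the Gaussian (volume-doubling plus WPI) hypothesis is genuinely used, and it is also the origin of the extra powers of $\kappa$ and of the subtracted $N_u$ term in the lower bound. The remaining bookkeeping --- ensuring that the finite covering property (used via Lemma~\ref{lem:partition-comp}) keeps overlaps from overcounting on both sides --- is routine but must be tracked carefully so that the constants in \eqref{eqn:LLaw-upper}--\eqref{eqn:LLaw-lower2} depend on $\Gamma$ only and not on $A$, $\mu_{xy}$, or $V_x$.
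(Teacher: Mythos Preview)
Your proposal rests on a claim that does not hold: the landscape function $u=u_A$ has \emph{no} Harnack-type ``mild variation'' or ``near-constancy'' on balls. Since $H^Au=1$ with an arbitrary nonnegative potential $V_x$ and bond weights $\mu_{xy}\in[0,1]$, a single large value of $V_x$ forces $u(x)\le (V_x)^{-1}$ to be tiny while $u$ can be of order $R^2$ at a neighbor; likewise, vanishing $\mu_{xy}$ can disconnect regions as far as $u$ is concerned. So the ``$u$-weighted Poincar\'e inequality'' you invoke in the upper bound, and the bound $\|v_B\|^2\gtrsim |B_{\mathrm{core}}|(\min_{B_{\mathrm{core}}}u)^2$ together with $\min_{B_{\mathrm{core}}}u\approx\max_{B_{\mathrm{core}}}u$ in the lower bound, both fail. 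The hypothesis $\min_B 1/u\le E$ gives only $\max_B u\ge E^{-1}$; nothing controls $\min_B u$.

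The paper sidesteps both issues. For the \emph{upper bound}, it never passes to $\phi=\psi/u$: the subspace $S$ is defined by the vanishing of the \emph{ordinary} degree-weighted averages $\bar f^B$ over the low balls, so the unweighted WPI applies directly to $f$. The resulting $\sum_{B^*}(f(x)-f(y))^2$ must then be compared to the $\mu_{xy}$-weighted kinetic term with possibly degenerate weights; this is the content of the $\varepsilon$-cutoff Lemma~\ref{lem:ep-cut}, which trades small $\mu_{xy}$ for a piece of the diagonal term $(\deg x-\mu_x^A)f(x)^2$ already present in $\langle f,H^Af\rangle$. You never address this degeneracy, and your weighted-Poincar\'e route cannot absorb it. For the \emph{lower bound}, the denominator $\|u\chi_B\|^2$ is controlled not by a two-sided Harnack but by the one-sided Moser--Harnack inequality for subharmonic functions (Corollary~\ref{cor:MH}), applied to $u$ via $-\Delta u\le 1$: it gives $\sum_{B(z,2r)}u^2\gtrsim |B(z,2r)|(\sup_{B(z,r)}u^2 - cr^4)$, and $\sup_{B(z,r)}u\ge E^{-1}$ is exactly what the ``low'' condition provides. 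The second condition in the paper's $\mathcal F'$, namely $\min_B 1/u\ge\kappa^2E$, is used to bound the \emph{numerator} (it caps $\max_B u$), not the denominator. Finally, the conversion from the count over small core balls to $N_u$ uses the ``translation replacement'' covers of Proposition~\ref{prop:covering}(b), not just almost-orthogonality.
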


The above Landscape Law as given by \eqref{eqn:LLaw-upper}, \eqref{eqn:LLaw-lower1}, and \eqref{eqn:LLaw-lower2} holds for any $\mu_{xy}\in[0,1]$ and $V_x\ge0$, requiring no additional assumptions on $\mu_{xy}$ or $V_x$. Below, we will discuss the disordered model where $\{\mu_{xy}\}$ and  $\{V_x\}$ are each sets of independent, identically distributed (i.i.d.) random variables.  We see that in this case, under additional assumptions on the graph, the lower bound in \eqref{eqn:LLaw-lower1} and \eqref{eqn:LLaw-lower2} can be improved by removing the negative term on the right hand side, leading to Lifshitz tail estimates for both the landscape counting function and actual IDS in terms of the  cumulative  
 distribution functions (CDFs) of $\mu_{xy}$ and $V_x$.

In what follows, let $\sigma_{xy}=1-\mu_{xy}\in [0,1]$  be i.i.d. random variables with a common CDF
\begin{align}\label{eqn:cdfF-mu}
    F_{\mu}(E)=\P(1-\mu_{xy}\le E),
\end{align}
and let $V_x\ge 0$ be i.i.d. random variables with a common CDF 
\begin{align}\label{eqn:cdfF-V}
   F_V(E)= \P(V_x\le E).
\end{align}

\begin{remark}
    We will need only at least one of $F_V$ or $F_{\mu}$ to be non-trivial. As an allowable example, if $F_V(0)=1$, so that $V_x$ is identically zero for all $x$, then $H$ is free of potential and there are only off-diagonal disorder terms $\mu_{xy}$. 
\end{remark}

We start with the general lower bound of the landscape counting function $N_u$   in terms of $F$, which gives a Lifshitz tail lower bound.

\begin{theorem}[Landscape Lifshitz tails]\label{thm:Lif}
Suppose $\Gamma$ satisfies  Assumption~\ref{ass:0} 
and a weak Poincar\'e inequality (as described in Definition~\ref{ass:WPI} in Section~\ref{subsec:graph-prop}). Let $F_V$ and $F_\mu$ both have zero as the infimum of their essential support\footnote{$E$ belongs to the essential support of a distribution function $F$ iff $dF(E-\eps,E+\eps)>0$ for any $\eps>0$. Since $0$ is the essential support of $F_V,F_{\mu}$, $ F_V(0)F_{\mu}(0)<1$ implies that there is $E_0>0$ such that $F_V(E)F_{\mu}(E)\le  F_V(E_0)F_{\mu}(E_0)<1$ for all $E\le E_0$.},
 and satisfy $ F_V(0)F_{\mu}(0)<1$.
For any finite set $A\subset\Gamma$ with ``sufficient overlap with balls'', i.e.
there is $c>0$ so that for any $x\in A$ and  $0\le r\le 
 {C}\operatorname{diam}A$, where $\operatorname{diam}A:=\sup_{x,y\in A}d_0(x,y)$ and $C\ge1$ is $\Gamma$-dependent constant, 
\begin{align}\label{eqn:AintersectB}
     |A\cap B(x,r)|\ge cr^\alpha,
\end{align}
then we have the following bounds on the landscape counting function.
\begin{enumerate}[(i)]
\item Lower bound: There are constants $c_i,E_0$ only depending on $\Gamma$ such that 
\begin{align}\label{eqn:Lif-lower-Nu}
    \E N_u(E)\ge c_1E^{\alpha/2}\big(F_{\mu}(c_2E)\big)^{c_3E^{-\alpha/2}}\big(F_V(c_4E)\big)^{c_5E^{-\alpha/2}}
\end{align}
for all $ {c_0}{({\rm diam}A)^{-2}}\le E\le E_0$.

\item Upper bound: 
With the additional harmonic weight assumptions in Assumption~\ref{ass:harmonic-weight}, there are constants $c_i,E_1$ only depending on $\Gamma$ such that 
\begin{align}\label{eqn:Lif-upper-Nu}
    \E N_u(E)\le c_6E^{\alpha/2}\Big(F_\mu (c_7E)F_V (c_7E)\Big)^{c_8E^{-\alpha/2}}
\end{align}
for all $ E<E_1$. 
\end{enumerate}
\end{theorem}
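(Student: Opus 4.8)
\textbf{Proof plan for Theorem~\ref{thm:Lif}.}
The plan is to reduce both bounds to estimating the probability that a single ball $B(z,R)$, $R\asymp E^{-1/2}$, is ``counted'' in $N_u$, i.e. $\min_{x\in B\cap A}1/u(x)\le E$, and then to sum these probabilities over a maximal $R$-separated subset of centers using the volume control in Assumption~\ref{ass:0} and the overlap hypothesis \eqref{eqn:AintersectB}. For the lower bound (i), I would show that on the event that \emph{all} potential values $V_x$ on a ball $B(z,cR)$ are $\le c_2 E$ and \emph{all} bond variables $\sigma_{xy}=1-\mu_{xy}$ on that ball are $\le c_2 E$, the landscape function satisfies $u(z)\ge c/E$, hence the ball centered near $z$ is counted. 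This is a comparison argument: on such a ``good'' ball the operator $H^A$ restricted near $z$ is dominated (in the sense of positivity-preserving resolvents / the maximum principle for $H^A u = 1$) by the free Laplacian on the ball plus $O(E)$, whose landscape function is of size $R^2\asymp E^{-1}$ by a standard barrier/test-function construction combined with the WPI; I can import the heat-kernel / Poincaré machinery already set up in Section~\ref{subsec:graph-prop} to get the lower bound $u(z)\gtrsim E^{-1}$. The number of vertices on such a ball is $\le c_2(cR)^\alpha \asymp E^{-\alpha/2}$, so by independence the probability of a good ball is at least $(F_\mu(c_2E))^{c_3E^{-\alpha/2}}(F_V(c_4E))^{c_5E^{-\alpha/2}}$; multiplying by the number $\gtrsim |A| E^{\alpha/2}$ of disjoint such balls inside $A$ (guaranteed by \eqref{eqn:AintersectB} for $E\gtrsim(\operatorname{diam}A)^{-2}$) and dividing by $|A|$ gives \eqref{eqn:Lif-lower-Nu}.

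For the upper bound (ii), I would instead bound $\E N_u(E)$ by $\frac{1}{|A|}\sum_i \P\big(\min_{x\in B(z_i,R)\cap A}1/u(x)\le E\big)$ over a covering with bounded overlap, and estimate each summand. The key claim is: if $u(x_0)\ge c/E$ for some $x_0\in B(z_i,R)$, then necessarily there is a \emph{sub-ball} $B(x_0,c'R)$ on which $V_x\le c_7 E$ and $\mu_{xy}\ge 1-c_7E$ for all sites/bonds — equivalently the bad event forces a whole ball's worth of small disorder. This is the reverse comparison: a large landscape value propagates, via $H^A u = 1$ and the harmonic-weight estimates of Assumption~\ref{ass:harmonic-weight}, to show that if the disorder were not uniformly small on a ball around $x_0$ then $H^A$ would be bounded below by a positive multiple of $E$ on that region (using that a single bond with $\mu_{xy}$ bounded away from $1$, or a single site with $V_x\gtrsim E$, contributes a definite positive amount to the quadratic form localized near $x_0$, against which the landscape function cannot exceed $O(E^{-1})$). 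Here the harmonic weight of balls is exactly the quantitative tool controlling how much a ball's boundary ``sees'' into the bulk, which is why Assumption~\ref{ass:harmonic-weight} is needed for this direction but not for (i). Granting the claim, independence gives $\P(\cdots)\le (F_\mu(c_7E)F_V(c_7E))^{c_8E^{-\alpha/2}}$, and the covering has $\asymp |A|E^{\alpha/2}$ balls, yielding \eqref{eqn:Lif-upper-Nu}.

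The main obstacle I anticipate is the quantitative comparison between the landscape function and the disorder on a ball — in both directions, but especially the upper-bound direction. On $\Z^d$ (as in \cite{arnold2022landscape}) one exploits the lattice translation structure and explicit Dirichlet landscape functions on cubes; on a general graph with only Ahlfors regularity and a weak Poincaré inequality, one must replace these with intrinsic estimates: a lower barrier for $u$ built from the Green's function of the free operator on a ball (for (i)), and, for (ii), a decomposition of $B(z_i,R)$ into the ``good core'' where all disorder is small and its complement, showing the complement forces $1/u$ to be $\gtrsim E$ uniformly. Controlling the error terms so that the harmonic-weight hypothesis genuinely closes the argument — rather than leaving a stray negative term as in \eqref{eqn:LLaw-lower1}–\eqref{eqn:LLaw-lower2} — is the delicate point, and I would expect the proof to spend most of its effort there, with the combinatorial summation over ball centers being routine given Assumption~\ref{ass:0} and \eqref{eqn:AintersectB}.
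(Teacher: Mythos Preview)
Your plan for part~(i) is essentially correct and matches the paper's approach. The paper applies the landscape uncertainty principle to a cutoff $\chi_B$ (so $\min_{B\cap A}1/u\le |B\cap A|^{-1}\langle\chi_B,H^A\chi_B\rangle$) rather than your resolvent/barrier comparison, but both lead to the same conclusion: if all $V_x$ and $\sigma_{xy}$ on a ball of radius $\asymp E^{-1/2}$ are $\le cE$, the ball is counted. The combinatorics over centers is as you describe.

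Your plan for part~(ii) has a genuine gap. The key claim you propose---that $u(x_0)\ge c/E$ forces \emph{all} sites and bonds on some sub-ball $B(x_0,c'R)$ to have disorder $\le c_7E$---is false. A single site with large $V_x$ (or a single bond with $\mu_{xy}$ near $0$) contributes to the quadratic form only locally at that site, not at $x_0$; it does not by itself cap $u(x_0)$. On $\Z^d$ with $d\ge2$, for instance, adding a large potential at one point leaves $u$ essentially unchanged two sites away. So ``a single bad site forces $1/u\gtrsim E$'' does not hold, and your argument cannot close at a single scale.

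What the paper actually proves is a \emph{multi-scale growth lemma} (Lemma~\ref{lem:u-growth-gen}): if $u(\xi)\ge Mr^2$ \emph{and} the proportion of sites in $B(x_0,r)\cap A$ with $V_x\ge mr^{-2}$ or $\sum_{y\sim x}\sigma_{xy}\ge mr^{-2}$ is at least a fixed $\lambda>0$, then there exists $\xi'$ in a slightly larger ball with $u(\xi')\ge(1+3\varepsilon)u(\xi)\ge M((1+\varepsilon)r)^2$. One iterates through scales $r_k=(1+\varepsilon)^k r_0$ until $r_K\gtrsim\operatorname{diam}A$; applying the last step starting from the global maximizer of $u$ yields a contradiction, so the event $\{\max_{B(x_0,r_0)}u\ge E^{-1}\}$ forces the $\lambda$-density hypothesis to \emph{fail} at some scale $r_k$ for some ball in a bounded covering of $B(x_0,Cr_k/\varepsilon)$. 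A Chernoff bound then converts ``fewer than $\lambda|B\cap A|$ sites have disorder $\ge mr_k^{-2}$'' into $(F_VF_\mu)(mr_k^{-2})^{cr_k^\alpha}$, and summing the resulting geometric series over $k$ gives \eqref{eqn:Lif-upper-Nu}. The harmonic weight in Assumption~\ref{ass:harmonic-weight} enters precisely inside the growth lemma (its Case~I): the uniform lower bound \eqref{eqn:h-lower-ass} ensures that if many sites have $u$ below half the weighted average $A^d_{2cr}$, their total harmonic mass is bounded below, forcing $\max u$ on the complement to exceed $(1+c_3)A^d_{2cr}$. This dichotomy at each scale, not a one-shot estimate, is the mechanism you are missing.
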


As a consequence of Theorems~\ref{thm:LLaw} and \ref{thm:Lif}, the methods from \cite{DFM,arnold2022landscape} then provide Lifshitz tail estimates for the actual integrated density of states (IDS).
\begin{corollary}[Landscape Law for {random models}]\label{cor:LLaw-random}
   Retain the assumptions in Theorem~\ref{thm:Lif}(ii). Then there are constants $c_i$ only depending on $\Gamma$ such that for all $E >0$, 
    \begin{align}\label{eqn:LLaw-random}
        c_1\E N_u(c_2E)\le \E N(E) \le  c_3\E N_u(c_4E),
    \end{align}
    so that the Lifshitz tail estimates \eqref{eqn:Lif-lower-Nu} and \eqref{eqn:Lif-upper-Nu} hold for $\E N(E)$.
\end{corollary}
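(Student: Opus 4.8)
The plan is to obtain \eqref{eqn:LLaw-random} by feeding the Lifshitz-type bounds on $\E N_u$ from Theorem~\ref{thm:Lif} into the Landscape Law of Theorem~\ref{thm:LLaw}, and then to read off the tail estimates for $\E N(E)$ by composition. The upper bound is immediate: Theorem~\ref{thm:LLaw}(i) gives the pointwise (in the disorder) bound $N(E)\le N_u(CE)$, so taking expectations yields $\E N(E)\le\E N_u(CE)$, which is the right-hand inequality.

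The work is the lower bound, where Theorem~\ref{thm:LLaw}(ii) only supplies $N(E)\ge c_1\kappa^\alpha N_u(c_3\kappa^{\alpha+2}E)-c_2N_u(c_4\kappa^{\alpha+4}E)$ for $E\le c^*\kappa^{-4}$, so the negative term must be absorbed. I would fix $\kappa=\kappa_0$ depending only on $\Gamma$ and show, after taking expectations, that $c_2\,\E N_u(c_4\kappa_0^{\alpha+4}E)\le\tfrac12 c_1\kappa_0^{\alpha}\,\E N_u(c_3\kappa_0^{\alpha+2}E)$ for $E$ below a $\Gamma$-dependent threshold. Since the two energies differ only by the fixed factor $\kappa_0^{2}$, one bounds the ratio of the two $\E N_u$'s by applying the Lifshitz upper bound \eqref{eqn:Lif-upper-Nu} to the smaller energy $c_4\kappa_0^{\alpha+4}E$ and the Lifshitz lower bound \eqref{eqn:Lif-lower-Nu} to the larger energy $c_3\kappa_0^{\alpha+2}E$. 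For $\kappa_0$ small enough the $\Gamma$-constants line up (the CDF arguments at the small energy fall below those at the large energy appearing in \eqref{eqn:Lif-lower-Nu}, and the exponent $\propto(c_4\kappa_0^{\alpha+4}E)^{-\alpha/2}$ dominates the exponents $\propto(c_3\kappa_0^{\alpha+2}E)^{-\alpha/2}$), so the CDF factors in the numerator are controlled by those in the denominator; using $F_\mu,F_V\le1$ one is left with the prefactor ratio $\sim\kappa_0^{\alpha}$ times a residual factor $\big(F_\mu F_V\big)^{m}$ with surplus exponent $m\gtrsim\kappa_0^{-\alpha(\alpha+4)/2}E^{-\alpha/2}$. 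The hypothesis $F_V(0)F_\mu(0)<1$ forces $F_\mu F_V\le q_0<1$ near $E=0$, so the residual factor is at most $\exp(-c\,\kappa_0^{-\alpha(\alpha+4)/2}E^{-\alpha/2})\to0$ as $E\to0$; hence the inequality holds for all small $E$, giving $\E N(E)\ge c_1'\,\E N_u(c_2'E)$ there.

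The remaining range of $E$ is routine and outside the Lifshitz regime: when $E$ is so small that $N_u(c_2'E)$ vanishes — the landscape on $\Gamma_A$ is pointwise bounded by its $V\equiv0$, $\mu\equiv1$ counterpart, which grows at most like $(\operatorname{diam}A)^2$ — the left side of \eqref{eqn:LLaw-random} is $0$; for $E$ bounded away from $0$, $N$ and $N_u$ lie between $\Gamma$-dependent constants by monotonicity and the finite covering property, and the negative term in \eqref{eqn:LLaw-lower2} is absorbed in the same way after adjusting constants. Substituting \eqref{eqn:Lif-lower-Nu} into the left and \eqref{eqn:Lif-upper-Nu} into the right inequality of \eqref{eqn:LLaw-random} then yields matching (up to constants in prefactor and exponent) stretched-exponential bounds for $\E N(E)$, i.e.\ the Lifshitz tails. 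I expect the absorption step to be the main obstacle: one must verify that a \emph{single} $\kappa_0=\kappa_0(\Gamma)$ simultaneously aligns all constants inherited from Theorems~\ref{thm:LLaw}--\ref{thm:Lif} \emph{and} leaves a residual decay beating the prefactor $\kappa_0^{\alpha}$ uniformly down to $E\sim(\operatorname{diam}A)^{-2}$; the crux is tracking that the surplus exponent $m$ truly grows like $\kappa_0^{-\alpha}E^{-\alpha/2}$ rather than staying bounded, which is where the $\kappa_0^2$ energy gap meets the $E^{-\alpha/2}$ scaling of the Lifshitz exponent.
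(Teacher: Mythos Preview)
Your proposal is correct and matches the approach the paper invokes: the paper does not spell out the proof of Corollary~\ref{cor:LLaw-random} but defers to the argument of \cite[Thm.~3.56]{DFM} (see also \cite[\S4.2]{arnold2022landscape}), which is exactly the absorption scheme you describe---take expectations in Theorem~\ref{thm:LLaw}, then use the Lifshitz upper bound \eqref{eqn:Lif-upper-Nu} on the negative term and the Lifshitz lower bound \eqref{eqn:Lif-lower-Nu} on the positive term, with a fixed small $\kappa_0$ so that the surplus exponent $\sim\kappa_0^{-\alpha(\alpha+4)/2}E^{-\alpha/2}$ dominates and $F_\mu F_V<1$ kills the ratio. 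Your identification of the absorption step as the crux, and of the energy gap $\kappa_0^{2}$ producing the needed growth in the exponent, is precisely the mechanism in those references.
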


Throughout this article, we work with the finite volume IDS (eigenvalue counting  per unit volume) and do not directly consider  the thermodynamic limit as $A\nearrow \V$. In general, such limit may not exist unless the operator is in some sense uniform in  the underlying graph (e.g., a periodic or random Schr\"odinger operator defined on a vertex transitive graph). When the infinite volume IDS cannot be defined for general  operators, one can consider the $\liminf$ or $\limsup$ instead, which always exist, and provide 
 lower and upper bounds for the infinite volume one when it exists. Noting that the Lifshitz tail estimates  \eqref{eqn:Lif-lower-Nu}, \eqref{eqn:Lif-upper-Nu} (the constants therein) for $N_\bullet=N,N_u$ are independent of $A$, this allows us to take $\liminf/\limsup$ (for any fixed $E$ small) and obtain
\begin{multline*}
    c_1E^{\alpha/2}\big(F_{\mu}(c_2E)\big)^{c_3E^{-\alpha/2}}\big(F_V(c_4E)\big)^{c_5E^{-\alpha/2}} \le \liminf_{A \nearrow \V}\E N_\bullet(E) \\ \le \limsup_{A \nearrow \V}\E N_\bullet(E)\le c_6E^{\alpha/2}\Big(F_\mu (c_7E)F_V (c_7E)\Big)^{c_8E^{-\alpha/2}}, \ \ \ E<E_0. 
\end{multline*}
One can check that the double-log limit (in energy $E$) of the limit in $A$ does always exist as long as $F_VF_{\mu}$ is not `too thin' near the bottom, e.g., if $F_V(E)F_{\mu}(E)\gtrsim E^{c}$ for some $c>0$, then 
\begin{align*}
  \lim_{E\searrow 0}   \lim_{A \nearrow \V}\frac{\log \big|\log\E N_\bullet(E)\big|}{\log E}=-\frac{\alpha}{2}, 
\end{align*}
where $\frac{\alpha}{2}$ is usually referred as the  Lifshitz tail exponent. We obtain such  Lifshitz tail estimates as  a by-product of the landscape method. We refer readers to the extensive literature for more details and background about Lifshitz tails, see e.g. \cite{kirsch2008invitation,konig2016parabolic}. 

One may also notice that the Lifshitz tail exponent $\frac{\alpha}{2}$ coincides with the volume control parameter $\alpha$ in \eqref{eqn:vol-control} divided by the (weak) Poincar\'e inequality parameter $2$ in \eqref{eqn:WPI}. These two parameters together appear in the \emph{Heat Kernel Bound} $\mathrm{HK}(\alpha,2)$, a property for the free (probabilistic) Laplacian on the graph, which we describe further in Propositions~\ref{prop:piconseq}.  
We discuss the relation between more general heat kernel bound parameters on a general graph and the Lifshitz tails exponent in Section~\ref{sec:SG}, where we consider the fractal Sierpinski gasket graph.

In the following subsections, we describe applications of Theorem~\ref{thm:LLaw} or Corollary~\ref{cor:LLaw-random} to various models.

\subsection{Bond percolation and discrete $\operatorname{div}A\cdot\nabla$ on $\Z^d$}\label{subsec:bondperc}
Bernoulli bond percolation graphs are random graphs formed from a graph $G=(\V,\mathcal E)$ by assigning independent $\operatorname{Bernoulli}(p)$ random variables $(\omega_{xy})_{xy}$ to the edges $\{x,y\}\in \mathcal E$, and considering the graph $G'=(\V,\mathcal E')$ with new edge set $\mathcal E'=\{\{x,y\}\in \mathcal E:\omega_{xy}=1\}$.
There has been much interest concerning the spectral properties of such random graphs; see for example the overview \cite{MuellerStollman} on percolation Hamiltonians.

The ``adjacency'' or ``pseudo-Dirichlet'' Laplacian $\Delta_{\mathrm{PD}}$ on $\Z^d$ considered in \cite{KirschMueller,MuellerStollman2} corresponds to the Jacobi operator \eqref{eqn:H-gen} with no potential ($V_x\equiv0$), and with i.i.d.  $\mu_{xy}\sim\operatorname{Bernoulli}(p)$. There  
it was shown that $\Delta_{\mathrm{PD}}$ has Lifshitz tails at the bottom and top of the spectrum, for any $0<p<1$.  By applying Corollary~\ref{cor:LLaw-random}, we obtain an alternative proof of the adjacency Laplacian Lifshitz tails result via the landscape law method. (Since the underlying lattice is just $\Z^d$, Assumption~\ref{ass:harmonic-weight} required for Corollary~\ref{cor:LLaw-random} always holds.)

We note that the landscape law method also applies to i.i.d.  $\mu_{xy}$ with distributions other than Bernoulli, providing a landscape-law based proof for Lifshitz tails for these models as well. The Lifshitz tails upper bound for such models was proved earlier in \cite{KloppNakamura} by comparison to on-diagonal disorder models.

By a duality/symmetry argument \cite{KirschMueller}, the bottom of the spectrum of bond percolation Hamiltonians can be used to study the top of the spectrum for the \emph{discrete} version of $\operatorname{div}A\cdot\nabla$ on $\Z^d$. Such operators describe acoustic waves in a medium \cite{FigotinKlein}.
Let $H=\nabla^\dagger A\nabla$ on $\Z^d$, where $A(x)$ is a real symmetric semidefinite $d\times d$ matrix for each $x\in\Z^d$. When $A(x)$ is a non-negative diagonal matrix, each diagonal entry of the $d\times d$ matrix $A(x)$ can be associated with one of the $2d$ edges in the graph $\Z^d$ involving the node $x$. The operator $\nabla^\dagger A\nabla$ then just becomes the weighted nearest neighbor combinatorial Laplacian, 
\begin{align}\label{eqn:ham}
Hf(x)\equiv(\nabla^\dagger A\nabla f)(x) = \sum_{y\in\Z^d:y\sim x}a(x,y)(f(x)-f(y)),
\end{align}
where $a(x,y)=a(y,x)$ is the corresponding entry from one of the matrices $A(x)$. 
When $a(x,y)$ are i.i.d.  $\operatorname{Bernoulli}(p)$ random variables, then $H$ is the ``Neumann Laplacian'' from \cite{KirschMueller,MuellerStollman2}, 
and can have different behavior at the top vs bottom of the spectrum. The top of the spectrum always exhibits Lifshitz tails, but the bottom of the spectrum can also exhibit ``van Hove singularities'' \cite{MuellerStollman2}.
As we do not specialize to the precise operator dual to $H$, the duality/symmetry argument applied with Corollary~\ref{cor:LLaw-random} or \cite{KloppNakamura} yields the Lifshitz tails \emph{upper} bound for $1-N(E)$ for discrete $\operatorname{div}A\cdot\nabla$ at the top of the spectrum. This also applies for non-Bernoulli i.i.d.  disorder $a(x,y)$, and one could investigate if the matching lower bound can be obtained by applying the landscape law method to the precise dual operator.

\subsection{Graphs roughly isometric to $\Z^d$}\label{subsec:appl-isom}

While we have not yet defined rough isometries between graphs (we will do so in Section~\ref{subsec:rough-isom}), we provide a few brief examples here of graphs roughly isometric to $\Z^d$, to which Theorem~\ref{thm:LLaw} applies. 
Roughly speaking, roughly isometric will mean that there is a map between the graphs that preserves distances up to some error.

We first give the example of the vertex graph $\Gamma_P$ of the Penrose rhomb tiling of the plane (see for example the textbook \cite{BaakerGrimm}), shown in Figure~\ref{fig:penrose}. 
\begin{figure}[ht]
 \includegraphics[width=0.5\textwidth]{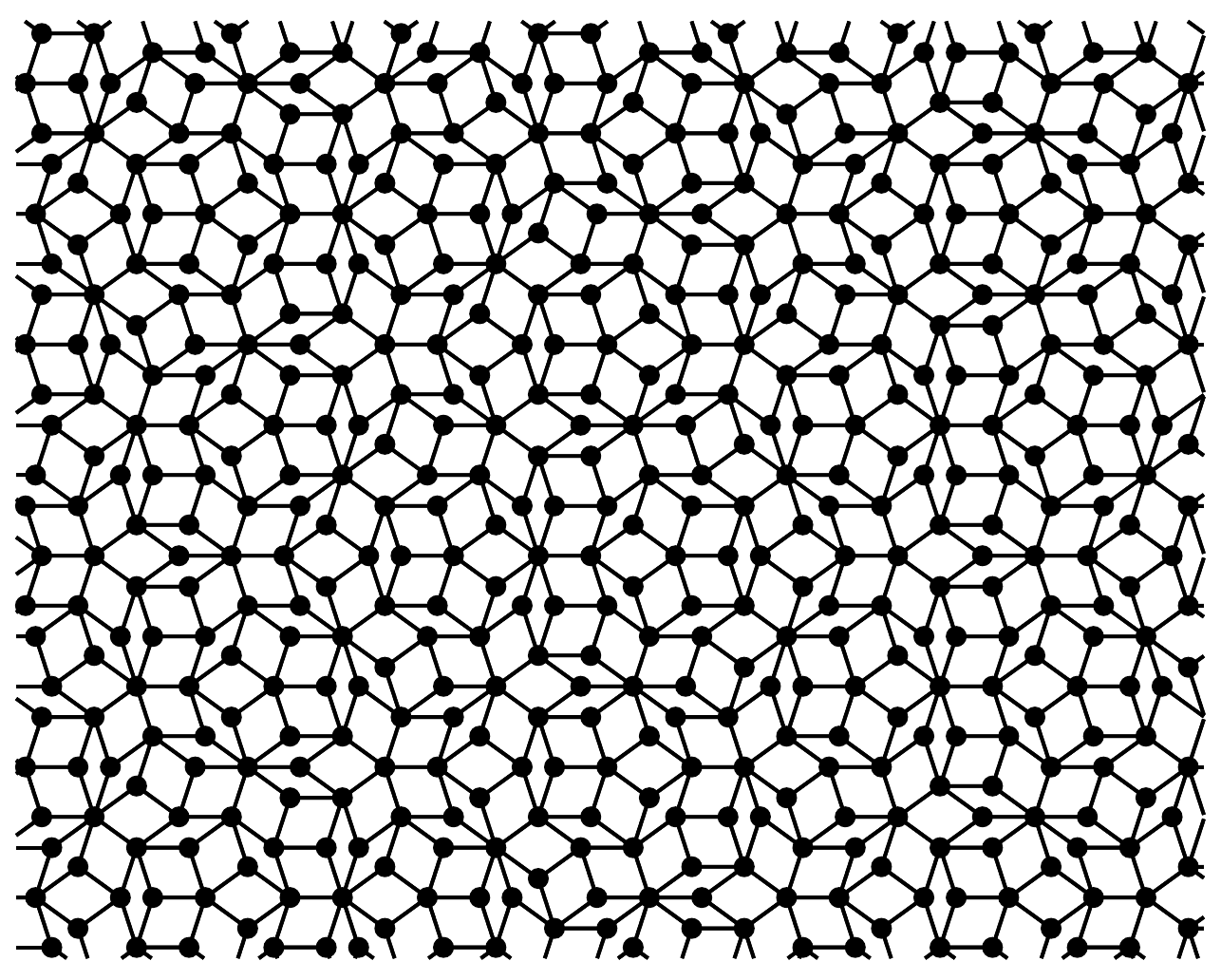}
 \caption{A section of the Penrose tiling with the vertices of $\Gamma_P$ drawn. }\label{fig:penrose}
 \end{figure}
The tiling involves two kinds of rhombuses, a wide rhombus with angles $72^\circ$ and $108^\circ$, and a narrow rhombus with angles $36^\circ$ and $144^\circ$.
This Penrose tiling possesses a five-fold rotational symmetry, but is aperiodic, with no translational symmetry.
While the vertex graph $\Gamma_P$ is also nonregular, it is roughly isometric to $\Z^2$ (see Section~\ref{subsec:rough-isom}), so that Theorem~\ref{thm:LLaw} implies the Landscape Law for Jacobi operators on the Penrose tiling graph $\Gamma_P$.

Next, we consider lattices and some of their variations. Lattices such as the triangular and hexagonal lattices are readily seen to be roughly isometric to $\Z^2$. Local perturbations of lattices, such as adding decorations to sites, also remain roughly isometric. Stacked lattices, obtained by taking $\Gamma\times\Z_M$ for a fixed $M$ and adding edges between identical sites in adjacent layers, also remain roughly isometric. The landscape law Theorem~\ref{thm:LLaw} then holds for all these graphs.
\begin{figure}[!ht]
\includegraphics[height=1.5in]{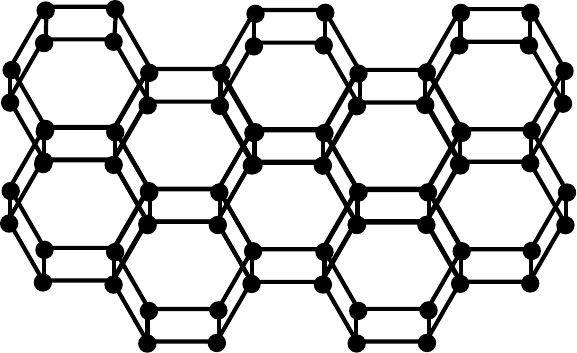}
\caption{Stacked lattices illustration.}\label{fig:stack}
\end{figure}

\subsection{Lifshitz tails for stacked graphs}\label{subsec:app-stack}
For a graph $\Gamma=(\V,\mathcal{E})$, construct the stacked graph $\Gamma\times\Z_M$ as follows, similarly as described for stacked lattices in the previous subsection. The vertices of $\Gamma\times\Z_M$ are $(x,j)$, for $x\in\V$ and $j\in\{1,\ldots,M\}$, and the edges are those in each copy of $\Gamma$ ($(x,j)\sim(y,j)$ if $x\sim y$ in $\Gamma$), along with new edges between identical sites in adjacent copies of $\Gamma$, $(x,j)\sim(x,k)$ if $|j-k|=1$. This is easiest to visualize when $\Gamma$ is e.g. a 2D lattice or tiling graph such as in Figure~\ref{fig:stack}.

If the properties in Assumption~\ref{ass:harmonic-weight} (defined below in Section~\ref{subsec:graph-prop}) hold with the natural metric $d_\Gamma$ and harmonic weight \eqref{eqn:natural-harmonic} for a graph $\Gamma$, then as we verify in Section~\ref{subsec:stack}, the required properties also hold for the described $\Gamma\times\Z_M$ with metric $\tilde{d}((x,j),(y,k)):=d_\Gamma(x,y)+\frac{1}{2}\mathbf{1}_{(x=y)\wedge (j\ne k)}$ 
and ``bad set'' $X_R\times\Z_M$. 
Thus if $\Gamma$ also satisfies the other hypotheses for Theorem~\ref{thm:Lif}, then this implies that Lifshitz tails for both $\Gamma$ and the stacked model $\Gamma\times\Z_M$ follow from Corollary~\ref{cor:LLaw-random} in this situation.

\subsection{Sierpinski gasket graph}\label{sec:SG-intro}
We briefly discuss the Sierpinski gasket graph (Figure~\ref{fig:sierpinski}), which is a fractal graph. It is not roughly isometric to any $\Z^d$, but it satisfies a similar type of heat kernel bounds, called \emph{sub-Gaussian} heat kernel bounds (for $\Z^d$, one has \emph{Gaussian} heat kernel bounds). The sub-Gaussian heat kernel bounds allow us to obtain the Landscape Law upper bound Theorem~\ref{thm:Lif}(ii), for Jacobi operators on the Sierpinski gasket graph. (Section~\ref{sec:SG}.)

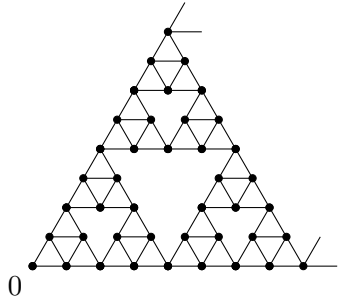
\begin{figure}[!ht]
\begin{tikzpicture}[scale=.45]
\def\points{{0,0},{1,0},{.5,.866025},{1.5,.866025},{1,0},{2,0},{1,1.73205},{0,0}}
\foreach \shift in {(0,0),(2,0),(4,0),(6,0),(1,1.73205),(5,1.73205),(2,3.4641),(4,3.4641),(3,5.19615)}{
	\foreach \c[remember=\c as \clast (initially {0,0})] in \points
	{
		\draw[shift=\shift cm] (\clast) -- (\c);
		\filldraw[shift=\shift cm, black] (\c) circle (3pt);
	}
}
\node[below left] at (0,0) {$0$};
\draw (4,6.9282) -- ++(60:1cm);
\draw (4,6.9282) -- ++(0:1cm);
\draw (8,0)--(9,0);
\draw (8,0) -- ++(60:1cm);
\end{tikzpicture}
\caption{Sierpinski gasket graph. This is a graph embedded in $\R^2$ whose edges are essentially the edges of triangles in a Sierpinski triangle. The bottom left corner is $0$, each edge is length $1$, and the graph grows unboundedly in the positive $x$ and $y$ directions.}\label{fig:sierpinski}
\end{figure}

\subsection{Random band models}\label{subsec:rbm}
In this part, we discuss the graph induced by a band matrix on $\Z^d$, for which we will see all assumptions required for the Lifshitz tails \eqref{eqn:LLaw-random} in Corollary~\ref{cor:LLaw-random} hold. More precisely, let $\Gamma_{d,W}=(\Z^d,\mathcal E_W)$ be a (naturally weighted) graph with the vertex set $\Z^d$. For a positive integer $W$, the edge set $\mathcal E_W$ is defined as follows:
\begin{align}\label{eqn:Gamma-dW}
    \mathcal E_W=\{(x,y)\in \Z^d\times \Z^d: 0<\|x-y\|\le W\},
\end{align}
where the norm $\|\cdot\|$ can be the 1-norm $\|\cdot\|_{\ell^1(\Z^d)}$ (induced by the shortest path metric $d_0$ on $\Z^d$), or the $\infty$-norm $\|\cdot\|_{\ell^{\infty}(\Z^d)}$, or  the Euclidean-norm $ \|\cdot\|_{\ell^2(\Z^d)} $, or any other norm which is equivalent to these norms.
For $x,y\in \Z^d$, we then write $x\sim y$ iff $0<\|x-y\|\le W$. The Jacobi operator in \eqref{eqn:H-gen} is denoted as $H_{d,W}$ and acts on functions on $\Gamma_{d,W}$ as
\begin{align}\label{eqn:HdW}
    H_{d,W}f(x)&= \sum_{y:y\sim x} \big(f(x)-\mu_{xy}f(y)\big) +V_x f(x). 
\end{align}
The (negative) Laplacian on $\Gamma_{d,W}$, denoted as $-\Delta_{d,W}$, is given by setting $\mu_{xy}=1$ and $V_x=0$ in $H_{d,W}$. An example of the graph for $W=2$, $d=1$ is shown in Figure~\ref{fig:Lap72}, where the matrix representation of the negative Dirichlet sub-graph Laplacian $-\Delta_{1,2}^I,I=\intbr{1}{7}$ is 
 \begin{align*} 
    -\Delta_{1,2}^{I}=&\begin{pmatrix}
	4 & -1 & -1  & 0 & 0 & 0 &  0 \\
	-1 &   4 & -1 & -1 &0&0  &  0 \\
-1	& -1  & 4 &  -1 & -1  & 0 & 0\\
0 & -1& -1& 4  & -1 & -1 & 0 \\
0  &0 & -1 & -1& 4 & -1   & -1  \\
0  &0 & 0 & -1& -1  & 4  & -1  \\
0  &0 & 0 & 0&  -1 & -1  & 4
\end{pmatrix}.
\end{align*}  
The random operator $H_{d,W}$ in matrix form corresponds to adding the potential $V_x$ to the diagonal, and replacing off-diagonal $-1$s with the appropriate $-\mu_{xy}$.
\begin{figure}[htb]
    \centering
     \begin{tikzpicture}
 \foreach \x in {0,1,2,3,4,5,6,7,8}
   \draw (\x cm,1pt) -- (\x cm,-1pt) node[anchor=north] {$\x$};
\draw (1,0) -- (2,0)--(3,0) -- (4,0)--(5,0) -- (6,0)--(7,0);
   \draw (1,0) .. controls (2,1)  .. (3,0);
 \draw (2,0) .. controls (3,-1)  .. (4,0);
   \draw (3,0) .. controls (4,1)  .. (5,0);
   \draw (4,0) .. controls (5,-1)  .. (6,0);
  
    \draw (5,0) .. controls (6,1)  .. (7,0);
    \draw[dashed] (0,0) -- (1,0);
    \draw[dashed] (7,0) -- (8,0);
       \draw[dashed] (-1,0) .. controls (0,1)  .. (1,0);
       \draw[dashed] (0,0) .. controls (1,-1)  .. (2,0);
        \draw[dashed] (7,0) .. controls (8,1)  .. (9,0);
         \draw[dashed] (6,0) .. controls (7,-1)  .. (8,0);
\end{tikzpicture}
    \caption{The graph $\Gamma_{1,2}=(\Z,\mathcal E_2)$ consists of vertices $\Z$ and edges $\mathcal E_2=\{(i,j)\in \Z^2: 0<|i-j|\le 2\}$. 
    }
    \label{fig:Lap72}
\end{figure}
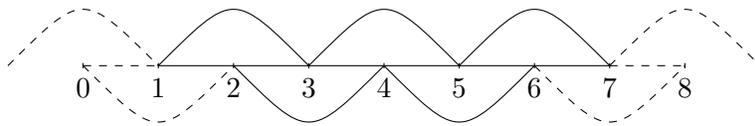

For general $d$ and $W$, the band graph $\Gamma_{d,W}$ is a notable example where all assumptions required in Theorems~\ref{thm:LLaw} and \ref{thm:Lif} hold. We will prove this later in Section \ref{sec:RBM}. As a consequence, we obtain both the landscape law and the Lifshitz tails,
\begin{corollary}[application to random band matrices]\label{cor:HdW}
    Let $A\subset \Z^d$ and $H^A_{d,W}$ be the restriction of $H_{d,W}$ on $A$. Then the {\it{Landscape Law} } \eqref{eqn:LLaw-upper}, \eqref{eqn:LLaw-lower1}, and \eqref{eqn:LLaw-lower2} holds with $\alpha=d$, where all constants $c_i$ depend only on $d,W$.  

If $\{\mu_{xy}\}$ and $\{V_x\}$ are sets of i.i.d. random variables with the same assumptions in Theorem~\ref{thm:Lif}, and $A$ satisfies \eqref{eqn:AintersectB},
then  
there are constants $c_i,E_0$ only depending on $d$ and $W$ such that 
\begin{align}\label{eqn:LLaw-random-dW}
       c_1\E N_u(c_2E)\le \E N(E) \le  c_3\E N_u(c_4E) 
\end{align}
for all $E>0$. Additionally, 
\begin{align}\label{eqn:Lif-dW}
c_5E^{d/2}\big(F(c_6E)\big)^{c_7E^{-d/2}}\le     \E N_u(E),\E N(E)\le c_8E^{d/2}\big(F(c_9E)\big)^{c_{10}E^{-d/2}},
\end{align}
for all   $ {c_0}{({\rm diam}A)^{-2}}\le E\le E_0$.  
\end{corollary}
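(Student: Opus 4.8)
\textbf{Proof proposal for Corollary~\ref{cor:HdW}.}
The plan is to verify that the band graph $\Gamma_{d,W}$ satisfies all of the hypotheses invoked in Theorems~\ref{thm:LLaw} and \ref{thm:Lif} and in Corollary~\ref{cor:LLaw-random}, and then simply quote those results. Concretely, there are four ingredients to check: (a) Assumption~\ref{ass:0} (Ahlfors $d$-regularity and bounded geometry) with $\alpha = d$; (b) the weak Poincar\'e inequality of Section~\ref{subsec:graph-prop}; (c) the harmonic-weight conditions of Assumption~\ref{ass:harmonic-weight}; and (d) the ``sufficient overlap with balls'' property \eqref{eqn:AintersectB} for the given $A$, which is assumed in the statement. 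Once (a)--(c) hold for $\Gamma_{d,W}$, Theorem~\ref{thm:LLaw} gives the three Landscape Law bounds with $\alpha = d$, Theorem~\ref{thm:Lif} gives the two-sided bounds on $\E N_u(E)$, and Corollary~\ref{cor:LLaw-random} upgrades these to \eqref{eqn:LLaw-random-dW}, whence \eqref{eqn:Lif-dW} follows by combining with Theorem~\ref{thm:Lif} (noting that since only one of $F_V,F_\mu$ need be nontrivial, one may write a common $F$ as in the statement; the constants $c_i$ depend only on $d,W$ because all graph-dependent constants for $\Gamma_{d,W}$ depend only on $d,W$).

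For (a), the key observation is that the band graph metric $d_0^{\Gamma_{d,W}}$ is comparable to the rescaled $\ell^1$ (or $\ell^\infty$) metric on $\Z^d$: since $x\sim y$ iff $0<\|x-y\|\le W$, one step in $\Gamma_{d,W}$ moves by at most $W$ in $\|\cdot\|_{\ell^\infty}$ and at least $1$, so $\tfrac{1}{W}\|x-y\|_{\ell^\infty}\le d_0^{\Gamma_{d,W}}(x,y)\le \|x-y\|_{\ell^1}$ up to a constant. Hence a ball $B^{\Gamma_{d,W}}(x,r)$ is sandwiched between two Euclidean boxes of side-lengths comparable to $r$ (with $W$-dependent constants), giving $c_1 r^d \le |B(x,r)| \le c_2 r^d$ for $r\ge 1$; bounded geometry is immediate since $\deg(x) = |\{z:0<\|z\|\le W\}| =: M_{d,W} < \infty$ is constant. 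In fact this comparison of metrics says precisely that $\Gamma_{d,W}$ is roughly isometric to $\Z^d$, so (b) and (c) can alternatively be deduced from the general rough-isometry results promised in Sections~\ref{subsec:rough-isom} and \ref{subsec:graph-prop}: the weak Poincar\'e inequality and Gaussian heat kernel bounds are stable under rough isometry (for graphs with controlled geometry), and $\Z^d$ satisfies $\mathrm{HK}(d,2)$, so $\Gamma_{d,W}$ does too; this gives (b). For the harmonic-weight Assumption~\ref{ass:harmonic-weight}, one uses the natural metric and natural harmonic weight \eqref{eqn:natural-harmonic} on $\Z^d$ — the remark in Section~\ref{subsec:bondperc} already notes that for the lattice $\Z^d$ this assumption ``always holds'' — together with the stability of these properties under the bounded perturbation of passing from $\Z^d$ to $\Gamma_{d,W}$ (equivalently, under rough isometry with the relevant additional structure).

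The main obstacle is (c): checking that Assumption~\ref{ass:harmonic-weight} genuinely transfers to $\Gamma_{d,W}$. Unlike the volume bound and the Poincar\'e/heat-kernel estimates, which are robustly invariant under rough isometry, the harmonic-weight conditions involve a more delicate quantity (the harmonic measure / harmonic weight of balls and the structure of the ``bad set'' $X_R$), and one must confirm that the specific choice of metric and harmonic weight for $\Gamma_{d,W}$ — presumably the Euclidean-type metric on $\Z^d$ with the natural weight, rather than the intrinsic band-graph metric — actually satisfies the quantitative hypotheses with constants depending only on $d$ and $W$. I expect this to require an explicit argument in Section~\ref{sec:RBM} identifying $X_R$ (likely empty or uniformly small for $\Gamma_{d,W}$, as for $\Z^d$) and estimating harmonic weights of balls via comparison with the corresponding objects on $\Z^d$; the finiteness of the band width $W$ is what keeps all constants uniform. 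Everything else — the metric comparison in (a), the rough isometry to $\Z^d$, and the final bookkeeping of constants — is routine.
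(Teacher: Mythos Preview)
Your high-level plan matches the paper exactly: (a) and (b) are obtained precisely as you say, by noting that $\Gamma_{d,W}$ is roughly isometric to $\Z^d$ (Example~\ref{ex:band-graph}) and invoking Proposition~\ref{prop:isom}, after which Theorems~\ref{thm:LLaw} and \ref{thm:Lif}(i) apply directly. Your identification of (c) as the only nontrivial step is also correct.

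However, your second paragraph's suggestion for (c) --- using the natural harmonic weight \eqref{eqn:natural-harmonic} together with ``stability under rough isometry'' --- is not what the paper does, and would not work as stated: the paper explicitly leaves open (Question~1) whether Assumption~\ref{ass:harmonic-weight} is preserved by rough isometry, so one cannot simply transfer it from $\Z^d$. Your third paragraph correctly retreats from this, but remains vague on the actual mechanism. The paper's argument in Section~\ref{sec:RBM} uses the \emph{Euclidean} metric (as you anticipate) but constructs the harmonic weight directly for the $\Gamma_{d,W}$-Laplacian, not by comparison with $\Z^d$ objects. The two concrete ingredients are: (i) a layering lemma (Lemma~\ref{claim:layer}) showing that a unit-thickness Euclidean annulus $B^d(\xi,\rho+1)\setminus B^d(\xi,\rho)$ is covered by $O_{d,W}(1)$ exterior boundaries $\partial B^d(\xi,\rho+a_k)$ in $\Gamma_{d,W}$, which allows one to define $h_{B^d_R}$ as a weighted sum of Poisson kernels $P_{B^d_{\rho_i}}(\xi,\cdot)$ over these layers (formula~\eqref{eqn:h-0}); and (ii) the Poisson kernel bound $P_{B^d_\rho}(\xi,y)\asymp \rho^{1-d}$ for the $\Gamma_{d,W}$ random walk on Euclidean balls, taken from \cite[Lemma~6.3.7]{lawler2010random}. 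These combine with the lattice-point count for Euclidean annuli to give the uniform lower bound \eqref{eqn:h-lower-ass}, with the bad set $X_R$ consisting only of an inner core $B^d(\xi,\rho_0)$ and a thin outer rim, so $|X_R|\le C(d,W)$.
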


Lifshitz tails for random band matrices were also proved in \cite{rband}, using different methods (a variational argument and comparison with a diagonal model).

For other graphs such as the Penrose tiling or Sierpinski gasket graph, we do not have a complete understanding of the geometric properties that would be required to obtain all of the desired results as in Corollary~\ref{cor:HdW}, though we do obtain some of the results as discussed in previous sections.
Instead, in Section~\ref{sec:numerics}, we numerically study the landscape law  on some of these models, providing strong evidence that the landscape counting functions can be used to establish effective upper and lower bounds for the IDS across various models. Additionally, the numerics demonstrate the advantage of the landscape counting function in productively reflecting the Lifshitz tails through appropriate scalings. 

\subsection{Outline}
The rest of this article is organized as follows. 
\begin{itemize}
\item In Section~\ref{sec:pre}, we introduce background and preliminaries concerning operators on graphs, the landscape function and landscape counting function, and rough isometries of graphs. 
We provide precise definitions for all terms and assumptions mentioned for the main results.
\item In Section~\ref{sec:llaw}, we prove Theorem~\ref{thm:LLaw}, first the upper bound part (i), and then the lower bound part (ii).
\item In Section~\ref{sec:lif}, we prove Theorem~\ref{thm:Lif} on the landscape law Lifshitz tails. 
\item In Section~\ref{sec:ex}, we finish the proofs for applying the main results to the specific examples in Sections~\ref{subsec:app-stack}--\ref{subsec:rbm}.
\item Finally, in Section~\ref{sec:numerics}, we provide numerical results for Anderson models on the Penrose tiling and Sierpinski gasket, and for random band models with bond disorders.
\end{itemize}

Throughout the paper, constants such as $C$, $c$, and $c_i$ may change from line to line. 
We will use the notation $X\lesssim Y$ to mean $X\le cY$, and $X\gtrsim Y$ to mean $X\ge cY$, for some
 constant $c$ depending only on $\Gamma$. If $X\lesssim Y\lesssim X$, we may also write $X\approx Y$.

\section{Preliminaries}\label{sec:pre}

In this section, we introduce graph operators and the landscape function, followed by the precise assumptions on the graphs and discussion concerning rough isometries.

\subsection{Operators on graphs and the landscape function}\label{sec:pre0}
 For convenience, we collect several useful facts from graph operator theory and landscape function theory in this section. 
More background and details can be found in  \cite[\S2]{arnold2022landscape}, \cite[\S2]{lratio}, and the references therein.

We consider functions on the vertices of the graph, which will be denoted by the function space 
$C(\V):=\R^{\V}=\{f: \V\to \R\}$.  
We denote by  $\one \in C(\V)$ the function which is identically $1$, and by $\one_A$ the indicator function of $A\subset \V$.
 The space $\ell^2(\V)$ is defined via the $\ell^2$ norm induced by the usual (non-weighted) inner product  $\langle f,g\rangle:=\sum_{x\in \V} {f(x)}g(x). $
The subspaces $\ell^2(A)$ and $\ell^{\infty}(A)$ are defined accordingly for any finite subset $A\subset \V$.

For $H^A$ as defined in \eqref{eqn:HA}, the corresponding quadratic form is
\begin{align*}
    \ipc{f}{H^Af}=   \sum_{x\in A}({\deg(x)}+V_x)f(x)^2 -  \sum_{\substack{ x,y\in A\\ y\sim x} }\mu_{xy}f(x) f(y),
\end{align*}
which can also be written as
\begin{align*}
  \ipc{f}{H^Af}=   \frac{1}{2} \sum_{\substack{ x,y\in A\\ y\sim x} }\mu_{xy}\big(f(x)-f(y)\big)^2+\sum_{x\in A}({\deg(x)}- \mu_x^A+V_x)f(x)^2,
\end{align*}
 where $ \mu_x^A=\sum_{y\in A:y\sim x}\mu_{xy} $.

  The following maximum principle will be useful in obtaining bounds on the landscape function.
\begin{definition}\label{def:maxP}
 For $A\subset \V$, we say $H^A:\ell^2(A)\to \ell^2(A)$ satisfies the (weak) maximum principle if $\inf _{x\in A}(H^Af)(x)\ge 0$ implies $\inf _{x\in A} f(x)\ge 0$ . 
\end{definition}
\begin{lemma}[Lemma~2.1, \cite{lratio}]\label{lem:u-positive}
Let $A\subset\V$ be a finite subset.
  If $H^A$ {is strictly positive and}  satisfies the (weak) maximum principle, then $ \braket{x}{(H^A)^{-1}}{y}\ge 0$ for all $x,y\in A$. As a consequence, the landscape function $u_A(x)=(H^A)^{-1}\one (x)>0$ for all $x\in A$. 
\end{lemma}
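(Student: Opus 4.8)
The statement to prove is Lemma~2.1 of \cite{lratio} (reproduced here as Lemma~\ref{lem:u-positive}): if $H^A$ is strictly positive and satisfies the weak maximum principle, then $(H^A)^{-1}$ has nonnegative entries, and consequently $u_A = (H^A)^{-1}\one > 0$ pointwise. Since the statement is quoted from the literature, I would give the short self-contained argument. The plan is to deduce nonnegativity of the Green's function entry-by-entry from the maximum principle, and then upgrade pointwise nonnegativity of $u_A$ to strict positivity using connectedness of $\Gamma_A$ together with the off-diagonal structure of $H^A$.

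\textbf{Step 1 (Green's function is entrywise nonnegative).} Fix $y \in A$ and let $g = (H^A)^{-1}\delta_y \in \ell^2(A)$, which is well-defined since $H^A$ is strictly positive, hence invertible. Then $H^A g = \delta_y$, so $(H^A g)(x) = \delta_y(x) \ge 0$ for every $x \in A$. By the weak maximum principle (Definition~\ref{def:maxP}), $\inf_{x \in A} g(x) \ge 0$, i.e. $g(x) \ge 0$ for all $x$. But $g(x) = \braket{x}{(H^A)^{-1}}{y}$, so $\braket{x}{(H^A)^{-1}}{y} \ge 0$ for all $x, y \in A$, which is the Green's function claim.

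\textbf{Step 2 (Landscape function is nonnegative, then strictly positive).} Summing over $y \in A$, $u_A(x) = ((H^A)^{-1}\one)(x) = \sum_{y \in A}\braket{x}{(H^A)^{-1}}{y} \ge 0$ for all $x \in A$. To get strict positivity, suppose $u_A(x_0) = 0$ for some $x_0 \in A$. Since $u_A \ge 0$ achieves its minimum value $0$ at $x_0$, and $H^A u_A(x_0) = 1 > 0$, I examine the defining relation $H^A u_A(x_0) = (\deg(x_0) + V_{x_0}) u_A(x_0) - \sum_{y \in A : y \sim x_0} \mu_{x_0 y} u_A(y) = 1$. The first term vanishes, so $-\sum_{y \in A: y\sim x_0}\mu_{x_0 y} u_A(y) = 1 > 0$; but every $\mu_{x_0 y} \ge 0$ and every $u_A(y) \ge 0$, making the left side $\le 0$, a contradiction. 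Hence $u_A(x) > 0$ for all $x \in A$, completing the proof. (Alternatively, one can argue via connectedness of $\Gamma_A$ by propagating positivity along edges, but the direct computation above is cleaner and does not even need connectedness.)

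\textbf{Main obstacle.} There is essentially no obstacle here, since the weak maximum principle is exactly the hypothesis needed for Step 1, and Step 2 is a two-line computation exploiting the sign structure of $H^A$ (nonnegative off-diagonal hopping $\mu_{xy} \ge 0$, nonnegative potential $V_x \ge 0$) against the positive right-hand side $\one$. The only point requiring a little care is ensuring $H^A$ is indeed strictly positive so that $(H^A)^{-1}$ exists and $u_A$ is well-defined; this is guaranteed because $A$ is finite and the quadratic form $\ipc{f}{H^A f} = \tfrac12\sum_{x,y\in A,\, y\sim x}\mu_{xy}(f(x)-f(y))^2 + \sum_{x\in A}(\deg(x) - \mu_x^A + V_x)f(x)^2$ has nonnegative diagonal coefficients $\deg(x) - \mu_x^A \ge 0$ (as $\mu_{xy}\le 1$) with at least a strict contribution coming from boundary vertices of $A$ where $\mu_x^A < \deg(x)$, which is the content of the preceding discussion in the excerpt and may simply be cited.
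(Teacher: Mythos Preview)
Your proof is correct. The paper does not actually prove this lemma; it is quoted from \cite{lratio} and stated without proof, so there is no ``paper's own proof'' to compare against. Your argument is the standard one: Step~1 applies the weak maximum principle column-by-column to $(H^A)^{-1}\delta_y$, and Step~2 derives a sign contradiction from the equation $H^A u_A = \one$ at a putative zero of $u_A$.

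One small remark on presentation: in your ``Main obstacle'' paragraph you discuss verifying that $H^A$ is strictly positive, but this is a \emph{hypothesis} of the lemma, not something to be proved here (its verification for the specific $H^A$ of \eqref{eqn:HA} is the content of the separately cited Lemma~\ref{lem:maxP}). Also, note that strict positivity of $u_A$ can be obtained abstractly without invoking the explicit form of $H^A$: since $(H^A)^{-1}$ is a strictly positive operator, its diagonal entries satisfy $\braket{x}{(H^A)^{-1}}{x} > 0$, whence $u_A(x) \ge \braket{x}{(H^A)^{-1}}{x} > 0$. Your Step~2 argument is equally valid in the paper's setting and perhaps more concrete, but it does rely on the specific sign structure $\mu_{xy} \ge 0$ of $H^A$, which is implicit in context but not literally part of the lemma's hypotheses as stated.
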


The (weak) maximum principle holds widely, including  for $H$ (and its restriction $H^A$) given in \eqref{eqn:H-gen}.  
\begin{lemma}[Lemma~2.2, \cite{lratio}]\label{lem:maxP}
Let $H$ be as in \eqref{eqn:H-gen}, and let $H^A$ be its restriction   on $\ell^2(A)$ for some finite subset $A\subset \V$. Assume that $0\le \mu_{xy}=\mu_{yx}\le 1$ and $V_x\ge 0$ for all $x,y\in A$. Then $H^A$ is self-adjoint 
and has positive spectrum, and satisfies the (weak) maximum principle as in Definition~\ref{def:maxP}. 
\end{lemma}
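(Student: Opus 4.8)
The plan is to treat the three assertions in turn, with the quadratic form of $H^A$ doing most of the work. Self-adjointness is immediate: since $A$ is finite, $H^A$ is represented by a finite real matrix which is symmetric because $\mu_{xy}=\mu_{yx}$, so $H^A=(H^A)^*$ on $\ell^2(A)$.

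For the positivity of the spectrum, I would start from the second quadratic-form expression displayed above,
\[
\ipc{f}{H^Af}= \frac{1}{2}\sum_{\substack{x,y\in A\\ y\sim x}}\mu_{xy}\big(f(x)-f(y)\big)^2+\sum_{x\in A}\big(\deg(x)-\mu_x^A+V_x\big)f(x)^2,
\]
and observe that for every $x\in A$,
\[
\deg(x)-\mu_x^A=\#\{y\sim x:\, y\notin A\}+\sum_{y\in A:\,y\sim x}(1-\mu_{xy})\ge 0,
\]
so, together with $V_x\ge0$ and $\mu_{xy}\ge0$, every term is nonnegative and $H^A\ge0$. To upgrade this to strict positivity, suppose $\ipc{f}{H^Af}=0$, so that each summand vanishes. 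Let $Z=\{x\in A:\deg(x)-\mu_x^A+V_x=0\}$. For $x\notin Z$ we read off $f(x)=0$. For $x\in Z$, the identity above forces every $\Gamma$-neighbor $y$ of $x$ to lie in $A$ with $\mu_{xy}=1$; the vanishing of the corresponding edge term then gives $f(x)=f(y)$, so $f$ is constant on each connected component $C$ (in the graph metric of $\Gamma$) of $Z$. Since $\Gamma$ is connected and infinite while $C\subseteq A$ is finite, $C$ cannot be closed under taking $\Gamma$-neighbors, so some $x\in C$ has a $\Gamma$-neighbor in $A\setminus Z$ (it lies in $A$ because $x\in Z$); as $f$ vanishes there and $\mu=1$ on that edge, we get $f(x)=0$ and hence $f\equiv0$ on $C$. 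Therefore $f\equiv0$ on $A$, i.e.\ $H^A$ is positive definite.

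For the weak maximum principle, suppose $(H^Af)(x)\ge0$ for all $x\in A$, let $m=\min_{x\in A}f(x)$ be attained at $x_0$, and assume for contradiction $m<0$. Writing
\[
(H^Af)(x_0)=\sum_{y\in A:\,y\sim x_0}\big(f(x_0)-\mu_{x_0y}f(y)\big)+\#\{y\sim x_0:\,y\notin A\}\,f(x_0)+V_{x_0}f(x_0),
\]
and using $f(x_0)-\mu_{x_0y}f(y)=(1-\mu_{x_0y})f(x_0)+\mu_{x_0y}\big(f(x_0)-f(y)\big)\le0$ (because $1-\mu_{x_0y}\ge0$, $f(x_0)=m<0$, and $f(x_0)-f(y)\le0$), together with $f(x_0)<0$ and $V_{x_0}\ge0$, we get $(H^Af)(x_0)\le0$. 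Hence $(H^Af)(x_0)=0$ with every term vanishing, which forces $\mu_{x_0y}=1$ and $f(y)=m$ for every $\Gamma$-neighbor $y$ of $x_0$, as well as $\#\{y\sim x_0:\,y\notin A\}=0$ and $V_{x_0}=0$. Thus $S=\{x\in A:f(x)=m\}$ contains $x_0$ and is closed under taking $\Gamma$-neighbors (which, moreover, stay in $A$); by connectedness of $\Gamma$ this forces $S=\Gamma$, contradicting that $S\subseteq A$ is finite while $\Gamma$ is infinite. Therefore $m\ge0$, and $\inf_A f\ge0$.

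I expect the routine part to be the inequality $H^A\ge0$, and the only genuine (if mild) obstacle to be the two propagation arguments: one must be careful that edges with $\mu_{xy}=0$ do not break them — the point is that such an edge still contributes $(1-\mu_{xy})=1$ to the diagonal coefficient, so the diagonal term can vanish only at interior vertices all of whose incident edges have weight exactly $1$ — and then the combination of connectedness of $\Gamma$ with finiteness of $A$ rules out a nonempty "extremal" vertex set closed under neighbors.
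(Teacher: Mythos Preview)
Your proof is correct. The paper does not actually supply its own proof of this lemma---it is stated there as a citation of Lemma~2.2 in \cite{lratio}---so there is nothing to compare against directly. Your argument is a clean, self-contained verification: the quadratic-form identity gives $H^A\ge0$ immediately, and both the strict positivity and the weak maximum principle are handled by the same propagation mechanism (a vertex where the diagonal coefficient vanishes must have all its $\Gamma$-neighbors in $A$ with edge weight $1$, so an extremal/kernel set closed under neighbors would fill all of $\Gamma$, contradicting finiteness of $A$). The handling of the degenerate case $\mu_{xy}=0$ via the $(1-\mu_{xy})$ contribution to the diagonal is exactly the right observation.
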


\begin{lemma}[Landscape uncertainty principle] There is a unique $u=u^A\in {\mathcal H^A}:=\ell^2(A) $ such that  $H^Au=\one_A$. In addition,  for $f\in {\mathcal H^A}$,
    \begin{align}\label{eqn:UCP-gen}
        \ipc{f}{H^Af}=\frac{1}{2}\sum_{\substack{x,y\in A:\\y\sim x}}\mu_{xy}u(x)u(y)\bigg(\frac{f(x)}{u(x)}-\frac{f(y)}{u(y)}\bigg)^2 + \sum_{{x\in A}}\frac{f^2(x)}{u(x)}.
    \end{align}
\end{lemma}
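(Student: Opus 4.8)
The plan is to prove the "landscape uncertainty principle" identity \eqref{eqn:UCP-gen} by starting from the known quadratic form expression for $H^A$ and rewriting it in terms of the ratio $f/u$. First I would recall that $H^A$ is strictly positive (Lemma~\ref{lem:maxP}), so $(H^A)^{-1}$ exists, giving existence and uniqueness of $u=u^A\in\ell^2(A)$ with $H^Au=\one_A$; positivity $u(x)>0$ comes from Lemma~\ref{lem:u-positive}, so dividing by $u(x)$ is legitimate. To obtain the identity, I would substitute $f(x)=u(x)g(x)$ (i.e.\ $g=f/u$, which is well-defined since $u>0$) into the second form of the quadratic form already displayed in the excerpt,
\[
\ipc{f}{H^Af}=\frac12\sum_{\substack{x,y\in A\\ y\sim x}}\mu_{xy}\big(f(x)-f(y)\big)^2+\sum_{x\in A}\big(\deg(x)-\mu_x^A+V_x\big)f(x)^2.
\]

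The main computational step is to expand $\big(f(x)-f(y)\big)^2=\big(u(x)g(x)-u(y)g(y)\big)^2$. I would use the algebraic identity
\[
\big(u(x)g(x)-u(y)g(y)\big)^2=u(x)u(y)\big(g(x)-g(y)\big)^2+\big(u(x)-u(y)\big)\big(u(x)g(x)^2-u(y)g(y)^2\big),
\]
which is a standard "discrete Leibniz/chain rule" type decomposition (verify by expanding both sides). Summing the first piece over edges gives exactly the first term on the right of \eqref{eqn:UCP-gen}. For the leftover terms, I would collect all contributions quadratic in $g(x)$ for each fixed $x$: from the edge sum one gets $\tfrac12\sum_y \mu_{xy}\big[(u(x)-u(y))u(x)g(x)^2 - (u(x)-u(y))u(y)g(y)^2\big]$, and after relabeling/symmetrizing the $y$-indexed part this becomes $\sum_x g(x)^2 u(x)\sum_{y\sim x}\mu_{xy}(u(x)-u(y))$; combined with the diagonal term $\sum_x(\deg(x)-\mu_x^A+V_x)u(x)^2 g(x)^2$, the coefficient of $g(x)^2$ is
\[
u(x)\Big[\sum_{y\sim x}\mu_{xy}(u(x)-u(y)) + (\deg(x)-\mu_x^A+V_x)u(x)\Big] = u(x)\cdot (H^Au)(x) = u(x),
\]
using precisely the equation $H^Au=\one_A$ together with the definition \eqref{eqn:HA} of $H^A$ (note $\deg(x)u(x)-\sum_{y\in A,y\sim x}\mu_{xy}u(y)+V_xu(x)=1$). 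Since $g(x)^2 u(x)=f(x)^2/u(x)$, this yields the second term $\sum_x f^2(x)/u(x)$ and completes the identity.

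The main obstacle — really the only delicate point — is bookkeeping the boundary/index conventions: the sums $\sum_{y\sim x}$ in $H^A$ range over \emph{all} neighbors $y$ in $\Gamma$ (contributing to $\deg(x)$) while the hopping sum and $\mu_x^A$ range only over $y\in A$, so one must be careful that the rewriting only ever pairs vertices both lying in $A$, and that the $\deg(x)$ vs.\ $\mu_x^A$ split is respected when identifying the coefficient of $g(x)^2$ with $(H^Au)(x)$. I expect this to be straightforward once one writes $H^A$ explicitly as in \eqref{eqn:HA} and keeps the two types of sums separate throughout; no inequalities or analytic estimates are needed, so the proof is a finite-dimensional algebraic manipulation.
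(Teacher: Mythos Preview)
Your proof is correct; the algebraic identity you use and the subsequent symmetrization/collection step are exactly right, and the bookkeeping of $\deg(x)$ versus $\mu_x^A$ works out cleanly to produce $(H^Au)(x)=1$ as the coefficient of $g(x)^2$. Note, however, that the paper does not actually supply a proof of this lemma: it is stated in Section~\ref{sec:pre0} among ``several useful facts'' collected from prior work, with references to \cite{arnold2022landscape} and \cite{lratio} for background and details. So there is no paper proof to compare against; your argument is the standard one and would serve as a self-contained verification.
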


\subsection{Graph properties and assumptions}\label{subsec:graph-prop}
Roughly speaking, we will want to work on graphs that are amenable to tools from harmonic analysis. 
This will allow us to adapt tools and methods from \cite{DFM,arnold2022landscape}.
Our first obstruction in moving from $\R^d$ or $\Z^d$ to graphs comes from the non-regular geometry. For the landscape laws in \cite{DFM,arnold2022landscape}, the periodic structure of $\R^d$ and $\Z^d$ was used to construct perfect partitions of cubes and perform arguments utilizing the resulting translation invariance. 
For graphs, we have no such possibilities and instead must work with \emph{coverings} rather than partitions, and with certain collections of coverings rather than translation invariance.
To this end, we will need certain geometric covering properties. 
We have the following covering results which are consequences of the volume control assumption \eqref{eqn:vol-control}.

\begin{proposition}[covering properties]\label{prop:covering}
Assume volume control \eqref{eqn:vol-control}. Then the following properties hold.
\begin{enumerate}[(a)]
\item Finite overlap:
Then there is a constant $b_\Gamma>0$ 
such that for any   $R\ge 1$, there is a covering $ {\mathcal P}(R)=\{B(z_i,R), z_i\in \V\}_{i\ge 1}  $ such that 
\begin{align}
   \sup_{y\in \V}\#\{B \in {\mathcal P}(R):y\in   B\}\le b_{\Gamma}<\infty.  \label{eqn:finite-cover}
\end{align}

\item Translation replacement: There are constants $C$ and $b$ so that for any $0<\kappa<1$ and $R\ge1$, there is a collection of at most $C\kappa^{-\alpha}$ covers $\mathcal{P}^{(j)}(R)=\{B(z_i^{(j)},R)\}_i$, all with finite covering constant $b$, so that
$\bigcup_j\bigcup_i B(z_i^{(j)},\kappa R)$ is a covering (not necessarily with any particular covering constant) of $\Gamma$. 
\end{enumerate}
\end{proposition}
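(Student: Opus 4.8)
The plan is to deduce both claims from the volume control \eqref{eqn:vol-control} via a standard greedy/maximal-packing construction. For part (a), fix $R\ge1$ and choose a maximal $R$-separated subset $\{z_i\}_{i\ge1}\subset\V$ (i.e. $d_0(z_i,z_j)\ge R$ for $i\ne j$, maximal with respect to inclusion); such a set exists by Zorn's lemma, and is automatically countable since $\V$ is. Maximality forces $\bigcup_i B(z_i,R)=\V$, so $\mathcal P(R)=\{B(z_i,R)\}_i$ is a covering. To bound the overlap, suppose $y\in B(z_i,R)$ for $i$ ranging over some index set $J$. Then all the $z_i$, $i\in J$, lie in $B(y,R)$, and the balls $B(z_i,\lfloor R/2\rfloor)$ (for $R\ge 2$) are pairwise disjoint by the separation property and the triangle inequality, and are all contained in $B(y,2R)$. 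Summing volumes and applying \eqref{eqn:vol-control},
\begin{align*}
|J|\cdot c_1\left(\tfrac{R}{2}-1\right)^{\alpha}\le \sum_{i\in J}\big|B(z_i,\lfloor R/2\rfloor)\big|\le |B(y,2R)|\le c_2(2R)^{\alpha},
\end{align*}
so $|J|\le b_\Gamma$ for a constant $b_\Gamma$ depending only on $c_1,c_2,\alpha$ (absorbing the small-$R$ cases $1\le R<2$, finitely many, into the constant using \eqref{eqn:bded-geo}). This gives \eqref{eqn:finite-cover}.

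For part (b), fix $0<\kappa<1$ and $R\ge1$. The idea is that a single cover by $R$-balls, when its radii are shrunk to $\kappa R$, leaves gaps; but a bounded number of ``translated'' copies of such a cover, together, still cover $\Gamma$ with $\kappa R$-balls. Concretely: take a maximal $(\kappa R)$-separated set $W=\{w_k\}_{k}\subset\V$, so that $\{B(w_k,\kappa R)\}_k$ covers $\Gamma$ with finite overlap constant $b$ (by the argument of part (a) applied at radius $\kappa R$). Now I want to organize the points $w_k$ into at most $C\kappa^{-\alpha}$ groups, each of which is $R$-separated, so that each group is (a subset of) one of the covers $\mathcal P^{(j)}(R)$. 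This is a graph-coloring statement: form the ``conflict graph'' on vertex set $W$ with an edge between $w_k,w_\ell$ whenever $d_0(w_k,w_\ell)<R$; since the $B(w_k,\kappa R/2)$ are disjoint and any ball $B(w_k,R)$ can contain at most $c_2(R)^\alpha / (c_1(\kappa R/2)^\alpha)\lesssim \kappa^{-\alpha}$ of the points $w_\ell$, the conflict graph has maximum degree $\lesssim\kappa^{-\alpha}$, hence is properly colorable with $C\kappa^{-\alpha}$ colors by greedy coloring. Each color class $W^{(j)}$ is $R$-separated; extend it to a maximal $R$-separated set to obtain a genuine cover $\mathcal P^{(j)}(R)=\{B(z_i^{(j)},R)\}_i\supset\{B(w,R):w\in W^{(j)}\}$, which has finite covering constant $b$ by part (a). Since $\bigcup_j W^{(j)}=W$, we get $\bigcup_j\bigcup_i B(z_i^{(j)},\kappa R)\supset\bigcup_{w\in W}B(w,\kappa R)=\V$, as required.

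The routine parts are the volume-counting packing estimates and the greedy coloring; the main point requiring care is the bookkeeping in (b) — ensuring that each color class, after being completed to a maximal $R$-separated set, genuinely yields a cover with a \emph{uniform} finite-overlap constant $b$ (independent of $\kappa$ and $j$), and that the shrunk balls centered only at the original points $w\in W^{(j)}$ (not the added completion points) already suffice to cover $\V$ when unioned over $j$. I expect this organizational step — separating ``the points that do the covering'' from ``the points that make each individual cover a legitimate $\mathcal P^{(j)}(R)$'' — to be the one subtlety; everything else follows from \eqref{eqn:vol-control} and the triangle inequality. One should also handle the boundary regime $1\le R<2$ (or $\kappa R<1$, where balls are single points) separately, but this is trivial using bounded geometry \eqref{eqn:bded-geo}.
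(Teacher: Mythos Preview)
Your argument is correct, with only a cosmetic adjustment needed in (a): with separation $d_0(z_i,z_j)\ge R$ the half-radius balls need not be strictly disjoint (take $R$ even and a midpoint), so either take separation $>R$ or shrink the packing radius slightly; this is harmless and you already flag the small-$R$ regime. For (a) the paper simply cites \cite[Lemma~6.2]{barlow2017random}, which is exactly the maximal-packing argument you spell out.

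For (b) the two routes genuinely differ. The paper starts from a cover $\mathcal P(R/2)=\{B(z_i,R/2)\}_i$ with finite overlap, covers each $B(z_i,R/2)$ by at most $M\lesssim\kappa^{-\alpha}$ balls $\{B(x_i^{(j)},\kappa R/2)\}_{j=1}^{M}$ (padding by repetition so every $i$ contributes exactly $M$ centers), and then \emph{transposes}: $\mathcal P^{(j)}(R):=\{B(x_i^{(j)},R)\}_i$. Each $\mathcal P^{(j)}$ is a cover because $B(x_i^{(j)},R)\supset B(z_i,R/2)$, and its overlap constant is bounded by applying the scaled-overlap Lemma~\ref{lem:scaled-cover} to the original $R/2$-cover. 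Your route instead starts from the $\kappa R$-net $W$ and partitions it via greedy coloring of the conflict graph, then completes each color class to a maximal $R$-separated set. Both use the same volume-comparison count to get the $\kappa^{-\alpha}$ bound; the paper's construction is more hands-on and leans on Lemma~\ref{lem:scaled-cover}, while yours is cleaner conceptually and sidesteps that lemma, at the price of invoking greedy coloring on an infinite graph (unproblematic here since the conflict graph has finite degree and $\V$ is countable). The subtlety you flag --- that completion points need not contribute to the $\kappa R$-covering --- is resolved exactly as you say: $\bigcup_j W^{(j)}=W$ already covers before any completion.
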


\begin{proof}[Proof of Proposition~\ref{prop:covering}]
(a) Follows from \cite[Lemma 6.2]{barlow2017random}.

(b) First, we can cover a ball $B(z,R)$ with $\le C_\Gamma \kappa^{-\alpha}$ smaller balls $\{B(x^{(j)},\kappa R)\}_j$, where $x^{(j)}\in B(z,R)$, using a similar volume comparison argument as in \cite[Lemma 6.2]{barlow2017random}. Take a maximal set of points $x^{(j)}\in B(z,R)$ so that $B(x^{(j)},\kappa R/2)$ are disjoint, and let $M$ be the number of such points. Then $\bigcup_{j=1}^MB(x^{(j)},\kappa R)$ covers $B(z,R)$ by maximality, and $\bigcup_{j=1}^MB(x^{(j)},\kappa R/2)$ is a disjoint union contained in the larger ball $B(z,3R/2)$. Thus
\begin{align*}
C_U(3/2)^\alpha R^\alpha\ge |B(z,3R/2)|\ge \left|\bigcup_{j=1}^MB(x^{(j)},\kappa R/2)\right| &=\sum_{j=1}^M|B(x^{(j)},\kappa R/2)|\ge M \tilde{C}_L2^{-\alpha}\kappa^\alpha R^\alpha,
\end{align*}
where $\tilde{C}_L:=\min(C_L,1)$, so that $M\le \frac{C_U}{\tilde{C}_L}3^\alpha \kappa^{-\alpha}$.

Next, let $\mathcal{P}(R/2)=\{B(z_i,R/2)\}_i$ be a covering with finite overlap constant $b_\Gamma$ as in \eqref{eqn:finite-cover}. For each ball $B(z_i,R/2)$, cover it by the $M_i$ smaller balls $\mathcal{C}_i=\{B(x_i^{(j)},\kappa R/2)\}_{j=1}^{M_i}$ as described above (with radius $R/2$). We form the covers $\mathcal{P}^{(j)}(R)$ by taking one center $x_i^{(j)}$ from each $\mathcal{C}_i$. By repeating the elements of $\mathcal{C}_i$, 
we can ensure there are $M=\lfloor \frac{C_U}{\tilde{C}_L}3^\alpha \kappa^{-\alpha}\rfloor$ elements in $\mathcal{C}_i$. Then for $j=1,\ldots,M$, define
\begin{align*}
\mathcal{P}^{(j)}(R)&= \{B(x_i^{(j)}, R)\}_i,
\end{align*}
which covers $\Gamma$ since $B(x_i^{(j)},R)\supset B(z_i,R/2)$. 
By construction, $\bigcup_i\bigcup_{j=1}^{M} B(x_i^{(j)},\kappa R)$ covers $\Gamma$ (even with radii $\kappa R/2$ instead). 
Finally, each $\mathcal{P}^{(j)}(R)$ has finite overlap covering constant
\begin{align*}
\#\{i:y\in B(x_i^{(j)},R)\}\le \#\{i:y\in B(z_i,2R)\}\le \frac{C_U}{{C}_L}8^\alpha b_\Gamma,
\end{align*}
using \eqref{eqn:finite-cover-lambda} in Lemma~\ref{lem:scaled-cover} for the last inequality.
\end{proof}

\begin{example}\label{ex:Zd} 
For the standard $\Z^d$ lattice, we can consider different balls, such as a cube ($\ell^\infty$ ball), Euclidean $\ell^2$ ball, or $\ell^1$ (graph/natural metric) ball, 
and construct explicit coverings that satisfy Proposition~\ref{prop:covering}.
For example, the finite overlap property in part (a) is clear for cubes, and follows for the other balls, for example by embedding a cube within each ball and considering a covering with those cubes. 

For part (b), we can use the fact that $\Z^d$ does have translation invariance. 
Given a  cover $\mathcal P (R)=\{B(z_i,R)\}_i$ with finite overlap constant as in (a), consider its translation $\mathcal P^{(j)} (R)=\{B(T^jz_i,R)\}_i$ where $T^j$ is a translation (affine map) mapping the center $z_i$ to $T^jz_i\in B(z_i,R)$. 
Then for $r<R$, considering cubes or cubes embedded in the other balls, we only need at most $C(R/r)^d$ many $T^j$ so that $\{B(T^jz_i,r)\}_j$  covers $B(z_i,R)$ for any fixed $i$, and then $\cup_i\cup_jB(T^jz_i,r)$ covers $\Z^d$.
\end{example}

Under the finite overlap covering property in \eqref{eqn:finite-cover-lambda}, we prove the following  
 two lemmas, first that one also obtains a finite overlap property for larger balls with the same centers, and second, 
that the landscape counting function $N_u$ defined in \eqref{eqn:Nu} is comparable (up to a constant factor) across different choices of coverings $\mathcal{P}$.

\begin{lemma}[Scaled finite overlap covering property]\label{lem:scaled-cover}
Under the volume control ${\rm Vol}(B(x,r))\asymp r^\alpha$ from \eqref{eqn:vol-control}, the finite overlap  condition \eqref{eqn:finite-cover} for a partition $\mathcal{P}(r)=\{B(z_i,r)\}_{i\in\mathfrak{I}}$ implies that for any $\lambda\ge1$, 
\begin{align}\label{eqn:finite-cover-lambda}
   \sup_{y\in \V}\#\{z_i:y\in B(z_i,\lambda r)\} \le \frac{2^\alpha C_U}{C_L}\lambda^\alpha b_{\Gamma}.
\end{align}
\end{lemma}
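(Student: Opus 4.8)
The statement to prove: given volume control $c_1 r^\alpha \le |B(x,r)| \le c_2 r^\alpha$ (with constants $C_L, C_U$) and the finite overlap property for $\mathcal{P}(r) = \{B(z_i, r)\}$, conclude that each point is in at most $\frac{2^\alpha C_U}{C_L}\lambda^\alpha b_\Gamma$ many larger balls $B(z_i, \lambda r)$.

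The key idea is a standard volume-packing argument. Fix $y \in \V$. Let $I_y = \{i : y \in B(z_i, \lambda r)\}$, i.e. the centers $z_i$ within distance $\lambda r$ of $y$. We want to bound $|I_y|$. The centers $\{z_i\}_{i \in I_y}$ all lie in $B(y, \lambda r)$. We cannot directly say the balls $B(z_i, r)$ are disjoint, but we CAN use the finite overlap property: by \eqref{eqn:finite-cover}, each point of $\V$ lies in at most $b_\Gamma$ of the balls $B(z_i, r)$. So summing indicator functions,
\[
\sum_{i \in I_y} |B(z_i, r)| = \sum_{w \in \V} \#\{i \in I_y : w \in B(z_i, r)\} \le b_\Gamma \cdot \Big| \bigcup_{i \in I_y} B(z_i, r) \Big|.
\]
Now each $B(z_i, r)$ with $i \in I_y$ is contained in $B(y, \lambda r + r) \subseteq B(y, 2\lambda r)$ (using $\lambda \ge 1$), so the union on the right is contained in $B(y, 2\lambda r)$ and has cardinality at most $C_U (2\lambda r)^\alpha$. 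On the left, each $|B(z_i, r)| \ge C_L r^\alpha$ by the lower volume bound (valid since $r \ge 1$). Combining, $|I_y| \cdot C_L r^\alpha \le b_\Gamma C_U (2\lambda r)^\alpha$, which gives $|I_y| \le \frac{2^\alpha C_U}{C_L} \lambda^\alpha b_\Gamma$, as desired. Taking the supremum over $y$ finishes the proof.

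First I would set up the index set $I_y$ and observe the containment $B(z_i,r) \subseteq B(y, 2\lambda r)$. Then I would run the double-counting identity exchanging the order of summation, invoking \eqref{eqn:finite-cover} for the upper bound and the volume lower bound for each term on the left. The only mild subtlety is ensuring $r \ge 1$ so that the volume control \eqref{eqn:vol-control} applies with its stated constants (the hypothesis $\mathcal{P}(r)$ with $r\ge 1$ covers this), and using $\lambda \ge 1$ to absorb $\lambda r + r \le 2\lambda r$. There is no real obstacle here — it is a routine covering/packing estimate — so the proof is short; the main point is just to be careful that the overlap bound $b_\Gamma$ is applied to the small-radius balls while the volume comparison is done on the enclosing ball $B(y, 2\lambda r)$.
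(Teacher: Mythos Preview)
Your proof is correct and follows essentially the same volume-comparison/double-counting argument as the paper: identify the relevant centers as lying in $B(y,\lambda r)$, sum $|B(z_i,r)|$ and bound it above by $b_\Gamma$ times the volume of the enclosing ball, then compare with the lower volume bound. The only cosmetic difference is that the paper bounds the union by $|B(y,(\lambda+1)r)|$ and obtains $(\lambda+1)^\alpha$ before noting $(\lambda+1)\le 2\lambda$, whereas you pass directly to $B(y,2\lambda r)$.
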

\begin{proof}
This is a volume comparison. In what follows, $z_i$ will always denote a center of a ball in the given covering $\mathcal{P}(r)$. First, the set $\{z_i:y\in B(z_i,\lambda r)\}=\{z_i: d(y,z_i)\le \lambda r\}$ is the set $\{z_i:z_i\in B(y,\lambda r)\}$.
Let
$S=\bigcup_{z_i\in B(y,\lambda r)}B(z_i,r)$, which is contained in $B(y,\lambda r+r)$. By equation \eqref{eqn:finite-cover}, each point $x\in S$ can appear in at most $b_\Gamma$ of the balls $B(z_i,r)$, so that
\begin{align*}
\sum_{z_i\in B(y,\lambda r)}|B(z_i,r)| & = \sum_{x\in S}\#\{B(z_i,r):x\in B(z_i,r)\} \\
&\le b_\Gamma |B(y,\lambda r+r)|.
\end{align*}
Since $|B(z_i,r)|\ge C_Lr^\alpha$ and $|B(y,\lambda r+r)|\le C_U(\lambda+1)^\alpha r^\alpha$, this yields,
\begin{align*}
\#\{z_i:z_i\in B(y,\lambda r)\} &=\sum_{z_i\in B(y,\lambda r)}1 
\le b_\Gamma\frac{C_U(\lambda+1)^\alpha r^\alpha}{C_Lr^\alpha}=\frac{C_U}{C_L}(\lambda+1)^\alpha b_\Gamma.
\end{align*}
\end{proof}

If we consider different coverings with similar radii $R$ and $R'$, the corresponding landscape counting functions differ only by a constant pre-factor depending on $\Gamma$. This allows us to work with any specific choice of partition $\mathcal{P}$ satisfying the finite overlap property \eqref{eqn:finite-cover},
since the resulting landscape counting functions are equivalent up to the constant factor.
\begin{lemma}[partition comparison]\label{lem:partition-comp}
Suppose $\Gamma$ has volume growth control \eqref{eqn:vol-control}.
Let $\mathcal{P}$ be a covering of balls of radius $R$, and $\mathcal{P}'$ a covering of balls of radius $R'$. Suppose that $R'\le C_0R$ and that $\mathcal{P}$ has cover overlap constant $b_\Gamma$ in \eqref{eqn:finite-cover}. Then for any region $A$ and energy $E$,
\begin{align*}
N_u^{\mathcal{P}(R),A}(E)\le (1+C_0)^\alpha C_\Gamma b_\Gamma N_u^{\mathcal{P}'(R'),A}(E).
\end{align*}
\end{lemma}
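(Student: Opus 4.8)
The plan is to bound the count defining $N_u^{\mathcal{P}(R),A}(E)$ by assigning to each contributing ball $B=B(z,R)\in\mathcal{P}|_A$ with $\min_{x\in B\cap A}1/u(x)\le E$ a ``witness'' ball $B'\in\mathcal{P}'|_A$ that also contributes to $N_u^{\mathcal{P}'(R'),A}(E)$, and then showing that no $B'$ is charged too many times. First I would fix such a $B=B(z,R)$ and pick the minimizing vertex $x_B\in B\cap A$, i.e.\ $1/u(x_B)\le E$. Since $\mathcal{P}'$ is a covering of $\Gamma$, there is some ball $B'=B'(z',R')\in\mathcal{P}'$ containing $x_B$; because $x_B\in A$, in fact $B'\in\mathcal{P}'|_A$, and since $x_B\in B'\cap A$ we get $\min_{x\in B'\cap A}1/u(x)\le 1/u(x_B)\le E$, so $B'$ contributes to $N_u^{\mathcal{P}'(R'),A}(E)$. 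This defines the charging map $B\mapsto B'$; the choice of $B'$ for a given $B$ can be made arbitrarily among the valid options.

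Next I would control the multiplicity of this map. Suppose $B'=B'(z',R')$ is charged by some $B=B(z,R)$; then the witness vertex $x_B$ lies in $B'\cap B$, so in particular $d_0(z,x_B)\le R$ and $d_0(x_B,z')\le R'\le C_0 R$, whence $d_0(z,z')\le (1+C_0)R$. Thus every center $z$ of a ball charging $B'$ lies in $B(z',(1+C_0)R)$. Applying Lemma~\ref{lem:scaled-cover} with the covering $\mathcal{P}=\mathcal{P}(R)$ (whose overlap constant is $b_\Gamma$) and scale factor $\lambda=1+C_0\ge 1$ gives
\begin{align*}
\#\{z : B(z,R)\in\mathcal{P}(R)\text{ charges }B'\}\le \#\{z_i : z_i\in B(z',(1+C_0)R)\}\le \frac{2^\alpha C_U}{C_L}(1+C_0)^\alpha b_\Gamma.
\end{align*}
Since distinct balls of $\mathcal{P}(R)$ have distinct centers, this bounds the number of balls $B\in\mathcal{P}(R)$ mapping to a fixed $B'$, i.e.\ the charging map is at most $\frac{2^\alpha C_U}{C_L}(1+C_0)^\alpha b_\Gamma$-to-one. (This absorbs the stated constant $(1+C_0)^\alpha C_\Gamma b_\Gamma$ with $C_\Gamma = 2^\alpha C_U/C_L$; I would just present the bound in this form, noting the constants depend only on $\Gamma$ through $\alpha, C_U, C_L$.)

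Finally I would assemble the count: the number of $B\in\mathcal{P}|_A$ contributing to $N_u^{\mathcal{P}(R),A}(E)$ is at most (multiplicity bound) times the number of $B'\in\mathcal{P}'|_A$ in the image, which is at most the total number of $B'\in\mathcal{P}'|_A$ contributing to $N_u^{\mathcal{P}'(R'),A}(E)$; dividing through by $|A|$ yields the claimed inequality. The only mild subtlety — and the step I would be most careful about — is the bookkeeping that Lemma~\ref{lem:scaled-cover} genuinely applies: it requires $\mathcal{P}(R)$ to have a finite-overlap constant in the sense of \eqref{eqn:finite-cover}, which is exactly the hypothesis ``$\mathcal{P}$ has cover overlap constant $b_\Gamma$'' in the statement, and $\lambda = 1+C_0\ge 1$ as required. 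Everything else is the routine charging/multiplicity argument, with no real obstacle; I would not use the translation-replacement part of Proposition~\ref{prop:covering} here, only the scaled-overlap lemma and the covering property of $\mathcal{P}'$.
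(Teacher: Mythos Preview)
Your proof is correct and follows essentially the same approach as the paper: both pick a witness vertex $x_B\in B\cap A$ with $1/u(x_B)\le E$, cover it by some $B'\in\mathcal{P}'|_A$, bound $d_0(z,z')\le(1+C_0)R$, and then invoke Lemma~\ref{lem:scaled-cover} to control how many $B$ can be associated to a fixed $B'$. The only cosmetic difference is that the paper phrases this via a set inclusion into a union over contributing $B'$, whereas you phrase it as a charging map with bounded multiplicity.
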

\begin{proof}
Let $\mathcal{P}|_A=\{B\in\mathcal{P}:B\cap A\ne\emptyset\}$.
Note that since any point ${x}\in B\cap A$ with $\frac{1}{u(x)}\le E$ is also contained in some $B'\in\mathcal{P}'|_A$, we have
\begin{align}
\nonumber
\{B\in\mathcal{P}|_A:\min_{x\in B\cap A}\frac{1}{u(x)}\le E\} &\subseteq \{B\in\mathcal{P}|_A:B\text{ overlaps some }B'\in\mathcal{P}'|_A\text{ with }\min_{x\in B'\cap A}\frac{1}{u(x)}\le E\}\\
&\,\,=\bigcup_{\substack{B'\in\mathcal{P}'|_A\\\min_{ B'\cap A}\frac{1}{u(x)}\le E}}\{B\in\mathcal{P}|_A:B\text{ overlaps }B'\}.\label{eqn:BB'}
\end{align}
If $B=B(z,R)$ and $B'=B(z',R')$ overlap, we must have $d(z,z')\le R+R'\le (1+C_0)R$. By \eqref{eqn:finite-cover-lambda} in Lemma~\ref{lem:scaled-cover} applied to $z'$, there can be at most $(1+C_0)^\alpha C_\Gamma b_\Gamma$ balls $B\in\mathcal{P}$ with centers satisfying $d(z,z')\le (1+C_0)R$. Thus the cardinality of \eqref{eqn:BB'} is bounded above by $N^{\mathcal{P}'(R'),A}_u(E)(1+C_0)^\alpha C_\Gamma b_\Gamma$.
\end{proof}

Next, we discuss several traditional concepts from harmonic analysis that are known to carry over to the discrete setting on graphs.

\begin{definition}\label{ass:WPI}
    The graph $\Gamma$ satisfies a (weak) Poincar\'e inequality (PI) if there exist $C_P$ and $\lambda\ge 1$ such that, for all $x\in \V, r\ge 1$,  if $B=B(x,r),B^\ast=B(x,\lambda r) $ and  $f:B^\ast\to \R$, then 
     \begin{align}\label{eqn:WPI}
    \sum_{x\in B }\big(f(x)-\bar f^B\big)^2 \le  C_P\, r^2\sum_{x,y\in B^\ast: x\sim y}\big(f(x)-f(y)\big)^2  
    \end{align}
    where 
   \begin{align}\label{eqn:bar-f}
   \bar f^{B  }=\frac{1}{\deg(B)}\sum_{x\in B}f(x)\deg(x), 
   \quad \deg (x)=\sum_{ y\in\V: y\sim x}1,\quad \deg(B)=\sum_{x}\deg (x).
\end{align}
\end{definition}
It is well-known that the standard $\Z^d$ graph satisfies the weak PI \eqref{eqn:WPI},  
see e.g. \cite[Cor. 3.30]{barlow2017random}.
Note also that by embedding different balls within each other, if the graph $\Gamma$ satisfies the weak PI with one particular metric for defining the balls, then it also satisfies the weak PI (with possibly different scaling constant $\lambda$) with any strongly equivalent metric.

\begin{remark}\label{remk:beta-PI}
  In general, one can consider a $\beta$-Poincar\'e inequality for any $\beta\ge 2$,  with the power $r^\beta$ rather than  $r^2$ in \eqref{eqn:WPI}.  Under a $\beta$-Poincar\'e inequality, if one revises the definition of $N_u(E)$ in \eqref{eqn:Nu} using covers $\mathcal P(E^{-1/\beta})$ of radius $E^{-1/\beta}$ rather than $E^{-1/2}$, then one can still obtain the landscape law upper bound \eqref{eqn:LLaw-upper}. We will discuss this further for the Sierpinski gasket graph in Section \ref{sec:SG}. 
\end{remark}

The weak PI is known to be connected to several other notions concerning random walks and harmonic analysis on graphs.
For any $B_r=B(x_0,r)\subset \V$, let $\Delta^{B_r}$ be the Dirichlet Laplacian on $B(x_0,r)$,  
\begin{align}\label{eqn:laplacian}
(\Delta^{B_r}f)(x)=\sum_{y\in B_r(x_0):y\sim x}\big(f(y)-f(x)\big), 
\end{align}
and let $G_r(x,y)=(-\Delta^{B_r})^{-1}(x,y)$ be the Green's function 
on such a ball $\bar{B}(x_0,r)$.

\begin{proposition}\label{prop:piconseq}
If $\Gamma$ satisfies the weak PI \eqref{eqn:WPI} and volume control \eqref{eqn:vol-control} with parameter $\alpha\ge1$, then it also satisfies the following properties.
\begin{enumerate}[(i)]
\item Moser--Harnack inequality  for subharmonic functions: There exists a constant $C_{H}$ such that  for any $x\in \V, r>0$  and non-negative subharmonic function $f(x)$ (that is, $f(x)\ge0$ and $-\Delta f(x)\le 0$), then
    \begin{align}\label{eqn:Moser-Harnack}
        \sup_{y\in B(x,r)} f(y)^2 \le \frac{C_{H}}{|B(x,2r)|}\sum_{y\in B(x,2r)} f(y)^2.
    \end{align}

\item Free landscape/exit time upper bound: There is a positive constant $c$ such that for $x\in \V$, $r\ge 1$, and $B=B(x,r)$,
\begin{align}\label{eqn:u-upper}
     (-\Delta^{B })^{-1}  \one_B(x)\le cr^2. 
 \end{align}
(Additionally, by the maximum principle Lemma~\ref{lem:maxP}, the inequality also holds for the landscape function $u_B\le(-\Delta^{B })^{-1}\one_B$.)

\item Green's function estimates: For any $x\in B_R(x_0)$, let $r=d(x_0,x)$. There are constants $C_1,C_2$ depending only on $\alpha$ and the ratio $\frac{R}{r}>1$, such that 
  \begin{align}\label{eqn:Green-ass}
  C_1 r^{2-\alpha}  \le    G_R(x_0,x)\le C_2 r^{2-\alpha},    \ \ x \in  \partial B_r \cup \partial^i B_r.
\end{align}

\item Gaussian heat kernel bounds $\mathrm{HKC}(\alpha,2)$. (In fact, one has equivalence.)
These are defined in terms of the natural graph metric $d_0$ and continuous time heat kernel $q_t(x,y)= \P_x(Y_t=y)/\deg(y)$, where $Y_t$ is the continuous time simple random walk on $\Gamma$, as
\begin{equation}\label{eqn:gaussian-heatkernel}
\frac{c_3}{t^{\alpha/2}}\exp\left(-c_4d_0(x,y)^2/t\right)\le q_t(x,y)\le \frac{c_1}{t^{\alpha/2}}\exp\left(-c_2d_0(x,y)^2/t\right), 
\end{equation}
for $t\ge \max(1,d_0(x,y))$.
\end{enumerate}
\end{proposition}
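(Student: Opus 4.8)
The plan is to treat Proposition~\ref{prop:piconseq} not as four independent assertions but as an assembly of classical equivalences in the theory of random walks on graphs, and then to deduce the two items that are not literally part of that package by short elementary arguments. The cornerstone is the graph version of the Grigor'yan--Saloff-Coste--Delmotte theorem: on a graph of bounded geometry (guaranteed here by \eqref{eqn:bded-geo}), the volume doubling property together with the weak Poincar\'e inequality \eqref{eqn:WPI} is equivalent to the parabolic Harnack inequality $\mathrm{PHI}(2)$, and equivalently to two-sided Gaussian heat kernel bounds; a self-contained graph treatment is in \cite{barlow2017random}. Assumption~\ref{ass:0} gives $|B(x,r)|\asymp r^\alpha$ and hence volume doubling, so invoking this theorem immediately yields part (iv), including the asserted equivalence: \eqref{eqn:gaussian-heatkernel} follows from (WPI $+$ volume control), and conversely, combined with the on-diagonal relation $q_t(x,x)\asymp 1/|B(x,\sqrt t)|$, it forces $|B(x,\sqrt t)|\asymp t^{\alpha/2}$, so that under Assumption~\ref{ass:0} the weak PI and $\mathrm{HKC}(\alpha,2)$ are genuinely equivalent. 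Part (i), the Moser--Harnack mean value inequality \eqref{eqn:Moser-Harnack} for nonnegative subharmonic functions, is the ``easy half'' of the Moser iteration scheme and follows from volume doubling and the weak PI alone (equivalently, it is contained in $\mathrm{PHI}(2)$), and I would simply cite \cite{barlow2017random} for it.

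For part (ii) I would argue directly from the on-diagonal heat kernel upper bound. Write $v=(-\Delta^{B})^{-1}\one_B$ with $B=B(x_0,r)$; by Lemma~\ref{lem:maxP} this is nonnegative, it is (up to a $\Gamma$-dependent constant absorbing the vertex degrees) the mean exit time from $B$ of the walk $Y_t$, and $u_B\le v$ by the maximum principle since $H^B\ge -\Delta^B$. Setting $\psi(t)=\sup_{x\in B}\P_x(\tau_B>t)$, the Markov property gives submultiplicativity $\psi(s+t)\le\psi(s)\psi(t)$, while $\{\tau_B>t\}\subset\{Y_t\in B\}$ together with the on-diagonal bound in \eqref{eqn:gaussian-heatkernel} and the volume bound \eqref{eqn:vol-control} gives $\psi(t)\lesssim M_\Gamma|B|\,t^{-\alpha/2}\lesssim (r^2/t)^{\alpha/2}$. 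Choosing $t_0=Kr^2$ with $K=K(\Gamma)$ large enough makes $\psi(t_0)\le\tfrac12$, hence $\psi(kt_0)\le 2^{-k}$ and $v(x)\lesssim \int_0^\infty\psi(t)\,dt\lesssim t_0\lesssim r^2$.

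For part (iii) I would integrate the Dirichlet heat kernel in time, $G_R(x_0,x)=\int_0^\infty q^{B_R}_t(x_0,x)\,dt$, and split at $t\sim r^2$ and $t\sim R^2$, where $r=d(x_0,x)$. On $t\lesssim r^2$ the off-diagonal Gaussian bounds give $q^{B_R}_t(x_0,x)\asymp t^{-\alpha/2}e^{-cr^2/t}$ (the Dirichlet killing is negligible here since $R>r$), and $\int_0^{r^2}t^{-\alpha/2}e^{-cr^2/t}\,dt\asymp r^{2-\alpha}$. On $r^2\lesssim t\lesssim R^2$ the near-diagonal two-sided bounds give $q^{B_R}_t(x_0,x)\asymp t^{-\alpha/2}$, contributing $\int_{r^2}^{R^2}t^{-\alpha/2}\,dt$, which is $\asymp r^{2-\alpha}$ for $\alpha>2$, $\asymp\log(R/r)$ for $\alpha=2$, and $\asymp R^{2-\alpha}$ for $1\le\alpha<2$ --- in every case $\asymp C(\alpha,R/r)\,r^{2-\alpha}$. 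On $t\gtrsim R^2$ the Dirichlet heat kernel decays exponentially (the bottom Dirichlet eigenvalue is $\gtrsim R^{-2}$), contributing only $O(R^{2-\alpha})\lesssim r^{2-\alpha}$. Adding the three ranges gives \eqref{eqn:Green-ass} with constants depending only on $\alpha$ and $R/r$. Alternatively, one may quote the Green's-function-on-balls estimates that are standard under $\mathrm{PHI}(2)$, e.g. in the Grigor'yan--Telcs framework.

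I expect the only real ``work'', and hence the main obstacle, to be bookkeeping rather than anything conceptual: in Step (iii) one must carry out the case analysis in $\alpha$ --- in particular handle the logarithmic correction at the critical exponent $\alpha=2$ and the range $1\le\alpha<2$ --- and verify that all constants in \eqref{eqn:Green-ass} depend only on $\alpha$ and $R/r$; and throughout one must confirm that the graph-theoretic Grigor'yan--Saloff-Coste--Delmotte equivalences apply under precisely Assumption~\ref{ass:0} and \eqref{eqn:bded-geo}, which they do because the semigroup in question is generated by the free unweighted Laplacian, automatically of controlled weights.
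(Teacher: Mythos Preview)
Your proposal is correct and, at the level of strategy, coincides with the paper's: both treat Proposition~\ref{prop:piconseq} as an application of the Delmotte/Barlow equivalences (volume doubling $+$ weak PI $\Leftrightarrow$ PHI(2) $\Leftrightarrow$ Gaussian heat kernel bounds), and the paper likewise simply cites \cite{barlow2017random} (Theorems~6.19, 7.18 and Lemma~4.21) for (ii), (iii), (iv).

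The one place where the routes genuinely diverge is part (i). You invoke the subharmonic $L^\infty$--$L^2$ mean value inequality as the ``easy half'' of Moser iteration and cite \cite{barlow2017random} directly. The paper is more cautious here: it observes that what \cite{barlow2017random} actually states (Theorem~7.18) is the elliptic Harnack inequality for \emph{harmonic} functions, and then supplies in Appendix~\ref{sec:Moser-Harnack} a short derivation of the subharmonic version \eqref{eqn:Moser-Harnack} from EHI via a harmonic replacement argument (solve $\Delta h=0$ with $h=f$ on $\partial B_{3R/2}$, apply EHI to $h$, and control $\sum(h-f)^2$ using the zero-boundary Poincar\'e inequality together with the discrete Caccioppoli inequality and the energy-minimizing property of $h$). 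Your Moser-iteration route is equally valid and arguably more direct; the paper's route has the virtue of being self-contained once EHI is taken as a black box. For (ii) and (iii) you actually supply more detail than the paper does (exit-time submultiplicativity, time-splitting of the Green's function integral), which is a bonus rather than a deviation.
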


Proofs and references for the above results can be found in the textbook \cite{barlow2017random}, particularly Theorems~6.19 and 7.18, and Lemma 4.21. We make two brief remarks concerning the proofs:
\begin{enumerate}
\item 
First, for property (i), Theorem~7.18 in \cite{barlow2017random} states the elliptic Harnack inequality (EHI) for  harmonic functions. 
The Moser--Harnack inequality \eqref{eqn:Moser-Harnack} for subharmonic functions can be derived from the harmonic version via standard elliptic PDE techniques. We include 
details for the discrete case in Appendix~\ref{sec:Moser-Harnack} for the reader's convenience.  

\item Second, while we do not directly work with the probabilistic Laplacian as in \cite{barlow2017random}, the results still follow since the Laplacian \eqref{eqn:laplacian} differs only by multiplication by the diagonal matrix of vertex degrees, which has bounded, nonzero entries by the bounded geometry and connectedness properties.
\end{enumerate}

The properties (i)--(iii) in Proposition~\ref{prop:piconseq} will be utilized in the landscape law proofs.
 We will see in Section~\ref{subsec:rough-isom} that Assumption~\ref{ass:0} and the weak Poincar\'e inequality, which are required for the  Landscape Law, will be preserved under rough isometry between graphs.
Since the required properties hold for $\Z^d$, then the landscape law Theorem~\ref{thm:LLaw} will hold for all graphs roughly isometric to $\Z^d$ as well.
While these examples have integer values of $\alpha=d$, Ref.~\cite{barlow-alphabeta} constructed graphs satisfying heat kernel bounds $\mathrm{HK}(\alpha,\beta)$ for any real $\alpha\ge1$ and $2\le \beta\le1+\alpha$.

In order to apply the Moser--Harnack inequality in the proof of Theorem~\ref{thm:LLaw}(ii), we will use the following corollary, which uses that the landscape function $u^A$ always satisfies $-\Delta u^A(y)\le 1$ for $y\in\Gamma$.
\begin{corollary}[Moser--Harnack for landscape]\label{cor:MH}
Let $r\ge1$. Suppose $\Gamma$ satisfies the exit time upper bound \eqref{eqn:u-upper}, and that $f\ge0$ satisfies $-\Delta f(y)\le 1$ for all $y\in B(x,2r)$. Then the Moser--Harnack inequality \eqref{eqn:Moser-Harnack} implies
 \begin{align}\label{eqn:MH-landscape0}
    \sum_{y\in   B(x,2r)}\,f(y)^2\ge  (2C_H)^{-1}| B(x,2r)|\left(  \sup_{ B(x,r)}f(y)^2- c_2r^4\right).
\end{align}
\end{corollary}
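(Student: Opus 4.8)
The plan is to apply the Moser--Harnack inequality \eqref{eqn:Moser-Harnack} not directly to $f$ (which is only subharmonic in the weak sense $-\Delta f\le 1$, not genuinely subharmonic) but to a corrected function. Since $-\Delta f\le 1$ on $B(x,2r)$, I would introduce the free landscape-type comparison function $w:=(-\Delta^{B(x,2r)})^{-1}\one_{B(x,2r)}$, which by the exit time bound \eqref{eqn:u-upper} satisfies $0\le w(y)\le c r^2$ for all $y\in B(x,2r)$, and which solves $-\Delta^{B(x,2r)} w=\one$ there. Then $g:=f-w$ satisfies $-\Delta g(y)= -\Delta f(y) - 1 \le 0$ on $B(x,2r)$ (using that $-\Delta g \le -\Delta^{B(x,2r)} g$ on the interior, i.e. the Dirichlet restriction only adds nonnegative boundary contributions when $g$ could be negative — one should be slightly careful here and may instead just work with $f + (c r^2 - w)\ge f$ or directly with $f$ via a maximum-principle argument). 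The cleanest route: set $h := f + (cr^2 - w)$, so $h\ge f\ge 0$, $h\le f + cr^2$ is not what we want; rather note $h$ is nonnegative and $-\Delta h = -\Delta f - 1 + (-\Delta(cr^2)) - (-\Delta(-w))$... Let me instead state the intended clean version below.

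\textbf{Clean version of the argument.} Let $w=(-\Delta^{B(x,2r)})^{-1}\one_{B(x,2r)}$, so $0\le w\le cr^2$ on $B(x,2r)$ by \eqref{eqn:u-upper}, and $-\Delta^{B(x,2r)}w=\one$ on $B(x,2r)$. The function $\tilde f := f - w$ then satisfies, for $y$ in the interior of $B(x,2r)$, $-\Delta \tilde f(y) = -\Delta f(y) - (-\Delta^{B(x,2r)} w)(y) \le 1 - 1 = 0$; moreover $\tilde f$ may fail to be nonnegative, so I would apply Moser--Harnack instead to $f^+ := f + cr^2 - w \ge f \ge 0$, which is nonnegative and subharmonic (its negative Laplacian is $-\Delta f - 1\le 0$ in the interior). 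Applying \eqref{eqn:Moser-Harnack} to $f^+$ on $B(x,r)\subset B(x,2r)$ gives
\begin{align*}
\sup_{B(x,r)} (f^+)^2 \le \frac{C_H}{|B(x,2r)|}\sum_{y\in B(x,2r)}f^+(y)^2.
\end{align*}
On the left, $\sup_{B(x,r)}(f^+)^2 \ge \sup_{B(x,r)} f^2$ since $f^+\ge f\ge 0$. On the right, using $f^+ \le f + cr^2$ and the elementary inequality $(a+b)^2\le 2a^2 + 2b^2$,
\begin{align*}
\sum_{y\in B(x,2r)} f^+(y)^2 \le 2\sum_{y\in B(x,2r)} f(y)^2 + 2(cr^2)^2 |B(x,2r)|.
\end{align*}
Combining these and rearranging yields $\sum_{y\in B(x,2r)} f(y)^2 \ge (2C_H)^{-1}|B(x,2r)|\sup_{B(x,r)}f^2 - c_2 r^4 |B(x,2r)|$ with $c_2 = c^2$, which after relabeling is exactly \eqref{eqn:MH-landscape0}.

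\textbf{Main obstacle.} The only delicate point is the claim that $f^+ = f + cr^2 - w$ is subharmonic \emph{with respect to the Dirichlet Laplacian on $B(x,2r)$}, which is what Moser--Harnack \eqref{eqn:Moser-Harnack} is stated for (it concerns $-\Delta f \le 0$ on $B(x,2r)$); one must check that the hypothesis ``$-\Delta f(y)\le 1$ for all $y\in B(x,2r)$'' (with the \emph{full} graph Laplacian) combines correctly with $-\Delta^{B(x,2r)} w = \one$ (with the \emph{Dirichlet} Laplacian) so that the difference is genuinely subharmonic on the set where Moser--Harnack needs it. Since $-\Delta^{B(x,2r)}$ and the full $-\Delta$ agree on the interior vertices of $B(x,2r)$ and differ only by nonnegative boundary terms, and since we only need the subharmonicity to push through the mean-value/Moser iteration underlying \eqref{eqn:Moser-Harnack}, this works; but it should be stated carefully. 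I would handle it by noting that it suffices to have $-\Delta f \le 1$ on all of $B(x,2r)$ (the stated hypothesis) and that $w$ can equivalently be taken as the full-graph restriction used in \eqref{eqn:u-upper}, so that $f + cr^2 - w$ is nonnegative and satisfies the subharmonicity hypothesis of \eqref{eqn:Moser-Harnack} on $B(x,2r)$. Everything else is the routine $(a+b)^2\le 2a^2+2b^2$ bookkeeping above.
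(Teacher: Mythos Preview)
Your clean version is correct and is essentially the paper's own proof: the paper sets $g:=f-u_0^{B^*}+\|u_0^{B^*}\|_\infty$ (your $f^+$ with the constant $cr^2$ replaced by $\|u_0^{B^*}\|_\infty$), notes $g\ge0$ and $-\Delta g\le0$ on $B^*$, applies Moser--Harnack, and unwinds with $(a+b)^2\le 2a^2+2b^2$ exactly as you do. The subharmonicity concern you flag is real but minor and is handled just as you say; the paper's sketch glosses over it.
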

\begin{proof}
Let $B^*:=B(x,2r)$, and let $u_0^{B^*}$ be the free landscape function $u_0^{B^*}=-\Delta_{B^*}^{-1}\one_{B^*}$. Set $g:=f-u_0^{B^*}+\|u_0^{B^*}\|_\infty$, which then satisfies $g\ge0$ and for $x\in B^*$, $-\Delta g(x)\le0$. Applying the Moser--Harnack inequality and the exit time bound \eqref{eqn:u-upper} for $u_0^{B^*}$ then yields \eqref{eqn:MH-landscape0}.
\end{proof}

 In order to obtain the Lifshitz tail upper bound in  Theorem~\ref{thm:Lif}(ii)
for $N_u$, we will also require control on the lower bound of a ``harmonic weight'' on balls. 
Most of the time, we work on the natural balls given by the graph or chemical metric $d_0(x,y)$ (shortest-path distance) on $\Gamma$. 
However, for specific applications, it may be more convenient to consider other distance functions $d(x,y)$ which are (strongly) equivalent to $d_0$, in the sense that  $ c_1d_0(x,y)\le d(x,y)\le c_2 d_0(x,y)$ for some universal constants $c_1,c_2>0$ depending only on $\Gamma$. We thus state Assumption~\ref{ass:harmonic-weight} in terms of a more general metric function, since it may be easier to verify the required properties for a different metric.
We will denote by $B_R^d=B^d(\xi,R)$ the ball of radius of $R$ with respect to the metric $d$, and by $B_R=B_R^{d_0}$ the natural metric ball. 
Immediately, there is $c>0$ so that for any $\xi\in \V$ and $r\ge0$,  
\begin{align}\label{eqn:ball-eqv}
    B(\xi,r/c )\subset B^d(\xi,r)\subset B(\xi,c r). 
\end{align}

The mean value property of a harmonic function $f$ on $\R^d$ (on Euclidean balls) states that  $f(\xi)=\frac{1}{{\rm Vol}(B(\xi,R))}\int_{B(\xi,R)}f(y)dy.$ The integral weight for such an $\R^d$-harmonic function is thus the constant function $\frac{1}{{\rm Vol}(B(\xi,R))}$. A similar mean value property holds for harmonic functions on graphs, but with a general ``harmonic weight'' function (cf. \eqref{eqn:h-submean-ass}), which also depends on the metric on the graph. 
In general, the harmonic weight is not unique, and not necessarily a constant (with respect to the volume of the ball), as it is for $\R^d$. 

For the natural graph metric $d_0$, we have $\partial B(\xi,\rho)=B(\xi,\rho+1)\backslash B(\xi,\rho)$, which gives a filtered structure $  B(\xi,R)=\{\xi\}\cup \bigcup_{\rho=0}^{R-1}\partial B(\xi,\rho)$.  There is then the following natural way to define a harmonic weight and the volume average on $B(\xi,R)$, explicitly through the Poisson kernel or random walk hitting measure on each layer.  
Letting $Y_n$ be a discrete-time simple random walk and $\tau_A:=\min\{n\ge0:Y_n\not\in A\}$ the exit time from a region $A$, one can take a harmonic weight $h_{B(\xi,R)}:B(\xi,R)\to[0,1]$ for the ball $B(\xi,R)$ to be
\begin{align}\label{eqn:natural-harmonic}
h_{B(\xi,R)}(y) &= \frac{1}{R+1}\P_x[Y_{\tau_{B(\xi,r)}}=y; \,\tau_{B(\xi,r)}<\infty], \quad \text{where }r:=d_0(\xi,y).
\end{align}
This corresponds to values of the Poisson kernel of the ball $B(\xi,r)$, normalized by $R+1$ since we will average over $R+1$ layers. 
More precisely, the Poisson kernel $H_A:\overline{A}\times\partial A\to[0,1]$ of a region $A$ is defined as $H_A(x,y)=\P_x[Y_{\tau_A}=y]$. For $f$ harmonic in $A$ with boundary values $f(y)$, $y\in\partial A$, then
\begin{align}\label{eqn:poisson-exp}
f(x)&=\E_x[f(Y_{\tau_A})]=\sum_{y\in\partial A}H_A(x,y)f(y).
\end{align}
Thus for $f$ harmonic at points in $B(\xi,R-1)$, we have
\begin{align*}
f(\xi)
&=\frac{1}{R+1}\sum_{r=0}^{R}\sum_{y\in\partial B(\xi,r-1)}H_{B(\xi,r)}(\xi,y)f(y)
=\sum_{y\in B(\xi,R)} h_{B(\xi,R)}(y)f(y),
\end{align*}
where the $r=0$ term corresponds to just $y=\xi$ and $h_{B(\xi,0)}(\xi,\xi)=1/(R+1)$.
For further background and details, see Appendix~\ref{sec:poisson} and \cite[\S6.2]{lawler2010random}, \cite[Thm. 2.5]{barlow2017random}.

For a general metric $d$, the boundary of a ball $\partial B^d(x,r)$ may not relate nicely to larger balls. 
One can still introduce a harmonic weight in a similar spirit as the above, but which is technically more involved. 

We are most interested in the case when there is a harmonic weight  that is  uniformly bounded from below  on most parts of the ball. 
\begin{assumption}\label{ass:harmonic-weight}
There is a metric $d$ on $\Gamma$, strongly equivalent to the natural metric $d_0$, such that the following hold.
Given $R>0$,  $\xi\in\V$,  and $B^d_R(\xi)=B^d(\xi,R)$, there exists a function (the harmonic weight) 
 $h_{B_R^d(\xi)}(y): B^d_R(\xi)\to [0,1]$
satisfying
 \begin{itemize}
     \item Submean property: if $f$ is $\Delta$-subharmonic in the sense $-\Delta f \le0$ (on a set containing $B^d_R(\xi)\cup \partial B^d_R(\xi)$), then 
     \begin{align}\label{eqn:h-submean-ass}
         f(\xi)\le \sum_{y\in B_R^d(\xi)}h_{B_R^d(\xi)}(y)f(y),
     \end{align}
     and the equality holds if $\Delta f=0$. 
     \item  Uniform lower bound: 
     there is a constant $c>0$, depending only on $\Gamma$, and a `bad' subset $X_R(\xi)\subseteq B^d_R(\xi)$ such that 
    \begin{align}\label{eqn:h-lower-ass}
   h_{B_R^d(\xi)}(y)\ge \frac{c}{|B_R^d(\xi)|},\ \ y\in B^d_R(\xi)\backslash X_R(\xi), \qquad {\rm and} \qquad \lim_{R\to \infty} \frac{|X_R(\xi)|}{|B^d_R(\xi)|}=0.
    \end{align}
 \end{itemize}

\end{assumption}
From the preceding discussion, we see the continuous analogue of both properties holds immediately on $\R^d$ with $X_R=\varnothing$.  In \cite{arnold2022landscape},  a statement similar to \eqref{eqn:h-lower-ass} was obtained for $\Z^d$ cubes ($\ell^\infty$-balls), based on explicit formulas of the Green's function and Poisson kernel on cubes (see Lemma 4.3 of \cite{arnold2022landscape}). More generally, for the band graph $\Gamma_{d,W}$ \eqref{eqn:Gamma-dW}, we will prove in Section~\ref{sec:RBM} that such a harmonic weight exists with respect to the Euclidean metric on $\Z^d$, based on surface area control and  Poisson kernel estimates in e.g. \cite{lawler2010random}. 
We are curious about such properties on general graphs, in particular:
\begin{question}
    What properties of a graph $\Gamma$ guarantee the existence of a harmonic weight $h_{B_R^d}$ such that Assumption~\ref{ass:harmonic-weight} holds?
\end{question}

\subsection{Rough isometries and properties preserved under them}\label{subsec:rough-isom}

Here we finally formally define rough isometries between (unweighted) graphs and summarize several key properties preserved under them. We refer readers to \cite{barlow2017random} for further details. 
\begin{defn}[rough isometry]\label{def:isometric}\mbox{}
\begin{itemize}
\item Let $(X_1,d_1)$ and $(X_2,d_2)$ be metric spaces. A map $\varphi:X_1\to X_2$ is a \emph{rough isometry} if there exists constants $C_1,C_2$ such that
\begin{align*}
C_1^{-1}\big(d_1(x,y)-C_2\big) &\le d_2(\varphi(x),\varphi(y)) \le C_1\big(d_1(x,y)+C_2\big),\\
&\bigcup_{x\in X_1}B_{d_2}(\varphi(x),C_2)=X_2.
\end{align*}
If there exists a rough isometry between two spaces then they are \emph{roughly isometric}, and this is an equivalence relation.

\item Let $\Gamma_1$ and $\Gamma_2$ be connected graphs whose vertices have uniformly bounded degrees. A map $\varphi:\mathbb{V}_1\to\mathbb{V}_2$ is a \emph{rough isometry} if:
\begin{enumerate}
\item $\varphi$ is a rough isometry between the metric spaces $(\mathbb{V}_1,d_{\Gamma_1})$ and $(\mathbb{V}_2,d_{\Gamma_2})$ with constants $C_1$ and $C_2$.
\item there exists $C_3<\infty$ such that for all $x\in\mathbb{V}_1$,
\begin{equation*}
C_3^{-1}\operatorname{deg}(x) \le \operatorname{deg}(\varphi(x)) \le C_3\operatorname{deg}(x).
\end{equation*}
\end{enumerate}
Two graphs are \emph{roughly isometric} if there is a rough isometry between them, and this is an equivalence relation.
\end{itemize}
\end{defn}
\begin{ex}\label{ex:band-graph}
The band graph  $\Gamma_{d,W}$   in \eqref{eqn:Gamma-dW} is roughly isometric to $\Z^d$, see e.g. \cite[\S3.2.1]{lratio}.   
\end{ex}

\begin{ex}\label{ex:Penrose}
We revisit the Penrose tiling vertex graph $\Gamma_P$ from Section~\ref{subsec:appl-isom}, which we now view as embedded in $\R^2$ with edges of length 1, to provide details demonstrating it is roughly isometric to $\Z^2$.
Like in \cite{Telcs}, which considered another graph derived from the Penrose tiling (the tile graph rather than vertex graph), we start by comparing $\Gamma_P$ to $\varepsilon\Z^2$ for a small $\varepsilon>0$.
Letting $\psi:\Gamma_P\to\varepsilon\Z^2$ be the map defined by taking $x\in\Gamma_P\subset\R^2$ to a closest point $\psi(x)\in\varepsilon\Z^2$, we obtain for $x\ne y$ and sufficiently small $\varepsilon$ (which ensures that $\psi$ is injective),
\begin{align*}
\sqrt{2}d_{\Gamma_P}(x,y)\ge \sqrt{2}\|x-y\|_2
\ge\|x-y\|_1
&\ge \|\psi(x)-\psi(y)\|_1-\|x-\psi(x)\|_1-\|y-\psi(y)\|_1\\
&\ge c\|\psi(x)-\psi(y)\|_1,
\end{align*}
since $\|x-\psi(x)\|_1$ and $\|y-\psi(y)\|_1$ are both of order $\varepsilon$, while $\|\psi(x)-\psi(y)\|_1\ge c_0-\mathcal{O}(\varepsilon)$ is lower-bounded using the minimum distance between two corners of the rhombi.
Then defining $\varphi:\Gamma_P\to\Z^2$ via $\varphi(x)=\varepsilon^{-1}\psi(x)$, we obtain for $x\ne y$ in $\Gamma_P$,
\begin{align*}
\sqrt{2}{d_{\Gamma_P}}(x,y)&\ge c\varepsilon\|\varphi(x)-\varphi(y)\|_1=c\varepsilon d_{\Z^2}(\varphi(x),\varphi(y)).
\end{align*}

Additionally, $d_{\Gamma_P}(x,y)\le C_1 d_{\Z^2}(\varphi(x),\varphi(y))+C_2$ for sufficiently small $\varepsilon>0$: One can consider the set of rhombi $\mathcal{R}$ in the tiling that intersect the straight line segment $L$ between $x$ and $y$ (including intersections on edges and corners). This set $\mathcal{R}$ allows for a path in $\Gamma_P$ between $x$ and $y$ of length at most twice the cardinality of $\mathcal{R}$.
The number $\#\mathcal{R}$ of such rhombi scales with the length $|L| = \|x-y\|_2$ by area considerations (take for example the rectangle around $L$ of five units in each direction, which covers $\mathcal{R}$), and so $d_{\Gamma_P}(x,y)\le C_{1,\varepsilon} d_{\Z^2}(\varphi(x),\varphi(y))+C_{2,\varepsilon}$. 

Finally, there is a numerical constant $C$ (based on the maximum distance between corners of rhombi) so that
$\bigcup_{x\in\Gamma_P}B_{\Z^2}(\varphi(x),C\varepsilon^{-1})=\Z^2.$
\end{ex}

As is readily seen, rough isometry preserves bounded geometry \eqref{eqn:bded-geo} and (as can be seen with more work) volume control \eqref{eqn:vol-control} (cf. \cite[Exercise 4.16]{barlow2017random}), so that Assumption~\ref{ass:0} is preserved under rough isometry.
The next proposition states that Proposition~\ref{prop:covering}
and the weak Poincar\'e inequality are also preserved under rough isometry.

\begin{proposition}\label{prop:isom}
Assume volume control \eqref{eqn:vol-control}. The following properties are preserved under rough isometries.
\begin{enumerate}[(i)]
\item Finite overlap property in Proposition~\ref{prop:covering}(a) 
(with a possibly different constant $b_\Gamma$, depending on the rough isometry constants).
\item Translation-type property in Proposition~\ref{prop:covering}(b).
\item The weak Poincar\'e inequality (Definition~\ref{ass:WPI}).
\item Gaussian heat kernel estimates $\mathrm{HKC}(\alpha,2)$ as in \eqref{eqn:gaussian-heatkernel}.
\end{enumerate}
\end{proposition}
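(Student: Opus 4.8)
The plan is to verify each of the four items by transporting the relevant quantities through the rough isometry $\varphi:\Gamma_1\to\Gamma_2$ and invoking the corresponding property on $\Gamma_1$. Throughout, the key geometric input is the standard fact (see \cite[Exercise 4.16]{barlow2017random}) that a rough isometry between bounded-geometry graphs distorts the metric by at most an affine factor, $C_1^{-1}(d_1(x,y)-C_2)\le d_2(\varphi(x),\varphi(y))\le C_1(d_1(x,y)+C_2)$, is roughly surjective, and is ``roughly injective'' in the sense that preimages of points are of uniformly bounded cardinality (since $\deg$ is comparable and the metric distortion is controlled). Consequently, for any radius $R\ge 1$ there are constants $a,A$ (depending only on $C_1,C_2$ and the degree bounds) such that $\varphi^{-1}\big(B_2(\varphi(x),R)\big)\subseteq B_1(x,aR+A)$ and $\varphi\big(B_1(x,R)\big)\supseteq B_2(\varphi(x),a^{-1}R-A)$, and volume control $|B(x,r)|\asymp r^\alpha$ then transfers with the same $\alpha$. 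We record these comparisons once at the start and reuse them for all four parts.

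For (i), I would take a finite-overlap covering $\mathcal P_1(R')=\{B_1(z_i,R')\}$ of $\Gamma_1$ with $R'$ chosen so that $\varphi(B_1(z_i,R'))\supseteq B_2(\varphi(z_i),R)$, and push it forward: set $\mathcal P_2(R)=\{B_2(\varphi(z_i),R)\}$. Rough surjectivity (together with enlarging $R$ by the additive constant $C_2$) guarantees this still covers $\Gamma_2$, and the overlap count at a point $w=\varphi(z)\in\Gamma_2$ is bounded by $\#\{i: d_2(\varphi(z_i),w)\le R\}\le \#\{i: d_1(z_i,z)\le aR+A\}$, which is finite by Lemma~\ref{lem:scaled-cover} applied on $\Gamma_1$; points of $\Gamma_2$ not in the image are handled by rough surjectivity at the cost of another bounded factor. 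Part (ii) is entirely analogous: given $0<\kappa<1$ and $R$, choose $R'$ comparable to $R$ and $\kappa'$ comparable to $\kappa$ on the $\Gamma_1$ side, apply Proposition~\ref{prop:covering}(b) on $\Gamma_1$ to get $\lesssim\kappa^{-\alpha}$ covers $\mathcal P_1^{(j)}(R')$ whose $\kappa' R'$-shrinkings jointly cover $\Gamma_1$, and push each $\mathcal P_1^{(j)}$ forward as in (i); the shrunk balls $B_2(\varphi(z_i^{(j)}),\kappa R)$ still jointly cover $\Gamma_2$ because $\varphi(B_1(z_i^{(j)},\kappa' R'))\supseteq B_2(\varphi(z_i^{(j)}),\kappa R)$ for appropriate constants, and the number of covers is still $\lesssim\kappa^{-\alpha}$.

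For (iii), the weak Poincar\'e inequality, I would pull a function $f:B_2^\ast\to\R$ on a ball pair $(B_2,B_2^\ast)$ in $\Gamma_2$ back to $\tilde f:=f\circ\varphi$ on a slightly larger ball pair in $\Gamma_1$, apply the WPI on $\Gamma_1$, and compare the discrete Dirichlet energies $\sum_{x\sim y}(f(x)-f(y))^2$ on the two graphs. The energy comparison is the only mildly delicate point: an edge $\{x,y\}$ of $\Gamma_2$ need not be the image of an edge of $\Gamma_1$, but since $\varphi$ is roughly surjective with bounded-length ``jumps'' and preserves distances up to the affine factor, each edge of $\Gamma_2$ is spanned by a $\Gamma_1$-path of uniformly bounded length through preimages, so $(f(x)-f(y))^2$ telescopes into a bounded sum of $\Gamma_1$-edge differences; conversely each $\Gamma_1$-edge is used only boundedly often. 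Matching up the averaged means $\bar f^{B}$ (the degree weights are comparable by Definition~\ref{def:isometric}(2)) and the volume factors $r^2$ (comparable since $r$ changes by a bounded affine amount) then yields the WPI on $\Gamma_2$ with adjusted constants $C_P,\lambda$. Finally, (iv) follows from (iii) together with Assumption~\ref{ass:0} being preserved: by the equivalence in Proposition~\ref{prop:piconseq}, volume control plus the weak PI is equivalent to the Gaussian heat kernel bounds $\mathrm{HKC}(\alpha,2)$, so $\mathrm{HKC}(\alpha,2)$ is automatically preserved once (iii) and the preservation of Assumption~\ref{ass:0} are established; alternatively one cites the direct stability of $\mathrm{HKC}(\alpha,2)$ under rough isometry in \cite{barlow2017random}.

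\medskip

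\textbf{Main obstacle.} I expect the real work to be in part (iii): cleanly comparing the Dirichlet energy functionals across $\Gamma_1$ and $\Gamma_2$ when $\varphi$ maps neither edges to edges nor balls to balls, and simultaneously tracking how the ball radii, the dilation $\lambda$, and the degree-weighted averages $\bar f^B$ transform. Parts (i), (ii), and (iv) are then essentially bookkeeping on top of the metric/volume comparisons, modulo citing the known heat-kernel stability for (iv).
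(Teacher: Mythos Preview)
Your proposal is correct, but for (i) and (ii) you are working harder than necessary, and the paper's route is cleaner. The paper observes that Proposition~\ref{prop:covering} derives both the finite-overlap covering and the translation-replacement property \emph{solely} from the volume control \eqref{eqn:vol-control}; no additional structure is used. Since volume control is preserved under rough isometry (as you yourself note at the outset), parts (i) and (ii) follow immediately for $\Gamma_2$ by applying Proposition~\ref{prop:covering} directly to $\Gamma_2$---there is no need to push forward a covering from $\Gamma_1$ or track how $\varphi$ acts on centers and radii. Your transport argument would work, but it is redundant once you have the right logical dependency in view.

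For (iii) and (iv) the paper simply cites \cite[Thms.~3.33, 6.19]{barlow2017random}. Your sketch for (iii)---pulling back $f$, comparing Dirichlet energies via bounded-length paths, and matching the degree-weighted averages---is essentially the content of that reference, so it is correct; you have also correctly identified the energy comparison as the only place requiring care. Your derivation of (iv) from (iii) plus volume control via the equivalence in Proposition~\ref{prop:piconseq} is exactly right and is perhaps more self-contained than the paper's bare citation.
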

Note that Proposition~\ref{prop:isom}(iii) combined with Proposition~\ref{prop:piconseq} implies that if a graph $\Gamma$ with volume control satisfies the weak Poincar\'e inequality, then it and any graph roughly isometric to it also satisfy the Moser--Harnack inequality and exit time upper bound \eqref{eqn:u-upper}.
For the proofs of Proposition~\ref{prop:isom}(iii,iv), see \cite[Thms. 3.33, 6.19]{barlow2017random}. 
Parts (i) and (ii) follow automatically from Proposition~\ref{prop:covering} since volume control \eqref{eqn:vol-control} is preserved by rough isometry.

\section{Proof of the Landscape Law for graphs and random hopping models}\label{sec:llaw} 
In this section, we prove the Landscape Law for graphs and Jacobi/random hopping models as stated in Theorem~\ref{thm:LLaw}. In the upper bound, the main differences from the $\R^d$ or $\Z^d$ case are that we repeatedly use the finite covering property in Proposition~\ref{prop:covering}(a) to make up for not having a clean partition into cubes, and that since the bond weights $\mu_{xy}$ can become arbitrarily close to (or equal to) zero, we must separately truncate and bound these small bond weights using leftover diagonal terms (Lemma~\ref{lem:ep-cut}). We must also consider boundary terms coming from inner and outer boundaries of graph balls, as well as the relation to the non-regular shape of $A$, carefully throughout.

In the lower bound, the main difference from $\R^d$ or $\Z^d$ is to utilize 
Proposition~\ref{prop:covering}(b) to make up for not having a clean partition or translation invariance. Combined with Proposition~\ref{prop:covering}(a) and scaling in Lemma~\ref{lem:partition-comp}, this will allow us to handle comparisons with overlapping covers.
Additionally, by allowing for  overlapping covers and graph-dependent constants, the proof we give actually provides a simpler proof of the $\Z^d$ case from \cite{arnold2022landscape}.

\subsection{Proof of the Landscape Law upper bound, Theorem~\ref{thm:LLaw}(i) }\label{sec:LL-upper-det}

Let $r(E)=E^{-1/2}$ be the covering radius for the partition, and define the set
\begin{align*}
    {\mathcal F}=\left\{\, B\in \mathcal P^A(r(E))\,  :\ \ \min_{x\in B}\frac{1}{u(x)}\,\le E   \right\},
\end{align*}
so that the landscape  landscape counting function \eqref{eqn:Nu} is 
$N_u(E)=\frac{\#\mathcal F}{|A|}$.

{\bf \noindent Case I:} We first consider $r(E)=E^{-1/2}\ge1$. The other case $r(E)=E^{-1/2}<1$ (large $E$) corresponds to balls consisting only of a single point, and follows immediately from the landscape uncertainty principle as described in Case II near the end of this subsection.

Let 
$$S=\left\{f\in \ell^2(A)\, :\, \bar f^{B }\equiv \frac{1}{\deg(B)}\sum_{x\in B}f(x)\deg(x)
=0,\ \ B\in {\mathcal F} \right\},$$
where $\bar f^{B }$ is the weighted average of $f$ (w.r.t.  the natural weight) on $B$ and $\deg(x)$ is the degree in the graph $\Gamma$, as in the weak PI \eqref{eqn:bar-f}. 
We will show that for $f\in S$, that $\langle f,H^Af\rangle\ge C_1E\langle f,f\rangle$, so that $N(C_1E)\le \operatorname{codim}S$. 
For any set $B$, let $v_B\in\C(\V)$ be the vector with elements $\deg(x)$ for $x\in B$ and 0 otherwise. Then the space $S$ is the orthogonal complement of $\operatorname{span}(\{v_B:B\in\mathcal{F}\})$.
The balls $B$ in $\mathcal{F}$ may not be disjoint, but this does not matter, since we just need an upper bound on $\operatorname{codim}(S)=\operatorname{dim}(\operatorname{span}(\{v_B:B\in\mathcal{F}\}))\le \#\mathcal{F}$.
 Hence, 
\[ {\rm codim}S\le \#{\mathcal F}, \]
and the main work is to bound $\ipc{f}{H^Af}$ from below for $f\in S$, which will lead to an upper bound on the number of eigenvalues below an energy $E$ in terms of ${\rm codim}S$.

In order to obtain a bound like $C_1E\langle f,f\rangle\le \langle f,H^Af\rangle$, we consider coordinates $x\in A$ in the sum $\langle f,f\rangle=\sum_{x\in A}f(x)^2$ according to whether they are in a ball $B\in\mathcal{P}^A\setminus\mathcal{F}$ or in a ball $B\in\mathcal{F}$. Coordinates $x\in A$ may be in both types of balls, but due to the finite covering property in Proposition~\ref{prop:covering},
such overcounting is allowable. 
We start with balls $B\in\mathcal{P}^A\setminus{\mathcal F}$. In this case, the property $\min_{x\in B} \frac{1}{u(x)} \ge E$ implies that 
\begin{align}
  E \sum_{B\not\in {\mathcal F}}\sum_{x\in B\cap A}\, f(x)^2  
 &\le \sum_{B\not\in {\mathcal F}}\sum_{x\in B\cap A}\, \frac{1}{u(x)}f(x)^2
 \le   b_\Gamma \sum_{x\in   A}\, \frac{1}{u(x)}f(x)^2 
 \le b_\Gamma\ipc{f}{H^Af}, \label{eqn:B-not-in-F}
\end{align}
where we used the finite covering property \eqref{eqn:finite-cover-lambda} followed by the landscape uncertainty principle \eqref{eqn:UCP-gen} for the last two inequalities. 

For $B\in {\mathcal F}$, we first apply the weak Poincar\'e  inequality \eqref{eqn:WPI} to $f\in S$ with $\bar f^{B}=0$ to obtain,
\begin{align}\label{eqn:PI-pf}
   \sum_{x\in B} f(x)^2= \sum_{x\in B}\big(f(x)-\bar f^{B }\big)^2 \le  
 C_P\, r^2\sum_{x,y\in B^\ast: x\sim y}\big(f(x)-f(y)\big)^2,
    \end{align}
    where  $B=B(z_i,r)$ and $B^\ast=B(z_i,\lambda r)$ are given as in \eqref{eqn:WPI}. 
It remains to
 bound the right hand side of \eqref{eqn:PI-pf} from above by 
 $\ipc{f}{H^Af}$. Note that the kinetic energy term (non-potential term) in  the Hamiltonian $H^A$ defined in \eqref{eqn:HA} has a weight $\mu_{xy}\in [0,1]$, which may take the degenerate value zero. 
If we had $\mu_{xy}\ge \eps>0$ for all $\mu_{xy}$, then \eqref{eqn:PI-pf} is readily bounded using the kinetic term in \eqref{eqn:HA} and the finite-overlap property. In the general case where $\mu_{xy}$ takes values arbitrarily close (or equal to) $0$, we first must truncate the weight $\mu_{xy}$ and compare the resulting kinetic energy to an additional diagonal term which can be eventually absorbed into $\ipc{f}{H^Af}$. 
The following truncation lemma allows us to compare the kinetic term  in $\langle f,H^Af\rangle$, which may have degenerate weights $\mu_{xy}$, to the right hand side of \eqref{eqn:PI-pf}. The proof of the lemma is given at the end of this section. 
\begin{lemma}[$\varepsilon$-cutoff]\label{lem:ep-cut}
For $\mu_{xy}\in[0,1]$ and any $\varepsilon>0$, let $\mu_{xy}^\varepsilon =\max\{\varepsilon,\mu_{xy}\}$. Then for any region $R\subseteq\Gamma$,
\begin{equation}\label{eqn:eps-cutoff}
\frac{1}{2}\sum_{\substack{x,y\in R\\x\sim y}}\mu_{xy}^\varepsilon(f(x)-f(y))^2\le \frac{1}{2}\sum_{\substack{x,y\in R\\x\sim y}}\mu_{xy}(f(x)-f(y))^2+4\varepsilon(1-\varepsilon)^{-1}\sum_{x\in R}(\deg^Rx-\mu_x^R)f(x)^2.
\end{equation}
As a consequence, by choosing $\varepsilon=\frac{1}{5}$,
\begin{align}
\sum_{\substack{x,y\in R\\x\sim y}}(f(x)-f(y))^2\le 10\bigg(\frac{1}{2}\sum_{\substack{x,y\in R\\x\sim y}}\mu_{xy}(f(x)-f(y))^2+\sum_{x\in R}(\deg^R x-\mu_x^R)f(x)^2\bigg).\label{eqn:5}
\end{align}
\end{lemma}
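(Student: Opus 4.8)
\textbf{Proof plan for Lemma~\ref{lem:ep-cut} ($\varepsilon$-cutoff).}
The plan is to isolate, for each edge $\{x,y\}$ with $x\sim y$, the discrepancy between $\mu_{xy}^\varepsilon$ and $\mu_{xy}$. Since $\mu_{xy}^\varepsilon=\max\{\varepsilon,\mu_{xy}\}$, we have $\mu_{xy}^\varepsilon-\mu_{xy}=(\varepsilon-\mu_{xy})^+\le \varepsilon\,\mathbbm{1}_{\{\mu_{xy}<\varepsilon\}}$, and in particular $0\le \mu_{xy}^\varepsilon-\mu_{xy}\le\varepsilon$. Hence
\[
\frac12\sum_{\substack{x,y\in R\\x\sim y}}\mu_{xy}^\varepsilon(f(x)-f(y))^2-\frac12\sum_{\substack{x,y\in R\\x\sim y}}\mu_{xy}(f(x)-f(y))^2
=\frac12\sum_{\substack{x,y\in R\\x\sim y}}(\mu_{xy}^\varepsilon-\mu_{xy})(f(x)-f(y))^2
\le \frac{\varepsilon}{2}\sum_{\substack{x,y\in R\\x\sim y}}(f(x)-f(y))^2.
\]
So it suffices to bound $\frac12\sum_{x\sim y}(f(x)-f(y))^2$ in terms of the "leftover diagonal" quantity $\sum_{x\in R}(\deg^R x-\mu_x^R)f(x)^2$ plus a controlled multiple of $\frac12\sum_{x\sim y}\mu_{xy}(f(x)-f(y))^2$.

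The key step is the elementary pointwise inequality $(f(x)-f(y))^2\le 2f(x)^2+2f(y)^2$ applied to the edges where $\mu_{xy}$ is small, combined with the observation that $\deg^R x-\mu_x^R=\sum_{y\in R:\,y\sim x}(1-\mu_{xy})$, which is exactly the sum over incident edges of the "missing weight" $1-\mu_{xy}$. The natural decomposition is to write, for $\delta\in(0,1)$ to be chosen (I expect $\delta=\varepsilon$ or a comparable threshold),
\[
\frac12\sum_{\substack{x,y\in R\\x\sim y}}(f(x)-f(y))^2
=\frac12\sum_{\substack{x\sim y\\\mu_{xy}\ge\delta}}(f(x)-f(y))^2+\frac12\sum_{\substack{x\sim y\\\mu_{xy}<\delta}}(f(x)-f(y))^2.
\]
On the first sum, $1\le \delta^{-1}\mu_{xy}$, so it is $\le \delta^{-1}\cdot\frac12\sum_{x\sim y}\mu_{xy}(f(x)-f(y))^2$. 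On the second sum, use $(f(x)-f(y))^2\le 2f(x)^2+2f(y)^2$ and note that for such an edge $1-\mu_{xy}>1-\delta$, so $1<(1-\delta)^{-1}(1-\mu_{xy})$; summing over the edges with $\mu_{xy}<\delta$ incident to a fixed vertex and using $\sum_{y\sim x}(1-\mu_{xy})=\deg^R x-\mu_x^R$ gives the bound $2(1-\delta)^{-1}\sum_{x\in R}(\deg^R x-\mu_x^R)f(x)^2$ (the factor $2$ coming from $2f(x)^2+2f(y)^2$ together with the symmetric double counting of each edge, which cancels the $\tfrac12$).

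Putting these together yields
\[
\frac12\sum_{\substack{x,y\in R\\x\sim y}}\mu_{xy}^\varepsilon(f(x)-f(y))^2
\le \Big(1+\tfrac{\varepsilon}{\delta}\Big)\cdot\frac12\sum_{\substack{x,y\in R\\x\sim y}}\mu_{xy}(f(x)-f(y))^2
+\frac{2\varepsilon}{1-\delta}\sum_{x\in R}(\deg^R x-\mu_x^R)f(x)^2
\]
provided $\delta>\varepsilon$ would need a small adjustment — more cleanly, I would just split at $\delta=\varepsilon$ itself or push all edges into the "small" bin by comparing $(f(x)-f(y))^2\le 2f(x)^2+2f(y)^2$ directly against $\mu_{xy}^\varepsilon-\mu_{xy}\le\varepsilon(1-\mu_{xy})/(1-\varepsilon)$ on every edge (valid since $\mu_{xy}<\varepsilon$ forces the bound and $\mu_{xy}\ge\varepsilon$ makes $\mu_{xy}^\varepsilon-\mu_{xy}=0$). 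This last route is the slickest: on every edge $\mu_{xy}^\varepsilon-\mu_{xy}\le \frac{\varepsilon}{1-\varepsilon}(1-\mu_{xy})$, so
\[
\frac12\sum_{\substack{x,y\in R\\x\sim y}}(\mu_{xy}^\varepsilon-\mu_{xy})(f(x)-f(y))^2
\le \frac{\varepsilon}{1-\varepsilon}\cdot\frac12\sum_{\substack{x,y\in R\\x\sim y}}(1-\mu_{xy})\big(2f(x)^2+2f(y)^2\big)
=\frac{2\varepsilon}{1-\varepsilon}\sum_{x\in R}(\deg^R x-\mu_x^R)f(x)^2,
\]
which is exactly \eqref{eqn:eps-cutoff} up to the factor $4$ versus $2$ (the stated constant $4\varepsilon(1-\varepsilon)^{-1}$ is just a safe over-estimate). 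The consequence \eqref{eqn:5} then follows by setting $\varepsilon=\tfrac15$, so that $\mu_{xy}^{1/5}\ge\tfrac15$ gives $\sum_{x\sim y}(f(x)-f(y))^2\le 5\sum_{x\sim y}\mu_{xy}^{1/5}(f(x)-f(y))^2=10\cdot\tfrac12\sum_{x\sim y}\mu_{xy}^{1/5}(f(x)-f(y))^2$ and then applying \eqref{eqn:eps-cutoff} with $4\varepsilon(1-\varepsilon)^{-1}=4\cdot\tfrac15\cdot\tfrac54=1$.

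The main obstacle — really the only subtlety — is bookkeeping the combinatorial factors: making sure the double sum over ordered pairs $(x,y)$, the $\tfrac12$ prefactor, the $2f(x)^2+2f(y)^2$ expansion, and the identity $\deg^R x-\mu_x^R=\sum_{y\sim x,\,y\in R}(1-\mu_{xy})$ all line up so that the final constant is exactly $4\varepsilon(1-\varepsilon)^{-1}$ (or smaller). Everything else is the trivial pointwise estimate $\mu_{xy}^\varepsilon-\mu_{xy}\le\frac{\varepsilon}{1-\varepsilon}(1-\mu_{xy})$, which holds because the left side is zero unless $\mu_{xy}<\varepsilon$, in which case $1-\mu_{xy}>1-\varepsilon$ and $\mu_{xy}^\varepsilon-\mu_{xy}=\varepsilon-\mu_{xy}<\varepsilon<\frac{\varepsilon}{1-\varepsilon}(1-\mu_{xy})$.
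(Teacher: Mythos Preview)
Your proof is correct and follows essentially the same route as the paper: split off the edges with $\mu_{xy}<\varepsilon$, use $(f(x)-f(y))^2\le 2f(x)^2+2f(y)^2$ there, and convert the resulting edge sum into $\sum_x(\deg^R x-\mu_x^R)f(x)^2$. Your pointwise inequality $\mu_{xy}^\varepsilon-\mu_{xy}\le\frac{\varepsilon}{1-\varepsilon}(1-\mu_{xy})$ is just a cleaner packaging of the paper's counting step $(1-\varepsilon)\,\#\{y:\mu_{xy}<\varepsilon\}\le\deg^R x-\mu_x^R$, and as you noticed it even yields the sharper constant $2\varepsilon(1-\varepsilon)^{-1}$; the paper's $4\varepsilon(1-\varepsilon)^{-1}$ is indeed a harmless overestimate.
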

Applying the above equation \eqref{eqn:5} to $B^*$ in \eqref{eqn:PI-pf}, and recalling that $r^2=E^{-1}$, we obtain for $f\in S$,
\begin{align}\label{eqn:F-bound2}
\sum_{x\in B}f(x)^2&\le 10C_PE^{-1}\bigg(\frac{1}{2}\sum_{\substack{x,y\in B^*\\x\sim y}}\mu_{xy}(f(x)-f(y))^2+\sum_{x\in B^*}(\deg^{B^*} x-\mu_x^{B^*})f(x)^2\bigg).
\end{align}
For showing the right side of \eqref{eqn:F-bound2} is bounded by a factor of $\langle f,H^Af\rangle$, one complication is that $B^*$ may contain points outside $A$. This will be handled using the $\deg(x)-\mu_x^A$ term in $\langle f,H^Af\rangle$ as follows.
Using that $f$ is zero outside $A$ and that $\mu_{xy}=\mu_{yx}$, we can estimate
\begin{align*}
\frac{1}{2}&\sum_{\substack{x,y\in B^*\\x\sim y}}\mu_{xy}(f(x)-f(y))^2+\sum_{x\in B^*}(\deg^{B^*}x-\mu_x^{B^*})f(x)^2\\
&\le \sum_{x\in B^*\cap A}(\deg^{B^*}x-\mu_x^{B^*})f(x)^2+\frac{1}{2}\sum_{x\in B^*\cap A}\sum_{\substack{y\in A\\y\sim x}}\mu_{xy}(f(x)-f(y))^2
+\sum_{x\in B^*\cap A}\sum_{\substack{y\in B^*,y\not\in A\\y\sim x}}\mu_{xy}f(x)^2\\
&=\sum_{x\in B^*\cap A}(\deg^{B^*}x-
\mu_x^{B^*\cap A})f(x)^2+\frac{1}{2}\sum_{x\in B^*\cap A}\sum_{\substack{y\in A\\y\sim x}}\mu_{xy}(f(x)-f(y))^2.\numberthis\label{eqn:nextbound}
\end{align*}
The diagonal term can be bounded using $\deg^{B^*}x-\mu_x^{B^*\cap A}\le \deg(x)-\mu_x^A$, and the kinetic term using that
by \eqref{eqn:finite-cover-lambda}, for any function $g\ge0$ on $A$,
\begin{align}
\nonumber\sum_{B\in\mathcal{P}^A}\sum_{x\in B^*\cap A}g(x)
=\sum_{x\in A}g(x)\, \#\{B\in\mathcal{P}^A:x\in B^*\}
&\le C\lambda^\alpha b_\Gamma\sum_{x\in A}g(x).
\end{align}
Applying these bounds in \eqref{eqn:nextbound} and summing over $B\in\mathcal{F}$ then yields
\begin{multline*}
\sum_{B\in\mathcal{F}}\bigg[\frac{1}{2}\sum_{\substack{x,y\in B^*\\x\sim y}}\mu_{xy}(f(x)-f(y))^2+\sum_{x\in B^*}(\deg^{B^*}x-\mu_x^{B^*})f(x)^2\bigg] \\
\le C\lambda^\alpha b_\Gamma \sum_{x\in A}\bigg[\frac{1}{2}\sum_{\substack{y\in A\\y\sim x}}\mu_{xy}(f(x)-f(y))^2 +(\deg(x)-\mu_x^A)f(x)^2\bigg].
\end{multline*}
As the latter is exactly $C\lambda^\alpha b_\Gamma\langle f,H^Af\rangle$, combining with
 \eqref{eqn:B-not-in-F} and \eqref{eqn:F-bound2} then yields
\begin{align*}
    E\sum_{x\in A} f(x)^2\le E \sum_{B \in {\mathcal P^A}}\sum_{x\in B}f(x)^2  \le C_1\ipc{f}{H^Af},
\end{align*}
for $C_1=(10C_PC\lambda^\alpha+1) b_\Gamma$. 
Therefore, the number of eigenvalues of $H^A$ below $C_1^{-1}E$ is bounded from above by the codimension of the subspace $S$, that is, for $E\le 1$,
\begin{align} \label{eqn:upper-unscaled}
    N(C_1^{-1}E)\le \frac{\# {\mathcal F}}{|A|}=N_u(E).
\end{align}
By rescaling the energy as $\wt E=C_1E\le 1$, we obtain that for $E\le C_1^{-1}$,
 \begin{align} \label{eqn:upper-scaled}
    N(E)\le N_u(C_1E).  
\end{align}

{\bf \noindent Case II: $r(E)=E^{-1/2}<1$. }   
In this case, we just have
\begin{align*}
 {\mathcal F}=\left\{\, x\in A\,  :\ \  \frac{1}{u(x)}\,\le E   \right\},
\text{ and } S=\left\{f\in \ell^2(A)\, :\,  f(x) =0,\ \ {\rm if}\ \ x\in \mathcal F  \right\}.
\end{align*}
For $f\in S$, the landscape uncertainty principle \eqref{eqn:UCP-gen} then implies
\begin{align*}
  \ipc{f}{H^Af}\ge \sum_{x\in A}\frac{f(x)^2}{u(x)} &= \sum_{x\not\in\mathcal{F}}\frac{f^2(x)}{u(x)}
   \ge E \sum_{x\in A} f(x)^2,
\end{align*}
which completes the proof of Theorem~\ref{thm:LLaw}(i).
\qed

\begin{proof}[Proof of Lemma~\ref{lem:ep-cut} ($\varepsilon$-cutoff)]
Start by writing
\begin{align*}
\frac{1}{2}\sum_{\substack{x,y\in R\\x\sim y}}\mu_{xy}^\varepsilon(f(x)-f(y))^2 &=\frac{1}{2}\sum_{\substack{x,y\in R\\\mu_{xy}\ge\varepsilon}}\mu_{xy}(f(x)-f(y))^2 +\frac{1}{2}\sum_{\substack{x,y\in R\\x\sim y\\\mu_{xy}<\varepsilon}}\varepsilon(f(x)-f(y))^2.
\end{align*}
We upper bound the last term by comparing it to a sum involving $\deg^R x-\mu_x^R$. Since $\mu_{xy}=\mu_{yx}$, then
\begin{align*}
\sum_{\substack{x,y\in R\\x\sim y\\\mu_{xy}<\varepsilon}}\varepsilon(f(x)-f(y))^2 &\le 2\varepsilon\sum_{\substack{x,y\in R\\\mu_{xy}<\varepsilon}}(f(x)^2+f(y)^2)\\
&=4\varepsilon\sum_{x\in R}f(x)^2\cdot\#\{y\in R:\mu_{xy}<\varepsilon\}.
\end{align*}
If $\#\{y\in R:\mu_{xy}<\varepsilon\}$ is large, then so is $\deg^R x-\mu_x^R$: Using $\mu_x^R=\sum_{y\in R}\mu_{xy}$, then 
\begin{align*}
\mu_x^R=\sum_{y\in R}\mu_{xy} &\le \sum_{\substack{y\in R:\mu_{xy}<\varepsilon}}\varepsilon+\sum_{y\in R:\mu_{xy}\ge\varepsilon}1\\
&=\varepsilon\#\{y\in R:\mu_{xy}< \varepsilon\} + \deg^R x-\#\{y\in R:\mu_{xy}< \varepsilon\},
\end{align*}
and so
\begin{align*}
(1-\varepsilon)\#\{y\in R:\mu_{xy}< \varepsilon\} &\le \deg^R x-\mu_x^R.
\end{align*}
Thus
\begin{align*}
\sum_{\substack{x,y\in R\\\mu_{xy}<\varepsilon}}\varepsilon(f(x)-f(y))^2 &\le 4\varepsilon(1-\varepsilon)^{-1}\sum_{x\in R}(\deg^R x-\mu_x^R)f(x)^2,
\end{align*}
which gives \eqref{eqn:eps-cutoff}.
\end{proof}

\subsection{Proof of the Landscape Law lower bound, Theorem~\ref{thm:LLaw}(ii)}\label{sec:pf-lowerLLaw}

Similar to the landscape law upper bound, if we can bound $\ipc{f}{H^Af}$ from above by $E$ on some subspace $\mathcal{S}'\subset\ell^2(A)$, then the eigenvalue counting function at $E$ will be at least the dimension of $\mathcal{S}'$. In what follows, we will consider $0<\kappa<1/4$.

\vspace{2mm}
\noindent\textbf{Case I:} First we consider {$E\le c^2$} for $0<c<1$  to be determined later.
In view of the constants inside the argument of $N_u$ in \eqref{eqn:LLaw-lower1}, we pre-preemptively take the partition radius $R=(c\kappa^{-1})E^{-1/2}$. 
As we  will be using the Moser--Harnack inequality \eqref{eqn:MH-landscape0}, we will need to work with smaller balls, say of radius $r=\kappa R$ for $0<\kappa<1/4$. 
With the condition $E<c^2$, then $r\ge 1$ and $R\ge 4$.

For a ball $B=B(z,R)$, denote by $\check{B}:=B(z,r)$ the smaller ball with the same center $z$, and set
 \begin{align*}
  {\mathcal F}'=\left\{\, B\in \mathcal P(R)\,  :\ \ \min_{x\in  \check B}\frac{1}{u(x)}\,\le \,E 
 \quad {\rm and} \quad \min_{x\in   B}\frac{1}{u(x)}\,\ge \,  \kappa^2  E
 \right\}.
 \end{align*} 
First we define a preliminary set $\mathcal{S}_0$ as
\begin{align*}
   \mathcal{S}_0={\rm span}\left\{u {\chi}_{B}: B=B(z,R)\in{\mathcal F}'\right\},
\end{align*}
where $ {\chi}_{B(z,R)}$ denotes a cut-off function supported on $B(z,R)$ defined via
\begin{align}\label{eqn:cut-off-2}
 {\chi}_{B(z,R)}(x) &= \begin{cases}
1,&x\in B(z,R/2)\\
0,&   x\not\in B(z,R) \\
1-\frac{2}{R}(d(z,x)-R/2),&R/2\le d(z,x)\le R 
\end{cases}.
\end{align}

Note that for adjacent $x\sim y\in B(z,R/2)$, that
\begin{align*}
| {\chi}_{B(z,R)}(x)- {\chi}_{B(z,R)}(y)| \le \frac{4}{R}.
\end{align*}

The functions in $\mathcal{S}_0$ need not be orthogonal or linearly independent due to the overlap allowed in the covering. To remedy this, we construct $\mathcal{F}''$ and $\mathcal{S}'$ as follows:
To choose a set of disjoint balls $\mathcal{F}''=\{B(z,R)\}$ from $\mathcal{F}'$, go through the balls in $\mathcal{F}'$, and for each ball $B(z,R)$ still remaining, remove all other balls that overlap with $B(z,R)$. Let $\mathcal{F}''$ be the set of remaining balls.  If two balls overlap, then their centers satisfy $d(z,z_i)\le 2R$, and so by the finite cover property \eqref{eqn:finite-cover-lambda},
there can only be $C_\Gamma b_\Gamma$ balls that overlap $B(z,R)$. So for each ball we kept for $\mathcal{F}''$, we only removed at most $C_\Gamma b_\Gamma-1$ balls from $\mathcal{F}'$.
Since $\#\mathcal{F}'=\#\mathcal{F}''+\#\{\text{removed}\}\le \#\mathcal{F}''C_\Gamma b_\Gamma$, then we must have 
\[
\#\mathcal{F}''\ge \frac{1}{C_\Gamma b_\Gamma}\#\mathcal{F}'.
\] 
Now take
\begin{align*}
   \mathcal{S}'={\rm span}\left\{u {\chi}_{B}: B=B(z,R)\in{\mathcal F}''\right\}.
\end{align*}
Since such $u {\chi}_B$ have disjoint supports, they are linearly independent and $\operatorname{dim}\mathcal{S}'=\#\mathcal{F}''$. Additionally, their supports are separated from each other by distance at least $R\ge4$, so that $\langle u {\chi}_B,H^A(u {\chi}_{B'})\rangle=0$ for different balls $B,B'\in\mathcal{F}''$.

Now we want to bound $\dfrac{\langle u {\chi}_B,H^A(u {\chi}_B)\rangle}{\langle u {\chi}_B,u {\chi}_B\rangle}$ from above for each $u {\chi}_B\in\mathcal{S}'$.
First, by the landscape uncertainty principle \eqref{eqn:UCP-gen} and using that $\max_{x\in B}u(x)\le \kappa^{-2}E^{-1}$ by the definition of $\mathcal{F}'$, we can bound the numerator as
 \begin{align*} 
        \ipc{u {\chi}_B}{H^A(u {\chi}_B)}
   &=     \frac{1}{2}\sum_{\substack{x,y\in A\\x\sim y} } \mu_{xy}u(x)u(y)\left( {\chi}_B(x)- {\chi}_B(y)\right)^2 + \sum_{x\in A}u(x) {\chi}_B(x)^2\\
       & \le CM_\Gamma |B(z,R+1)|(\kappa^{-4}E^{-2})\left(\frac{2}{R}\right)^2+ |B(z,R)|(\kappa^{-2}E^{-1})\\
\numberthis &\le c_1R^{\alpha-2}\kappa^{-4}E^{-2}+c_1R^\alpha\kappa^{-2}E^{-1}. \label{eqn:uHu-upper}
    \end{align*}
For the denominator, using $ {\chi}_{B(z,R)}(x)=1$ for $x\in B(z,R/2)$ which includes $B(z,2r)$, followed by the Moser--Harnack inequality \eqref{eqn:MH-landscape0} yields for $r\ge1$,
\begin{align}\label{eqn:uchi-bound}
  \langle u {\chi}_B,u {\chi}_B\rangle &\ge \sum_{x\in B(z,2r)}u(x)^2
\ge c_H'| B(z,2r)|\left(\sup_{y\in B(z,r)}u(y)^2- c_2r^4\right).
\end{align}
Using that $ \max_{y\in B(z,r)}u (y)\ge E^{-1}$ by the definition of $\mathcal{F}'$ and that $r=\kappa R= cE^{-1/2}$, then \eqref{eqn:uchi-bound} becomes
\begin{align}
    \langle u {\chi}_B,u {\chi}_B\rangle \ge  C\kappa^{\alpha}R^{\alpha}E^{-2}(1-c_2c^4)
\ge c_3\kappa^\alpha R^\alpha E^{-2}, \label{eqn:uu-lower-gen}
\end{align}
provided that $c$ is chosen small enough that $c_2c^4<1$.

Combining \eqref{eqn:uu-lower-gen} and \eqref{eqn:uHu-upper} and using $R=\kappa^{-1}cE^{-1/2}$ then yields
\begin{align}
 \nonumber\frac{\ipc{u {\chi}_B}{H^A (u {\chi}_B)}}{\langle u {\chi}_B,u {\chi}_B\rangle}
 &\le c_1'\kappa^{-\alpha-4}R^{-2} +c_2'\kappa^{-\alpha-2} E \\
&= C_2\kappa^{-\alpha-2}E.\label{eqn:scale1}
\end{align}
Since the $u {\chi}_B$ for $B\in\mathcal{F}''$ have disjoint supports separated from each other by distance at least $R\ge 4$, the bound \eqref{eqn:scale1} holds for all linear combinations $f\in\mathcal{S}'$, and we obtain 
\begin{align*}
N(C_2\kappa^{-\alpha-2}E) &\ge \frac{1}{|A|} \operatorname{dim}\mathcal{S}' \ge\frac{1}{|A|C_\Gamma b_\Gamma}\#\mathcal{F}'.
\end{align*}

Now to compare $\#\mathcal{F}'$ favorably to $N(E)$, we will need to make use of 
Proposition~\ref{prop:covering}(b). From this property, there are covers $\mathcal{P}^{(j)}(R)=\{B(z_i^{(j)},R)\}_i$, for $j=1,\ldots, C\kappa^{-\alpha}$, all with a finite covering constant $b_\Gamma$, and such that $\bigcup_j\bigcup_i B(z_i^{(j)},\kappa R)$ is also a cover (not necessarily with a particular covering constant), say $\tilde{P}(r)$.  Applying the preceding argument to any $\mathcal{P}^{(j)}(R)$, we have
\begin{multline}\label{eqn:N1N2}
N(C_2\kappa^{-\alpha-2}E)\ge\frac{1}{|A| C_\Gamma b_\Gamma}\Big(\#\{B(z_i^{(j)},R)\in\mathcal{P}^{(j)}:\min_{B(z_i^{(j)},r)}\frac{1}{u}\le E\}-\\
-\#\{B(z_i^{(j)},R):\min_{B(z_i^{(j)},R)}\frac{1}{u}\le \kappa^2 E\}\Big).
\end{multline}
The negative term $|A|^{-1}\#\{B(z_i^{(j)},R):\min_{B(z_i^{(j)},R)}\frac{1}{u}\le \kappa^2E\}$ is already $N^{\mathcal{P}^{(j)}(R)}_u(\kappa^2 E)$. 
For the first term, we apply \eqref{eqn:N1N2} to each cover $\mathcal{P}^{(j)}(R)$, $j=1,\ldots,C\kappa^{-\alpha}$, and take the sum, which results in the landscape counting function $N_u^{\tilde{P}(r)}(E)$ for the covering $\tilde{P}(r)$ with smaller radius $r=\kappa R=cE^{-1/2}$. With the summation, we obtain
\begin{align}\label{eqn:llaw-unscaled}
N(C_2\kappa^{-\alpha-2}E)\ge (C_\Gamma b_\Gamma)^{-1}C^{-1}\kappa^\alpha N^{\tilde{P}(r)}_u(E)-c_\Gamma N_u^{\mathcal{P}(R)}(\kappa^2E),
\end{align}
where the $N_u^{\mathcal{P}(R)}(\kappa^2E)$ term is for any partition $\mathcal{P}(R)$ with finite overlap constant $b_\Gamma$, and is comparable to $N_u(\kappa^2 E)$. While $\tilde{P}(r)$ may not have a particular covering constant, we only need to note that $N_u^{\tilde{P}(r)}$ upper bounds $N_u$ (up to the geometric constant in Lemma~\ref{lem:partition-comp}).

\vspace{2mm}
\noindent\textbf{Case II:} $E>c^2$.
In this case $r=cE^{-1/2}<1$, so that $\check B=B(z,r)=\{z\}$. Retaining the definitions of $\mathcal{F}'$ and $\mathcal{F}''$ from before, then for any $B(z,R)$ in $\mathcal{F}'$ or $\mathcal{F}''$, we have $u(z)^2\ge E^{-2}$. 
If $R=c\kappa^{-1}E^{-1/2}\ge 4$, corresponding to $c^2<E\le c^2(4\kappa)^{-2}$, {then the numerator upper bound \eqref{eqn:uHu-upper} still holds.} A lower bound  
\begin{align*}
\ipc{u {\chi}_B}{u {\chi}_B}\ge u(z)^2\ge E^{-2}
\end{align*}
is immediate, and so using $E> c^2$ we obtain
\[  
 \frac{\ipc{u {\chi}_B}{H^A u {\chi}_B}}{\ipc{u {\chi}_B}{u {\chi}_B}}
 \le     \frac{  c_1R^{\alpha-2}  \kappa^{-4}E^{-2} + c_1R^{\alpha}\kappa^{-2}E^{-1}}{ E^{-2}}\\
 \le   C_3\kappa^{-\alpha-2} E ,
 \]
which is the same scaling of $\kappa$ as in \eqref{eqn:scale1}, leading to the same form as \eqref{eqn:llaw-unscaled}.

If instead $0<R<4$, so that $E>c^2(4\kappa)^{-2}$, then we make an adjustment to $\mathcal{S}'$ to ensure the supports are far enough apart. Define $\mathcal{F}'$ as before, and form $\mathcal{F}''$ by going through each ball $B(z,R)$ in $\mathcal{F}'$, and removing all other balls whose centers $z_i$ are within distance $2$ of $z$. By volume control \eqref{eqn:vol-control}, there are at most $C_U2^\alpha$ points within distance $2$ of $z$, so 
\begin{align*}
\#\mathcal{F}'' &\ge \frac{1}{C_U2^\alpha}\,\#\mathcal{F}.
\end{align*} Then take
$
\mathcal{S}'=\operatorname{span}\{\one_z:z\in\mathcal{F}''\}.
$
Since $E>c^2(4\kappa)^{-2}$, the landscape uncertainty principle implies for any $z\in\mathcal{F}''$,
\begin{align*}
{\langle\one_z,H^A\one_z\rangle}\le M_\Gamma \kappa^{-4}+\kappa^{-2}E \le C_4\kappa^{-2}E,
\end{align*}
which leads to
\begin{align*}
N(C_4\kappa^{-2}E) &\ge \frac{1}{|A|}\operatorname{dim}\mathcal{S}'
\ge \frac{1}{C_U2^\alpha|A|}\#\mathcal{F}',
\end{align*}
and a similar lower bound as \eqref{eqn:llaw-unscaled}, except the bound is for $N(C_4\kappa^{-2}E)$.

\vspace{2mm}
\noindent\textbf{Rescaling:}
With new constants $c_1,c_2$ and applying Lemma~\ref{lem:partition-comp} (noting that the constants that arise from having $R=(c\kappa^{-1})E^{-1/2}$ can be taken independent of $\kappa$ since $\kappa<1/4$), Cases I and II in summary imply,
\begin{align*}
N(C_2\kappa^{-\alpha-2}E)&\ge c_1\kappa^\alpha N_u(E)- c_2N_u(\kappa^2E),\quad \text{for }E\le c^2,\\
N(C_3\kappa^{-\alpha-2}E)&\ge c_1\kappa^\alpha N_u(E)- c_2N_u(\kappa^2E),\quad \text{for }c^2<E\le c^2(4\kappa)^{-2},\\
N(C_4\kappa^{-2}E)&\ge c_1  N_u(E)- c_2N_u(\kappa^2E),\quad\text{for } E>c^2(4\kappa)^{-2}.
\end{align*}

Rescaling and using that $N(E)$ is (non-strictly) increasing in $E$, 
we obtain for a constant $c^*$, that
\begin{align*}
N(E)&\ge c_1\kappa^\alpha N_u(c_3\kappa^{\alpha+2}E)-c_2N_u(c_4\kappa^{\alpha+4}E),\quad\text{for }{E\le c^*\kappa^{-4}}, \\
N(E)&\ge c_1'  N_u(c_3'\kappa^{2}E)-c_2'N_u(c_4'\kappa^{4}E),\quad\text{for }{E>c^*\kappa^{-4}},
\end{align*}
which completes the proof of Theorem~\ref{thm:LLaw}(ii). \qed

Note that when $E$ is small, the above argument requires the domain to contain at least one small ball of radius $r\approx E^{-1/2}$, leading to the condition ${\rm diam}A \gtrsim E^{-1/2}$. The restrictions on ${\rm diam}A$ can however be removed easily. 
If ${\rm diam}A \lesssim E^{-1/2}$, then $\max_A u \lesssim ({\rm diam}A)^2 \lesssim E^{-1}$. 
Hence,   $N_u(cE)$ vanishes for some constant $c$ depending only on $\Gamma$,
leading to $N(E)\ge 0 = N_u(cE)$. In other words, the landscape law lower bound \eqref{eqn:LLaw-lower1} holds (neglecting the negative term on the right hand side) trivially when ${\rm diam}A$ is small.

\section{Lifshitz tails for the landscape counting function}\label{sec:lif}

In this section, we prove Theorem~\ref{thm:Lif} on Lifshitz tails for the landscape counting function for Jacobi operators on graphs. 
The process of establishing Lifshitz tails behavior of $N_u$ was done in \cite{DFM} for $\R^d$ and then extended to $\Z^d$ in \cite{arnold2022landscape}. 
In those settings, only cubes with periodic boundary conditions and diagonal disorder were considered for  simplicity. 
In the current paper, we work on a general domain $A$ with Dirichlet boundary conditions and consider both diagonal and off-diagonal disorder.  
The argument follows the general approach of the $\R^d$ and $\Z^d$ cases, though requires additional consideration for the general graph structure and shape of balls, lack of exact formulas, and metric in Assumption~\ref{ass:harmonic-weight}.
One example where these assumptions are satisfied is the graph induced by a random band model, which we will discuss in Section~\ref{sec:RBM}.

Recall we are interested in a Jacobi operator of the form \eqref{eqn:H-gen} where $\{\mu_{xy}\}$ and $\{V_x\}$ are each sets of i.i.d. random variables, and that such an operator can be written in the form
\begin{align*}
   H f(x) 
 &=  \sum_{y:y\sim x}\mu_{xy}\big(f(x)- f(y)\big) +( \sigma_x+V_x)f(x),
\end{align*}
where 
\begin{align*}
   \sigma_{xy}=1-\mu_{xy}, \quad \sigma_x=\sum_{y:y\sim x}\sigma_{xy}=\deg(x)-\mu_x.
\end{align*}
The corresponding quadratic form is 
 \begin{align*}
    \ipc{f}{Hf}=\frac{1}{2} \sum_{x,y:y\sim x}\mu_{xy}\big(f(x)-f(y)\big)^2+\sum_x( \sigma_x+V_x)f(x)^2.
\end{align*}  
For a finite $A\subset \Gamma$, the Dirichlet restriction of $H$ is 
\begin{align*}
    H^A f(x)=&\sum_{y\in A:y\sim x}\mu_{xy}\big(f(x)- f(y)\big) +( \sigma_{x}^A+V_x)f(x)
\end{align*}
and
 \begin{align}\label{eqn:quadr-HA-Lif}
    \ipc{f}{H^Af}=\frac{1}{2} \sum_{\substack{ x,y\in A\\ y\sim x}}\mu_{xy}\big(f(x)-f(y)\big)^2+\sum_{x\in A}( \sigma_{x}^A+V_x)f(x)^2,
\end{align}  
where $\sigma_x^A=\sum_{\substack{y\in A\\y\sim x}}\sigma_{xy}=\deg (x)-\mu_x^A$.

The Lifshitz tail lower bound proof follows the same method as for $\R^d$ or $\Z^d$, but accounts for the random hopping terms and uses the ``sufficient overlap with balls'' property to make up for not having a partition into cubes.
The upper bound proof will require more adaptation, in particular dealing with the different metrics in Assumption~\ref{ass:harmonic-weight}, and the lack of exact formulas for e.g. the Green's function and Poisson kernel which were utilized in the $\R^d$ and $\Z^d$ cases. 

\subsection{Lifshitz tails lower bound: proof of Theorem~\ref{thm:Lif}(i)}

For $E>0$, let $R=mE^{-1/2}$  for an $m$ to be specified later depending only on $\Gamma$, and  let $\mathcal P =\{B(z_i,R)\}_i$ be a cover of 
$\Gamma$. We will require $R\ge1$, i.e. $E\le m^2:=E_0$, so that we may later apply volume control \eqref{eqn:vol-control}.

From the landscape counting function definition, we have 
\begin{align}
  \E N_u^{\mathcal{P}}(E)=\frac{1}{|A|}\sum_{B\in\mathcal{P}|_A}\P\left[\min_{x\in B\cap A}\frac{1}{u(x)}\le E\right], \label{eqn:Nu-sum-P}
\end{align}
so to prove the lower bound \eqref{eqn:Lif-lower-Nu}, we will want to bound $\P\big[ \min_{x\in B\cap A}\frac{1}{u(x)}\le E \big]$ from below in terms of the CDF $F$.

Denote by $kB$ the scaled ball $B(z,kR)$, and 
let $0\le  {\chi}_B \le 1$ be the discrete cut-off function supported on $2B$ defined as
\begin{align*}
 {\chi}_{B}(x) &= \begin{cases}1,&x\in B\\
0,&x\not\in 2B\\
1-\frac{E^{1/2}}{m}(d(z,x)-mE^{-1/2}), & mE^{-1/2}\le d(z,x)\le 2mE^{-1/2}
\end{cases}.
\end{align*}
Note for $x\sim y\in 2B$, that $| {\chi}_B(x)- {\chi}_B(y)|\le \frac{E^{1/2}}{m}$.

Applying the landscape uncertainty principle \eqref{eqn:UCP-gen} along with \eqref{eqn:quadr-HA-Lif}  implies
 \begin{align} 
\nonumber\sum_{x\in 2B}\frac{ {\chi}_B^2(x)}{u(x)}\le\ipc{ {\chi}_B}{H^A {\chi}_B}   &\le  \frac{1}{2} \sum_{\substack{ x,y\in A\\ y\sim x} }  \big( {\chi}_B(x)- {\chi}_B(y)\big)^2+\sum_{x\in A}(\sigma_x^A+V_x) {\chi}_B(x)^2  \\
 &\le M_\Gamma|B(z,2R+1)| \frac{E}{m^2}+|2B|\big(M_\Gamma \max_{\substack{ x,y\in 3B\\ y\sim x}}\sigma_{xy}+\max_{x\in 2B}V_x\big).\label{eqn:liflower1}
    \end{align}
Since
\begin{align*}
    \min_{x\in B\cap A}\frac{1}{u(x)}\le \frac{1}{|B\cap A|}\sum_{x\in  B\cap A}\frac{1}{u(x)}\le &   \frac{1}{|B\cap A|}\sum_{ x\in 2B}\frac{ {\chi}_B^2(x)}{u(x)},
\end{align*}
we then obtain using \eqref{eqn:liflower1}, the volume control \eqref{eqn:vol-control}, and the condition \eqref{eqn:AintersectB} that $|A\cap B(x,R)|\gtrsim R^{\alpha}$ for $R\le C \operatorname{diam}A$ and $x\in A$, that
\begin{align*}
 \min_{x\in B\cap A}\frac{1}{u(x)} 
   \le  c_1\frac{E}{m^2}+c_2 \max_{\substack{ x,y\in 3B\\ y\sim x}}\sigma_{xy} +c_2\max_{x\in 2B} V_x.
\end{align*}
Choosing $m^2=2c_1$ and using independence of the random variables then yields
\begin{align*}
    \P\bigg[\min_{x\in B\cap A}\frac{1}{u(x)}\,\le E \bigg]&\ge 
      \P\bigg[ c_2 \max_{\substack{ x,y\in 3B\\ y\sim x}}\sigma_{xy} +c_2\max_{x\in 2B} V_x\,\le \frac{1}{2}E \bigg]\\ 
   &\ge     \P\bigg [ c_2 \max_{\substack{ x,y\in 3B\\ y\sim x}}\sigma_{xy}  \le \frac{1}{4}E \bigg] \P\bigg [ c_2\max_{x\in 2B} V_x\,\le \frac{1}{4}E \bigg]\\ 
      &\ge  \big(F_{\mu}(c_3E)\big)^{c_4E^{-\alpha/2}}\big(F_V(c_5E)\big)^{c_6E^{-\alpha/2}}.
\end{align*}
Thus with \eqref{eqn:Nu-sum-P}, we obtain 
\begin{align*}
   \E N^{\mathcal P}_u(E) 
   &\ge  \frac{\#\{B\in \mathcal P', B\cap A \neq \varnothing\}}{|A|}\big(F_{\mu}(c_3E)\big)^{c_4E^{-\alpha/2}}\big(F_V(c_5E)\big)^{c_6E^{-\alpha/2}}.
\end{align*}
Since $\mathcal{P}$ is a covering of $A$, we must have
\[  \#\{B\in \mathcal P, B\cap A \neq \varnothing\}\ge  \frac{|A|}{|B|}\ge   cmE^{\alpha/2}|A|, \]
which completes the proof of \eqref{eqn:Lif-lower-Nu}. \qed

\subsection{Lifshitz tails upper bound: proof of Theorem~\ref{thm:Lif}(ii)}

For $E>0$, set  $R=c_0E^{-1/2}$ for $c_0$ to be determined later (cf. Lemma~\ref{lem:PmaxF}), and set $\mathcal{P}=\{B(z_i,R)\}_i$ be a covering of $\Gamma$  satisfying \eqref{eqn:finite-cover-lambda} 
with a finite overlap constant $b_{\Gamma}$.
Using \eqref{eqn:Nu-sum-P}, then
\begin{align}\label{eqn:Nu-upper-Pmax}
   \E N_u^{\mathcal{P}}(E)\le \frac{\# \{B\in \mathcal P|_A\} }{|A|}\max_{B\in \mathcal P|_A}\P\left [\min_{x\in B\cap A}\frac{1}{u(x)}\le E \right],
\end{align}
and so we must bound $\P\big [\min_{x\in B\cap A}\frac{1}{u(x)}\le E \big]$ from above for each $B\in\mathcal{P}$.
This is achieved by the following lemma.
\begin{lemma}\label{lem:PmaxF}
 Under the assumptions in Theorem~\ref{thm:Lif}(ii), there is $0<E_\ast<1$ and $c_i>0$ such that for $E<E_\ast$ and any $x_0$,
\begin{align}\label{eqn:Pr-key1}
\P\left[\min_{x\in B(x_0,c_0E^{-1/2})} \frac{1}{u(x)}\le E\right] \le   c_1 \left(F (c_2E) \right)^{c_3 E^{-\alpha/2}}.
\end{align}
\end{lemma}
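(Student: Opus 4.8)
\textbf{Proof proposal for Lemma~\ref{lem:PmaxF}.}

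The plan is to establish a deterministic \emph{lower} bound on $\min_{x\in B(x_0,c_0E^{-1/2})}\frac{1}{u(x)}$ in terms of the disorder configuration inside a slightly enlarged ball, and then convert this into a probabilistic bound using independence. The key point is that if $u$ is large somewhere in $B(x_0,c_0E^{-1/2})$, then on a large portion of a comparable ball the disorder (the $\sigma_{xy}$ and $V_x$) must be uniformly small. Concretely, I would argue the contrapositive: suppose there is $x_1\in B(x_0,c_0E^{-1/2})$ with $u(x_1)\ge E^{-1}$; I want to conclude that $\sigma_{xy}+V_x\lesssim E$ for all bonds/sites in a ball $B^d(\xi,\rho)$ with $\rho\approx c_0 E^{-1/2}$ and a center $\xi$ chosen so that $x_1$ is deep inside. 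The mechanism is the mean value / submean property from Assumption~\ref{ass:harmonic-weight}: since $-\Delta u^A\le 1$, the function $u^A$ is ``almost subharmonic,'' and subtracting the free landscape on the ball (as in the proof of Corollary~\ref{cor:MH}) produces a genuinely subharmonic function $g=u^A-u_0^{B^d}+\|u_0^{B^d}\|_\infty$ with $g\ge0$. Applying \eqref{eqn:h-submean-ass} at the center $\xi$ gives $g(\xi)\le \sum_y h_{B^d_\rho(\xi)}(y)g(y)$, hence $u^A(\xi)\le \sum_y h_{B^d_\rho(\xi)}(y)u^A(y)+c\rho^2$ using the exit time bound \eqref{eqn:u-upper}. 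But this controls $u^A$ \emph{from above} at the center, not from below where I need it — so instead I should use the uncertainty/energy estimate directly.

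The better route, following \cite{DFM,arnold2022landscape}, is: assume $u^A(x_1)\ge E^{-1}$ and use that the landscape satisfies $H^Au^A=\one$, so by the maximum principle (Lemma~\ref{lem:u-positive}), $u^A\ge w$ where $w$ solves the same equation with more disorder set to the ``maximal'' values, or more precisely, compare $u^A$ to the landscape on a ball with Dirichlet conditions. In fact the cleanest statement: testing $H^Au^A=\one$ and using the harmonic weight, one gets $1 = (H^Au^A)(\xi)$ which, after the subharmonic comparison, yields an upper bound $u^A(\xi)\le \big(\sum_y h(y)\big)\cdot(\text{something})$ — wait, here is the actual chain I intend to use. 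Let $B^d=B^d_\rho(\xi)$ with $\rho\approx E^{-1/2}$ and $x_1$ in the inner half-ball. Write $u^A$ restricted appropriately and use the \textbf{upper} mean value bound to show $u^A(x_1)\le C\rho^2 + \sum_{y}h_{B^d}(y)u^A(y)$, then iterate/bootstrap: the hypothesis $u^A(x_1)\ge E^{-1}\gtrsim \rho^2$ forces $\sum_y h_{B^d}(y)u^A(y)\gtrsim E^{-1}$, and by the uniform lower bound \eqref{eqn:h-lower-ass} on $B^d\setminus X_\rho$ together with $|X_\rho|/|B^d|\to0$, this forces $\frac{1}{|B^d|}\sum_{y\in B^d}u^A(y)\gtrsim E^{-1}$. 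Now bound $\sum_{y\in B^d}u^A(y)$ from above using the energy: $\sum_y u^A(y)=\langle u^A,\one\rangle=\langle u^A,H^Au^A\rangle\ge \sum_y \frac{(u^A)^2}{u^A} $... that's circular, so instead test against $\one_{B^d}$: $\langle\one_{B^d},H^A u^A\rangle = |B^d|$, and expand the quadratic form to see that a large $\sum u^A$ on $B^d$ combined with the uncertainty principle \eqref{eqn:UCP-gen} forces the effective potential $1/u^A$ to be small on most of $B^d$, hence $\sigma_{xy}^A+V_x$ small there.

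Let me state the intended deterministic lemma precisely and then the probabilistic step. \textbf{Deterministic claim:} there are constants $c,c',\theta\in(0,1)$ (depending only on $\Gamma$) so that if $\min_{x\in B(x_0,c_0E^{-1/2})}\frac1{u(x)}\le E$, then there is a ball $B^d(\xi,c'E^{-1/2})$ with $B(x_0,c_0E^{-1/2})\subseteq$ a comparable ball, on which $\#\{$sites $x$ or bonds $xy$ with $V_x> c_2E$ or $\sigma_{xy}>c_2E\}\le \theta\,|B^d(\xi,c'E^{-1/2})|$, i.e. a positive fraction (at least $1-\theta$) of the disorder variables in this ball are $\le c_2 E$. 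This follows from: $u^A(x_1)\ge E^{-1}$, the submean bound to propagate largeness of $u^A$ to a positive-density subset of the ball (using that $X_\rho$ is negligible for $\rho$ large, which holds once $E<E_\ast$), then on each such site $y$ with $u^A(y)\gtrsim E^{-1}$ the identity $(H^Au^A)(y)=1$ combined with $u^A\ge0$ forces $(\sigma_y^A+V_y)u^A(y)\le 1 + \sum_{z\sim y}\mu_{yz}u^A(z)$ — and controlling the neighbor terms via another application of the mean-value/Harnack estimate gives $\sigma_y^A+V_y\lesssim E$, hence $V_y\le c_2E$ and each $\sigma_{yz}\le c_2E$. \textbf{Probabilistic step:} by independence of $\{V_x\}$ and $\{\sigma_{xy}\}$ and the covering/volume bounds, the probability that at least $(1-\theta)$ of the $\asymp E^{-\alpha/2}$ disorder variables in the ball fall below $c_2E$ is at most $\binom{N}{\theta N}\big(F(c_2E)\big)^{(1-\theta)N}$ with $N\asymp E^{-\alpha/2}$; absorbing the binomial coefficient (bounded by $2^N$) into a redefinition of constants (and using that the essential-support hypothesis makes $F(c_2E)$ bounded away from $1$, so $2^N (F(c_2E))^{(1-\theta)N}\le c_1(F(c_2'E))^{c_3 E^{-\alpha/2}}$) gives \eqref{eqn:Pr-key1}.

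\textbf{Main obstacle.} The delicate part is the deterministic propagation step: turning the single pointwise fact $u^A(x_1)\gtrsim E^{-1}$ into \emph{uniform smallness of the disorder on a positive-density subset} of a ball. On $\R^d$ and $\Z^d$ (\cite{DFM,arnold2022landscape}) this used explicit Green's function / Poisson kernel formulas and the translation-invariant cube geometry; here I must replace those with the abstract harmonic weight of Assumption~\ref{ass:harmonic-weight} and the Moser--Harnack inequality, and carefully track the two metrics $d_0$ and $d$ (via \eqref{eqn:ball-eqv}), the bad set $X_\rho$, and the enlargement factors so that the covering ball $B(x_0,c_0E^{-1/2})$ really sits inside the ball $B^d$ where I read off the disorder. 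Choosing $c_0$ small enough (relative to the metric-equivalence constants and the $\lim_{R\to\infty}|X_R|/|B_R|=0$ rate, which fixes $E_\ast$) is exactly the source of the constant $c_0$ in the lemma statement. A secondary technical point is handling the $\sigma_x^A=\deg(x)-\mu_x^A$ (Dirichlet-truncated) versus $\sigma_x$ discrepancy near $\partial A$, but since $B\cap A\ne\varnothing$ and we only need smallness on a positive fraction of the ball, boundary sites form a negligible fraction and can be discarded.
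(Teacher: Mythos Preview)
Your proposal has a genuine gap in the deterministic propagation step, and the overall architecture differs from the paper's in an essential way.

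You try to run a \emph{single-scale} argument: from $u(x_1)\ge E^{-1}$ at one point, deduce that the disorder is $\le c_2E$ on a positive-density subset of a fixed ball $B^d(\xi,c'E^{-1/2})$. This deterministic claim is not provable as stated. The submean inequality gives $u(\xi)\le \sum_y h(y)u(y)+c\rho^2$, which is an \emph{upper} bound on $u(\xi)$; it does not propagate largeness of $u$ outward. Your attempted repair, ``$\sum_y h(y)u(y)\gtrsim E^{-1}$ together with $h(y)\ge c/|B^d|$ forces $\frac{1}{|B^d|}\sum_y u(y)\gtrsim E^{-1}$,'' reverses the inequality: a lower bound on $h$ turns a lower bound on the weighted sum into an \emph{upper} bound on the weighted sum in terms of the unweighted one, not the other way around. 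And even if you had $\frac{1}{|B^d|}\sum_y u(y)\gtrsim E^{-1}$, this does not give a positive-density set where $u(y)\gtrsim E^{-1}$ (mass could concentrate). Finally, the step ``$(H^Au)(y)=1$ forces $\sigma_y+V_y\lesssim E$ at sites where $u(y)\gtrsim E^{-1}$'' requires $\sum_z\mu_{yz}(u(y)-u(z))\ge -C$, i.e.\ control on $u$ at \emph{neighbors}, which you do not have except at a local maximum.

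The paper's proof is structurally different: it is a \emph{multi-scale growth iteration}. The key deterministic input is Lemma~\ref{lem:u-growth-gen}: if $u(\xi)\ge Mr^2$ \emph{and} the set $J_r$ of sites with disorder $\ge mr^{-2}$ has density $\ge\lambda$ in $B(x_0,r)$, then there exists $\xi'$ in a ball of radius $R=(1+\eps)r$ with $u(\xi')\ge (1+3\eps)u(\xi)\ge MR^2$. (The proof of this growth lemma is where the harmonic weight and Green's function estimates enter, via two cases depending on whether $u$ is small on half of $J_r$.) One then iterates through scales $r_k=(1+\eps)^k r_0$ until $r_K>\tfrac12\operatorname{diam}A$, at which point the growth contradicts $\max_A u$. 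Hence on the event $\{u(\xi_0)\ge Mr_0^2\}$, the density condition $|J_{r_k}|\ge\lambda|B\cap A|$ must \emph{fail} at some scale $k$, for some ball in a covering of bounded multiplicity. The probability of each such failure is estimated by a Chernoff bound as $\big(F_V(mr_k^{-2})F_\mu(mr_k^{-2})\big)^{cr_k^\alpha}$, and summing the geometric series over $k$ gives \eqref{eqn:Pr-key1}. The multi-scale structure is what converts ``$u$ large at one point'' into a union of many-variable events without the circularities above.
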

Assuming this lemma, we then have
\begin{proof}[Proof of Theorem~\ref{thm:Lif}(ii)]
By the finite covering assumption with constant $b_\Gamma$, and sufficient overlap of $A$ with balls of radius $R\le\operatorname{diam}A$, observe that
    \[\frac{\# \{B\in \mathcal P|_A\} }{|A|}\le cE^{\alpha/2}, \]
    and so \eqref{eqn:Lif-upper-Nu} follows immediately from \eqref{eqn:Nu-upper-Pmax} and \eqref{eqn:Pr-key1}. 
\end{proof}

The key ingredient to prove Lemma~\ref{lem:PmaxF} is the following deterministic result for the growth of the landscape function $u$. 
\begin{lemma}[landscape growth]\label{lem:u-growth-gen} 
Let $u=u^{A}$ be the landscape function for $H^A$. 
Choose $r>0$ and $x_0\in \Gamma$ with $B(x_0,r)\cap A\neq \varnothing$, 
  and 
let 
\begin{align}\label{eqn:def-J-gen}
    J_r=J_{r,x_0}=\big\{x\in B(x_0,r)\cap A:\ V_x\ge mr^{-2} \ \ {\rm or} \ \sum_{\substack{y\in B(x_0,r)\cap A\\  y\sim x}}\, (1-\mu_{xy})\ge m   r^{-2} \big \},
\end{align}
for a constant $m>0$ to be chosen later.
Then under the assumptions in Theorem~\ref{thm:Lif}(ii), for any $0<\lambda<1$,  there are $\eps<1$, $M,m,r_\ast>1$, depending only on $\alpha$ and $\lambda$, such that if the following two conditions hold for $r_\ast \le r\le 2\operatorname{diam}A$:
\begin{enumerate}[(i)]
\item there is $\xi \in B(x_0,r) $ such that 
\begin{align}\label{eqn:Mr2-gen}
    u(\xi) \ge Mr^2,
\end{align}
\item there is the lower bound on the size of $J_r$,
\begin{align}\label{eqn:Jr-lower}
   \big|J_r \big |\ge \lambda \big |B(x_0,r)\cap A \big |, 
\end{align}

\end{enumerate}
then for $R= ( {1+\eps})r $, there is $\xi'\in  B(\xi,CR)\cap A$ 
such that 
\begin{align}\label{eqn:MR2-gen}
    u(\xi') \ge (1+3\eps)\,  u(\xi)\ge  M {R}^2,
\end{align}
where $C=2 c^2 $ with the constant  $c>0$ given in  \eqref{eqn:ball-eqv}. 
\end{lemma}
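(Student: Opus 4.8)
\textbf{Proof proposal for Lemma~\ref{lem:u-growth-gen} (landscape growth).}

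The plan is to follow the iterative-growth strategy from the $\R^d$ and $\Z^d$ landscape Lifshitz-tail arguments in \cite{DFM,arnold2022landscape}, but adapted to a general graph: from the hypothesis that $u$ is already large ($u(\xi)\ge Mr^2$) at some point $\xi\in B(x_0,r)$, together with the hypothesis that a definite fraction of $B(x_0,r)\cap A$ has large potential or large ``defect'' $\sum_{y\sim x}(1-\mu_{xy})$ (this is the set $J_r$), I would deduce that on a slightly larger ball $u$ must grow by a multiplicative factor $1+3\eps$. The mechanism is the mean-value/submean inequality: $u$ is $\Delta$-subharmonic off the support of the inhomogeneity, but more importantly, the PDE $H^A u = \mathbbm 1$ combined with the large values of $\sigma_x^A + V_x$ on $J_r$ forces $u$ to be \emph{small} on $J_r$ (since at such $x$, $(\sigma_x^A+V_x)u(x)\le H^A u(x) + \text{(kinetic)} $, and rearranging, $u(x)\lesssim r^2/m$ plus a term controlled by neighboring values). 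Then, using the harmonic weight $h_{B_R^d(\xi)}$ from Assumption~\ref{ass:harmonic-weight}, one writes $u(\xi)$ (or a value at a nearby point where the submean property applies) as a weighted average $\sum_y h(y) v(y)$ of an auxiliary function $v$ (e.g. $u$ adjusted by a free landscape to make it subharmonic). Because $J_r$ occupies a $\lambda$-fraction of the ball and the weight $h$ is bounded below by $c/|B_R^d|$ off the negligible bad set $X_R$, the contribution of the small-$u$ region $J_r$ drags the average down; to still have the average equal to the large value $u(\xi)\ge Mr^2$, the function $v$ must be correspondingly \emph{larger} somewhere else in $B(\xi,CR)\cap A$, yielding a point $\xi'$ with $u(\xi')\ge(1+3\eps)u(\xi)$. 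Choosing $M$ large and $\eps$ small (depending only on $\alpha$ and $\lambda$), and $r_\ast$ large enough that $|X_R|/|B_R^d|$ is below a threshold depending on $\lambda$, closes the estimate; the $R=(1+\eps)r$, and the constant $C=2c^2$ tracking the metric-equivalence $d \leftrightarrow d_0$ via \eqref{eqn:ball-eqv}, come out of making $B(x_0,r)$ sit inside $B_R^d(\xi)$.

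Concretely, the steps I would carry out in order: (1) Show the \emph{pointwise smallness of $u$ on $J_r$}: for $x\in J_r$, from $H^A u(x)=1$ and the definition of $J_r$, derive $u(x)\le c\,r^2/m + (\text{average of }u\text{ over neighbors with weight }\mu_{xy})$, and combine this with the (global) bound $-\Delta u\le 1$ and the exit-time bound \eqref{eqn:u-upper}/Corollary~\ref{cor:MH} to get that $u$ cannot be too much larger than $r^2$ on a large fraction of the ball — i.e., control $\sum_{x\in B(x_0,r)\cap A} u(x)$ or the measure of $\{u > t r^2\}$ within the ball. (2) Construct the auxiliary subharmonic comparison function: set $v := u - u_0^{B_R^d} + \|u_0^{B_R^d}\|_\infty$ where $u_0^{B_R^d}$ is the free landscape, so $-\Delta v\le 0$ on $B_R^d(\xi)$ and $v\ge 0$; note $v$ and $u$ differ by $O(R^2)=O(r^2)$ by \eqref{eqn:u-upper}. (3) Apply the submean inequality \eqref{eqn:h-submean-ass} at $\xi$ (or at a point near $\xi$ inside the ball): $v(\xi)\le\sum_{y\in B_R^d(\xi)} h_{B_R^d(\xi)}(y)\,v(y)$, split the sum over $J_r$, $X_R$, and the rest, use $h\ge c/|B_R^d|$ off $X_R$, the volume control, smallness of $u$ (hence $v$) on $J_r$ from step (1), and $|X_R|/|B_R^d|\to 0$ to bound the contributions of $J_r$ and $X_R$, leaving $\max_{B(\xi,CR)\cap A} v$ to absorb the deficit. (4) Arithmetic: with $u(\xi)\ge Mr^2$, conclude $\max v \ge (1+4\eps)u(\xi) - O(r^2) \ge (1+3\eps)u(\xi)$ for $M$ large, then translate back from $v$ to $u$ using step (2), and check $u(\xi')\ge(1+3\eps)Mr^2\ge M R^2$ since $(1+3\eps)\ge(1+\eps)^2$ for small $\eps$. (5) Verify the geometric inclusions $B(x_0,r)\subseteq B_R^d(\xi)$ and $B_R^d(\xi)\subseteq B(\xi,CR)$ via \eqref{eqn:ball-eqv} and the triangle inequality, fixing $C=2c^2$.

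The \textbf{main obstacle} I anticipate is step (1) together with step (3): on a graph (especially with a non-natural metric $d$ as permitted by Assumption~\ref{ass:harmonic-weight}), there is no exact Green's function or Poisson kernel formula as was available on $\R^d$ and $\Z^d$ cubes, so the smallness of $u$ on $J_r$ and the lower bound on the harmonic weight must be pushed through purely from the submean property, the uniform lower bound \eqref{eqn:h-lower-ass}, the exit-time bound \eqref{eqn:u-upper}, and volume control — and one must be careful that the ``bad set'' $X_R$ does not overlap $J_r$ in a way that destroys the fraction-counting (this is why one needs $r_\ast$ large, making $|X_R|/|B_R^d|\ll\lambda$). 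A secondary technical point is handling the off-diagonal disorder: the defect term $\sum_{y\sim x}(1-\mu_{xy})$ enters the diagonal of $H^A$ as $\sigma_x^A$, exactly as in the quadratic form \eqref{eqn:quadr-HA-Lif}, so $V_x$ and $\sigma_x^A$ play symmetric roles in bounding $u(x)$ from above on $J_r$; one just has to make sure the neighbor-average kinetic term is absorbed, which the bound $-\Delta u\le 1$ and a one-step random walk / maximum-principle comparison handles, at the cost of enlarging $r$ slightly (the $1+\eps$ factor).
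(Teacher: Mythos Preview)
Your Step (1) has a genuine gap, and without it the rest of the argument collapses. From $H^Au(x)=1$ at $x\in J_r$ you only get
\[
(\sigma_x^A+V_x)\,u(x)=1+\sum_{y\in A:\,y\sim x}\mu_{xy}\bigl(u(y)-u(x)\bigr),
\]
and the right-hand side is \emph{not} controlled: if $u$ is larger at a neighbor, the kinetic term can be of order $Mr^2$, so $V_x\ge mr^{-2}$ only yields $u(x)\lesssim (r^2/m)\cdot M_\Gamma\max_{y\sim x}u(y)$, which is useless. In short, membership in $J_r$ does \emph{not} force $u(x)$ to be small; it is perfectly possible that $u\approx Mr^2$ on all of $J_r$. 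Neither the Moser--Harnack corollary nor the exit-time bound rescues this: those give a lower bound on $\sum u^2$ over a ball, not an upper bound on $u$ on $J_r$.

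The paper addresses precisely this failure with a dichotomy you are missing. It \emph{defines} $I_r=\{x\in J_r:u(x)<\tfrac12 A^d_{2cr}\}$, where $A^d_{2cr}$ is the $h$-weighted volume average of $u$. \textbf{Case I} ($|I_r|\ge\tfrac12|J_r|$) is essentially your averaging argument and goes through. \textbf{Case II} ($|I_r|<\tfrac12|J_r|$) is the scenario where $u$ is \emph{large} on most of $J_r$, and here an entirely different mechanism is used: one compares the Poisson-kernel surface averages $a_{2r}$ and $a_{2R}$ of $u$ on $\partial B(\xi,2r)$ and $\partial B(\xi,2R)$ via the Green identity \eqref{eqn:ibp-green}, writing $\Delta u(x)=-1+b_x$ with $b_x=\sum_{y}\sigma_{xy}u(y)+V_xu(x)$. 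On $J_r\setminus I_r$ both factors in $b_x$ are large (the defect/potential is $\ge mr^{-2}$ \emph{and} $u\ge\tfrac12 A^d_{2cr}$), so $b_x\gtrsim mr^{-2}A^d_{2cr}$ on a set of size $\gtrsim\lambda r^\alpha$; combined with the Green's function estimates $G_{2R}(\xi,x)\asymp r^{2-\alpha}$ on $\partial B_{2r}$ (Proposition~\ref{prop:piconseq}(iii)) and the maximum principle, this forces $a_{2R}-a_{2r}\ge -c\,r^2+4\eps\,A^d_{2cr}$, hence a point $\xi'\in\partial B(\xi,2R)$ with $u(\xi')\ge(1+3\eps)u(\xi)$. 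This Green's-function/integration-by-parts step is the missing half of your proof, and it is where the constants $m$ and $\eps$ get linked.
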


One of the key ingredients of the proof of Lemma~\ref{lem:u-growth-gen} is that the landscape function $u$ is superharmonic. We will use   submean properties of $u$, together with conditions on $\mu_{xy},V_x$ in \eqref{eqn:Jr-lower}, to obtain the growth of $u$ on a larger ball. Recall that due to Assumption~\ref{ass:harmonic-weight}, we will work with assumptions on a general metric $d$ rather than on the natural metric $d_0$. Balls with respect to the metric $d$ will always be denoted with a superscript, such as $B_\rho^d$ or $B^d(\xi,\rho)$, while balls with respect to the natural metric $d_0$ will either not have the superscript or will be identified with a superscript $d_0$.
\begin{proof}[Proof of Lemma~\ref{lem:u-growth-gen}]
Let $\xi \in B(x_0,r)$  satisfy $u(\xi)\ge Mr^2$ as in \eqref{eqn:Mr2-gen}. 
Let the $d$-metric ball $B^d_R$ and a harmonic weight $h_{B_R^d}$ be given as in Assumption~\ref{ass:harmonic-weight}, satisfying \eqref{eqn:h-submean-ass} and \eqref{eqn:h-lower-ass}.

Before starting with the main cases of the proof, we introduce several useful quantities. We will need to consider weighted averages of the landscape function $u$ on balls and spheres, with respect to the metric $d$ and the associated harmonic weight $h_{B_R^d}$. To that end, denote by $a^d_\rho$ the weighted surface average of $u$ on the exterior boundary $\partial B^d(\xi,\rho)$,
\begin{align*}
  a^d_{\rho}=\sum_{y\in \partial B^d(\xi,\rho)}P_{B^d_\rho}(\xi,y)u(y),\ \ \rho\ge 0,
\end{align*}
where $P_{B^d_\rho}$
is the Poisson kernel (centered at $\xi$, with respect to $\Delta$) defined as in \eqref{eqn:Poisson-solution} on $\partial B^d(\xi,\rho)$. 

Next, denote by $A^d_{\rho}$ the weighted volume average of $u$ on $B^d(\xi,\rho)$, with respect to $h_{B_R^d}$, defined as 
\begin{align}
  A^d_{\rho} =   \sum_{y\in B^d(\xi,\rho)}h_{B^d_{\rho}}(\xi,y)u(y),
\label{eqn:Ar-0}
\end{align}
One can verify the following by taking a constant function in the integration by parts formula \eqref{eqn:ibp-green2},
\begin{align*}
\sum_{y\in\partial B^d(\xi,\rho)}P_{B^d_\rho}(\xi,y)=1,\quad \text{and }\sum_{y\in B^d(\xi,\rho )}h_{B^d_{\rho}}(\xi,y)= 1 .
\end{align*}
The key properties of $a^d_\rho$ and $A^d_\rho$ we will need are the following, which follow from submean properties of $\Delta$ using that $-\Delta u\le1$ pointwise on $B^d(x,\rho)$:
\begin{align}
a^d_\rho&\ge u(\xi)-c_0\rho^2,\label{eqn:ar}\\
A^d_\rho&\ge u(\xi)-c_0\rho^2.\label{eqn:Ar}
\end{align}
\begin{itemize}
\item For \eqref{eqn:ar}, let $u'$ be the solution to $-\Delta u'=1$ on $B^d(\xi,\rho)$ with Dirichlet boundary conditions on $\partial B^d(\xi,\rho)$. By Proposition~\ref{prop:piconseq} and equivalence of the metric $d$ with the natural metric, $u'$ satisfies $0\le u'(\xi)\le c_0\rho^2$. Let $w$ be harmonic with boundary data $w(y)=u(y)$ for $y\in\partial B^d(\xi,\rho)$. By \eqref{eqn:ibp-green} then $w(\xi)=a^d_\rho$. Set $f:=w+u'-u$, so  $-\Delta f\ge0$  and $f=0$ on the boundary $\partial B^d(\xi,\rho)$. Then $f=w+u'-u\ge0$  by the maximum principle, which yields
\begin{align*}
u(\xi)\le u'(\xi)+w(\xi)\le c_0\rho^2+a_\rho.
\end{align*}

 \item 
Let $u'$ be as above. Then $-\Delta(u-u')\le 0$. For \eqref{eqn:Ar}, using the submean property \eqref{eqn:h-submean-ass} and the non-negativity of $h_{B_\rho^d}$ and $u'$, we obtain
\begin{align*}
    u(\xi)-u'(\xi)\le \sum_{y\in B_\rho^d}h_{B_\rho^d}\big[u(y)-u'(y)\big]\le \sum_{y\in B_\rho^d}h_{B_\rho^d} u(y) =A^d_{\rho} ,
\end{align*}
which  yields \eqref{eqn:Ar}. 
\end{itemize}

 Given $\xi\in B(x_0,r)$, we will work on enlarged balls centered at $\xi$ containing $B(x_0,r)$ (with slightly larger radius under the different metric).  In particular, let 
$c>0$ be the scaling constant between the metric $d_0$ and $d$ in  \eqref{eqn:ball-eqv}, so that 
 \begin{align}\label{eqn:ball-relation}
     B(x_0,r)\subset B(\xi,2r)\subset B^d(\xi,2c r)\subset B(\xi,2c^{2}r). 
 \end{align}
Let $A^d_{2cr}$ be the weighted average of $u$ on the $d$-metric ball $B^d(\xi,2cr)$ of radius $2cr$, as given in the definition \eqref{eqn:Ar-0}. 
Then \eqref{eqn:Ar} yields 
\begin{align*}
     A^d_{2cr}\ge u(\xi)-c_1r^2. 
\end{align*}
 
Now let $J_r$ be given in \eqref{eqn:def-J-gen} 
where $m$ will be picked later, and satisfying \eqref{eqn:Jr-lower}. By the conditions $|B(z,r)\cap A|\ge c_2r^\alpha$ for $r<C\operatorname{diam}A$, \eqref{eqn:Jr-lower}, and $r\le C\operatorname{diam}A$, we have 
\begin{align}\label{eqn:Jr-lower1}
    |J_r|\ge  \lambda   \big |B(x_0,r)\cap A \big |\ge
    c_2\lambda   r^\alpha .
\end{align}

 Let 
\begin{align}\label{eqn:def-J0}
    I_r=\{x\in J_r: u(x)<\, \frac{1}{2}\, A_{2cr}^d\, \},
\end{align}
which describes $x\in J_r$ where $u(x)$ takes small values,
and let
\begin{align*}
    S_1=\sum_{y\in I_r}h_{B_{2cr}^d}(y),\quad S_2=\sum_{y\in B^d(\xi,2cr)\backslash I_r}h_{B_{2cr}^d}(y)=1-S_1.
\end{align*}

Now we are ready to look for $\xi'$ satisfying \eqref{eqn:MR2-gen} by considering the following two cases:

\vspace{1mm}
\noindent {\bf{Case I:} $  |I_r|\ge \frac{1}{2}|J_r|$. } In this case, we have many points $x$ where $u(x)$ is much smaller than the weighted average $A_{2cr}^d$. As a consequence, the remaining values $u(x)$ must be large to compensate for those in $I_r$.
More precisely, summing over the complement of $I_r$, the definition of $I_r$ implies
\begin{align*}
\frac{\sum_{y\in B^d(\xi,2cr)\setminus I_r}h_{B_{2cr}^d}(y)u(y)}{\sum_{y\in B^d(\xi,2cr)\setminus I_r}h_{B_{2cr}^d}(y)} &\ge \frac{1-\frac{1}{2}S_1}{1-S_1} A_{2cr}^d 
\ge   \left(1+ \frac{1}{2}S_1  \right) A_{2cr}^d.
\end{align*}
If $S_1\ge  4c_3$ for some $c_3>0$ (depends only on $\Gamma$ and $\lambda$), then we will be done,  
since then there is some $\xi'\in B^d(\xi,2cr)\setminus I_r$ with $u(\xi')\ge(1+2c_3)A_{2cr}^d$. Using \eqref{eqn:Ar} and $u(\xi)\ge Mr^2$ with a sufficiently large $M$ then implies
\begin{align*}
u(\xi')\ge (1+2c_3)\Big(u(\xi)-c_1r^2\Big) \ge (1+2c_3)\big(1-c_1/M\big)u(\xi)\ge (1+ c_3  )u(\xi),
\end{align*}
provided  $M\ge  (1+2c_3)c_1/c_3 $.
 
To show we have $S_1\ge4c_3$, first apply Assumption~\ref{ass:harmonic-weight} to obtain a $c_4>0$ and  subset $X_r\subset B^d(\xi,2cr)$ such that  
\begin{align}\label{eqn:h-lower}
h_{B_{2cr}^d}(y)\ge \frac{c_4}{|B^d(\xi,2cr)|},\ \ y\in B^d(\xi,2cr)\backslash X_r, \ \ {\rm and }\ \  \lim_{r\to \infty}\frac{|X_r|}{|B^d(\xi,2cr)|} = 0.
\end{align}
The lower bound on $|J_r|$ from \eqref{eqn:Jr-lower1} and of $h_{B_{2cr}^d}$ from \eqref{eqn:h-lower}  imply for sufficiently large $r>r_\ast(\lambda)$,
\begin{align*}
     S_1=\sum_{y\in I_r\backslash X_r}h_{B_{2cr}^d}(y)+\sum_{y\in I_r\cap X_r}h_{B_{2cr}^d}(y)\ge &\, \frac{c_4}{|B^d(\xi,2cr)|}|I_r\backslash X_r|\\
    \ge   &\, \frac{c_5}{r^{\alpha}}\Big(\frac{1}{2}|J_r|-o(|B^d(\xi,2cr)|) \Big)\\
   \ge  &\,\frac{c_5}{r^{\alpha}}\Big(\frac{c_2}{2}\lambda r^{\alpha}-\frac{c_2}{4}\lambda r^{\alpha} \Big)\\   
   \ge  &\, \frac{1}{4}c_2c_5\lambda:=4c_3.
\end{align*}
 Then it is enough to pick   $3\eps \le c_3=\frac{1}{16}c_2c_5\lambda $  so that \eqref{eqn:MR2-gen} holds for $\xi' \in B^d(\xi,2cr)\subset   B(\xi,2c^2r) \subset B(\xi,2c^2R)$. Since $u(x)=0$ for $x \not\in A $, it is clear that $\xi'\in A \cap B(\xi,2c^2R)$. 

\vspace{1mm}
\noindent {\bf{Case II:} $  |I_r|< \frac{1}{2}|J_r|$. } Take $R= (1+\eps) r  $ for some   $\eps\le 1$, which gives $R^2-r^2\le 3\eps r^2$. 
Let $a_{2r},a_{2R}$ and $G_{2r},G_{2R}$ be the surface averages of $u$ and the Green's functions on $B_{2r}=B(\xi,2r),B_{2R}=B(\xi,2R)$, with respect to the natural metric $d_0$ (we omit the superscript $d_0$ on the balls for simplicity), respectively. 
Applying the discrete Green's identity (integration by parts formula) \eqref{eqn:ibp-green} to both $B_{2r}$ and $B_{2R}$ yields
\begin{align}\label{eqn:SR-r}
    a_{2R}-a_{2r}&=\sum_{x\in   B_{2R}\backslash B_{2r}}G_{2R}(\xi,y)\Delta  u(x)+\sum_{x\in   B_{2r}}\, \left(G_{2R}(\xi,x)-G_{2r}(\xi,x)\right)\, \Delta  u(x).
\end{align}
With the relation  $\Delta u(x)=   -1 +\sum_{y\in A:y\sim x}\sigma_{xy}u(y)+V(x)u(x)$, and defining
\begin{align}\label{eqn:bx}
    b_x=\begin{cases}
        \sum_{y\in A:y\sim x}\sigma_{xy}u(y)+V(x)u(x), \  \ & x\in A \\
        1, \  \ & x\not\in A 
    \end{cases},
\end{align}
equation \eqref{eqn:SR-r} then implies
\begin{multline}
    a_{2R}-a_{2r} 
    \ge   -\sum_{x\in   B_{2R}\backslash B_{2r}}G_{2R}(\xi,x)- \sum_{x\in   B_{2r}} \Big(G_{2R}(\xi,x)-G_{2r}(\xi,x)\Big)+\\ +\sum_{x\in   B_{2r}} \Big(G_{2R}(\xi,x)-G_{2r}(\xi,x)\Big)b_x.\label{eqn:sR-sr}
\end{multline}

We next make two uses of the maximum principle for  harmonic functions (see e.g. \cite[Lemma~2.2]{lratio}).
\begin{itemize}
\item Since $G_{2R}(\xi,x)$ is $\Delta $-harmonic in $B_{2R}\backslash B_{2r}$, the maximum principle implies  
\begin{align*}
    \max_{y\in  B_{2R}\backslash B_{2r}}G_{2R}(\xi,x) \le \max_{x\in  \partial(B_{2R}\backslash B_{2r})} G_{2R}(\xi,x) = \max_{x\in  \partial^{(i)} B_{2r}} G_{2R}(\xi,x),
\end{align*}
with the last inequality because $G_{2R}(\xi,x)=0$ for $x\in\partial B_{2R}$.

\item On the other hand, $G_{2R}(\xi,x)-G_{2r}(\xi,x)$ is $\Delta $-harmonic in $B_{2r}$ and equal to $G_{2R}(\xi,\cdot)$ on $\partial B_{2r}(\xi)$, so the maximum principle implies for $x\in B_{2r}$,
\begin{align*}
 \min_{y\in \partial B_{2r}} G_{2R}(\xi,y)   \le\, G_{2R}(\xi,x)-G_{2r}(\xi,x) \le \max_{y\in \partial B_{2r}} G_{2R}(\xi,y).
\end{align*}
\end{itemize}
Next, by Proposition~\ref{prop:piconseq}, if $x\in\partial B_{2r}\cup\partial^{(i)}B_{2r}$, there are the Green's function bounds
 \begin{align*}
 c_1'r^{2-\alpha}\le     G_{2R}(\xi,x)\le c_2'r^{2-\alpha},
 \end{align*}
 where $c_1',c_2'$ only depend on $\alpha$ and the ratio $\frac{2R}{2r}=1+\eps$. 
Therefore, continuing from \eqref{eqn:sR-sr},
\begin{align*}
a_{2R}-a_{2r}\ge & -c_2' r^{2-\alpha}| B_{2R}\backslash B_{2r}|-c_2' r^{2-\alpha}|   B_{2r}|+ c_1' r^{2-\alpha}\sum_{x\in B_{2r}}b_x  \\
   \ge &  -c_3'  r^2\ + c_1' r^{2-\alpha}\bigg( \sum_{x \in  B_{2r}\cap A} \sum_{\substack{y\in B_{2r}\cap A:\\y\sim x}}\sigma_{xy}u(y)+ \sum_{x \in  B_{2r}\cap A}V_xu(x)\bigg) \\
    \ge &  -c_3'  r^2\ + c_1' r^{2-\alpha} \sum_{x \in  J_r\backslash I_r} \bigg(\sum_{\substack{y\in B(x_0,r)\cap A:\\y\sim x}}\sigma_{xy}+ V_x\bigg)u(x), \numberthis\label{eqn:178}
\end{align*}
where in the last line we used 
that $J_r\subset {B(x_0,r)}\cap A\subset B (\xi,2r)\cap A$. Recall that by the definition \eqref{eqn:def-J-gen} of $J_r$, for all $x\in J_r$, 
\[V_x+\sum_{y\in {B(x_0,r)}\cap A:  y\sim x}\, (1-\mu_{xy})\ge m   r^{-2},  \]
and that by the definition of $I_r$ in \eqref{eqn:def-J0}, for all $x\in J_r\backslash I_r $,  $u(x)\ge \frac{1}{2}{A_{2cr}^d}$. Therefore, the last sum in \eqref{eqn:178} can be bounded from below as 
\begin{align*}
  \sum_{x \in  J_r\backslash I_r} \bigg(\sum_{y\in {B(x_0,r)}\cap A:y\sim x}\sigma_{xy}+ V_x\bigg)u(x)&\ge  mr^{-2}\Big(\frac{1}{2}{A_{2cr}^d}\Big) |J_r\backslash I_r|  \\
    &\ge   mr^{-2}\Big(\frac{1}{2}{A_{2cr}^d}\Big)\Big(\frac{1}{2}|J_r|\Big)  \ge \frac{1}{4} c_2mr^{\alpha-2}\lambda \, {A_{2cr}^d},
\end{align*}
where we have now made use of the Case II condition that $|I_r|<|J_r|/2$, and also applied the bound $|J_r|\ge c_2\lambda r^\alpha$ in \eqref{eqn:Jr-lower1}. Putting  everything together, we thus obtain
\begin{align*}
   \nonumber a_{2R}-a_{2r} &\ge-c_3'  r^2\ + c_1' r^{2-\alpha}\frac{1}{4} c_2mr^{\alpha-2}\lambda \, {A_{2cr}^d}   
   \ge  -c_3'   r^2\ + 4\eps   {A_{2cr}^d}, 
\end{align*}
provided $   m\ge  16\eps/(c_1'c_2\lambda) .$

Finally, applying the lower bounds of $a_{2r}$ and ${A_{2cr}^d}$ in \eqref{eqn:ar} and \eqref{eqn:Ar},  and the condition $r^2\le u(\xi)/M$ in \eqref{eqn:Mr2-gen}, we obtain
\begin{align*}
  a_{2R} &\ge u(\xi)-c_0r^2 -c_3'  r^2+ 4\eps\big(u(\xi)-c_0r^2\big)
\ge  (1+3\eps)u(\xi),
\end{align*}
provided $M\ge (c_0+c_3'+4\eps c_0)/\eps$. 
Since $a_{2R}=\sum_{\partial B_{2R}}P_R(\xi,x)u(x)$ and $\sum_{\partial B_{2R}}P_R(\xi,x)=1$, there is therefore $\xi' \in \partial B(\xi,2R)\subset {B(\xi,CR)}$ such that 
\begin{align*}
    u({\xi'})\ge (1+3\eps)u(\xi),
\end{align*}
where $C=2c^2>2$ is the radius scaling constant given in \eqref{eqn:ball-relation}. 
Just as in case I,  $u(x)=0$ for $x \not\in A $ implies that $\xi'\in A \cap   B(\xi,CR)$,  
which completes the proof of Lemma~\ref{lem:u-growth-gen}. 
\end{proof}

Once Lemma~\ref{lem:u-growth-gen} is established, if $u(\xi)\ge E^{-1}= Mr_0^2$, then one can attempt to apply the lemma inductively to construct a sequence of balls with growing radii $r_{k+1}=(1+\eps)r_{k}$ satisfying 
 \eqref{eqn:MR2-gen}, until one exhausts the region $A$.
Because the outcome \eqref{eqn:MR2-gen} is inputted into the next application of Lemma~\ref{lem:u-growth-gen}, the only remaining condition we need to check each time to apply the lemma is \eqref{eqn:Jr-lower}. 
The probability that \eqref{eqn:Jr-lower} holds in each step of the induction can be estimated in terms of the cumulative probability distribution function $F_V$ and $F_\mu$, leading to:
\begin{proof}[Proof of Lemma~\ref{lem:PmaxF}]
    For $E>0$, let $r_0=(ME)^{-1/2}$ be the initial scale to be used for Lemma~\ref{lem:u-growth-gen}, and let  $\mathcal P =\{B(z_i,r_0)\}$ be a cover of $\Gamma$ with the finite overlap property \eqref{eqn:finite-cover-lambda}.
    For any ball $B(x_0,r_0)\in \mathcal P$, we will bound the probability 
\begin{align}\label{eqn:probbound}
\P\big \{\min_{\xi\in B} \frac{1}{u(\xi)}\le E\big\}
=\P\big\{\exists \xi \in B(x_0,r_0): u(\xi)\ge E^{-1}=Mr_0^2\big\}
\end{align}
    from above. 
Considering the event $u(\xi)\ge Mr_0^2$, we will apply Lemma~\ref{lem:u-growth-gen} repeatedly, assuming \eqref{eqn:Jr-lower} holds, on each  scale $r_k=(1+\eps)r_{k-1}$, for $k=1,\ldots,K$, until $r_{K}>{\rm diam}A/2$. 
At this point, the conclusion \eqref{eqn:MR2-gen} will no longer hold if we start with $\xi_\infty$ where $u$ obtains its maximum, and so somewhere along the way (for some $k=1,\ldots,K$), the condition \eqref{eqn:Jr-lower} must have failed. The gives an upper bound for \eqref{eqn:probbound} in terms of probabilities of events of the form \eqref{eqn:Jr-lower}, which can in turn be bounded in terms of the CDFs $F_V$ and $F_\mu$.

For the first step, we start with the event $u(\xi_0)\ge Mr_0^2$ for some $\xi_0\in B(x_0,r_0)$. We suppose the following event also holds 
    \[\mathcal E_0:=\Big\{ \big|J_{r_0}\big|\ge \lambda \big |B(x_0,r_0)\cap A \big | \Big\},  \]
    where $J_{r_0}=J_{r_0,x_0}$ is defined as in \eqref{eqn:def-J-gen}. Then Lemma~\ref{lem:u-growth-gen} guarantees a point $\xi_1\in B(\xi,Cr_1) \cap A$ such that $u(\xi_1)\ge Mr_1^2$. Note $\xi_1$ is contained in $  B(x_0,Cr_0+Cr_1)\cap A$ since  
    \[d(x_0,\xi_1)\le d(x_0,\xi_0)+d(\xi_0,\xi_1)\le  r_0+Cr_1 \le C(r_0+r_1). \]
As in the argument in the proof of Proposition~\ref{prop:covering}(b), we only need at most $N_1\le c(\frac{r_0+r_1}{r_1})^{\alpha}$ many balls of radius $r_1$ to cover $  B(x_0,Cr_0+Cr_1)$, where $c$ depends only on the overlap constant $b_\Gamma$ in \eqref{eqn:finite-cover-lambda}, and the volume control constants (and $\alpha$) in \eqref{eqn:vol-control}. Then $\xi_1$ must be located in one of these balls, which we call $B(x^0_1,r_1)$. We denote by $B(x^i_1,r_1)$, for $i=1,2,\ldots,N_1-1$  the rest of the balls of the same radius $r_1$ that we used to cover $  B(x_0,Cr_0+Cr_1)$. To apply Lemma~\ref{lem:u-growth-gen} again, we will assume the following event holds,
     \[
\mathcal{E}_1:=\bigcap_{i=0}^{N_1-1}\big\{ \big|J_{r_1,x_1^i}\big|\ge \lambda \big |B(x^i_1,r_1)\cap A \big | \big\},
\]
     where $J_{r_1,x_1^i},i=0,1,\ldots,N_1-1$ are defined as in \eqref{eqn:def-J-gen} on different balls $B(x^i_1,r_1)$. 

Inductively, suppose we have  $\xi_k\in B(\xi_{k-1},Cr_k)\cap A$ such that $u(\xi_k)\ge Mr_k^2$. We can check $\xi_k$ is contained in $  B(x_0,{2C}r_k/\eps)\cap A$ since  
    \[d(x_0,\xi_k)\le    {C}(r_0+r_1+\cdots+r_k)\le {C}r_k\frac{1}{1-(1+\eps)^{-1}}\le \frac{{2C}r_k}{\eps}. \]
    We only need at most $N_k\le c\eps^{-\alpha}$ many balls of radius $r_k$ to cover $  B(x_0,{2C}r_k/\eps)$, where $c$ again only depends on the overlap constant $b_\Gamma$ in \eqref{eqn:finite-cover-lambda} and volume control constants (and $\alpha$) in \eqref{eqn:vol-control}. Proceeding as before, the new event we need to assume to hold is   
     \[
\mathcal{E}_k:=\bigcap_{i=0}^{N_k-1}\big\{ \big|J_{r_k,x_k^i}\big|\ge \lambda \big |B(x^i_k,r_k)\cap A \big | \big\},
\]
     where $B(x^i_k,r_k),i=0,1,\cdots$ are all the balls of radius $r_k$ that we used to cover $  B(x_0,{2C}r_k/\eps)$.

       When we reach the first radius such that $r_K>{\rm diam}A/2$ with $u(\xi_K)\ge Mr_K^2$ for some $\xi_K\in B(x_0,{2C}r_K/\eps)\cap A$, we then consider $\xi_\infty \in A$ where the maximum of $u$ is attained. 
Since $u(\xi_\infty)$ is the maximum, then $u(\xi_\infty)\ge u(\xi_K)\ge Mr_K^2$. 
To apply Lemma~\ref{lem:u-growth-gen}, since  ${\rm diam}A/2\le r_K\le {\rm diam}A$, we only need $N_K\le 
       C\left(\frac{{\rm diam}A}{r_K}\right)^\alpha \le (2C)^{\alpha}\lesssim\eps^{-\alpha}$ 
     many balls  $B(x_K^i,r_K)$  to cover the entire domain $A$. 
The corresponding event we need to assume holds is
   \[
\mathcal{E}_K:=\bigcap_{i=0}^{N_K-1}\big\{ \big|J_{r_K,x_K^i}\big|\ge \lambda \big |B(x^i_K,r_K)\cap A \big | \big\}.
\]
Lemma~\ref{lem:u-growth-gen} then produces a point $\xi' \in A$ such that $u(\xi')\ge (1+3\eps)u(\xi_\infty)>u(\xi_\infty)$, which contradicts the maximality of $u(\xi_\infty)$. Thus
     \[\big\{\exists \xi \in B(x_0,r_0): u(\xi)\ge  Mr_0^2\big\}\subset \Big(\mathcal E_0\cap \mathcal E_1\cdots \cap \mathcal E_{K} \Big)^C=\mathcal E_0^C\cup \mathcal E_1^C\cup\cdots \cup   \mathcal E_K^C, \]
and
     \begin{align}\label{eqn:243}
         \P\big \{\min_{\xi\in B(x_0,r_0)} \frac{1}{u(\xi)}\le E\big\}
\le \P(\mathcal E_0^C)+\cdots+\P(\mathcal E_K^C). 
     \end{align}
     
Finally, it remains to estimate the probabilities of the events $\mathcal{E}_k^C$ in terms of the CDFs $F_V$ and $F_\mu$.
For any of the $x_k=x_k^i$, recall the definition of $J_{r_k}=J_{r_k,x_k}$ in \eqref{eqn:def-J-gen} and rewrite   $J_{r_k}= J^V_{r_k}\cup J^{\mu}_{r_k}  $, where
\begin{align*} 
J_{r_k}^V&=\big\{x\in {B(x_k,r_k)}\cap A:\ V_x  \ge m   r_k^{-2} \big \},\\
J_{r_k}^{\mu}&=\big\{x\in {B(x_k,r_k)}\cap A:\ \sum_{y\in {B(x_k,r_k)}\cap A:  y\sim x}\, (1-\mu_{xy})  \ge m   r_k^{-2} \big \}. \end{align*}
Because counting $J^\mu_{r_k}$ involves dependent variables, first bound its size in terms of the independent variables $\sigma_{xy}=1-\mu_{xy}$,
\begin{align*}
|J^\mu_{r_k}| \ge \frac{1}{2M_\Gamma}\left|\{\sigma_{xy}\ge mr_k^{-2}:x,y\in {B(x_k,r_k)}\cap A,\,x\sim y\}\right|.
\end{align*}
Applying the Chernoff bound to the binomial random variables $|J_{r_k}^V|$ and $|\{\sigma_{xy}\ge mr_k^{-2}\}|$ with the optimal parameter (cf. \cite[Lemma~4.5]{arnold2022landscape}), there are $r_\ast$ and $\lambda_\ast$ such that if $r_k\ge r_\ast$ and $\lambda\le \lambda_\ast$, then
\begin{align}
  \P\{|J_{r_k}^V|\le \lambda|{B(x_k,r_k)}\cap A|  \}\le F_V(m   r_k^{-2})^{|{B(x_k,r_k)}\cap A|/2}\le F_V(m   r_k^{-2})^{cr_k^\alpha}, \label{eqn:FV-bound}\\
  \P\{|J_{r_k}^{\mu}|\le \lambda|{B(x_k,r_k)}\cap A|  \}\le F_{\mu}(m   r_k^{-2})^{|{B(x_k,r_k)}\cap A|/2}\le F_{\mu}(m   r_k^{-2})^{cr_k^\alpha}, \label{eqn:Fmu-bound}
\end{align}
where in the last inequalities we used the lower bound $|{B(x_k,r_k)}\cap A|\ge cr_k^{\alpha}$ from \eqref{eqn:AintersectB} and that $F_V,F_{\mu}\le 1$. 
Since $J_{r_k}=J_{r_k}^V\cup J_{r_k}^\mu$, then using independence of the $\mu_{xy}$ and $V_x$,
\begin{align*}
    \P \Big\{ \big|J_{r_k}\big|\le \lambda \big |B(x_k,r_k)\cap A \big | \Big\}\le  & \P\{|J_{r_k}^V|\le \lambda|{B(x_k,r_k)}\cap A|  \}\, \P\{|J_{r_k}^{\mu}|\le \lambda|{B(x_k,r_k)}\cap A|  \} \notag \\
    \le & \Big(F_V(m   r_k^{-2}) F_{\mu}(m   r_k^{-2})\Big)^{cr_k^\alpha}. \label{eqn:FVmu}
\end{align*}
As the above argument does not depend on the particular center $x_k$, \eqref{eqn:FV-bound}, \eqref{eqn:Fmu-bound} and \eqref{eqn:FVmu} hold for all $J_{r_k,x_k^i}$ on the different balls $B(x_k^i,r_k)$.
Since there are at most $N_k\le c\eps^{-\alpha}$ many balls for each $k\in\{0,\cdots,K\}$, then 
\begin{align*}
    \P(\mathcal E_k^C) &\le 
\sum_{i=0}^{N_k-1} \P\Big\{ \big|J_{r_k,x^i_k}\big|\le \lambda \big |B(x^i_k,r_k)\cap A \big | \Big\}\\
    &\le  c\eps^{-\alpha}\Big(F_V(m   r_k^{-2}) F_{\mu}(m   r_k^{-2})\Big)^{cr_k^\alpha} 
\le   c\eps^{-\alpha}\Big(F_V(m   r_0^{-2}) F_{\mu}(m   r_0^{-2})\Big)^{cr_k^\alpha},\numberthis
\end{align*}
 where for the last inequality we used that $F_V(E),F_{\mu}(E)>0$ are non-decreasing for $E\le E_0$, and that $mr_k^{-2} \le mr_0^{-2}  $. Furthermore, $r_k=(1+\eps)^kr_0$ implies $r_k^\alpha\ge (1+\eps\alpha k)r_0^\alpha\ge r_0^\alpha+k$ provided $r_0\ge r_\ast(\eps,\alpha)$. Combined with the assumption $F_V(E)F_{\mu}(E)<1$ for all $E<E_0$, we obtain
\begin{align*}
  \P(\mathcal E_k^C) \le    c\eps^{-\alpha}\left(F_V(m   r_0^{-2}) F_{\mu}(m   r_0^{-2})\right)^{cr_0^\alpha+ck},\ \ k=0,1,\cdots, K.
\end{align*} 
Putting all these estimates together with \eqref{eqn:243} and setting $F_\ast=F_V(E_\ast) F_{\mu}(E_\ast)<1$  with $E_\ast=\min\{E_0,1/(Mr_\ast^2)\}$, yields 
\begin{align*}
   \P\big \{\min_{\xi\in B(x_0,r_0)} \frac{1}{u(\xi)}\le E\big\} 
   &\le  \frac{2c\eps^{-\alpha}}{1-F_\ast^c}\big(F_V(m   r_0^{-2}) F_{\mu}(m   r_0^{-2})\big)^{cr_0^\alpha} \\
   &= C \big(F_V(mME) F_{\mu}(m ME)\big)^{cM^{-\alpha/2}E^{-\alpha/2}},
\end{align*}
which proves the landscape Lifshitz tail upper bound \eqref{eqn:Lif-upper-Nu}. 

Note the above induction starts with the initial scale $cE^{-1/2}=r_0<{\rm diam}A/2$. For $\operatorname{diam}A/2\le r_0$, we can do the following instead:
Since $\max u\lesssim ({\rm diam}A)^2$ by Proposition~\ref{prop:piconseq}(ii), then $N_u(E)=0$ for $E\le c_1/({\rm diam}A)^2$ (equivalently, $r_0\ge C\operatorname{diam}A$), and the Lifshitz tail upper bound \eqref{eqn:Lif-upper-Nu} holds trivially. For $c_1/({\rm diam}A)^2\le E \le c_2/({\rm diam}A)^2$, with constants $c_1,c_2$ corresponding to the regime ${\rm diam}A/2\le r_0\le C{\rm diam}A$, we can take $r_K=r_0$ and jump to the last step directly. Note the $C\ge 1$ here is the one that can be taken in \eqref{eqn:AintersectB}.
Hence, there is no restriction on the diameter of $A$ as in the Lifshitz tails lower bound. 
\end{proof}

As a consequence of Theorem~\ref{thm:Lif}, we obtain the landscape law for random models in Corollary~\ref{cor:LLaw-random}. 
We do not prove Corollary~\ref{cor:LLaw-random}  directly, as previous work \cite{DFM,arnold2022landscape} reduced the proof of Corollary~\ref{cor:LLaw-random} to the Landscape Law bounds \eqref{eqn:LLaw-upper}, \eqref{eqn:LLaw-lower1}, \eqref{eqn:LLaw-lower2} and Lifshitz tail bounds \eqref{eqn:Lif-lower-Nu}, \eqref{eqn:Lif-upper-Nu} for $N_u$. In particular, the landscape law for random models in \eqref{eqn:LLaw-random}
follows from Theorems~\ref{thm:LLaw} and \ref{thm:Lif} using the argument of \cite[Thm. 3.56]{DFM}. See also \cite[\S4.2]{arnold2022landscape} for an axiomatic version of this method.

\section{Details for applications to specific models}\label{sec:ex}
In this section, we provide details for applications to random band models (Section~\ref{sec:RBM}), stacked graphs (Section~\ref{subsec:stack}), and the Sierpinski gasket graph (Section~\ref{sec:SG}).

\subsection{Random band model $H_{d,W}$ and proof of Corollary~\ref{cor:HdW}}\label{sec:RBM}
 Recall the standard $\Z^d$ lattice satisfies Assumption~\ref{ass:0}  
and a (weak) Poincar\'e  inequality \eqref{eqn:WPI}, see e.g.   Example \ref{ex:Zd} and \cite{barlow2017random}. 
 Retain the definitions of the band graph $\Gamma_{d,W}$ in \eqref{eqn:Gamma-dW}. As mentioned in Example \ref{ex:band-graph} in Section~\ref{subsec:rough-isom}, $\Gamma_{d,W}$ is roughly isometric to $\Z^d$. As a consequence of Proposition~\ref{prop:isom}, all aforementioned properties of $\Z^d$ will be preserved on $\Gamma_{d,W}$ by the rough isometry. Hence, Theorem~\ref{thm:LLaw} can be applied to $H_{d,W}$ on $\Gamma_{d,W}$ since all requirements are met. The (weak) Poincar\'e  inequality \eqref{eqn:WPI} also guarantees the Lifshitz tails \emph{lower} bound \eqref{eqn:Lif-lower-Nu} for $H_{d,W}$ as long as the domain $A$ satisfies \eqref{eqn:AintersectB}.   
 In order to obtain the Lifshitz tails upper bound \eqref{eqn:Lif-upper-Nu}, it remains to explicitly construct a harmonic weight as required in Assumption~\ref{ass:harmonic-weight}  for $\Gamma_{d,W}$. This will then complete the proof of  Corollary~\ref{cor:HdW}. 
 
Denote by  $ d(x,y)=\sqrt {\sum_{1\le i \le d}(x_i-y_i)^2}$ the usual Euclidean distance for $x,y\in \Z^d$. For $\xi \in \Z^d$, let $B_R^d(\xi)=\{x:d(\xi,x)\le R\}$ be  Euclidean ball of radius $R$. 
 Recall on $\Gamma_{d,W}$, that $x\sim y$ iff $d(x,y)\le W$. The exterior boundary $B_R^d(\xi)$, with respect to this metric, is 
\begin{align*}
 \partial B_R^d(\xi)=\big\{x\not\in B_R^d(\xi): \exists y \in  B_R^d(\xi)\  {\rm such \ that}\ d(x,y)\le W \big\} .   
\end{align*}
 Let $P_{B^d_\rho}(\xi,y)$ be the Poisson kernel on $\partial B^d(\xi,\rho)$ in $\Gamma_{d,W}$, defined as in \eqref{eqn:Pr-Gr}. 
We will use $P_{B^d_\rho}(\xi,y)$ to construct the desired harmonic weight $h_{B^d_R}$ on the Euclidean ball $B_R^d(\xi)$. 
Intuitively, we want to define the harmonic weight layer by layer using the Poisson kernel, but the graph structure for $W>1$ complicates the boundary regions.
We will thus first need the following technical lemma, 
which says that ``spherical shells'' in $\Z^d$ can be covered by thin layers of exterior boundaries in $\Gamma_{d,W}$.

\begin{lemma}\label{claim:layer}
    There is $\rho_0(d,W)>0$ so that for any $\rho\ge \rho_0(d,W)$, 
    \begin{align*}
        B^d(\xi,\rho+1) \backslash B^d(\xi,\rho) \subset   
        \bigcup_{k=1}^{k_\ast} \partial B^d(\xi,\rho+a_k),
    \end{align*}
    where $k_\ast:=\lfloor \frac{2\sqrt d}{W}\rfloor$, $a_k=1-k\frac{W}{2\sqrt d}$ for $k=0,1\cdots,k_\ast-1$, and $a_{k_\ast}=0$. 
\end{lemma}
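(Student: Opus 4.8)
The plan is to show that every lattice point $x$ lying in the Euclidean spherical shell $B^d(\xi,\rho+1)\setminus B^d(\xi,\rho)$ sits in one of the exterior boundaries $\partial B^d(\xi,\rho+a_k)$ for some $k\in\{1,\dots,k_\ast\}$. Recall that in $\Gamma_{d,W}$ the exterior boundary is $\partial B^d(\xi,t)=\{x\notin B^d(\xi,t):\exists\,y\in B^d(\xi,t)\text{ with }d(x,y)\le W\}$; equivalently $x\in\partial B^d(\xi,t)$ iff $d(\xi,x)>t$ and there is a lattice point $y$ with $d(\xi,y)\le t$ and $d(x,y)\le W$. Without loss of generality take $\xi=0$. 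For a lattice point $x$ with $\rho<d(0,x)\le\rho+1$, write $s:=d(0,x)\in(\rho,\rho+1]$. Pick the largest index $k\in\{0,1,\dots,k_\ast\}$ for which $\rho+a_k<s$; since $a_0=1\ge s-\rho$ is false only when $s=\rho+1$, and $a_{k_\ast}=0<s-\rho$ always holds, such a $k$ in $\{1,\dots,k_\ast\}$ exists (the edge case $s=\rho+1$ is handled by $k=1$ as $a_1=1-\tfrac{W}{2\sqrt d}<1$, valid once $W<2\sqrt d$, i.e. $k_\ast\ge1$, which we may assume since otherwise the statement is vacuous or reduces to the trivial $W=2\sqrt d$ case absorbed into $\rho_0$). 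By maximality of $k$ we have $\rho+a_k<s\le\rho+a_{k-1}$ (or $s\le\rho+1$ when $k=1$), so in particular $0<s-(\rho+a_k)\le a_{k-1}-a_k=\tfrac{W}{2\sqrt d}$.

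The key geometric step is then: starting from $x$ with $d(0,x)=s$ and $t:=\rho+a_k<s$, produce a lattice point $y$ with $d(0,y)\le t$ and $d(x,y)\le W$. Move from $x$ toward the origin along the radial direction by the real vector $x':=\tfrac{t}{s}\,x$, which has $\|x'\|_2=t$ and $\|x-x'\|_2=s-t\le\tfrac{W}{2\sqrt d}$. Now round $x'$ to a nearest lattice point $y\in\Z^d$, so $\|x'-y\|_2\le\tfrac{\sqrt d}{2}$. Then $d(0,y)=\|y\|_2\le\|x'\|_2+\|x'-y\|_2\le t+\tfrac{\sqrt d}{2}$; this is $\le\rho+a_k$ only after we correct $a_k$ — but the correct bookkeeping is that we want $y\in B^d(0,\rho+a_{k-1})$ or rather we should define the shells so the rounding slack $\tfrac{\sqrt d}{2}$ is swallowed. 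I would instead round \emph{inward}: replace $x'$ by $x'':=\tfrac{t-\sqrt d/2}{s}x$ (still collinear, radius $t-\tfrac{\sqrt d}{2}$), round to nearest lattice $y$, so $\|y\|_2\le t-\tfrac{\sqrt d}{2}+\tfrac{\sqrt d}{2}=t$, giving $d(0,y)\le t=\rho+a_k$ as needed. And $\|x-y\|_2\le\|x-x''\|_2+\|x''-y\|_2\le\big(s-t+\tfrac{\sqrt d}{2}\big)+\tfrac{\sqrt d}{2}\le\tfrac{W}{2\sqrt d}+\sqrt d$. This last quantity is $\le W$ exactly when $\tfrac{W}{2\sqrt d}+\sqrt d\le W$, i.e. $W\ge\tfrac{2d}{2\sqrt d-1}$; so for all $W$ larger than a dimensional constant the bound holds, and the remaining finitely many small $W$ (relative to $d$) can be handled by absorbing into $\rho_0(d,W)$ together with a coarser radial step. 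Combining, $y\in B^d(0,t)$ and $d(x,y)\le W$, and since $d(0,x)=s>t$, we conclude $x\in\partial B^d(0,t)=\partial B^d(\xi,\rho+a_k)$, which is what we wanted.

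The main obstacle is the interplay between the Euclidean rounding error $\tfrac{\sqrt d}{2}$ and the hopping range $W$: for small $W$ (say $W=1$) one cannot take a single radial step of length $\le W$ and still land within distance $W$ of a lattice point of smaller radius, so the ``thin layer'' width $\tfrac{W}{2\sqrt d}$ must be calibrated carefully and the argument needs either (a) a threshold $\rho_0(d,W)$ below which the claim is not asserted, or (b) a refinement using lattice points that are not exactly radial — one chooses $y$ to be any lattice point in the Euclidean ball $B(x,W)$ minimizing $\|y\|_2$, and argues via a volume/covering estimate that for $\rho\ge\rho_0(d,W)$ such a minimizer satisfies $\|y\|_2\le\|x\|_2-c(d,W)$ with $c(d,W)\ge\tfrac{W}{2\sqrt d}$. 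I expect the clean writeup to use route (b): fix $x$ with $\rho<\|x\|_2\le\rho+1$, and show $\min_{y\in\Z^d\cap B(x,W)}\|y\|_2 \le \|x\|_2 - \tfrac{W}{2\sqrt d}$ for $\|x\|_2\ge\rho_0(d,W)$, by noting the ball $B(x,W)$ contains the radial point $x(1-W/\|x\|_2)$ together with an $\ell^\infty$-cube of side $\tfrac{W}{\sqrt d}$ around a nearby interior radial point, which for large $\rho$ necessarily contains a lattice point of the required smaller norm. Once this norm-decrease estimate is in hand, the shell-covering conclusion follows immediately as above by choosing the appropriate index $k$ with $\rho+a_k<\|x\|_2\le\rho+a_{k-1}$.
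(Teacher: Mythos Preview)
Your overall structure is right—pick the index $k$ with $\rho+a_k<\|x\|_2\le\rho+a_{k-1}$ and then exhibit a lattice neighbor $y$ of $x$ with $\|y\|_2\le\rho+a_k$—but your construction of $y$ has a genuine gap. The radial-then-round scheme incurs a rounding error of order $\tfrac{\sqrt d}{2}$, and you correctly observe this forces $\tfrac{W}{2\sqrt d}+\sqrt d\le W$, which fails precisely in the regime $W<2\sqrt d$ where the lemma has content (when $W\ge2\sqrt d$ one has $k_\ast\le1$ and a single layer suffices). Your claim that ``the remaining finitely many small $W$ can be handled by absorbing into $\rho_0(d,W)$ together with a coarser radial step'' is not correct: for $W=1$ and $d\ge2$ the rounding error does not shrink with $\rho$, so no choice of $\rho_0$ rescues the radial-plus-rounding construction. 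Your ``route (b)'' is the right instinct but is left as a sketch, and the specific volume argument you outline (an $\ell^\infty$-cube of side $W/\sqrt d$ inside $B(x,W)$) still requires $W\ge\sqrt d$ to guarantee a lattice point.

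The fix, which is what the paper does, is to avoid rounding altogether: instead of moving radially, move along a \emph{coordinate axis}. After symmetrizing so that $x_1=\max_i|x_i|\ge\|x\|_2/\sqrt d\ge\rho/\sqrt d$, set $y:=x-We_1\in\Z^d$. Then $d(x,y)=W$ exactly, and
\[
\|y\|_2^2=\|x\|_2^2-2Wx_1+W^2\le(\rho+a_{k-1})^2-\tfrac{2W}{\sqrt d}\rho+W^2,
\]
which one checks is $\le(\rho+a_k)^2$ once $\rho\ge\rho_0(d,W)$, using $a_{k-1}-a_k=\tfrac{W}{2\sqrt d}$. This is the missing idea: a coordinate step of length $W$ achieves a radial decrease of at least $Wx_1/\|x\|_2\gtrsim W/\sqrt d$ (twice the layer width), lands on a lattice point for free, and works uniformly for all $W\ge1$.
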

We note that if $W>2\sqrt d$, then  $ B^d(\xi,\rho+1) \backslash B^d(\xi,\rho) \subset   \partial B(0,\rho)$, and we do not need the extra thin layers in between. Also note that $a_{k-1}-a_k= \frac{W}{2\sqrt d}$ for $k=1\cdots,k_\ast-1$, and  that
\begin{align}\label{eqn:layer-bound}
   a_{k_\ast-1}-a_{k_\ast}=1-\Big(\Big\lfloor \frac{2\sqrt d}{W}\Big\rfloor-1\Big)\frac{W}{2\sqrt d}=1- \Big\lfloor \frac{2\sqrt d}{W}\Big\rfloor \frac{W}{2\sqrt d}+\frac{W}{2\sqrt d}\ge \frac{W}{2\sqrt d}.  
\end{align}

   We will now use Lemma~\ref{claim:layer} to explicitly construct the harmonic weight and prove Corollary~\ref{cor:HdW}. The proof of the lemma is left to the end of this section. Let $\rho_0$ and $a_i$ be as in Lemma~\ref{claim:layer}. Then applying Lemma~\ref{claim:layer} several times,  for e.g. $R\ge 2W$,
\[B^d(\xi,R-W)\backslash B^d(\xi,\rho_0)=\bigcup_{\rho=\rho_0}^{R-W-1}B^d(\xi,\rho+1)\backslash B^d(\xi,\rho)\subset\bigcup_{\rho=\rho_0}^{R-W-1}\bigcup_{k=1}^{k_\ast}\partial B^d(\xi,\rho+a_k).  \]
We rearrange the above radii $\{\rho+a_k\}_{\rho,a_k}$ in increasing order  and denote them by  $\rho_i\in [\rho_0,\rho_{i_\ast}],\ i=0,\cdots,i_\ast$,  so that $\rho_{i+1}-\rho_i\ge \frac{W}{2\sqrt d}$ , and $\rho_{i_\ast}=R-W-\frac{W}{2\sqrt d}$. The covering can then be rewritten as
\[B^d(\xi,R-W)\backslash B^d(\xi,\rho_0)\subset \bigcup_{i=0}^{i_\ast}\partial B^d(\xi,\rho_i):=Y_R.  \]
We define $X_R:=B^d(\xi,R)\backslash Y_R$, which will be the ``bad set'' in Assumption~\ref{ass:harmonic-weight}. Note that  
\begin{align}\label{eqn:XR}
  \big|X_R\big|= \big|B(\xi,R)\backslash Y_R\big| \le   \big|B^d(\xi,\rho_0) \big| +\big|B^d(\xi,R )\backslash B^d(\xi,R-W)\big|\le C  , 
\end{align}
 for some constant $C$ depending only on $d,W$, so that $|X_R|/|B^d(\xi,R)|\to 0$ as $R\to \infty$. 
Overall, 
 Lemma~\ref{claim:layer} thus allows us to cover a large portion of $B^d(\xi,R)$  using thin exterior boundaries of Euclidean balls. With such a filtration, we can define the associated harmonic weight $ h_{B^d_{R}}(\xi,y)$ on $B^d(\xi,R)$ layer by layer. 
 
For $y\in  Y_R=\cup_{i=0}^{i_\ast}\partial B^d(\xi,\rho_i)$, we then set 
\begin{align}\label{eqn:h-0}
 h_{B_R^d(\xi)}(y)=  
  \frac{1}{\big|B^d_{R-W}\backslash B^d_{\rho_0} \big |} \sum_{i=0}^{i_*}\big| B^d_{\rho_{i+1}} \backslash B^d_{\rho_{i}} \big |\, \cdot P_{B^d_{\rho_i}}(\xi,y) \cdot \one_{\{y\in   \partial B^d_{\rho_{i}}\}},  
\end{align}
  and for $y\in X_R$, set $ h_{B^d_{R}}(\xi,y)=0$.  
  
  We first verify that $ h_{B^d_{R}}$ is a harmonic weight satisfying \eqref{eqn:h-submean-ass} from Assumption~\ref{ass:harmonic-weight}. Suppose $-\Delta f(x) \le 0$ (for $x\in \Omega \supset B^d_{R}\cup \partial B^d_{R}$). Then by the surface submean property \eqref{eqn:submean-surface}, for any $\rho_i\le R-W$, 
\[f(\xi)\le \sum_{y\in \partial B^d(\xi,\rho_i)}P_{B^d_{\rho_i}}(\xi,y)f(y).\]
Multiply  the   equation both sides by the volume (cardinality) of $\big| B^d(\xi,\rho_{i+1}) \backslash B^d(\xi,\rho_i) \big |$ and then sum all the equations from $i=0$ to $i=i_\ast$ (setting $\rho_{i_\ast+1}=R-W$) to obtain
\[\sum_{   i= 0 }^{ i_\ast }\big| B^d_{\rho_{i+1}} \backslash B^d_{\rho_{i}} \big | f(\xi)\le  \sum_{   i= 0 }^{ i_\ast }\sum_{y\in \partial B^d(\xi,\rho_i)}\big| B^d_{\rho_{i+1}} \backslash B^d_{\rho_{i}} \big |P_{B^d_{\rho_i}}(\xi,y)f(y),    \]
which implies 
\begin{align*}
     \nonumber f(\xi)&\le \frac{1}{\big|B^d_{R-W}\backslash B^d_{\rho_0} \big |} \sum_{   i= 0 }^{ i_\ast }\sum_{y\in \partial B^d(\xi,\rho_i)}\big| B^d_{\rho_{i+1}} \backslash B^d_{\rho_{i}} \big |P_{B^d_{\rho_i}}(\xi,y)f(y) \\
 &= \sum_{y\in Y_R}h_{B_R^d(\xi)}(y)f(y), 
\end{align*}
 where $h_{B^d_{R}} $ is given as in \eqref{eqn:h-0}. 
 Putting together the definition   $ h_{B^d_{R}}(\xi,y)=0$ for $y\in X_R$, one concludes that for $-\Delta f(x) \le 0$, 
 \[ f(\xi)\le      \sum_{y\in B^d(\xi,y)}h_{B_R^d(\xi)}(y)f(y),   \]
which verifies \eqref{eqn:h-submean-ass}. Clearly, the equality holds if $f$ is harmonic ($\Delta f=0$).

We already verified the `bad' set $X_R$ is small in \eqref{eqn:XR}. It thus remains to show  that $h_{B^d_{R}}$ has the desired lower bound in \eqref{eqn:h-lower-ass} on   $Y_R$.  For this, we will need the following Poisson kernel bounds.
\begin{proposition}[Lemma~6.3.7 in \cite{lawler2010random}]\label{prop:lawler}
  There are constants $c_1,c_2>0$ depending only on $d,W$ such that 
 \begin{align}\label{eqn:Poisson-lawler}
  c_1 \rho^{1-d} \le    P_{B^d_\rho}(\xi,y)\le  c_2 \rho^{1-d}, \  \ \xi\in \Z^d, \ y \in \partial B^d_\rho(\xi).
 \end{align}   
\end{proposition}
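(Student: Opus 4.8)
The plan is to identify $P_{B^d_\rho}(\xi,y)$ as the harmonic measure of a Euclidean ball for the simple random walk on $\Gamma_{d,W}$, to observe that this walk is a symmetric finite-range walk on $\Z^d$ of precisely the type for which the potential theory of \cite{lawler2010random} is developed, and then to quote the cited estimate. Concretely: since $\Gamma_{d,W}$ is the Cayley graph of $\Z^d$ with connection set $S=\{z\in\Z^d:0<\|z\|\le W\}$, every vertex has the same degree $N=N_{d,W}=\#S$, so the walk $Y$ has i.i.d.\ increments distributed uniformly on $S$. This increment law is symmetric ($S=-S$), mean zero, supported in a ball of radius $W$, irreducible on $\Z^d$, and has covariance $\Sigma=\sigma^2 I$ with $\sigma^2=\tfrac1N\sum_{z\in S}z_1^2>0$ (the set $S$ is invariant under the hyperoctahedral group). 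Thus $Y$ is a mean-zero finite-range random walk with non-degenerate covariance; it may carry a periodicity of $2$, e.g.\ when it is the simple random walk for $W=1$.

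Granting this, the cited Lemma~6.3.7 of \cite{lawler2010random} — whose proof applies to any symmetric finite-range walk with non-degenerate covariance — yields constants $c_1,c_2>0$ and a threshold $\rho_0=\rho_0(d,W)$, depending only on the increment law, with $c_1\rho^{1-d}\le P_{B^d_\rho}(\xi,y)\le c_2\rho^{1-d}$ for all $\rho\ge\rho_0$, all $\xi\in\Z^d$, and all $y$ in the exterior boundary $\partial B^d_\rho(\xi)=\{x\notin B^d_\rho(\xi):\dist(x,B^d_\rho(\xi))\le W\}$; the range-$W$ bound on increments is exactly what makes this the correct boundary, matching the notion used in the paper. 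A periodicity of $2$ is handled by the bipartite version of the same estimate (or by passing to the lazy walk), which changes $c_1,c_2$ only by a bounded factor. For $\rho<\rho_0(d,W)$ the claim is elementary: $B^d_\rho(\xi)$ assumes only finitely many shapes up to translation (as $B^d_\rho(\xi)=B^d_{\rho'}(\xi)$ whenever $(\rho',\rho]$ contains no value $\|z\|$, $z\in\Z^d$), $\rho^{1-d}$ is bounded above and below on this range, and each boundary vertex is reached from $\xi$ with strictly positive probability by irreducibility and connectedness of $B^d_\rho(\xi)$; enlarging the constants gives \eqref{eqn:Poisson-lawler}.

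A self-contained route would instead use the last-exit decomposition $P_{B^d_\rho}(\xi,y)=\tfrac1N\sum_{z\in B^d_\rho(\xi),\,z\sim y}\tilde G_{B^d_\rho(\xi)}(\xi,z)$, where $\tilde G_{B^d_\rho(\xi)}(\xi,z)$ is the expected number of visits to $z$ by the walk from $\xi$ before it leaves $B^d_\rho(\xi)$; the constraint $z\sim y\notin B^d_\rho(\xi)$ forces each such $z$ into the shell $\{\rho-W<\|z-\xi\|\le\rho\}$, and the number of neighbors of $y$ is at most $N\le C(d,W)$. Both bounds in \eqref{eqn:Poisson-lawler} then follow at once from the near-boundary Green's function estimate $\tilde G_{B^d_\rho(\xi)}(\xi,z)\asymp\rho^{1-d}$, valid for every lattice point $z$ with $\|z-\xi\|\in[\rho-W-1,\rho]$ and $\rho\ge\rho_0(d,W)$ (the two-sided bound near the boundary uses the $O(1)$ lattice correction to the usual $(\rho-\|z-\xi\|)\rho^{1-d}$ decay). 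Establishing this Green's function estimate for the range-$W$ walk — via reflection or comparison with continuous harmonic functions on annuli, together with the Harnack inequality from Proposition~\ref{prop:piconseq} — is the main obstacle of this route; along the first route the only real work is bookkeeping the periodicity and matching the boundary conventions, so I would follow the first route and cite \cite{lawler2010random}.
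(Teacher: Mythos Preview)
Your proposal is correct and matches the paper's approach: the paper does not prove Proposition~\ref{prop:lawler} at all but simply cites it from \cite{lawler2010random}, and you do the same after carefully verifying that the simple random walk on $\Gamma_{d,W}$ is indeed a symmetric, finite-range, mean-zero walk with covariance proportional to the identity, which places it squarely in the class treated there. Your attention to matching the boundary convention (range-$W$ exterior boundary) and to handling small $\rho$ by a finiteness argument is more explicit than anything the paper writes down; the paper does give a direct computational proof in the special case $d=1$ (Appendix~B, Lemma~\ref{lem:Pr-Gr-bound}) via test-function comparison for the Green's function, which is close in spirit to your ``self-contained route'', but for general $d$ it relies entirely on the citation. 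One small remark: the periodicity issue you flag is not really an obstruction for the \emph{exit distribution} (as opposed to the heat kernel), so no lazy-walk modification is actually needed.
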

 
From classical
results, as $s\to\infty$ the standard Euclidean ball $B^d(\xi,s)$  contains $\omega_ds^d+o(s^{d-1})$ lattice points, where $\omega_d$ is the volume of the unit Euclidean ball in $\R^d$; see e.g. \cite{Walfisz,HeathBrown,Huxley} for further references and more precise error estimates. 
Thus there is $c_3>0$ so that for a fixed $a>0$ and all $s\ge1$, 
\begin{align*}
|B^d(\xi,s)|-|B^d(\xi,s-a)| = d\omega_d as^{d-1}(1+o(1)) \ge c_3 as^{d-1}.
\end{align*}
Using this lower bound between the layers $\rho_{i+1}-\rho_{i}\ge \frac{W}{2\sqrt d}$ in \eqref{eqn:layer-bound}, we have 
\[\big| B^d(\xi,\rho_{i+1}) \backslash B^d(\xi,\rho_i) \big |\ge c_3\frac{W}{2\sqrt d} \rho_i^{d-1}, \ \ \ i =0,\cdots,i_\ast.  \]
Combined with  the  lower bound    in \eqref{eqn:Poisson-lawler}, this implies for $y \in \partial B^d_{\rho_i}(\xi)$, 
\begin{align*}
  \big| B^d(\xi,\rho_{i+1}) \backslash B^d(\xi,\rho_i) \big | P_{B^d_{\rho_i}}(\xi,y) \ge c_1c_3\frac{W}{2\sqrt d}.  
\end{align*}
Since for $y\in Y_R=\bigcup_{i=0}^{i_\ast}\partial B^d_{\rho_i}(\xi)$, the sum \eqref{eqn:h-0} contains at least one (non-zero) term. Hence,  
\[h_{B_R^d(\xi)}(y)\ge   \frac{1}{\big|B^d(\xi,R-W)\backslash B^d(\xi,\rho_0) \big |}c_1c_3\frac{W}{2\sqrt d}\ge \frac{c_5}{\big|B^d(\xi,R) \big |}, \ y\in Y_R,\]
which verifies \eqref{eqn:h-lower-ass} in Assumption~\ref{ass:harmonic-weight}.

\begin{proof}[Proof of Lemma~\ref{claim:layer}]
    Since the annulus is symmetric and $\Z^d$-translation invariant, it is enough to consider $x=(x_1,\cdots,x_d)\in  B^d(0,\rho+1) \backslash B^d(0,\rho)$ where $x_1\ge0$, and  $x_1\ge x_i,i=2,\cdots,d$. In this region, $x$ satisfies
    \[ \rho^2<\sum_{i=1}^dx_i^2\le (\rho+1)^2 \Longrightarrow  x_1\ge \frac{\rho}{\sqrt d}. \]
    Since $x_1$ is the maximal direction, consider the point $x'=x-We_1$, which is a neighbor $x'\sim x$ in $\Gamma_{d,W}$.
    We will see that if $W$ is relatively large (compared to $\sqrt d$), then $x'\in B(0,\rho)$ so that $x\in \partial B(0,\rho)$. Otherwise, we will need intermediate layers of thickness $\frac{W}{2\sqrt d}$ to reach the entire annulus. Direct computation using $\|x\|_2^2\le(\rho+1)^2$ and $x_1\ge\rho/\sqrt{d}$ shows
    \begin{align*}
        \|x'\|_2^2 = (x_1-W)^2+ \sum_{i=2}^dx_i^2 
        &\le \rho^2+2\Big(1-\frac{W}{2\sqrt d}\Big)\rho-
       \frac{W}{\sqrt d}\rho
        +W^2+1.
    \end{align*}
    Hence, if $W\ge 2\sqrt d$, 
    then for $\rho>\rho_0:=\sqrt{d}(W+1/W)$, 
    $\|x'\|_2^2 
    \le \rho^2-\frac{W}{ \sqrt d}\rho+W^2+1
    \le  \rho^2$, 
    so that $x'=x-We_1\in B^d(0,\rho)$ and $x\in\partial B^d(0,\rho)$.

     If  $W< 2\sqrt d$, set \[ a_k=1-k\frac{W}{2\sqrt d}, \ \ k=0,1\cdots,\Big\lfloor \frac{2\sqrt d}{W}\Big\rfloor-1. \]
    For each $k=1,\cdots,\Big\lfloor \frac{2\sqrt d}{W}\Big\rfloor-1 $, if $x\in B(0,\rho+a_{k-1})\backslash B(0,\rho+a_{k})$, then  $ \|x\|_2^2\le (\rho+a_{k-1})^2$ and $x_1\ge\rho/\sqrt{d}$ imply
     \begin{align*}
   \|x'\|_2^2=     (x_1-W)^2+ \sum_{i=2}^dx_i^2
        &\le  \rho^2+2\Big(a_{k-1}-\frac{W}{\sqrt d}\Big)\rho+W^2+a_{k-1}^2  
             \le   (\rho +a_{k})^2,
    \end{align*}
    for $\rho\ge (W^2+1)/(\frac{W}{\sqrt{d}})>(W^2+a_{k-1}^2-a_{k}^2)/(\frac{W}{\sqrt d})$. 
    This shows $x'\in B(0,\rho+a_{k})$, hence $x \in \partial B(0,\rho+a_{k})$  since $x\not \in B(0,\rho+a_{k})$.
   
   The above works for all $k\le  k_\ast-1$. 
   In the last layer when $k=k_\ast=\lfloor \frac{2\sqrt d}{W}\rfloor\le \frac{2\sqrt d}{W}$, we have 
   \[a_{k_\ast-1}=1-(k_\ast-1)\frac{W}{2\sqrt d}\ge 1-\Big(\frac{2\sqrt d}{W}-1\Big)\frac{W}{2\sqrt d} =\frac{W}{2\sqrt d} ,\]
and similarly as before,
   \begin{align*}
   \|x'\|_2^2 
        &\le    \rho^2+2\big(a_{k_\ast-1}-\frac{W}{2\sqrt d}\big)\rho-\frac{W}{\sqrt d} \rho+W^2+a_{k_\ast-1}^2 \\
         &=   \rho^2 -\Big(k_\ast+1-\frac{2\sqrt d}{W}\Big)\frac{W}{\sqrt d} \rho+W^2+a_{k_\ast-1}^2  
            \le    \rho  ^2,
    \end{align*}
    provided $\rho>(W^2+a_{k_\ast-1}^2 )/(c\frac{W}{\sqrt d})$, where $c=\lfloor \frac{2\sqrt d}{W}\rfloor+1-  \frac{2\sqrt d}{W}>0$. Hence, $d(0,x')\le \rho$ and $x \in \partial B(0,\rho)=\partial B(0,\rho+a_{k_\ast})$, where we set $a_{k_\ast}=0$. 
\end{proof}

\subsection{Stacked graphs}
\label{subsec:stack}
In this section, we provide details for the application to stacks of graphs in Section~\ref{subsec:app-stack}. In particular, we verify that if Assumption~\ref{ass:harmonic-weight} holds for a graph $\Gamma$ with the natural metric $d_\Gamma$, then the required properties also hold for the stacked graph $\Gamma\times\Z_M$ with the metric $\tilde{d}((x,j),(y,k)):=d_\Gamma(x,y)+\frac{1}{2}\mathbf{1}_{(x=y)\wedge (j\ne k)}$.
Since the weak Poincar\'e inequality is preserved under rough isometry, this together allows for obtaining Lifshitz tails for $\Gamma\times\Z_M$ via Corollary~\ref{cor:LLaw-random}.

First, the metric $\tilde{d}$ is strongly equivalent to the natural metric on $\Gamma\times\Z_M$.
The balls under the metric $\tilde{d}$ are simpler however. 
For $R\ge1/2$, the ball is $B^d_{\Gamma\times\Z_M}((\xi,j),R)=B_\Gamma(\xi,R)\times\Z_M$. 
For $R<1/2$, the ball centered at $(\xi,j)$ is the singleton set $\{(\xi,j)\}$.

To define the harmonic weight $h_{B^d_R((\xi,j))}((y,k))$, we use the random walk formulation of the Poisson kernel for the ball $B^d((\xi,j),R)$. Letting $H_A((\xi,j),(y,k))=\P_{(\xi,j)}[Y_{\tau_{A}}=(y,k)]$ be the Poisson kernel for a region $A$, we will take
\begin{align}
h_{B^d_R((\xi,j))}((y,k))&=\begin{cases}0,&(y,k)\in\{\xi\}\times\Z_M\\
\frac{1}{R}H_{B^d_{r-1}((\xi,j))}((\xi,j),(y,k)),&r=d((\xi,j),(y,k))\ge 2\\
\frac{1}{R}H_{B^d_{1/2}((\xi,j))}((\xi,j),(y,k)),&r=d((\xi,j),(y,k))=1
\end{cases}.
\end{align}
We can then check the required properties in Assumption~\ref{ass:harmonic-weight}.
\begin{itemize}

\item Submean property: 
For $R>3/2$ and $f$ harmonic at points in $B^d(\xi,R-1)$, then for any $3/2\le r\le R$, the Poisson kernel property \eqref{eqn:poisson-exp} implies
\begin{align}\label{eqn:stack-bdy}
f((\xi,j)) &= \sum_{(y,k)\in\partial B^d_{r-1}((\xi,j))}H_{B^d_{r-1}((\xi,j))}((\xi,j),(y,k))f((y,k)).
\end{align}
Averaging over $r\in\{3/2\}\cup\{2,3,\ldots,R\}$, then
\begin{align}
f((\xi,j))&=\sum_{(y,k)\in B^d_R((\xi,j)}h_{B^d_R((\xi,j))}((y,k))f((y,k)).
\end{align}
The required inequality \eqref{eqn:h-submean-ass} then follows for subharmonic functions from \eqref{eqn:stack-bdy}.
If $R\le 3/2$ then we can simply take the ``trivial'' harmonic weight $h_{B_R^d((\xi,j))}=\delta_{(\xi,j)}$, since the second property in Assumption~\ref{ass:harmonic-weight} only matters as $R\to\infty$.

\item Bad set: The plan is to compare $h_{B^d_R((\xi,j))}$ for $\Gamma\times\Z_M$ to the harmonic weight for $\Gamma$, using the random walk relation and Laplace equation properties of the Poisson kernel.
We claim that if $X_R(\xi)$ is the ``bad set'' for $B^d(\xi,R)$ in $\Gamma$, then $(X_R(\xi)\times\Z_M)\cup(\{\xi\}\times\Z_M)$ can be taken as the ``bad set'' for $\Gamma\times\Z_M$.

Since we added the entire stack of $M$ points $\{\xi\}\times\Z_M$ to the bad set, we can ignore it and only consider the Poisson kernel at points $(\xi,j)$ and $(y,k)$ with $r=d((\xi,j),(y,k))\ge1$, i.e. $y\ne\xi$.
Assuming $r\ge1$, we first show that for $k=1,\ldots,M$,  
\[H_{B^d_r((\xi,j))}((\xi,j),(y,k))=\P_{(\xi,j)}[Y_{\tau_{B^d((\xi,j),r)}}=(y,k)] \] 
are all comparable. 
For a path $S=(v_0,\ldots,v_n)$ of length $n=\tau_{B^d((\xi,j),r)}$ exiting at $v_n=(y,k)\in\partial B^d((\xi,j),r)$, we can construct a path $S'$ of length $n+|k-\ell|$ exiting at $(y,\ell)$ instead, by following $S$ up to $v_{n-1}$ and then moving vertically to the $\ell$th layer before exiting. Since the maximum degree in $\Gamma\times\Z_M$ is bounded by $M_\Gamma+2$, this implies
\begin{align}\label{eqn:layer-comparison}
\P_{(\xi,j)}[Y_{\tau_{B^d((\xi,j),r)}}=(y,k)]&\ge\frac{1}{(M_\Gamma+2)^M}\P_{(\xi,j)}[Y_{\tau_{B^d((\xi,j),r)}}=(y,\ell)],
\end{align}
for any $k,\ell\in\{1,\ldots,M\}$.

Averaging \eqref{eqn:layer-comparison} over $\ell\in\{1,\ldots,M\}$ yields for $r\ge 1/2$,
\begin{align*}\label{eqn:unstack}
\P_{(\xi,j)}[Y_{\tau_{B^d((\xi,j),r)}}=(y,k)]&\ge\frac{1}{M(M_\Gamma+2)^M}\sum_{\ell=1}^M\P_{(\xi,j)}[Y_{\tau_{B^d((\xi,j),r)}}=(y,\ell)]\\
&=\frac{1}{M(M_\Gamma+2)^M}\P_{\xi}[\tilde{Y}_{\tau_{B(\xi,r)}}=y],\numberthis
\end{align*}
where $\tilde{Y}_n$ is simple random walk on $\Gamma$. The last equality follows using the Poisson kernel formula for solving Laplace's equation:
For a finite region $\Lambda\subset\Gamma$, let $H_{\Lambda}(x,y)$ be the Poisson kernel on $\Gamma$ and $H_{\Lambda\times\Z_M}((x,j),(y,\ell))$ be the Poisson kernel on $\Gamma\times\Z_M$. If $f:\Gamma\to\R$ is harmonic on $\Lambda\subset\Gamma$ with boundary values $f(y)$ for $y\in\partial\Lambda$, then $\tilde{f}:\Gamma\times\Z_M\to\R$ defined as $\tilde{f}((x,j))=f(x)$ is also harmonic on $\Lambda\times\Z_M$ with boundary values $\tilde{f}((y,\ell))=f(y)$ for $(y,\ell)\in\partial\Lambda\times\Z_M$. Then
\begin{align*}
f(x)=\E_x[f(Y_{\tau_\Lambda})]=\sum_{y\in\partial\Lambda}H_\Lambda(x,y)f(y),
\end{align*}
and also
\begin{align*}
f(x)=\tilde{f}((x,j))&=\sum_{(y,\ell)\in\partial(\Lambda\times\Z^M)}H_{\Lambda\times\Z_M}((x,j),(y,\ell))\tilde{f}(y,\ell)\\
&=\sum_{y\in\partial\Lambda}\left(\sum_{\ell=1}^MH_{\Lambda\times\Z_M}((x,j),(y,\ell))\right)f(y).
\end{align*}
By considering boundary values $\delta_y$, this implies we must have
\begin{align}
H_\Lambda(x,y)&=\sum_{\ell=1}^MH_{\Lambda\times\Z_M}((x,j),(y,\ell)).
\end{align}
Taking $\Lambda=B(\xi,r)$ yields \eqref{eqn:unstack}.

\end{itemize}

\subsection{Sierpinski gasket graph}\label{sec:SG}

In this part, we discuss the landscape law and Lifshitz tails for Jacobi operators on the Sierpinski gasket graph $\Gamma_{SG}$, drawn in Figure~\ref{fig:sierpinski}. 
The Sierpinski gasket graph   is a fractal-like graph, and is not roughly isometric to any $\Z^d$. 
Let $K_0$ be the unit triangle on $\R^2$ with vertices $\V_0=\{(0,0),(1,0),(1/2,\sqrt 3/2)\}=\{ a_1,a_2,a_3\}$.  
Then $\Gamma_{SG}$ is constructed using the images of $\V_0$ (as vertices) by the iteration of $\Psi$.  Let $\V_\ell=2^\ell\Psi^\ell(\V_0)$ be the set of vertices of the triangles of side length 1 in $2^\ell\Psi^\ell(K_0)$. Let $\V=\cup_{\ell\ge 0}\V_\ell$.  The edge set $\mathcal E=\{x\sim y:x,y\in \V\}$ is defined by the relation $x \sim y$ iff $x,y$ belongs to a triangle in $2^\ell\Psi^\ell(K_0)$ for some $\ell$.  The  Sierpinski gasket graph is then $\Gamma_{SG}=(\V,\mathcal E)$.

A key feature of $\Gamma_{SG}$ is that it satisfies sub-Gaussian heat kernel bound $\mathrm{HKC}(\alpha,\beta)$; for $t\ge \max(1,d_0(x,y))$,
\[ 
\frac{c_3}{t^{\alpha/2}}\exp\left(-c_4\big(d_0(x,y)^\beta/t)^{\frac{1}{\beta-1}}\right)\le  q_t(x,y)\le \frac{c_1}{t^{\alpha/2}}\exp\left(-c_2\big(d_0(x,y)^\beta/t)^{\frac{1}{\beta-1}}\right),
 \]
 where $q_t(x,y)$ is the (continuous time) heat kernel as in \eqref{eqn:gaussian-heatkernel}, 
$\alpha=\log 3/\log 2$ is the volume growth parameter as in \eqref{eqn:vol-control},  and $\beta=\log 5/\log 2$  is the sub-Gaussian parameter. (See e.g. \cite[Cor. 6.11]{barlow2017random}.)
Since the volume control property \eqref{eqn:vol-control} holds, 
we have the covering of balls provided by Proposition~\ref{prop:covering}.  As mentioned in Remark~\ref{remk:beta-PI}, the parameter $\beta$ is equivalent to the parameter used in (weak) Poincar\'e inequalities. 
In the previous sections of this paper, we focused on the case $\beta=2$ and assumed 
the corresponding weak Poincar\'e inequality  \eqref{eqn:WPI}. 
On $\Gamma_{SG}$, as a result of the sub-Gaussian heat kernel bounds, a  $\beta$-version of \eqref{eqn:WPI} holds \cite[\S6]{barlow2017random}, with $r^\beta$ rather than $r^2$ in the Poincar\'e   inequality, i.e., 
\[  \sum_{x\in B }\big(f(x)-\bar f^B\big)^2 \le  C_P\, r^\beta\sum_{x,y\in B^\ast: x\sim y}\big(f(x)-f(y)\big)^2  ,\]
with the same average $\bar f^B$, and the ball scaling as in $\eqref{eqn:WPI}$. 
This suggests to consider a slightly different landscape  counting function $N_{u,\beta}$, which depends on the parameter $\beta>2$. 
More precisely, unlike   coverings of radius $R=E^{-1/2}$ used in \eqref{eqn:Nu-1/2}, instead, we consider  coverings of radius $R=E^{-1/\beta}$.
The associated landscape counting function is defined as 
\begin{align}\label{eqn:Nu-1/beta}
N_{u,\beta}(E)=  N_u^{\mathcal P(E^{-1/\beta}),A}(E). 
\end{align}

Following the proof in Section~\ref{sec:LL-upper-det}, replacing the use of Poincar\'e inequality \eqref{eqn:WPI} by its $\beta$-version, one can obtain the landscape law upper bound \begin{align}\label{eqn:LLaw-upper-SG}
       N(E)\le N_{u,\beta}(CE) ,\ \ {\rm for \ all }\ E >0,
    \end{align} where $C$ depends on $\alpha,\beta$.

Unfortunately, the argument for the landscape lower bounds \eqref{eqn:LLaw-lower1}, \eqref{eqn:LLaw-lower2} does not go through for $\Gamma_{SG}$. 
Additionally, we do not verify Assumption~\ref{ass:harmonic-weight} for the gasket graph, which prevents using the landscape law argument to obtain a $\beta$ version of the Lifshitz tails property.

 Recently, authors in \cite{balsam2023density} established the Lifshitz tails singularity of the  integrated density
of states for certain random operators on (continuous) nested fractals, including the continuous Anderson model on the planar Sierpinski gasket set; see also earlier related work in \cite{pietruska1991lifschitz, shima1991lifschitz,kaleta2015integrated}. In particular, for (infinite volume) IDS $ N^\infty (E)$ of a random Schr\"odinger operator (under mild condition on the hopping and the random distribution) on the planar Sierpinski gasket set, Ref.~\cite{balsam2023density} showed that
\begin{align}\label{eqn:lif-beta}
  \lim_{E\searrow 0}   \frac{\log \big|\log  N^\infty (E)\big|}{\log E}=-\frac{\alpha}{\beta}. 
\end{align}

To the best of our knowledge, there is no such result for the (combinatorial) Sierpinski gasket graph. 
In upcoming work (in preparation), we prove \eqref{eqn:lif-beta} for the Anderson model on $\Gamma_{SG}$, via a modified Neumann--Dirichlet method close to the spirit of the work in \cite{shima1991lifschitz}.

 \section{Numerical cases}\label{sec:numerics}

In this section, we introduce and discuss a series of detailed numerical simulations aimed at investigating the behavior of the landscape counting function $N_u(E)$. 
These simulations identify more precise behavior (such as explicit numerically determined scalings) governed by our general results Theorems~\ref{thm:LLaw} and \ref{thm:Lif}, and also provide evidence for a landscape law or Lifshitz tails in models where we lack an analytical proof.
To comprehensively explore the applicability of the landscape law, we will consider a variety
of cases, including random band models, and the Anderson model on the Sierpinski gasket graph and Penrose tiling.

\subsection{Random band models}
Let's first recall some of the notations to be used in this section. Let $\Gamma_{d,W}=(\Z^d,\mathcal E_W)$ be the graph defined in Section~\ref{subsec:rbm}, where the vertex set is $\Z^d$ and the edge set $\mathcal E_W$ has the ``$W$-step band structure'' in \eqref{eqn:Gamma-dW}. Our results in Section~\ref{subsec:rbm} apply to Jacobi operators $H_{d,W}$ \eqref{eqn:HdW} with both onsite and bond  disorders. In the numerical simulations, we will focus only on the bond disorder, i.e., operators in the form 
\[ H_{d,W}f(x)= \sum_{\substack{y\in \Z^d \\ \|x-y\|\le W}} \big(f(x)-\mu_{xy}f(y)\big) .\]
We will consider the above operator $H_{d,W}$ in the cases $d=1,2$ (see  $\Gamma_{1,W}$ in  Figure~\ref{fig:Lap72}   and $\Gamma_{2,W}$ in  Figure~\ref{W2L1}) for different choices of  bandwidth $W$.
For the $\Gamma_{2,W}$ cases, the graph is induced by a band matrix on $\Z^2$. Specifically, we employ the $\ell^1$-norm to define $\mathcal E_W$. As a point of comparison with Figure~\ref{fig:Lap72}, Figure~\ref{W2L1} provides an illustrative example that displays all nodes connected to the central node within a bandwidth of $W=2$.

\begin{figure}[!ht]
	\centering
	\includegraphics[width=0.4\linewidth]{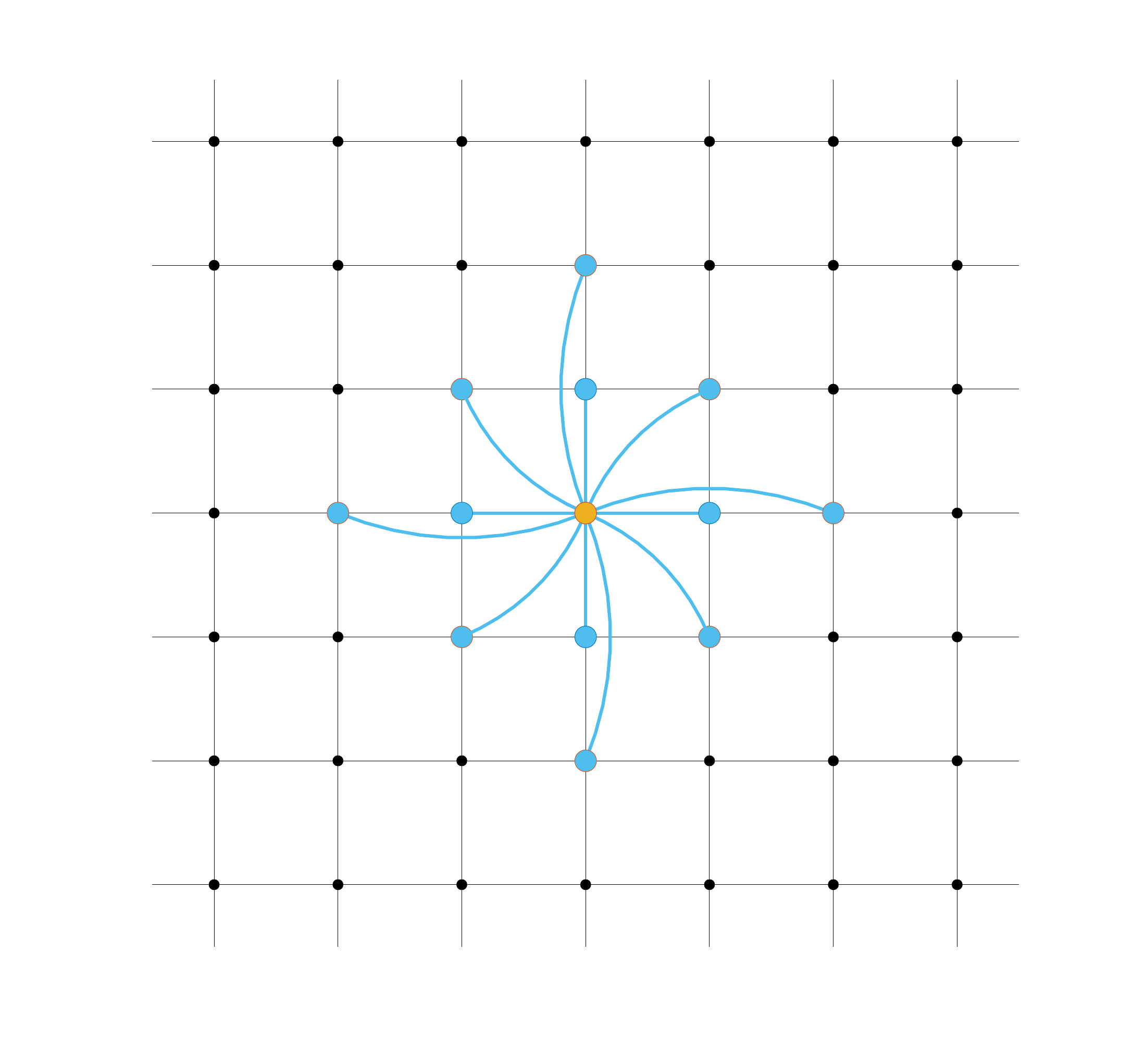}
	\caption{All nodes and edges corresponding to the centering node (in orange) in the graph $\Gamma_{2,2}=(\Z^2,\mathcal E_2)$}
	\label{W2L1}
\end{figure}

In Figure~\ref{rbm}, the top row shows an example in 1D with $|A|=20000$ and $W=10$, where  the bond interactions will be modeled using a uniform distribution over the interval [0,1]. The bottom row  shows a 2D example using a Bernoulli distribution taking values in \{0,1\} with half and half probabilities, where $|A|=100^2$ and $W=2$. Furthermore, the on-site non-negative potential is set to be 0. It is important to note that the simulations presented are based on a single random realization, rather than on the average of multiple realizations.

Building upon the previously discussed configurations, we globally get  the control of the IDS from above and below through the landscape counting functions.
Besides, beyond the global bounds from above and below, the landscape counting function, upon appropriate scaling, can serve as a good approximation for the Lifshitz tail of the IDS.  The examples shown in Figure~\ref{rbm}(b) and (e) are also to show how the suitably scaled landscape counting function closely mirrors the behavior of the Lifshitz tail  \eqref{eqn:Lif-dW}   under varying conditions:

\begin{figure}[!ht]
	\centering
	\subfigure []{
		\includegraphics[width=5 cm]{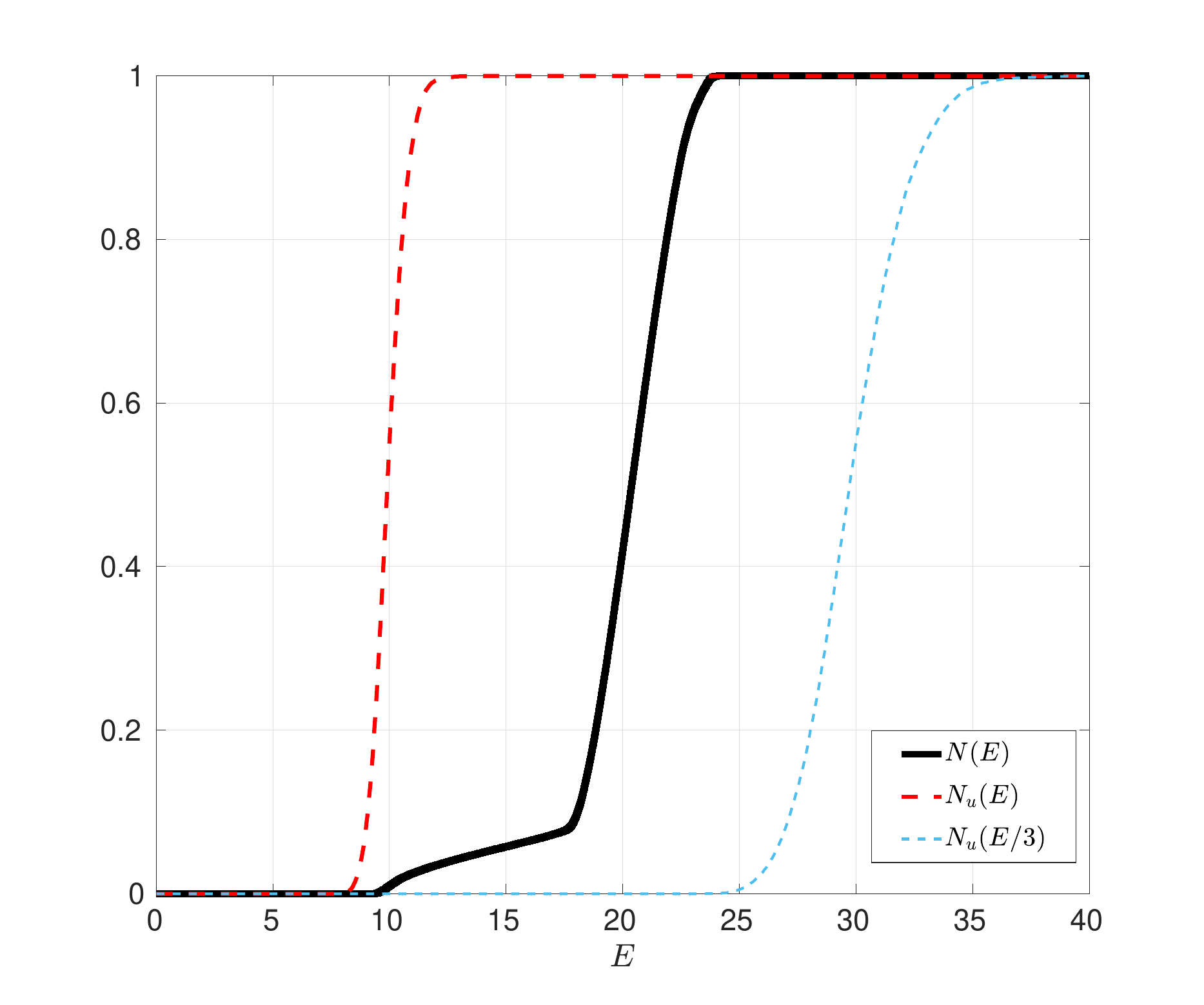}
	}
	\subfigure[]{
		\includegraphics[width=5cm]{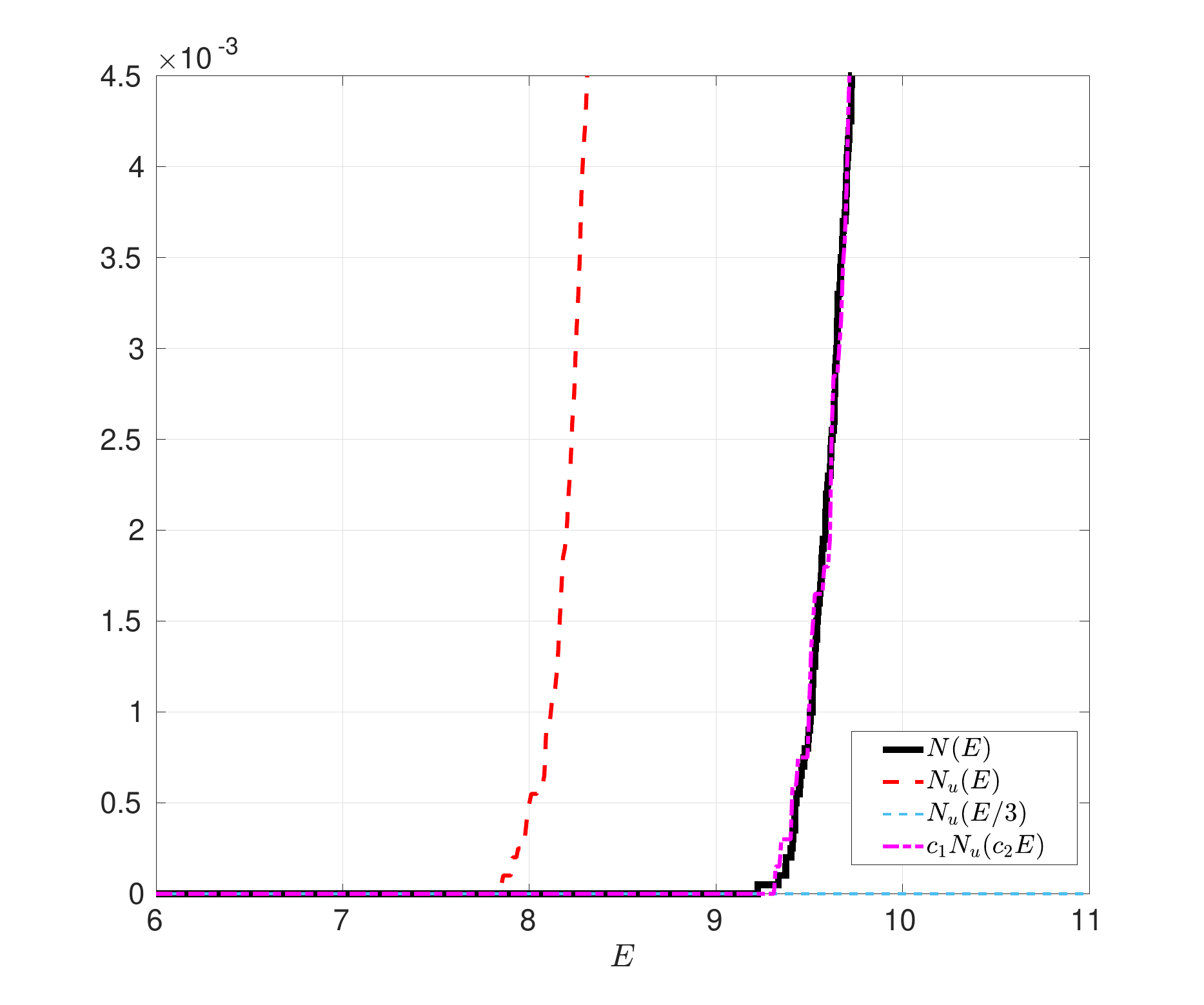}
	}
	\subfigure[]{
		\includegraphics[width=5 cm]{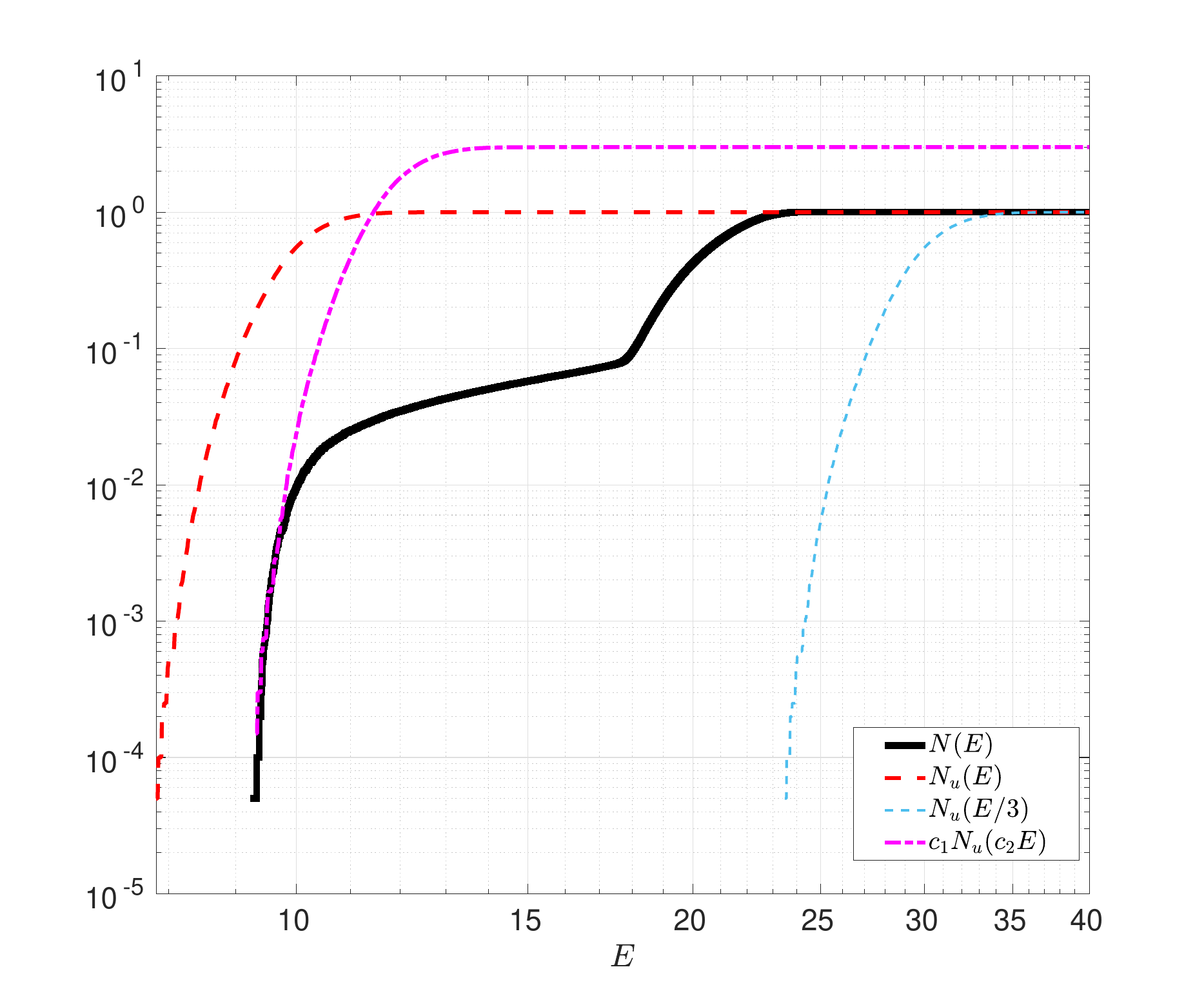}
	}
	\subfigure[]{
		\includegraphics[width=5 cm]{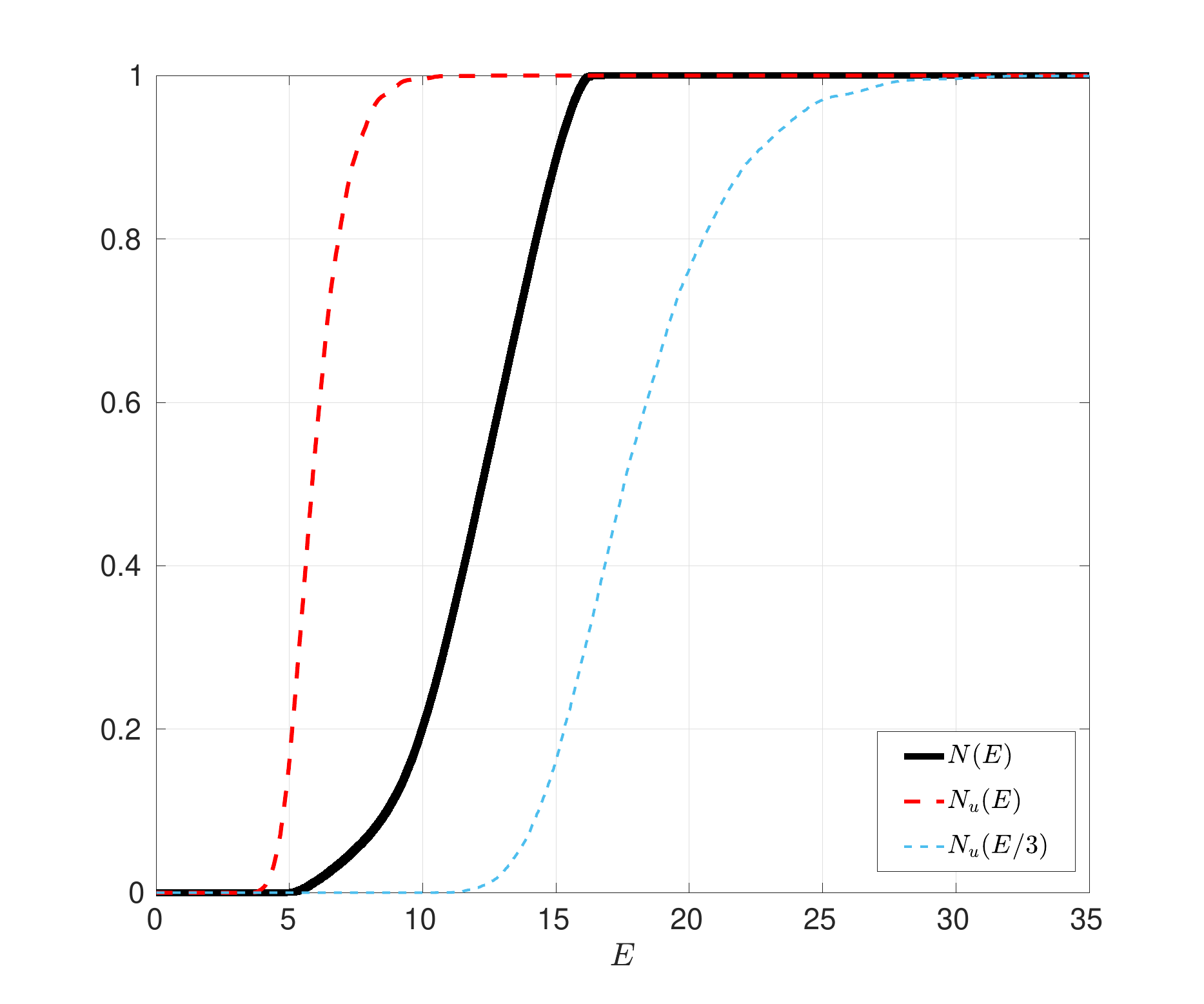}
	}
	\subfigure[]{
		\includegraphics[width=5 cm]{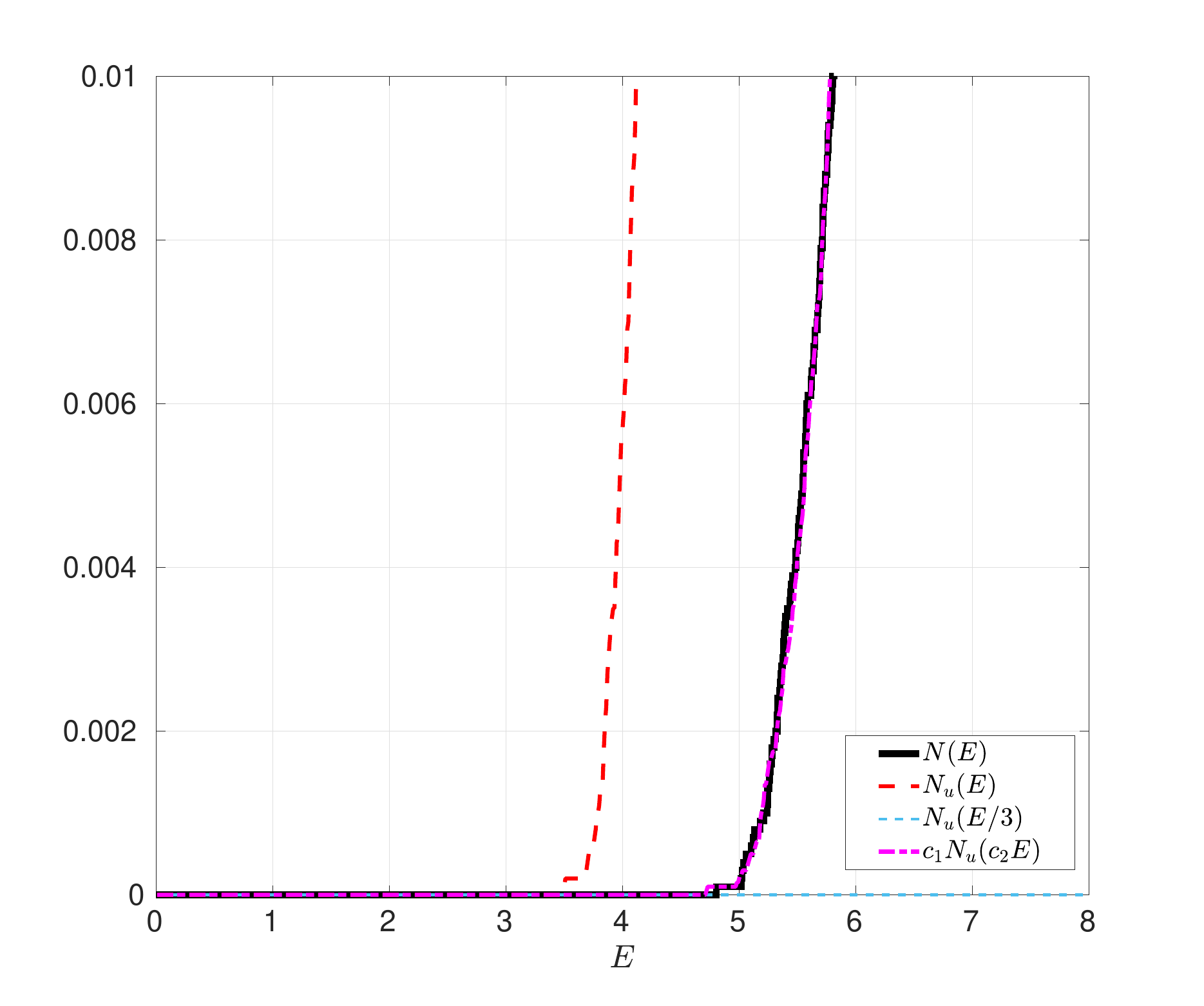}
	}
	\subfigure[]{
		\includegraphics[width=5 cm]{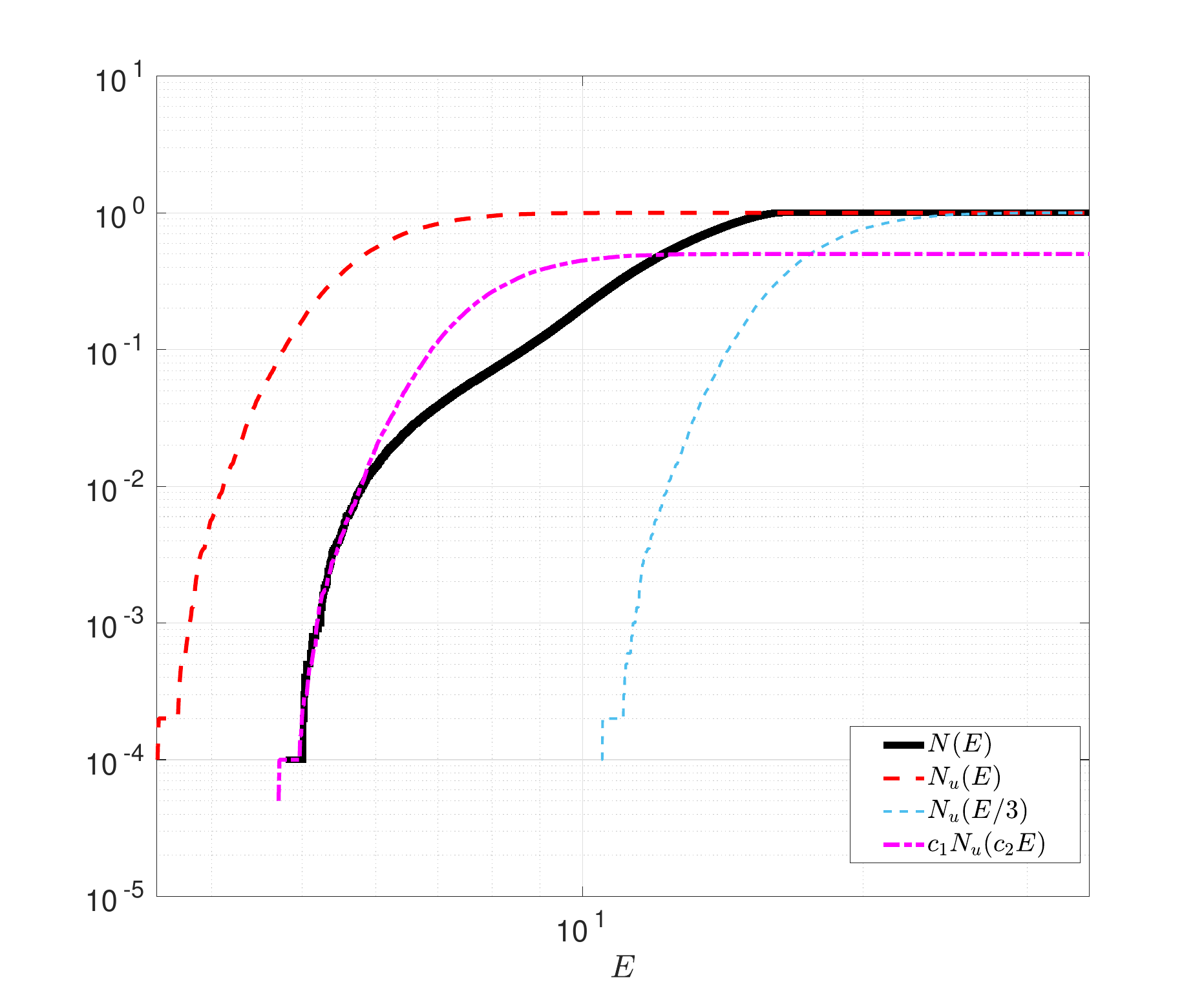}
	}
	\caption{The landscape law for random band models in 1D (top row (a), (b), (c)) and 2D (bottom row (d), (e), (f)) cases. (a), (d): the global control from above and below. (b), (e): the scaled landscape counting $c_1N_u(c_2E)$ effectively approximates the Lifshitz tail of $N(E)$ following appropriate scaling adjustments. Specifically, in (b) $c_1=3,c_2=\frac{1}{1.19}$; in (e): $c_1=\frac{1}{2},c_2=\frac{1}{1.35}$. (c), (f): the log-log view.}	
	\label{rbm}
\end{figure}

\subsection{Sierpinski gasket}
Next, we consider the Anderson model (without the bond disorder)
\[ Hf(x)= \sum_{y\in \V:y\sim x} \big(f(x)-f(y)\big) +V_x f(x)\]
on  Sierpinski gasket graph $\Gamma_{SG}=(\V,\mathcal E)$ discussed in Section~\ref{sec:SG} (see Figure~\ref{fig:sierpinski}, a fractal structure composed of interconnected equilateral triangles). For the landscape counting function, we specifically employ equilateral triangular boxes for the counting process. 
Figure~\ref{GasExample} illustrates one such example, demonstrating the methodology employed to compute the landscape counting function 
$N_{u,\beta}$ in \eqref{eqn:Nu-1/beta}.

\begin{figure}[!ht]
	\centering
	\includegraphics[width=0.7\linewidth]{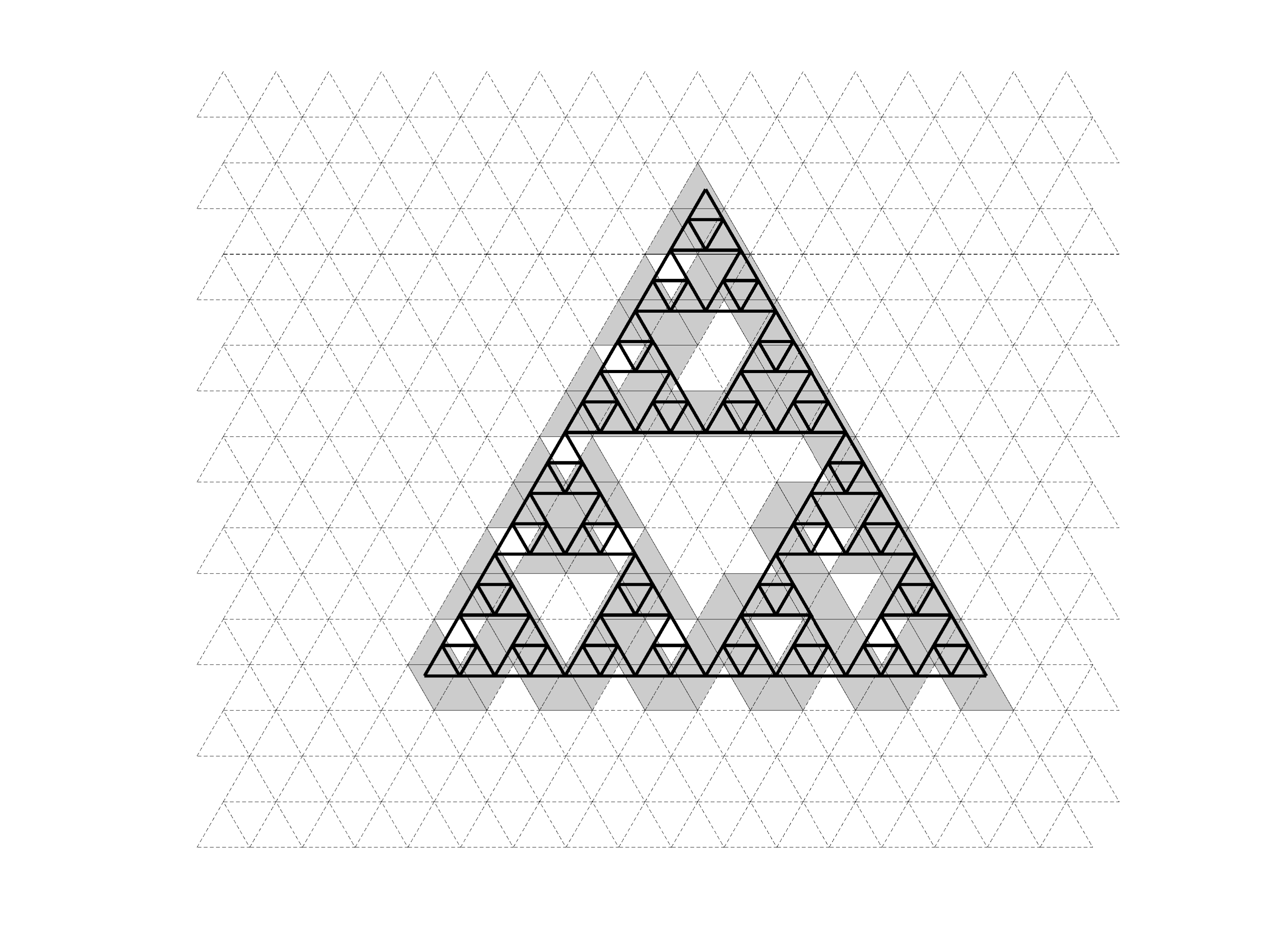}
	\caption{Triangular boxes used for computing the landscape counting function. In the equilateral triangles lattice, only those boxes that include at least one vertex of the gasket (highlighted in gray) are considered in the counting process.}
	\label{GasExample}
\end{figure}

Given that the Sierpinski gasket possesses a  volume growth parameter 
$\alpha=\frac{\log3}{\log2}$, we define the box size using $r = E^{-\frac{1}{\beta}}$, where $\beta = \frac{\log5}{\log2}$. As discussed in Section~\ref{sec:SG}, we are only able to prove the landscape law upper bound \eqref{eqn:LLaw-upper-SG} on the Sierpinski gasket. We expect a landscape law lower bound should also hold.  We will next illustrate the application of the landscape law. 
Note that in this example, we consider only the on-site non-negative potential $V$, without implementing any bond interaction. The subsequent figure presents the landscape law, with $V$ selected uniformly at random between 0 and 10, and $|A| = 9843$.

\begin{figure}[!ht]
	\centering
	\subfigure []{
		\includegraphics[width=7.05 cm]{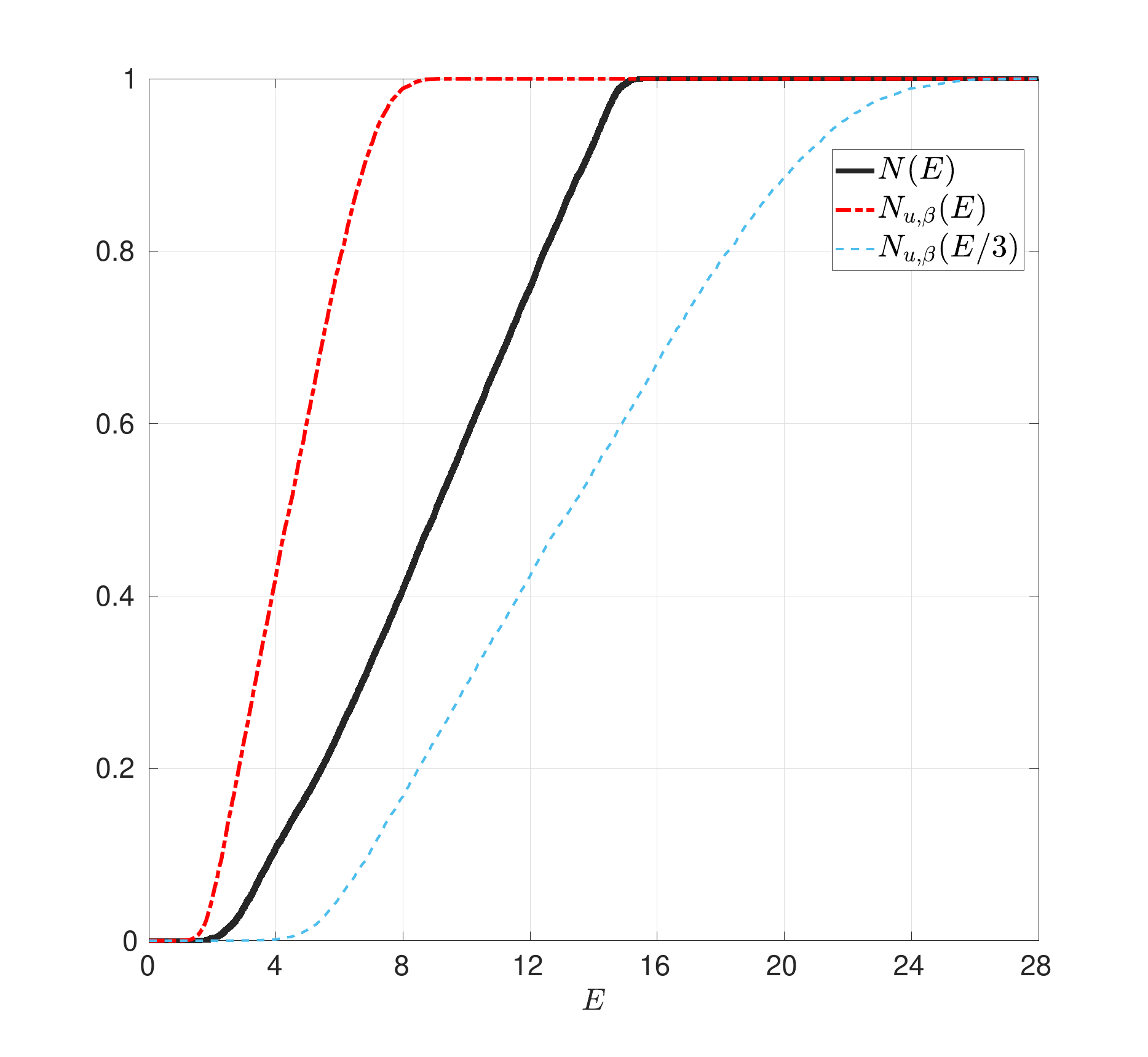}
	}
	\quad
	\subfigure[]{
		\includegraphics[width=7.1cm]{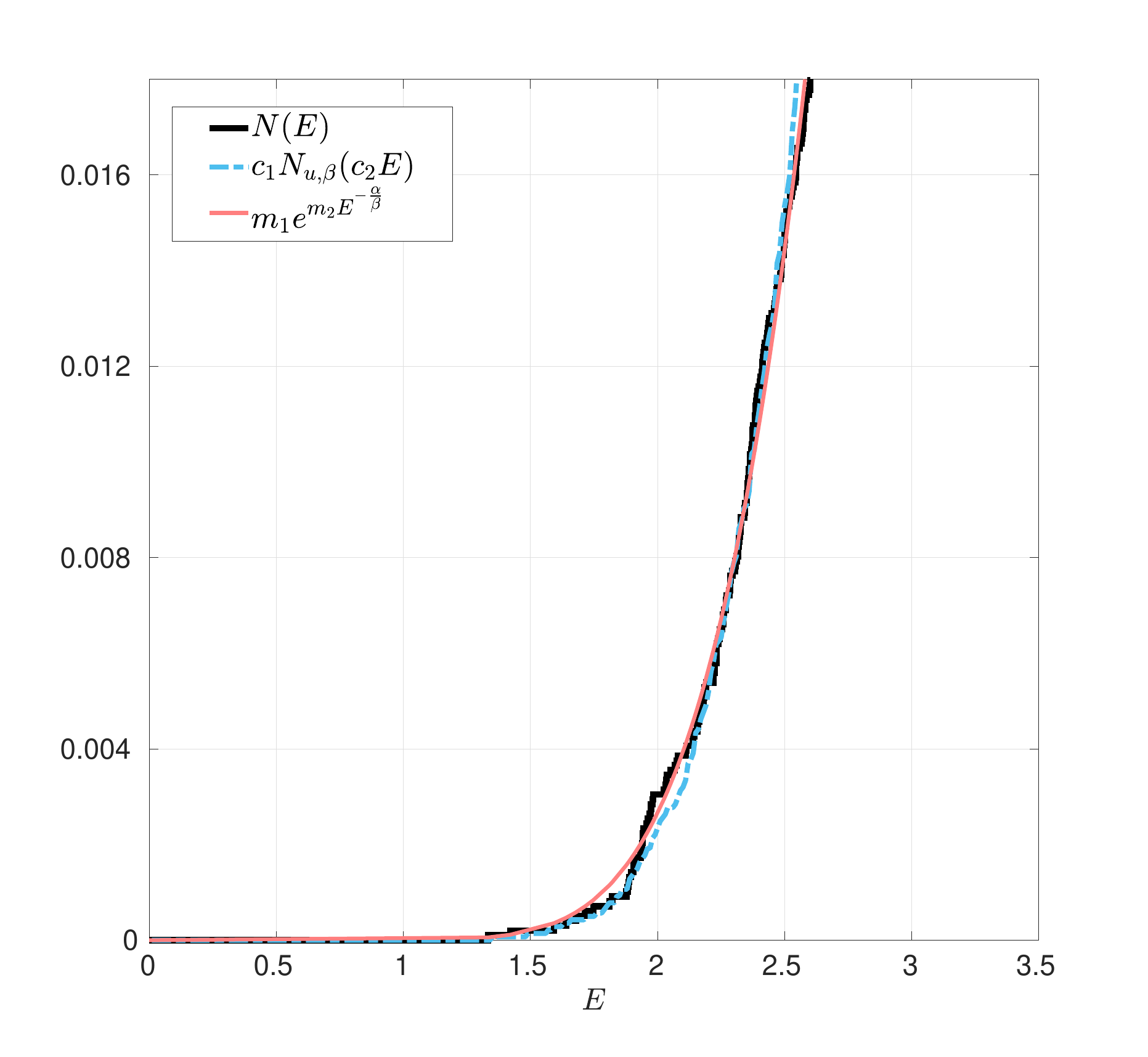}
	}
	\caption{Landscape Law and Lifshitz tail behavior for the Anderson model on the Sierpinski gasket graph. Left: the control of the IDS from above and below. Right: an expanded view of the lower energy spectrum, modeled by an exponential function with constants $m_1=419.4 , m_2=-19.2$. The scaled landscape counting function employs $c_1=0.7 , c_2=\frac{1}{1.4}$.}	\label{fig:gasket-ids}
\end{figure}

\subsection{Penrose tiling}

Lastly, we present an example illustrating the Lifshitz tail for the Anderson model on the Penrose tiling in Section~\ref{subsec:appl-isom} (see Figure~\ref{fig:penrose}), employing Neumann boundary conditions. The on-site non-negative potential $V$ is randomly assigned values from a uniform distribution $[0,4]$. As discussed in Example~\ref{ex:Penrose}, the Penrose tiling is roughly isometric to $\Z^2$. Hence, Theorem~\ref{thm:LLaw} can be applied to the Anderson model on it with $\alpha=2$. 
As we do not verify the harmonic weight assumption in Assumption~\ref{ass:harmonic-weight} for the Penrose tiling, we do not obtain the Lifshitz tails results through Corollary~\ref{cor:LLaw-random}, though here we provide numerical evidence supporting the behavior.
In Figure~\ref{PT}(c), we display the low energy regime
of the IDS, fitted by  an exponential function. Additionally, a scaled landscape counting function is presented, which is calculated over the lattice of equilateral triangles as illustrated in Figure~\ref{GasExample}. For comparative purposes, we also calculate the IDS for the free Laplacian over the same tiling in Figure~\ref{PT}(b). 

\begin{figure}[!ht]
	\centering
	\subfigure []{
		\includegraphics[width=9.05 cm]{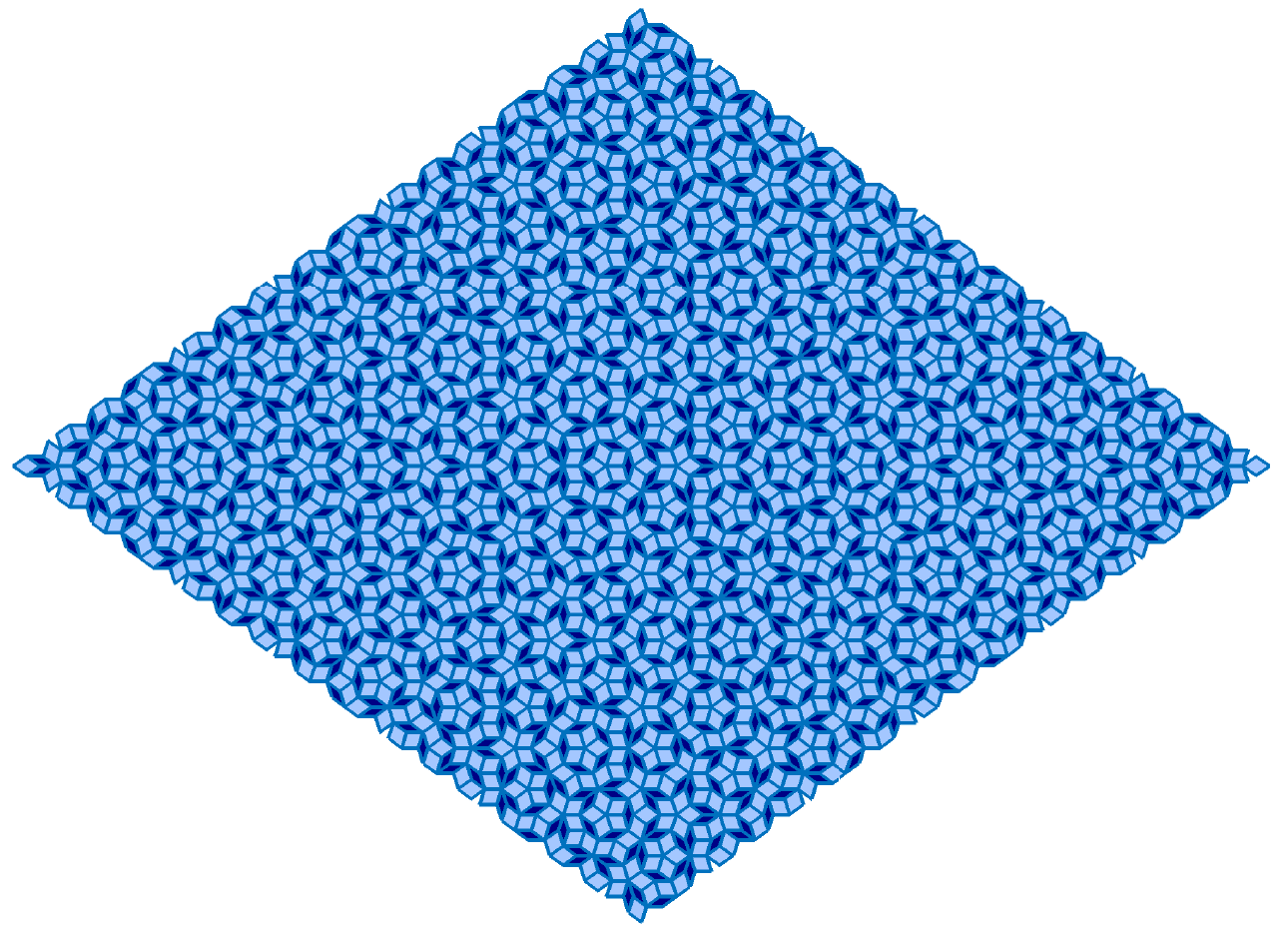}
	}
	\quad
	\subfigure[]{
		\includegraphics[width=7.1cm]{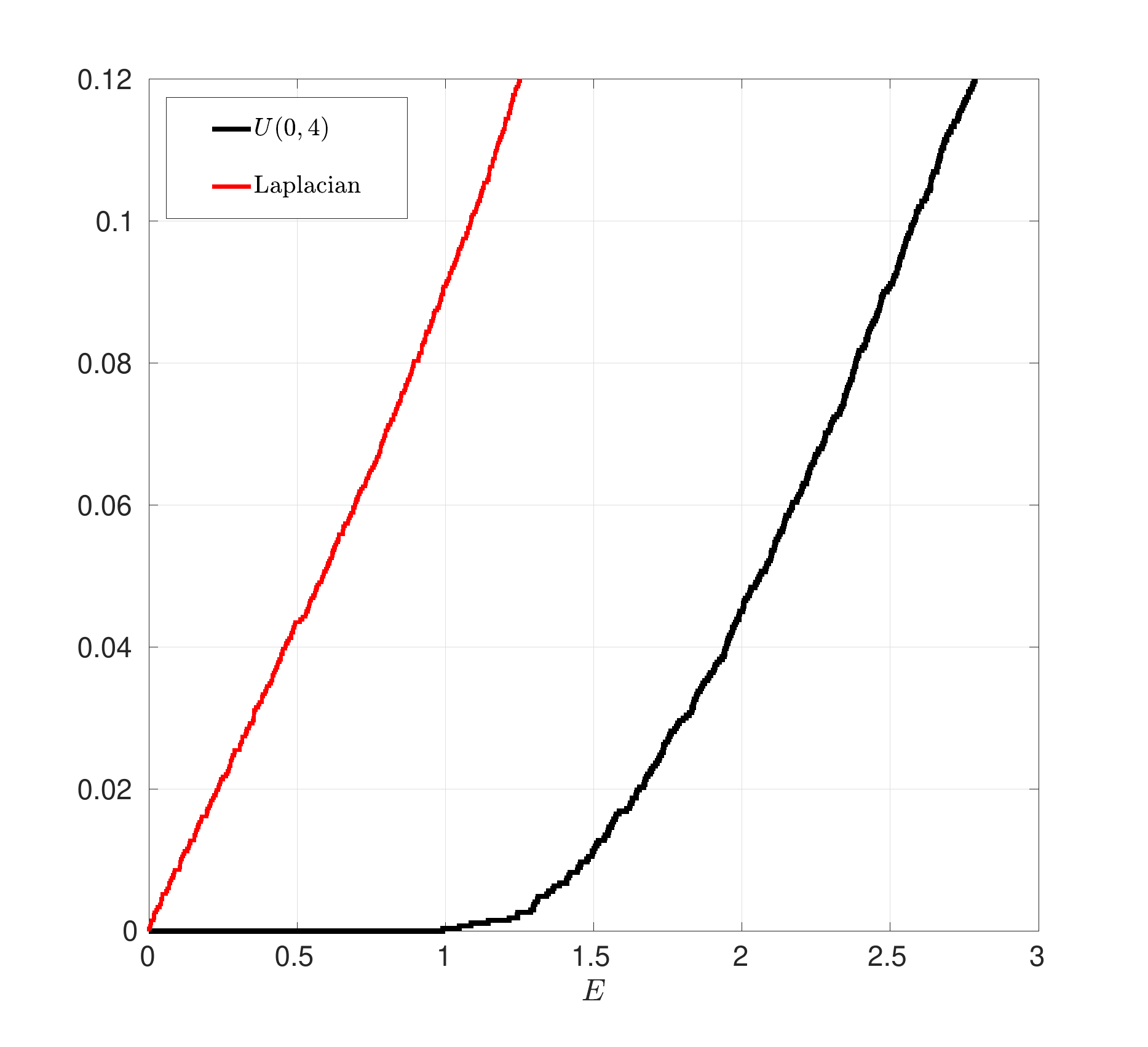}
	}
        \subfigure[]{
		\includegraphics[width=7.1cm]{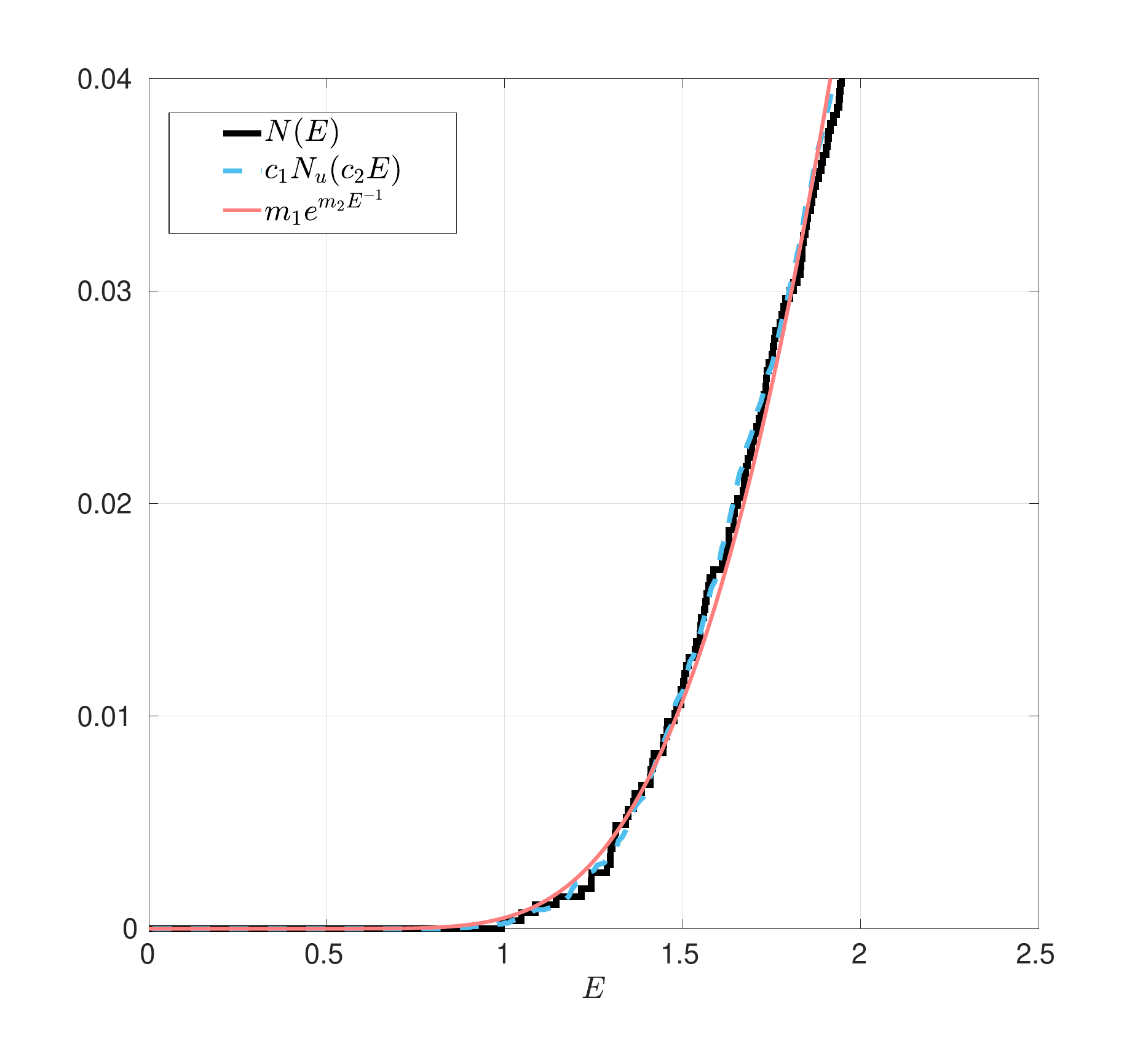}
	}
	\caption{ (a) The Penrose tiling used for calculation. (b) The IDS: the uniform disorder case (black) and the free Laplacian case (red). (c) Enlarged view of  the Lifshitz tail behavior of the IDS with its exponential fitting ($m_1=4.7 , m_2=-9.1$) and scaled landscape counting function ($c_1=0.14, c_2=\frac{1}{1.3}$). }
    \label{PT}
\end{figure}

\newpage 
\appendix

\section{Green's function and Poisson kernel for a Dirichlet Laplacian on a ball}\label{sec:poisson}
We summarize some facts about Green's function and Poisson kernel on graphs. These are standard results and can be found in e.g. \cite{barlow2017random, lawler2010random}. 

Let $\Gamma=(\V,\mathcal E)$ be a graph, equipped with some distance function $d(x,y):\V\times \V \to \R_{\ge 0}$. Note that this distance function and the results in this section are not limited to the natural graph metric (shortest-path distance).    
The graph Laplacian is
\[\Delta f(x)=\sum_{y\in \V:y\sim x}\big(f(y)-f(x)\big). \]
For a ball $B_r(\xi)=\{y\in \V:d(x,y)\le r\}$, the Dirichlet Laplacian on $B_r(\xi)$ is: 
\[(\Delta^{B_r}f)(x)=\sum_{y\in B_r(\xi):y\sim x}\big(f(y)-f(x)\big), \ \ x\in B_r(\xi). \]
The   exterior  boundary of $B_r$ is   $\partial B_r=\{ x\not\in B_r: \exists y \in B_r \ \ {\rm with}\ \  x\sim y \}, $
 and the interior boundary is defined as $\partial^i B_r=\partial(\V\backslash B_r)  $.  Denote by $\bar  B_r=B_r\cup \partial B_r$ the discrete closure of $B_r$. Let $G_{ r}(x,y)=G_{B_r}(x,y)=(-\Delta^{B_r})^{-1}(x,y):\, \bar  B_r\times \bar  B_r\to [0,1]$ be the Green's function associated with $-\Delta^{B_r}$. And denote by $P_{B_r}(x,y):\bar  B_r\times \partial B_r\to [0,1]$ the associated Poisson kernel. Recall that $G_{B_r}$ and $P_{B_r}$ are the unique solutions to
the following systems: for any $y\in B_r$, 
\begin{align*} 
    \begin{cases}
        \Delta  G_{B_r}(x,y)=\delta_y(x), & \   x \in B_r \\
        G_{B_r}(x,y)=0, & \ x\in \partial B_r
    \end{cases},
\end{align*}
and for any $y\in \partial B_r$, 
\begin{align}\label{eqn:Poisson-solution}
    \begin{cases}
        \Delta  P_{B_r}(x,y)=0, & \   x \in B_r \\
        P_{B_r}(x,y)=\delta_y(x), & \ x\in \partial B_r
    \end{cases}. 
\end{align} 

One can verify the following integration by parts formula for $g$ supported on $B_r=B(\xi,r)$ and any~$f$,
\begin{align*} 
\sum_{w\in B(\xi,r)}-\Delta g(w)f(w) &= \sum_{w\in B(\xi,r)}-\Delta f(w)g(w)+\sum_{w\in\partial^{(i)}B(\xi,r)}
g(w)\sum_{\substack{y:y\sim w\\ y\not\in B(\xi,r)}}f(y).
\end{align*} 

By taking $g=G_{B_r}(\xi,\cdot)$, then we see
\begin{align}\label{eqn:ibp-green}
f(\xi)=&\sum_{w\in B_r(\xi)}-\Delta f(w)G_{B_r}(\xi,w)+\sum_{w\in\partial^{i}B_r(\xi)}
G_{B_r}(\xi,w)\sum_{\substack{y:y\sim w\\ y\not\in B_r(\xi)}}f(y) \\
=& \sum_{w\in B_r(\xi)}-\Delta f(w)G_{B_r}(\xi,w)+\sum_{y\in\partial B_r(\xi)} 
P_{B_r}(\xi,y) f(y), \label{eqn:ibp-green2}
\end{align}
where the second line follows from  the relation between the Green's function and the associated Poisson's kernel
\begin{align}\label{eqn:Pr-Gr}
   P_{B_r}(\xi,x) = \sum_{y\in B_r:y\sim x}G_{B_r}(\xi,y), \  x\in \partial B_r .  
\end{align}

Taking $f\equiv1$ in \eqref{eqn:ibp-green2} also shows
\begin{align*}
\sum_{y\in\partial B_r(\xi)}P_{B_r}(\xi,y)=1.  
\end{align*}
If $f$ is subharmonic ($-\Delta f\le 0$), then \eqref{eqn:ibp-green2} implies the surface submean property 
\begin{align}\label{eqn:submean-surface}
   f(\xi)\le \sum_{y\in\partial B_r(\xi)} 
P_{B_r}(\xi,y) f(y), 
\end{align}
and the equality holds if $\Delta f=0$.

\section{An explicit proof of Poisson kernel estimates for the 1D random band model}
To obtain the general Lifshitz tails lower bound, one needs the control on the   Poisson kernel   $P_r=P_{B_r}$ as in Proposition~\ref{prop:lawler}, which provides the harmonic weight as required by Assumption~\ref{ass:harmonic-weight}. One notable example where we have such estimates is the graph $\Gamma_{d,W}$ \eqref{eqn:Gamma-dW} induced by the band model. 
 
The above general case is obtained in \cite{lawler2010random} by the method of random walk. Below, we give a direct proof of the case $d=1,W\ge1$ with explicit constants depending on the band width $W$.  Recall on $\Gamma_{1,W}=(\Z, \mathcal E_W)$, $\mathcal E_W=\{\sim: x\sim y \ {\rm if}\ |x-y|\le W\}$. The associated graph Laplacian is
\[\Delta_W f(x)=\sum_{y\in \Z: |x-y|\le W}\big(f(y)-f(x)\big). \] 
In this part, we use the natural graph metric (shortest path) $d_0(x,y)=\lceil W^{-1}|x-y| \rceil$. Then the ball centered at $\xi$ of radius $r$ is $B_r(\xi)=B(\xi,r)=\{x\in \Z:|x-\xi|\le rW\} $, with exterior boundary $\partial B_r =\{x:rW<|x-\xi|\le (r+1)W\}$, and interior boundary $\partial^i B_r =\{x:(r-1)W<|x-\xi|\le rW\}$. (For 1D, it is enough to consider integer valued radius  $r$ only.)
We denote by $G_r=G_{B_r}$ and $P_r=P_{B_r}$ the Green's function and the Poisson's kernel of $-\Delta^{B_r}_W$ on $B_r$, respectively. 
\begin{lemma}\label{lem:Pr-Gr-bound}
For all $x\in B_r(\xi)$, 
    \begin{align}
 \frac{1}{2W^3} ( rW+1-|x-\xi|)   \le    G_r(\xi,x)\le \frac{1}{W^2}( W+rW-|x-\xi|)  . \label{eqn:Gr-upper-lower}
    \end{align}
    For all $x\in \partial B_r(\xi)$,
    \begin{align}\label{eqn:Pr-upper-lower}
      \frac{1}{2W^3}\le P_r(\xi,x)\le   1.
    \end{align}
\end{lemma}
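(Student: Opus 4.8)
\textbf{Proof proposal for Lemma~\ref{lem:Pr-Gr-bound}.}

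The plan is to work directly with the one-dimensional structure of $\Gamma_{1,W}$ and exploit the fact that the Green's function $G_r(\xi,\cdot)$ is, away from the diagonal, a function whose discrete Laplacian $\Delta_W$ vanishes. By translation invariance we may set $\xi=0$, so $B_r=B_r(0)=\{x\in\Z:|x|\le rW\}$ and we must control $G_r(0,x)$ for $|x|\le rW$ and $P_r(0,x)$ for $rW<|x|\le (r+1)W$. The key observation is the following: since $G_r(0,x)=0$ for all $x$ in the exterior boundary $\partial B_r$ (which has ``width'' $W$, i.e. contains $W$ consecutive integers on each side), and since $G_r(0,\cdot)\ge 0$ everywhere by Lemma~\ref{lem:u-positive}, the function $g(x):=G_r(0,x)$ is $\Delta_W$-superharmonic on $B_r\setminus\{0\}$ in the sense $-\Delta_W g(x)=\delta_0(x)\ge 0$. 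First I would compare $g$ with the explicit linear profile $x\mapsto \frac{1}{W^2}(W+rW-|x|)$ on the positive side (and symmetrically on the negative side): a direct computation shows that $\Delta_W$ applied to $|x|$ (for $x$ far enough from $0$ and the boundary) is zero, so these linear functions are $\Delta_W$-harmonic in the interior, vanish on (or past) the boundary layer, and one can check the discrete Laplacian inequality at the finitely many ``transition'' sites near $0$ and near $\pm rW$. Then the maximum principle (Lemma~\ref{lem:maxP}, Lemma~\ref{lem:u-positive}) gives the upper bound $G_r(0,x)\le \frac{1}{W^2}(W+rW-|x|)$.

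For the lower bound $G_r(0,x)\ge \frac{1}{2W^3}(rW+1-|x|)$, the approach is similar but one compares against a smaller linear sub-solution. Here the factor $W^{-3}$ (versus $W^{-2}$) accounts for the coarseness of the $W$-step walk: crudely, a single unit of ``mass'' at $0$ must be spread over a boundary region of width $W$, and the ``slope'' of the linear Green's profile of the $W$-step Laplacian is of order $W^{-2}$ per lattice step (because each edge spans up to $W$ sites and there are up to $W$ outgoing edges), so a value of order $rW\cdot W^{-2}=r/W$ at the center divided by the $rW$ lattice sites gives the $W^{-3}$ scaling. Concretely I would check that $\underline{g}(x):=\frac{1}{2W^3}(rW+1-|x|)_+$ satisfies $-\Delta_W\underline{g}(x)\le \delta_0(x)$ pointwise on $B_r$ and $\underline{g}\le 0=G_r$ on $\partial B_r$, then invoke the maximum principle for $G_r-\underline{g}$. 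The estimate \eqref{eqn:Pr-upper-lower} for the Poisson kernel then follows immediately from the identity \eqref{eqn:Pr-Gr} relating $P_r$ and $G_r$: for $x\in\partial B_r$ we have $P_r(0,x)=\sum_{y\in B_r:y\sim x}G_r(0,y)$, and since such $y$ lie in the interior boundary layer $\partial^i B_r$ where $rW-|y|<W$, the upper bound \eqref{eqn:Gr-upper-lower} gives each $G_r(0,y)\le \frac{1}{W^2}\cdot 2W=\frac{2}{W}$ summed over at most $W$ neighbors, while the lower bound $P_r(0,x)\ge \frac{1}{2W^3}$ comes from picking the single neighbor $y$ closest to $x$, for which $rW+1-|y|\ge 1$.

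The main obstacle I anticipate is the bookkeeping of the discrete Laplacian near the two ``irregular'' regions: at the origin, where the delta function source sits and where $|x|$ fails to be affine, and near the boundary $\pm rW$, where the edges of length $\le W$ reach into the region $\partial B_r$ on which the Green's function is pinned to zero — so the linear comparison functions must be tuned (the $+1$ shift in $rW+1-|x|$ and the $+W$ shift in $W+rW-|x|$ are exactly there to absorb this boundary effect). Getting the constants $\frac{1}{2W^3}$ and $\frac{1}{W^2}$ to actually work requires verifying the Laplacian inequalities at those $O(W)$ transition sites, which is elementary but must be done carefully; I would organize it by writing $\Delta_W|x| = \sum_{k=1}^{W}(|x+k|+|x-k|-2|x|)$ and noting this vanishes whenever $|x|\ge W$, reducing the check to the handful of sites with $|x|<W$ and the handful near the boundary. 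Everything else is a routine application of the maximum principle, which is available by Lemma~\ref{lem:maxP} since $-\Delta^{B_r}_W$ is exactly of the form \eqref{eqn:HA} with $\mu_{xy}=1$, $V_x=0$.
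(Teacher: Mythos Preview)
Your approach matches the paper's exactly: translate to $\xi=0$, write $L=rW$, compare $g(x)=G_r(0,x)$ with explicit piecewise-linear test functions and apply the maximum principle, then read off the Poisson kernel bounds from \eqref{eqn:Pr-Gr}. For the upper bound your test function $f(x)=\frac{1}{W^2}(W+L-|x|)_+$ is precisely the paper's, and the verification goes through.

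There is a real gap in the lower bound, however. Your proposed subsolution $\underline g(x)=\frac{1}{2W^3}(L+1-|x|)_+$ does \emph{not} satisfy $-\Delta_W\underline g(x)\le\delta_0(x)$ at the sites $0<|x|<W$. Indeed, for $0<x<W$ a direct computation of the kink contribution gives
\[
-\Delta_W\big(L+1-|\cdot|\big)(x)=\sum_{k=x+1}^{W}2(k-x)=(W-x)(W-x+1)>0,
\]
so $-\Delta_W\underline g(x)>0=\delta_0(x)$ there, and the comparison $-\Delta_W(G_r-\underline g)\ge 0$ fails. This is not just bookkeeping: the excess superharmonic mass produced by the kink at $0$ spreads over all $W-1$ sites adjacent to the origin, and a purely linear tent cannot absorb it. The paper's remedy is to add a spike at the origin, taking
\[
p(0)=L+1+W(W-1),\qquad p(x)=L+1-|x|\ \ (0<|x|\le L),\qquad p=0\ \text{outside},
\]
which is calibrated so that for $0<|x|\le W$ one has $\Delta_W p(x)=W(W-1)-(W-|x|)(W-|x|+1)\ge 0$ (with equality at $|x|=1$), while at the origin $-\Delta_W p(0)=W(2W^2-W+1)\le 2W^3$. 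This yields $g\ge p/(2W^3)$ and hence the stated lower bound on $B_r\setminus\{0\}$. Once you see the failure at $|x|=1,\dots,W-1$ the spike is the natural fix, but it is a necessary modification, not an optional refinement.

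Two minor remarks: your crude Poisson kernel upper bound gives only $P_r\le 2$, not $\le 1$; the sharp bound $P_r(\xi,x)\le 1$ is immediate from $\sum_{y\in\partial B_r}P_r(\xi,y)=1$. And at the outer boundary layer $L-W<|x|\le L$ the check for both test functions goes through with $-\Delta_W(\text{test})$ having the correct sign (not identically zero), so your instinct that the $+1$ and $+W$ shifts absorb the boundary effect is right.
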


\begin{proof}[Proof of \eqref{eqn:Gr-upper-lower}]

Without loss of generality, we assume $\xi=0$ and only consider $B_r=B(0,r)$. The following estimates hold for any center $\xi\in A_K$ by translation. It is clear that $B(0,r)=\intbr{-L}{L}:$ with $L=rW$, and $\partial B_r=\partial ^lB_r+\partial ^rB_r$ with $\partial ^rB_r=\intbr{rW+1}{(r+1)W}$ and $\partial ^lB_r=\intbr{-(r+1)W}{-rW-1}$.  

Notice that $G_r(x,y)=G_r(y,x)$. Let $g(x)=G_r(x,0),|x|\le L$ be the centric (0th) column of $G_r$. Since $G_r$ is the inverse of $-\Delta_W^{B_r}$, then 
$-\Delta_W^{B_r} g(x)=\delta_0(x)$. To estimate $g(x)$ from above, we choose a test function 
\begin{align*}
    f(x)=\begin{cases}
   W+L-|x|,\ \ &|x|\le L\\
    0, \ \ &|x|> L
    \end{cases}
\end{align*}
$f(x)=W+L-|x|$ for any $x$.
Direct computation shows that 
\begin{align*}
    -\Delta_W^{B_r} f(x)\begin{cases}
        =W^2+W, \ \ &x=0,\\
        \ge 0, \ \ & 0<|x|\le L-W,\\
        =0,  \ \ & L-W< |x|\le L.
    \end{cases}
\end{align*}
Hence, $-\Delta_W^{B_r} f(x)\ge W^2\delta_{0}(x)=-\Delta_W^{B_r} \big(W^2g(x)\big)$ for all $x\in B_r$.
By the maximum principle of $-\Delta_W^{B_r}$, one obtains $g(x)\le \frac{1}{W^2}f(x)$ for all $|x|\le L$.
In particular, 
for $x \in \partial ^iB_r=\{x:|x|\ge L- W\}$, 
\begin{align}\label{eqn:Gr-upper}
   G_r(0,x) =G_r(x,0)\le \frac{1}{W^2}f(x)=\frac{1}{W^2}( W+L-|x|) \le \frac{2}{W}.
\end{align}

Next, we get a lower bound of $g(x)=G_r(0,x)$ through the    test function 
\begin{align*}
    p(x)=\begin{cases}
    L+1+W(W-1),\ \ & x=0\\
    L-|x|+1, \ \ & 0<|x|\le L \\
    0, \ \ & |x|>L
    \end{cases}.
\end{align*}
Direct computation shows
\begin{align*}
   -\Delta_W^{B_r} p(x)=\begin{cases}
    =W(2W^2-W+1) ,\ \ & x=0, \\
    \le 0, \ \ & 0<|x|<W, \\
    =0, \ \ &W \le x\le L,
   \end{cases} 
\end{align*}
leading to $-\Delta_W^{B_r} p(x)\le 2W^3 \delta_0(x)=-\Delta_W^{B_r}\big( 2W^3g(x)\big)$ for all $x\in B_r$. By the maximum principle again,
$g(x)\ge p(x)/(2W^3) $ for all $|x|\le L$. In particular, for $x\in \partial^i B_r=\{x: L-W\le |x|\le L\}$, one has 
 \begin{align*}
   G_r(0,x) =G_r(x,0)=  g(x)\ge ( L-|x|+1)/(2W^3)\ge 1/(2W^3) .
 \end{align*} 
Together with \eqref{eqn:Gr-upper}, we have that for  $x\in \partial^i B_r=\{x: L-W\le |x|\le L\}$, 
\begin{align}\label{eqn:255}
  \frac{1}{2W^3} \le  G_r(0,x) \le \frac{1}{W}.
\end{align}
Recall the relation between $P_r$ and $G_r$ in \eqref{eqn:Pr-Gr},
\[P_r(0,x)=\sum_{y\in B_r:y\sim x}G_r(0,y),\ \ \ x\in \partial B_r.  \]
Clearly, for  $x\in \partial B_r $,  $\varnothing\neq \{y\in B_r:y\sim x  \}\subset \partial^i B_r$.
As a consequence of \eqref{eqn:255},
\[\frac{1}{2W^3}\le \sum_{y\in B_r:y\sim x}G_r(0,y)= P_r(0,x)\le \sum_{y\in \partial^i B_r}G_r(0,y)\le W/W=1, \]
which completes the proof of \eqref{eqn:Pr-upper-lower}. 
\end{proof}

Notice in this 1D model with the natural metric $d_0$, $\partial  B_{\rho-1}=\partial^i B_\rho=B_{\rho }\backslash B_{\rho-1}$ for all $\rho$. To construct the harmonic weight, we do not need the   filtration lemma Claim \ref{claim:layer}. One can define directly
\[h_{B_r}(\xi,y)=\frac{1}{|B_r|}\begin{cases}
|\partial B_{\rho}|P_{\rho}(\xi,y),\ \  & y\in \partial  B_{\rho}, \rho=0,\cdots, r-1 \\ 
    1, \ \ & y=\xi .
\end{cases}  \]
 
Since $|\partial B_{\rho}|=2W$, then \eqref{eqn:Pr-upper-lower} implies $h_{B_r}(\xi,y)\ge \frac{1}{|B_r|W^2} $  for all $y\in B_r$, 
which gives an explicit bound required by Assumption~\ref{ass:harmonic-weight}.

\section{Moser--Harnack inequality for subharmonic functions }\label{sec:Moser-Harnack}
We say a graph $\Gamma$ satisfies an elliptic Harnack inequality (EHI), given $x_0\in \V$,$\tau>1$, and $R\ge 2/(\tau -1)$, if there exists a constant $C_H$ depending only on $\tau$ and $\Gamma$, such that for $h\ge 0$ on $B(x_0,\tau R)$ and harmonic ($\Delta h=0$) in $B(x_0,\tau R)$, then
\begin{align} \label{eqn:EHI}
    \sup _{B(x_0,R)}h \le C_H \inf _{B(x_0,R)}h .
\end{align}
Under the volume control assumption \eqref{eqn:vol-control}, EHI is equivalent to the (weak) Poincar\'e inequality \eqref{eqn:WPI} or the Gaussian heat kernel estimates $\mathrm{HKC}(\alpha,2)$ \eqref{eqn:gaussian-heatkernel}; for more discussion see the textbook \cite{barlow2017random}, particularly Theorems 6.19 and 7.18, and Lemma~4.21. 

  Clearly, (EHI) implies the Moser--Harnack inequality  
  \begin{align}\label{eqn:Moser-Harnack-harmonic}
        \sup_{y\in B(x,R)} h(y)^2 \le \frac{C_{H}}{|B(x,CR)|}\sum_{y\in B(x,CR)} h(y)^2,
    \end{align}
  for a positive harmonic function $h$ and any scaling constant $C\ge1$.  The version \eqref{eqn:Moser-Harnack} that we need for a subharmonic function $-\Delta f\le 0$ essentially follows from the harmonic version, combined with the  discrete Caccioppoli (``reverse Poincar\'e'') inequality and Poincar\'e inequality with Dirichlet boundary conditions. The authors in \cite{lin2018harnack} proved a Moser--Harnack inequality for subharmonic  functions on general graphs without the volume control assumptions \eqref{eqn:vol-control}, where the Moser--Harnack constant depends exponentially large on the radius of the ball.  
We did not find a radius-independent version of \cite[{Theorem~1.2}]{lin2018harnack} in the literature, and so we sketch the proof of \eqref{eqn:Moser-Harnack} here for completeness.

  Throughout, we write $B_R=B(x_0,R)$ for a ball centered at $x_0\in \V$ of radius $R$. 
Let $S\subset \V$ be a finite set. Define the Dirichlet energy subject to the boundary condition on $\partial S$ to be
\begin{align}\label{eqn:energyI}
    I(f,S):=\sum_{x\in S}\sum_{ \substack{y\in \V\\
    y\sim x}}\big(f(y)-f(x)\big)^2.
\end{align}
Note that in some literature, the concept `Dirichlet form' may refer to the energy with the zero boundary condition on $\partial S$, see e.g. \cite[\S 1.4]{barlow2017random}, that is:
 \[\mathcal E _S(f,f)= \sum_{x\in S}\sum_{\substack{y\in S\\
 y\sim x}}\big(f(y)-f(x)\big)^2 \le I(f,S). \]
The equality holds only if $f=0$ on the exterior boundary $\partial S$.

 The following is the discrete version of the well-known energy minimizing property of a harmonic function. 
\begin{lemma}[Energy minimizer, {\cite[Theorem~3.5]{holopainen1997p}}]
    If $h$ is harmonic on $\bar S=S\cup \partial S$, then $h$ is a minimizer of $I(f,s)$
    among functions in $\bar S$ with the same values on $\partial S$. More precisely, if $\Delta h=0,x\in S$,  then for any $f$ satisfying $f=h$ on $\partial S$,  
   \begin{align}\label{eqn:h-mini-I}
     I(h,S)\le I(f,S).
    \end{align}

\end{lemma}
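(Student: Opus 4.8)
The statement to prove is the discrete energy-minimizing property of harmonic functions (Lemma, Holopainen--Soardi / \cite{holopainen1997p}, Theorem~3.5): if $\Delta h = 0$ on $S$ and $f = h$ on $\partial S$, then $I(h,S) \le I(f,S)$.

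My plan:

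The plan is to expand the energy of $f$ around $h$ and use discreteness of the Laplacian, exactly as in the continuous Dirichlet principle. Write $f = h + \varphi$ where $\varphi := f - h$ vanishes on $\partial S$ (since $f = h$ there). First I would expand
\[
I(f,S) = \sum_{x\in S}\sum_{y\sim x}\big((h(y)-h(x)) + (\varphi(y)-\varphi(x))\big)^2 = I(h,S) + 2\Lambda + I(\varphi,S),
\]
where $\Lambda := \sum_{x\in S}\sum_{y\sim x}(h(y)-h(x))(\varphi(y)-\varphi(x))$ is the cross term. Since $I(\varphi,S)\ge 0$, it suffices to show the cross term $\Lambda$ is nonnegative — in fact I expect $\Lambda = 0$, which is the cleanest outcome.

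The key step is a summation-by-parts (discrete Green's identity) argument to show $\Lambda = 0$. The main subtlety, and where I expect the only real work, is bookkeeping with the boundary: the inner sum $\sum_{y\sim x}$ over $x\in S$ ranges over $y$ that may lie in $S$ or in $\partial S$, whereas the Laplacian $\Delta h(x) = \sum_{y\sim x}(h(y)-h(x))$ uses all neighbors. I would reorganize $\Lambda$ by symmetrizing: for an edge $\{x,y\}$ with both endpoints in $S$, the term $(h(y)-h(x))(\varphi(y)-\varphi(x))$ appears twice (once indexed by $x$, once by $y$); for an edge with $x\in S$ and $y\in\partial S$, it appears once. Carefully collecting, one gets $\Lambda = \sum_{x\in S}\big(\sum_{y\sim x}(h(x)-h(y))\big)\varphi(x) + (\text{boundary terms involving }\varphi|_{\partial S})$. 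The boundary terms vanish because $\varphi \equiv 0$ on $\partial S$, and the bulk term is $-\sum_{x\in S}\Delta h(x)\,\varphi(x) = 0$ since $h$ is harmonic on $S$. This is precisely the integration-by-parts formula already recorded in Appendix~\ref{sec:poisson} (with $g = \varphi$ supported on a superset of $S$, or applied edge-by-edge). Hence $\Lambda = 0$ and
\[
I(f,S) = I(h,S) + I(\varphi,S) \ge I(h,S),
\]
with equality iff $I(\varphi,S)=0$, i.e. $\varphi$ is constant on each connected component of $\bar S = S\cup\partial S$; since $\varphi=0$ on $\partial S$, equality forces $f=h$ on components meeting the boundary.

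The only genuine obstacle is the careful edge-counting in the symmetrization — making sure each edge $\{x,y\}\subset S$ is counted with the right multiplicity and that the half-edges to $\partial S$ are handled so that the $\varphi|_{\partial S}=0$ cancellation is visible. Everything else is a routine expansion of a square plus nonnegativity. I would present the computation compactly by invoking the discrete Green's identity from Appendix~\ref{sec:poisson} rather than re-deriving it: taking that identity with the pair $(h,\varphi)$ restricted appropriately shows $\sum_{x\in S}(-\Delta h(x))\varphi(x)$ equals a boundary sum that vanishes, which is exactly the vanishing of $\Lambda$ after symmetrization.
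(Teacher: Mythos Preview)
The paper does not give its own proof of this lemma; it simply cites \cite[Theorem~3.5]{holopainen1997p}. So there is no paper proof to compare against. Your approach---write $f=h+\varphi$, expand the square, and kill the cross term by summation by parts---is exactly the classical Dirichlet principle, and it \emph{would} work for the standard Dirichlet energy that counts each edge of $\bar S$ once.

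However, there is a genuine gap, and it sits precisely where you flagged the ``only real work''. For the paper's energy
\[
I(f,S)=\sum_{x\in S}\sum_{y\sim x}\bigl(f(y)-f(x)\bigr)^2,
\]
interior edges $\{x,y\}\subset S$ are counted \emph{twice} while boundary edges (one endpoint in $S$, one in $\partial S$) are counted \emph{once}. With these unequal multiplicities the symmetrization does \emph{not} collapse to $-\sum_{x\in S}\Delta h(x)\,\varphi(x)$ plus terms involving only $\varphi|_{\partial S}$. A direct computation gives
\[
\Lambda=\sum_{x\in S}\varphi(x)\sum_{\substack{y\in\partial S\\ y\sim x}}\bigl(h(y)-h(x)\bigr),
\]
which is generally nonzero. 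Concretely, on the path $0\!-\!1\!-\!2\!-\!3$ with $S=\{1,2\}$, boundary data $h(0)=0$, $h(3)=3$, the harmonic extension is $h(1)=1$, $h(2)=2$ with $I(h,S)=4$, but $f(1)=6/5$, $f(2)=9/5$ gives $I(f,S)=18/5<4$. So the lemma, read literally with the paper's $I$, is false; your claimed identity $\Lambda=0$ cannot hold.

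What is true---and what your argument correctly proves---is the minimization for the \emph{standard} energy $\widetilde I(f)=\sum_{\{x,y\}\subset\bar S,\,x\sim y}(f(x)-f(y))^2$, because then every edge has the same weight and the summation-by-parts yields exactly $-\sum_{x\in S}\Delta h(x)\,\varphi(x)=0$. Since $\widetilde I\le I\le 2\widetilde I$, one immediately gets $I(h,S)\le 2\widetilde I(h,S)\le 2\widetilde I(f,S)\le 2I(f,S)$, and this factor-of-$2$ version suffices for the paper's only use of the lemma (the Moser--Harnack argument in Appendix~\ref{sec:Moser-Harnack}, where constants are not tracked). So the right fix is either to redefine $I$ as $\widetilde I$, or to state and prove the inequality with an extra constant.
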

  
\begin{lemma}[discrete Caccioppoli inequality, see e.g. {\cite[Lemma~2.4]{lin2018harnack}}]
    Let $x_0\in \V$, $R\ge 1$ and $\tau<2 $. Suppose $f$ is a nonnegative and subharmonic function on $B_{2R}=B(x_0,2 R)$. Then
we have 
\begin{align}\label{eqn:Cacciopp}
   I(f, B_{\tau R}) \le \frac{C}{R^2}\sum_{x\in B_{2R}} f (x)^2,
\end{align}
where $C$ depends only on $\tau$ and $M_\Gamma=\sup_{x\in\V}\deg(x)$ from \eqref{eqn:bded-geo}. 
\end{lemma}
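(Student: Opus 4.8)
The plan is to follow the classical continuum argument, adapted to the graph setting. First I would fix a cutoff function $\eta$ that equals $1$ on $B_{\tau R}$, vanishes outside $B_{2R}$, takes values in $[0,1]$, and is Lipschitz along edges with $|\eta(x)-\eta(y)|\le c/R$ whenever $x\sim y$ — for instance $\eta(x)=\min\{1,\max\{0,(2R-d_0(x_0,x))/((2-\tau)R)\}\}$, using that $d_0$ changes by at most $1$ across an edge. The key algebraic identity is the discrete "integration by parts": since $f\ge 0$ is subharmonic ($-\Delta f\le 0$) on $B_{2R}$, testing against the nonnegative function $\eta^2 f$ (which is supported in $B_{2R}$) gives
\begin{align*}
0 &\ge \sum_{x\in B_{2R}} (-\Delta f)(x)\,\eta^2(x) f(x) \\
&= \tfrac12\sum_{\substack{x,y\in B_{2R}\cup\partial B_{2R}\\ x\sim y}} \big(f(y)-f(x)\big)\big(\eta^2(y)f(y)-\eta^2(x)f(x)\big),
\end{align*}
where I would absorb boundary edges using that $\eta$ already vanishes near $\partial B_{2R}$ for $\tau<2$ (after slightly enlarging the support region or, more carefully, noting $\eta$ is supported strictly inside $B_{2R}$ when $\tau<2$, so all relevant edges stay within $B_{2R}$).

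Next I would expand $\eta^2(y)f(y)-\eta^2(x)f(x) = \eta^2(x)\big(f(y)-f(x)\big) + f(y)\big(\eta^2(y)-\eta^2(x)\big)$ and substitute, yielding
\[
\tfrac12\sum_{x\sim y}\eta^2(x)\big(f(y)-f(x)\big)^2 \le -\tfrac12\sum_{x\sim y}\big(f(y)-f(x)\big) f(y)\big(\eta^2(y)-\eta^2(x)\big).
\]
For the right-hand side, I would use $|\eta^2(y)-\eta^2(x)| = |\eta(y)+\eta(x)|\,|\eta(y)-\eta(x)| \le 2|\eta(y)-\eta(x)| \le 2c/R$, together with Young's inequality $|f(y)-f(x)|\cdot f(y)\cdot(2c/R) \le \tfrac12\eta^2(x)(f(y)-f(x))^2 + C\,\eta'^2 f(y)^2/R^2$ (choosing the split so that the gradient term is absorbed into the left-hand side) — here the bounded-degree hypothesis $M_\Gamma<\infty$ enters to control the number of neighbors when summing. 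After absorption this gives $\sum_{x\sim y}\eta^2(x)(f(y)-f(x))^2 \lesssim R^{-2}\sum_{x\in B_{2R}} f(x)^2$. Since $\eta\equiv 1$ on $B_{\tau R}$, the left side dominates $\tfrac12\sum_{x\in B_{\tau R}}\sum_{y\sim x}(f(y)-f(x))^2 = \tfrac12 I(f,B_{\tau R})$ (up to a harmless factor from counting each edge), which is exactly \eqref{eqn:Cacciopp}.

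The main obstacle I anticipate is bookkeeping the boundary edges and the asymmetry between $I(f,\cdot)$ in \eqref{eqn:energyI} (which sums over all neighbors $y\in\V$ of $x\in B_{\tau R}$, including those possibly outside) versus the symmetrized edge sum that naturally appears from integration by parts; one must check that the cutoff $\eta$ genuinely kills every edge touching $\partial B_{2R}$, which is where the strict inequality $\tau<2$ is used, and track the degree-dependent constants carefully so that the final constant $C$ depends only on $\tau$ and $M_\Gamma$. The Young's-inequality absorption step is routine but must be done with the correct constants so nothing is left uncancelled; I would also note that $\eta$ being supported in $B_{2R}$ requires $(2-\tau)R \ge 1$, i.e. the stated $R\ge 1$ together with $\tau<2$ bounded away from $2$, and if $\tau$ can be arbitrarily close to $2$ one shrinks the support slightly and the constant blows up like $(2-\tau)^{-2}$, consistent with "$C$ depends only on $\tau$."
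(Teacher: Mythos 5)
Your proposal follows essentially the same route as the paper's proof: a cutoff equal to $1$ on $B_{\tau R}$ and vanishing near $\partial B_{2R}$, the test function $\eta^2 f$, discrete integration by parts, the decomposition $\eta^2(y)f(y)-\eta^2(x)f(x)=\eta^2(x)\big(f(y)-f(x)\big)+f(y)\big(\eta^2(y)-\eta^2(x)\big)$, and absorption of the cross term; the paper closes with Cauchy--Schwarz and division by the square root of the left-hand side rather than Young's inequality, which is a cosmetic difference.

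One step as you wrote it would fail, however. Bounding $|\eta^2(y)-\eta^2(x)|\le 2|\eta(y)-\eta(x)|\le 2c/R$ discards the factor $\eta(x)+\eta(y)$, after which the claimed Young's inequality $|f(y)-f(x)|\,f(y)\,(2c/R)\le \tfrac12\eta^2(x)(f(y)-f(x))^2+C f(y)^2/R^2$ is false on edges where $\eta(x)$ is small: the resulting unweighted term $(f(y)-f(x))^2/R^2$ cannot be absorbed into the weighted sum $\sum\eta^2(x)(f(y)-f(x))^2$. The fix is the standard one, and is what the paper's Cauchy--Schwarz step does: retain $|\eta^2(y)-\eta^2(x)|=(\eta(x)+\eta(y))\,|\eta(y)-\eta(x)|$, apply Young (or Cauchy--Schwarz) with the weight $(\eta(x)+\eta(y))^2\le 2\eta^2(x)+2\eta^2(y)$, and use the symmetry of the ordered edge sum to replace $\eta^2(x)+\eta^2(y)$ by $2\eta^2(x)$ before absorbing. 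With that correction your argument is complete and matches the paper's.
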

This follows the proof of the Caccioppoli inequality for functions on $\R^d$, see e.g. \cite[\S4.1]{regularity}.
\begin{proof}
 Let $0\le \chi (x)\le 1$ the cut-off function constructed (in a similar manner) as in \eqref{eqn:cut-off-2}, satisfying $\chi (x)=1$ on $B(x_0,\tau R)$, $\chi (x)=0$ for $x\notin  B(x_0,2R-1)$, and $|\chi (x)-\chi (y)|\le \frac{c_\tau}{R}$ for all $x\sim y$. 

Since $-\Delta f(x)\le0$ for $x\in B_{2R}$, then for the test function $\varphi(x)=f(x)\chi^2(x)\ge0$, the discrete Gauss–Green Theorem (integration by parts, see e.g. \cite[Theorem~1.24]{barlow2017random}) implies
\begin{align}\label{eqn:ibp-rp}
0\ge \langle \varphi,-\Delta f\rangle 
&=\frac{1}{2}\sum_{\substack{x,y\in B_{2R} \\ x\sim y } }\chi^2(x)\big(f(x)-f(y)\big)^2
+\frac{1}{2}\sum_{\substack{x,y\in B_{2R} \\ x\sim y } }f(y)\big(\chi^2(x)-\chi^2(y)\big)\big(f(x)-f(y)\big),
\end{align}
where we used that 
$
\varphi(x)-\varphi(y)=(f(x)-f(y))\chi^2(x)+f(y)(\chi^2(x)-\chi^2(y)).
$
Therefore \eqref{eqn:ibp-rp}, with Cauchy--Schwarz and properties of $\chi$, implies
\begin{align*}
\sum_{\substack{x,y\in B_{2R} \\ x\sim y } }   \chi^2(x)&\big(f(x)-f(y)\big)^2 \\ 
&\le \sum_{\substack{x,y\in B_{2R} \\ x\sim y } }(\chi(x)+\chi(y))|f(x)-f(y)|f(y)|\chi(x)-\chi(y)|\\
&\le 2\bigg(\sum_{\substack{x,y\in B_{2R} \\ x\sim y } }\chi^2(x)(f(x)-f(y))^2\bigg)^{1/2}\bigg(\sum_{\substack{x,y\in B_{2R} \\ x\sim y } }f(y)^2(\chi(x)-\chi(y))^2\bigg)^{1/2}\\
&\le 2\bigg(\sum_{\substack{x,y\in B_{2R} \\ x\sim y } }\chi^2(x)(f(x)-f(y))^2\bigg)^{1/2}\frac{c_\tau}{R}\bigg(\sum_{y\in B_{2R}}f(y)^2\bigg)^{1/2}M_\Gamma^{1/2}.
\end{align*}
Dividing both sides by the square root of the left hand side, 
and using that $\chi (x)=1$ on $B(x_0,\tau R)$, yields
\begin{align}
\sum_{x\in B_{\tau R}}\sum_{y:y\sim x}\big(f(x)-f(y)\big)^2\le \sum_{\substack{x,y\in B_{2R} \\ x\sim y } }\chi^2(x)\big(f(x)-f(y)\big)^2&\le \frac{4c_\tau^2M_\Gamma}{R^2}\sum_{y\in B_{2R}}f(y)^2.
\end{align}
\end{proof}

Next, we will need that the weak (or ``mean'') Poincar\'e inequality \eqref{eqn:WPI} implies the following zero-boundary Poincar\'e inequality.
We do not actually directly use the weak PI formulation \eqref{eqn:WPI}, but just that the exit time bound in Proposition~\ref{prop:piconseq}(ii) holds.
\begin{lemma}[Poincar\'e inequality with the Dirichlet boundary condition]
Suppose volume control \eqref{eqn:vol-control} and the weak Poincar\'e inequality \eqref{eqn:WPI} hold.
Let $x_0\in \V$ and $R>0$. For $f$ vanishing on the exterior boundary $\partial B(x_0,R)$, then one has 
      \begin{align}\label{eqn:PI-zero}
          \sum_{x\in B(x_0,R)}f(x)^2\le  {C}{R^2}\sum_{x \in B(x_0,R)}  \sum_{y:y\sim x}\big(f(y)-f(x)\big)^2 ,
      \end{align}
      where $C$ depends only on $\Gamma$. 
 \end{lemma}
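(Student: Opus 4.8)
The plan is to deduce the Dirichlet-boundary Poincar\'e inequality \eqref{eqn:PI-zero} from the exit time upper bound in Proposition~\ref{prop:piconseq}(ii), i.e.\ from the fact that the free landscape function $u_0 := (-\Delta^{B_R})^{-1}\one_{B_R}$ satisfies $u_0(x) \le cR^2$ for all $x \in B := B(x_0,R)$. The key observation is that $u_0$ also satisfies a landscape-type uncertainty principle for the Dirichlet Laplacian: writing $\langle f, -\Delta^{B}f\rangle$ via the same computation as in \eqref{eqn:UCP-gen} but with all bond weights equal to $1$ and no potential, one gets, for any $f \in \ell^2(B)$ (extended by zero outside $B$, which is consistent with the Dirichlet condition since $f$ vanishes on $\partial B$),
\begin{align*}
\langle f, -\Delta^{B} f\rangle = \frac{1}{2}\sum_{\substack{x,y \in B\\ y\sim x}} u_0(x)u_0(y)\left(\frac{f(x)}{u_0(x)} - \frac{f(y)}{u_0(y)}\right)^2 + \sum_{x \in B}\frac{f(x)^2}{u_0(x)} \ \ge\ \sum_{x\in B}\frac{f(x)^2}{u_0(x)}.
\end{align*}

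The second step is purely arithmetic: since $u_0(x) \le cR^2$ on $B$, we have $\dfrac{f(x)^2}{u_0(x)} \ge \dfrac{f(x)^2}{cR^2}$, so
\begin{align*}
\sum_{x\in B} f(x)^2 \ \le\ cR^2 \sum_{x\in B}\frac{f(x)^2}{u_0(x)} \ \le\ cR^2\, \langle f, -\Delta^{B} f\rangle.
\end{align*}
Finally I would rewrite the right-hand side quadratic form in the ``gradient'' form: using that $f$ is supported in $B$ and the standard integration-by-parts identity (as recorded in Appendix~\ref{sec:poisson}, or directly by symmetrizing), $\langle f, -\Delta^B f\rangle = \tfrac12\sum_{x,y\in B,\, x\sim y}(f(x)-f(y))^2 \le \tfrac12\sum_{x\in B}\sum_{y:y\sim x}(f(y)-f(x))^2 \le \sum_{x\in B}\sum_{y:y\sim x}(f(y)-f(x))^2$, which matches the right-hand side of \eqref{eqn:PI-zero} up to the harmless constant. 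This gives \eqref{eqn:PI-zero} with $C$ depending only on the constant $c$ in the exit time bound, hence only on $\Gamma$.

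The main thing to be careful about — rather than a genuine obstacle — is the bookkeeping between the Dirichlet form with zero boundary condition and the full energy $I(f,B_R)$ of \eqref{eqn:energyI}: one must use that $f$ vanishes on $\partial B_R$ so that for $x \in B_R$, $\sum_{y:y\sim x}(f(y)-f(x))^2$ counts only edges with both endpoints contributing $f$-values that are already ``seen'' by the Dirichlet form (edges leaving $B_R$ contribute $f(x)^2$, which is still controlled). Concretely, $\sum_{x\in B_R}\sum_{y\sim x}(f(y)-f(x))^2 = 2\mathcal E_{B_R}(f,f) + \sum_{x\in \partial^i B_R}f(x)^2\,\#\{y\sim x: y\notin B_R\} \ge \langle f,-\Delta^{B_R}f\rangle$, so the inequality goes in the direction we need. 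Note also, as the paper remarks, that we never invoke the mean Poincar\'e inequality \eqref{eqn:WPI} directly: only Proposition~\ref{prop:piconseq}(ii) is used, and that proposition's hypotheses (volume control plus weak PI) are exactly what is assumed in the statement.
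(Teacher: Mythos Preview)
Your proposal is correct and follows essentially the same route as the paper. Both arguments reduce to the exit time bound $\|u_0\|_\infty\le cR^2$ from Proposition~\ref{prop:piconseq}(ii); the paper packages the consequence as the eigenvalue inequality $E_1\ge C/R^2$ via the general fact $E_1\|u_0\|_\infty\ge1$ (citing \cite[Lemma~2.1]{lratio}) and then invokes the variational characterization of $E_1$, whereas you unfold that same fact by appealing directly to the landscape uncertainty principle \eqref{eqn:UCP-gen}---these are the same mechanism, and your final bookkeeping between $\langle f,-\Delta^{B_R}f\rangle$ and the energy $I(f,B_R)$ matches the paper's closing remark.
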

\begin{proof}
 Let $B_R=B(x_0,R)$ and $\Delta^{B_R}$ be the Dirichlet Laplacian on $B_R$. Let $E_1$ be the smallest eigenvalue of $-\Delta^{B_R}$. 
 By Proposition~\ref{prop:piconseq}(ii), there is the bound $\|(-\Delta_{B_R})^{-1}\one_{B_R}\|_\infty\le cR^2$.
  Since in general $E_1\|(-\Delta_{B_R})^{-1}\one_{B_R}\|_\infty\ge1$ (see e.g. \cite[Lemma~2.1]{lratio}),
 then for some $C$ depending only on $\Gamma$, 
 \[E_1\ge \frac{C}{R^2}. \]
 Notice that when $f=0$ on $\partial B_R$, the Dirichlet energy $I(f,B_R)$ in \eqref{eqn:energyI} equals the Dirichlet energy  $\mathcal E _S(f,f)$  subject to the zero boundary condition:
 \[  I(f,B_R)\equiv\sum_{x\in B_R}\sum_{ \substack{y\in \V\\
    y\sim x}}\big(f(y)-f(x)\big)^2 
= \sum_{x\in B_R}\sum_{\substack{y\in B_R\\
 x\sim y}}\big(f(y)-f(x)\big)^2. \]
 Then \eqref{eqn:PI-zero} follows directly from the definition of the ground state energy $E_1$ by minimizing the zero-boundary Dirichlet energy (among functions not identically zero) on the right hand side. 
\end{proof}

\begin{proof}[Proof of \eqref{eqn:Moser-Harnack}]
    The idea follows from the outline of the proof of \cite[Theorem~1.2]{lin2018harnack}. 
     However, we must consider the (in)dependence of the radius more carefully using \eqref{eqn:Cacciopp} and \eqref{eqn:PI-zero}. Let $f\ge 0$ be a subharmonic function on $\bar B_{2R}=\bar B(x_0,2R)$. Let $h$ be a harmonic function on on $\bar B_{3R/2}=\bar B(x_0,3R/2)$ which agrees $f$ on the (exterior) boundary of $B_{3R/2}$, i.e., $h$ satisfies $\Delta h=0,x\in B_{3R/2}$ and $h(x)=f(x)$ for $x\in \partial B_{3R/2}$. By the maximum principle, $f\le h$ on $B_{3R/2}$. Then the Moser--Harnack property \eqref{eqn:Moser-Harnack-harmonic} (with $\tau=3/2$) for $h$ implies
    \begin{align}
        \sup_{B_R}f^2\le \sup_{B_R}h^2\le 
        \frac{C_H}{|B_{\tau R}|}\sum_{B_{\tau R}}h^2 \le  \frac{2C_H}{|B_{\tau R}|}\sum_{B_{\tau R}}(h-f)^2+\frac{2C_H}{|B_{\tau R}|}\sum_{B_{\tau R}}f^2. \label{eqn:C10}
    \end{align}
   The second term is readily bounded by a constant times the average of $f^2$ on $B_{2R}$ using the volume control \eqref{eqn:vol-control}. 
   For the first sum, since $w:=h-f=0$ on $\partial B_{\tau R}$, by the Poincar\'e inequality \eqref{eqn:PI-zero}, 
   \begin{align*}
           \sum_{x\in B_{\tau R}}w(x)^2\le &  {C}{R^2}\sum_{x \in B_{\tau R}}  \sum_{y\sim x}\big(w(y)-w(x)\big)^2 \\
    \le &     2{C}{R^2}\sum_{x \in B_{\tau R}}  \sum_{y\sim x}\big(h(y)-h(x)\big)^2 + 2{C}{R^2}\sum_{x \in B_{\tau R}}  \sum_{y\sim x}\big(f(y)-f(x)\big)^2 \\
    \le &     4{C}{R^2}\sum_{x \in B_{\tau R}}  \sum_{y\sim x}\big(f(y)-f(x)\big)^2,\numberthis \label{eqn:C13}
   \end{align*}
where in the last line we used \eqref{eqn:h-mini-I}, that the harmonic function $h$ is a minimizer of the energy $I(B_{\tau R},f)$. 
Then the Caccioppoli inequality \eqref{eqn:Cacciopp} implies the sum in \eqref{eqn:C13} is bounded as
\[\sum_{x \in B_{\tau R}}  \sum_{y\sim x}\big(f(y)-f(x)\big)^2\le \frac{C}{R^2}\sum_{B_{2R}}f^2, \]
leading to $\sum_{x\in B_{\tau R}}w(x)^2\le C'\sum_{B_{2R}}f^2  $
 for some constant $C'$ independent of $R$. 
 Using this in \eqref{eqn:C10}, we obtain
\begin{align*}
     \sup_{B_R}f^2\le \frac{2C_H}{|B_{\tau R}|}\sum_{B_{\tau R}}w^2+\frac{2C_H}{|B_{\tau R}|}\sum_{B_{\tau R}}f^2\le \frac{C}{|B_{2R}|}\sum_{B_{2R}}f^2.
\end{align*}
   
\end{proof}

\noindent\textbf{Acknowledgments.}
\phantomsection
\addcontentsline{toc}{section}{Acknowledgments}
The authors would like to thank Douglas N. Arnold and Svitlana Mayboroda for useful discussions.
L.S. was supported in part by the Julian Schwinger Foundation for Physics Research and Simons Foundation grant 563916, SM. 
\bibliographystyle{abbrv}
\bibliography{landscape.bib}

 {
  \bigskip
  \vskip 0.08in \noindent --------------------------------------

\footnotesize
\medskip
L.~Shou, {Condensed Matter Theory Center and Joint Quantum Institute, Department of Physics, University of Maryland, College Park, MD 20742, USA}\par\nopagebreak
    \textit{E-mail address}:  \href{mailto:lshou@umd.edu}{lshou@umd.edu}

\vskip 0.4cm

  W. ~Wang, {LSEC, Institute of Computational Mathematics and Scientific/Engineering Computing, Academy of Mathematics and Systems Science, Chinese Academy of Sciences, Beijing 100190, China}\par\nopagebreak
  \textit{E-mail address}: \href{mailto:ww@lsec.cc.ac.cn}{ww@lsec.cc.ac.cn}
  
\vskip 0.4cm

S.~Zhang, {Department of Mathematics and Statistics, University of Massachusetts Lowell, 
Southwick Hall, 
11 University Ave.
Lowell, MA 01854
 }\par\nopagebreak
  \textit{E-mail address}: \href{mailto:shiwen\_zhang@uml.edu}{shiwen\_zhang@uml.edu}
}

\end{document}